\setlist[itemize]{noitemsep, topsep=5pt}
\setlist[enumerate]{noitemsep, topsep=5pt}
\renewcommand{\phi}{\varphi}
\newcommand{\set}[1]{\left\{#1\right\}}
\newcommand{\sqbr}[1]{\left[ #1 \right]}
\newcommand{\brackets}[1]{\left( #1 \right)}
\definecolor{darkcyan}{rgb}{0.0, 0.7, 0.7}
\newcommand{\swir}{stagewise-IR\xspace}
\newcommand{\oir}{overall-IR\xspace}
\newcommand{\Swir}{Stagewise-IR\xspace}
\newcommand{\Oir}{Overall-IR\xspace}
\newcommand{\samono}{strongly-allocation-monotone\xspace}
\newcommand{\Samono}{Strongly-Allocation-Monotone\xspace}
\newcommand{\amono}{allocation-monotone\xspace}
\newcommand{\Amono}{Allocation-Monotone\xspace}
\newcommand{\wamono}{weakly-allocation-monotone\xspace}
\newcommand{\Wamono}{Weakly-Allocation-Monotone\xspace}
\newcommand{\rmono}{reward-monotone\xspace}
\newcommand{\fr}{fixed-rate\xspace}
\newcommand{\FR}{Fixed-Rate\xspace}
\newcommand{\swac}{SWAC\xspace}
\newcommand{\swacs}{SWACs\xspace}
\newcommand{\finauc}{finite-menu \swac}
\newcommand{\finaucs}{finite-menu \swacs}
\NewDocumentCommand{\xhat}{ O{} m g }{
    \IfNoValueTF{#2}{
        {\hat{\alloc}(#1)}
    }{
        {\hat{\alloc}^{#1}(#2)}
    }
}
\NewDocumentCommand{\x}{ O{} m g }{
    \IfNoValueTF{#2}{
        {\alloc(#1)}
    }{
        {\alloc^{#1}(#2)}
    }
}
\NewDocumentCommand{\alloc}{ O{} }{
    \ifblank{#1}{
        {X}
    }{
        {X^{#1}}
    }
}
\NewDocumentCommand{\phat}{ O{} m g }{
    \IfNoValueTF{#2}{
        {\hat{\pay}(#1)}
    }{
        {\hat{\pay}^{#1}(#2)}
    }
}
\NewDocumentCommand{\p}{ O{} m g }{
    \IfNoValueTF{#2}{
        {\pay(#1)}
    }{
        {\pay^{#1}(#2)}
    }
}
\NewDocumentCommand{\pay}{ O{} }{
    \ifblank{#1}{
        {P}
    }{
        {P^{#1}}
    }
}
\NewDocumentCommand{\spay}{ O{} }{
    \ifblank{#1}{
        {p}
    }{
        {p^{#1}}
    }
}
\NewDocumentCommand{\fhat}{ O{} m g }{
    \IfNoValueTF{#2}{
        {\hat{\filter}(#1)}
    }{
        {\hat{\filter}^{#1}(#2)}
    }
}
\NewDocumentCommand{\f}{ O{}  m g }{
    \IfNoValueTF{#2}{
        {\filter(#1)}
    }{
        {\filter^{#1}(#2)}
    }
}
\NewDocumentCommand{\filter}{ O{}}{
    \ifblank{#1}{
        {f}
    }{
        {f^{#1}}
    }
}
\newcommand{\reals}{\mathbb{R}}
\newcommand{\nat}{\mathbb{N}}
\DeclareMathOperator*{\E}{\bb{E}}
\DeclareMathOperator{\history}{\mathcal{H}}
\newcommand{\given}{\medspace | \medspace}
\newcommand \sett[2] { \left.\left\{{#1}\medspace\right\vert\medspace{#2}\right\}}
\newcommand{\Uni}{\mathsf{Uni}}
\renewcommand{\mathbf}{\bm}
\DeclareMathOperator{\then}{\Longrightarrow}
\DeclareMathOperator{\preals}{\reals_{\geq0}}
\newcommand\numberthis{\addtocounter{equation}{1}\tag{\theequation}}
\newcommand{\psqrrange}[1]{[#1]_+}
\newcommand{\sqrrange}[1]{[#1]}
\newcommand{\bb}[1]{\mathbb{#1}}
\DeclareMathOperator{\A}{\auc{A}}
\DeclareMathOperator{\B}{\mathcal{B}}
\DeclareMathOperator{\C}{\mathcal{C}}
\DeclareMathOperator{\D}{\mathcal{D}}
\DeclareMathOperator{\I}{\mathcal{I}}
\DeclareMathOperator{\M}{\mech{M}}
\renewcommand{\S}{\mathcal{S}}
\DeclareMathOperator{\T}{\mathcal{T}}
\DeclareMathOperator{\U}{\mathcal{U}}
\DeclareMathOperator{\V}{\mathcal{V}}
\newcommand{\vstar}{{v^*}}
\newcommand{\xstar}{{x^*}}
\newcommand{\g}{g}
\NewDocumentCommand{\mopt}{ O{} }{
    \ifblank{#1}{
        {\mech{M}^{\textsc{opt}}}
    }{
        {}
    }
}
\NewDocumentCommand{\trew}{ o m g}{ %
    \IfNoValueTF{#1}{ 
        {\textsc{Rew}\brackets{#2}}
    }{
        {\textsc{Rew}^{#1}\brackets{#2}}
    }
}
\NewDocumentCommand{\rew}{ o m m g}{ %
    \IfNoValueTF{#1}{ 
        {\textsc{Rew}\brackets{#2,#3}}
    }{
        {\textsc{Rew}^{#1}\brackets{#2,#3}}
    }
}
\NewDocumentCommand{\util}{ o m m g}{
    \IfNoValueTF{#1}{ 
        {u\brackets{#2,#3}}
    }{
        {u^{#1}\brackets{#2,#3}}
    }
}
\NewDocumentCommand{\utili}{ o m m m g}{
    \IfNoValueTF{#1}{
        {u_{#2}\brackets{#3,#4}}
    }{
        {u^{#1}_{#2}\brackets{#3,#4}}
    }
}
\NewDocumentCommand{\utilformula}{ o m m g}{
    \IfNoValueTF{#1}{
        {\x{#3}#2-\p{#3}}
    }{
        {\x[#1]{#3}#2-\p[#1]{#3}}
    }
}
\NewDocumentCommand{\strat}{ O{} m g}{ %
    \IfNoValueTF{#1}{
        \sigma\brackets{{#2}}
    }{
       \sigma^{#1}\brackets{#2}
    }
}
\NewDocumentCommand{\bid}{ O{} m g}{ %
    \IfNoValueTF{#1}{
        \mathsf{b}\brackets{#2}
    }{
       \mathsf{b}^{#1}\brackets{#2}
    }
}
\DeclareMathOperator{\remrt}{\textsc{RemRt}}
\DeclareMathOperator{\remlft}{\textsc{RemLft}}
\DeclareMathOperator*{\mylim}{\lim\limits_{i\rightarrow\infty}}
\newcommand{\irn}{\Bar{\varphi}}
\newcommand{\argmax}[1]{\arg\underset{#1}{\max}\ }
\newcommand{\sumof}[3]{\sum_{{#1}={#2}}^{#3}\limits}
\newcommand{\insumi}[2]{\sum_{i={#1}}^{#2}}
\newcommand{\insumof}[3]{\sum_{{#1}={#2}}^{#3}}
\NewDocumentCommand{\ind}{ O{} m g }{
    \IfNoValueTF{#2}{
        \bb{I}_{#1}
    }{
        \bb{I}_{#2}^{#1}
    }
}
\theoremstyle{plain}
\newtheorem{theorem}{Theorem}[section]
\newtheorem{claim}{Claim}
\newtheorem{lemma}[theorem]{Lemma}
\newtheorem{corollary}[theorem]{Corollary}
\newtheorem{proposition}[theorem]{Proposition}
\newtheorem{definition}{Definition}
\newtheorem*{lemma*}{Lemma}
\newtheorem*{theorem*}{Theorem}
\newtheorem*{proposition*}{Proposition}
\newtheorem{observation}{Observation}
\newtheorem{example}[theorem]{Example}
\renewcommand{\include}{\input}
\newcommand{\auc}[1]{\mathsf{#1}}
\newcommand{\mech}[1]{\mathsf{#1}}
\title{Dynamic Rental Games with\\Stagewise Individual Rationality
} 
\author{
Batya Berzack\thanks{Tel Aviv University, \url{batyaberzack@mail.tau.ac.il}} 
\and
Rotem Oshman\thanks{Tel Aviv University, \url{roshman@tau.ac.il}}
\and
Inbal Talgam-Cohen\thanks{Tel Aviv University, \url{inbaltalgam@gmail.com}}
}
\date{}
\begin{document}

\begin{titlepage}
\maketitle
\begin{abstract}

We study \emph{rental games}---a single-parameter dynamic mechanism design problem, in which a designer rents out an indivisible asset over $n$ days. Each day, an agent arrives with a private valuation per day of rental, drawn from that day's (known) distribution. The designer can either rent out the asset to the current agent for any number of remaining days, charging them a (possibly different) payment per day, or turn the agent away. Agents who arrive when the asset is not available are turned away. A defining feature of our dynamic model is that agents are \emph{\swir} (individually rational), meaning they reject any rental agreement that results in temporary negative utility, even if their final utility is positive. We ask whether and under which economic objectives it is useful for the designer to exploit the \swir nature of the agents. 

We show that an optimal rental mechanism can be modeled as a sequence of dynamic auctions with seller costs. However, the \swir behavior of the agents makes these auctions quite different from classical single-parameter auctions: Myerson's Lemma does not apply, and indeed we show that truthful mechanisms are not necessarily monotone, and payments do not necessarily follow Myerson's unique payment rule. We develop alternative characterizations of optimal mechanisms under several classes of economic objectives, including generalizations of welfare, revenue and consumer surplus. These characterizations allow us to use Myerson's unique payment rule in several cases, and for the other cases we develop optimal mechanisms from scratch. 
Our work shows that rental games raise interesting questions even in the single-parameter regime.

\end{abstract}

\thispagestyle{empty} 
\end{titlepage}
\pagenumbering{arabic}
\bibliographystyle{apalike}
\sloppy

\section{Introduction}\label{sec:intro}

Renting rather than buying is a primary form of consumption in many markets --- from housing, through books, to formal wear.
In today's sharing economy, there is a further shift towards the share and reuse of assets, often facilitated by computational platforms (e.g.,  
Airbnb or Rent the Runway).
This raises a natural and timely question of \emph{rental mechanism design}. 

The inherent temporal nature of rental puts it within the realm of dynamic mechanism design; but while dynamic \emph{selling} mechanisms have attracted much attention recently, dynamic \emph{rental} mechanisms are relatively unexplored. In rental, there is a \emph{durable} asset, and an important consideration is opportunity cost --- ``renting differs from selling in that there exist new trade opportunities even after a positive transaction'' \cite{beccuti2020optimality}.
In this work we study a single-parameter 
rental game, focusing on connections to selling mechanisms as well as on new design challenges.

\paragraph{Our rental game model.} 

Imagine owning a valuable asset that can be rented out over a fixed time horizon of $n$ days.
Each day $i\in[n]$, a new agent arrives, wishing to rent the asset for as long as possible.
The agent has a single-dimensional private \emph{valuation}, representing how much they value a day of rent. 
The valuation is drawn from a known prior distribution $\D_i$ (independently of other agents' valuations).%
\footnote{Knowing the distribution for each day resembles the prophet inequality setting, and in fact 
our model can be viewed as a generalization of a model by~\cite{AbelsPS23} on prophet inequalities over time. However in~\cite{AbelsPS23}, agent valuations are \emph{known} to the designer, 
as well as other key differences as we discuss in Section~\ref{sec:related}.}
We can either enter into a rental agreement with the agent, or turn them away. 
An agreement specifies the length of the rental period and a (possibly non-uniform) payment per rental day. 
Importantly, the rental agreement is \emph{irrevocable}, creating an opportunity cost; for example, if we rent out the item for three days to agent $1$, then the agents arriving on days $2$ and $3$ must be turned away (see Example~\ref{ex:intro-example}).
Our goal is to maximize some reward function $g$,
such as revenue or social welfare, in total over the entire $n$ days.
What is the optimal rental mechanism that interacts with the agents and chooses the rental agreements? 

\paragraph{Stagewise individual rationality (IR).}
The answer to this question depends on what we require from the mechanism. A standard requirement is truthfulness, and another is individual rationality (IR). Interestingly, IR has more than one interpretation for mechanisms that take place over multiple stages. 
Like several recent works on dynamic mechanism design, we choose to focus on the requirement of \emph{ex post} IR, also referred to as \emph{stagewise}-IR or \emph{limited liability} of the agent \cite{MirrokniLTZ16,MirrokniLTZ18,BravermanSW21,0001DMS18,AshlagiDH16,BalseiroML17}.
Under \swir, no agent will enter into an agreement that temporarily results in negative cumulative utility, even if eventually their cumulative utility becomes positive. In particular, such agents reject large upfront payments that exceed their valuation. 
For example, an agent with valuation $v>0$ will not agree to transfer an upfront payment of $2v$ on the first day to rent the item for 3 days, because their utility on the first day would be negative ($v - 2v < 0$), even though their utility at the end of the rental period would be positive ($3v - 2v > 0$).

In~\cite{MirrokniLTZ18} it is noted that the requirement of \swir is in fact self-imposed by the agents; we thus extend the term and refer to \swir \emph{agents} in addition to \swir \emph{mechanisms}. We are interested in the optimal rental mechanism for \swir agents, and whether or not the designer will take advantage of the agents' self-imposed limitation.%
\footnote{The alternative to \swir\ is \oir, where agents are concerned only with their final utility, determined by the total payment they are charged (and the number of rental days). Since per-day payments no longer matter, this makes the mechanism design problem significantly simpler, and we show in Appendix~\ref{appendix:oir} that it can be solved by generalizing Myersonian auction theory.}

\paragraph{Objectives (a.k.a.~reward functions of the mechanism designer).}

A standard objective for mechanism design is maximizing revenue. Some rental mechanisms are indeed for profit, but some have different objectives: consider for example a public library, a community center renting out gear, public or on-campus housing, etc. These are examples of rental mechanisms maximizing social welfare, or even the renters' overall utility --- known as \emph{consumer surplus}. The latter objective is especially important in \emph{money burning} scenarios~\cite{HartlineR08}, where the payments made by the agents go to waste and so should be subtracted from their value for renting the asset. For this reason, there is growing recent interest in consumer surplus~\cite{ezra2024optimalmechanismsconsumersurplus, ganesh2023combinatorial, goldner2026multidimensional}.

\paragraph{From rental game to stagewise auctions (\swacs).}
The rental game is essentially a sequence of dynamic multi-unit auctions where each unit corresponds to a rental day. Since the agent in the rental game is \swir, the auction must also be \swir. 
We establish a reduction from the online rental game to an offline problem: designing a predetermined sequence of dynamic multi-unit auctions. We show that w.l.o.g., each such auction is of a simple form: the buyer submits a bid, which deterministically determines the number of days rented and required payment for each day (i.e., there is no benefit to interacting with the buyer beyond the initial bid, or to randomization). To capture the opportunity cost incurred from entering into an agreement with the current agent and turning some of the future agents away, the auction will include a \emph{seller's cost} that increases with the length of the allocation, and depends on the valuation distributions of the agents that will be turned away. The cost is subtracted from the seller's reward from the stagewise auction. We name these auctions \emph{stagewise auctions with seller cost}, or \swacs for short.
The above reduction leads to the problem of designing optimal \swacs in various settings, motivating the analysis of \swacs in general. Our results apply to arbitrary cost functions, and not only the opportunity cost for the rental. In subsequent work~\cite{berzack2026optimal}, this general formulation is used to extend the rental model to multiple buyers arriving each day, while allowing the asset to be rented to at most one buyer at a time.

\paragraph{Challenges in designing \swacs.}

In classic single-parameter auction design, Myerson's theory shows an equivalence between monotone allocation rules and truthfulness of the auction. In \swacs, due to their temporal nature and the \swir requirement, Myerson's theory~\cite{myerson1981optimal} no longer holds as is. This gives rise to interesting phenomena, and chief among these is that a truthful \swac\ need not be monotone. Also, the total payments in a truthful \swac need not follow Myerson's unique payment rule, even if it is monotone.

\begin{example}[Truthful but non-monotone \swac]
    \label{ex:intro-example}
    Consider a \swac\ 
    where the agent's valuation is uniform over the range $[0,8]$.
    The auction proceeds as follows:
    \begin{enumerate}
        \item For bids below $4$, the agent rents the asset for six days, and pays $2$
        each day, for a total of $12$.
         \item For bids $4$ and above, the agent rents the asset for five days, and pays $4$ upfront on the first day and zero on the remaining days.
    \end{enumerate}
    Under the first offer, the overall utility at the end of the rental period for an agent with valuation $v$ is $6v - 12$,
    and under the second offer it is $5v - 4$.
    Thus, all agents in our distribution would derive higher overall utility from the second offer  compared to the first. However, \swir\ agents with $v<4$ would not accept the second offer,
    because their utility on the first day would be negative
    ($v - 4 < 0$).
    Only agents with valuation at least $4$ can accept the second offer; the high payment on the first day \emph{filters out} agents with lower valuations.
    
    It is not hard to see that this mechanism is
    \emph{truthful}\footnote{
    Agents with valuations below $4$ cannot bid $\geq 4$, and all other bids yield the same result as bidding truthfully. Agents with valuations $\geq 4$ prefer the second offer, and this can be achieved by bidding truthfully.
    }.
    However, it is
    \emph{non-monotone in the agent valuation}
    with respect to both the allocation (i.e., the number of rental days)
    and the revenue of the designer:
    agents with valuation $3$ receive an allocation of six days and net the designer a revenue of $12$,
    while agents with valuation $4$ receive an allocation of five days and net the designer a revenue of $4$. 
\end{example}

The broken equivalence between monotonicity and truthfulness makes the design of optimal \swacs for different objectives a challenging task.
A key question that arises in our work is whether and when an \emph{optimal} \swac is monotone, and how monotonicity impacts the structure and payments of optimal \swacs.

\subsection{Our Results}
\label{sec:results}

We consider four families of reward functions,
i.e., objectives for the renter.
Consider a day of rental, let $v$ be the valuation of the agent who rents the asset that day, and let $p$ be the agent's payment for the day. 
Let $\alpha,\beta>0$ be strictly positive scalars. The classes are:
\begin{itemize} 
    \item \emph{Welfare-like} reward: a non-decreasing function $f(v)$ of the valuation alone;
    \item \emph{Revenue-like} reward: a linear function $\beta p$ of the payment alone;
    \item \emph{Positive tradeoff} reward, which has the form $g(v, p) = \alpha v + \beta p$;
    \item \emph{Negative tradeoff} reward, which has the form $g(v, p) = \alpha v - \beta p$ with $\alpha\geq\beta$. %
\end{itemize}
In all cases, the designer aims to maximize the total sum of rewards over all days. We note that positive tradeoff captures interpolations of welfare and revenue, while negative tradeoff covers consumer surplus,
$g(v,p) = v-p$,  among other objectives.
\footnote{We treat welfare-like and revenue-like reward separately from positive and negative tradeoff, even though they could be viewed as a special case, because our results for these first two classes are more general.}

\begin{table}[h!]
\centering
\begin{adjustbox}{max width=\textwidth}
\begin{tabular}{|l|c|c|c|c|}
\hline
 & \makecell{\textbf{Welfare-Like}\\\textbf{($f(v)$)}}
 & \makecell{\textbf{Revenue-Like}\\\textbf{($\beta p$)}}
 & \makecell{\textbf{Positive Tradeoff}*\\ \textbf{($\alpha v+\beta p$)}} 
 & \makecell{\textbf{Negative Tradeoff}* \\E.g. Consumer Surplus\\ \textbf{($\alpha v-\beta p$)}} \\ 
\hline

\makecell{Are Optimal Auctions\\Allocation-Monotone?\dag}  
& \multirow{3}{*}{\makecell{Yes\\(Lemma~\ref{lemma:welfare-like})}} 
& \makecell{Yes\\(Lemma~\ref{lemma:rev-like-wamono})} 
& \makecell{Yes\\(Lemma~\ref{lemma:pos-neg-tradeoff-samono})}
& \makecell{Yes\\(Lemmas~\ref{lemma:pos-neg-tradeoff-samono},\ref{lemma:consumer-surplus-amono})} \\ 
\cline{1-1}\cline{3-5}

\makecell{Are Optimal Auctions\\Reward-Monotone?\dag}  
& 
& \makecell{Yes\\(Lemma~\ref{lemma:rev-like-rmono})} 
& \makecell{Yes\\(Lemma~\ref{lemma:pos-tradeoff-rmono})} 
& \makecell{No\\(Lemma~\ref{lemma:consumer_surplus_filter})} \\ 
\cline{1-1}\cline{3-5}
 
\makecell{Are Fixed-Rate Payment\\Schedules W.L.O.G?}  
& 
& \makecell{Yes\\(Corollary~\ref{corollary:rev-like-fr-wlog})} 
& \makecell{Yes\\(Lemma~\ref{lemma:positive-tradeoff-fr-wlog})} 
& \makecell{No\\(Corollary~\ref{cor:neg-tadeoff-fr-not-wlog})} \\ 
\hline

\makecell{Optimal Rental\\Mechanism}  
& \multicolumn{3}{c|}{Algorithm~\ref{alg:fr-rental}} 
& Algorithm~\ref{alg:consumer-surplus-auc}\ddag \\ 
\hline
\end{tabular}
\end{adjustbox}

\caption[Informal Summary of Our Results]{Informal Summary of Our Results \\
* For these reward functions, some of the allocation-monotonicity results are restricted to finite-menu auctions, where the bidders are offered a \emph{finite} menu of options to choose from. 
The other results for these reward functions hold for any \swac\ that is \amono.\\
\dag Except for a possible set of measure zero.\\
\ddag This algorithm is restricted to i.i.d.\ distributions.
}
\label{tab:cases}
\end{table}

Our results for each class of reward functions are summarized in Table~\ref{tab:cases}. 
The first three rows apply to \swacs,
and the fourth row
refers to
the resulting rental mechanisms,
which can be computed by polynomial-time algorithms.
The results for welfare- and revenue-like reward
apply to any \swac and any sequence of agent distributions that have a pdf;
some of the results for positive and negative tradeoff apply to \emph{\finaucs}, where the designer offers a finite menu of options,
and the optimal mechanism for negative tradeoff is further restricted to i.i.d.\ agent distributions.
The techniques we use to establish these results are described in detail in our technical overview in Sections~\ref{subsec:rental-to-swacs},~\ref{sec:tech-overview-mono} and~\ref{sec:tech-overview-rental-mechs}.
We briefly describe each result in the table.

The first two rows in Table~\ref{tab:cases} summarize our results on monotonicity with respect to the allocation (\emph{allocation-monotonicity}) and the designer's reward (\emph{reward-monotonicity}).
As demonstrated in Example~\ref{ex:intro-example}, a truthful \swac\ need not be allocation-monotone, i.e., a higher valuation may result in 
fewer allocated units. 
This already shows that Myerson's Lemma~\cite{myerson1981optimal} does not apply, so optimal \swacs cannot be derived using standard techniques.
Example~\ref{ex:intro-example} also shows that a truthful \swac\ can be non-monotone in the reward, i.e., a higher valuation
may result in a lower total 
revenue
for the designer.
One of our key results is that for many classes of reward functions, an \emph{optimal} \swac\ recovers these monotonicity properties. 

The next two rows in Table~\ref{tab:cases} refer to
payment schedules and the translation from \swacs back to rental mechanisms: 
one benefit of monotonicity is that as we show,
for welfare-like, revenue-like, and positive tradeoff reward,
we can
transition to \emph{\fr payment schedules}, which %
charge the agent the same payment every day.
Once this is done, we can recover enough of Myerson's lemma to derive an optimal fixed-rate \swac, and thus, an optimal rental mechanism. Notably, the resulting rental mechanisms have a simple and practical format: At each horizon, a menu of fixed-rate rental agreements. 
Simplicity of the mechanism is a crucial consideration in real-life mechanism design~\cite{hartline2009simple}.

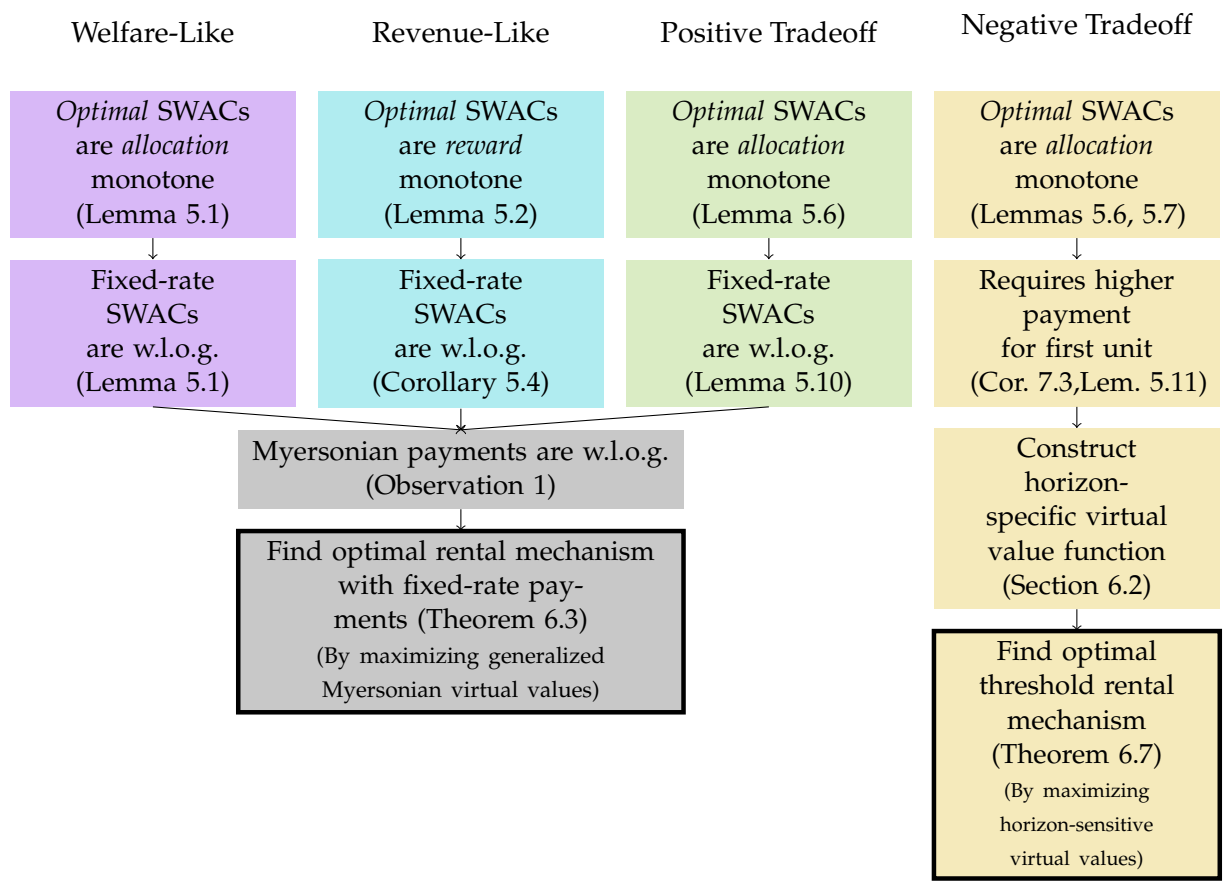
\begin{figure}[h]
\centering
\begin{tikzpicture}[x=0.75pt,y=0.75pt,yscale=-1,xscale=1, %
    node distance=8pt,
    purplebox/.style={fill={rgb, 255:red, 216; green, 184; blue, 247 }, fill opacity=1, minimum width=4cm, minimum height=0.8cm, text width=85pt, align=center,scale=0.95}, 
    greenbox/.style={fill={rgb, 255:red, 176; green, 236; blue, 240 }, fill opacity=1, minimum width=4cm, minimum height=0.8cm, text width=85pt, align=center,scale=0.95}, 
    bluebox/.style={fill={rgb, 255:red, 218; green, 236; blue, 196 }, fill opacity=1, minimum width=4cm, minimum height=0.8cm, text width=85pt, align=center,scale=0.95}, 
    yellowbox/.style={fill={rgb, 255:red, 245; green, 234; blue, 187 }, fill opacity=1, minimum width=4cm, minimum height=0.8cm, text width=85pt, align=center,scale=0.95}, 
    greybox/.style={fill={rgb, 255:red, 200; green,200; blue, 200 }, fill opacity=1, minimum width=2.5cm, minimum height=0.8cm, text width=170pt, align=center,scale=0.95}, 
    doubleborder/.style={draw, line width=0.2mm,  double=black, double distance=0.2mm} %
    ]

\node (A1) [purplebox] {\textit{Optimal} \swacs\ are \textit{allocation} monotone\\(Lemma~\ref{lemma:welfare-like})}; 
\node (B1) [right=of A1, greenbox] {\textit{Optimal} \swacs\ are \textit{reward} monotone\\(Lemma~\ref{lemma:rev-like-rmono})};
\node (C1) [right=of B1, bluebox] {\textit{Optimal} \swacs\ are \textit{allocation} monotone\\(Lemma~\ref{lemma:pos-neg-tradeoff-samono})}; 
\node (D1) [right=of C1, yellowbox] {\textit{Optimal} \swacs\ are \textit{allocation} monotone\\(Lemmas~\ref{lemma:pos-neg-tradeoff-samono},~\ref{lemma:consumer-surplus-amono})};

\node [above=0.5cm of A1, align=center] {Welfare-Like};
\node [above=0.5cm of B1, align=center] {Revenue-Like};
\node [above=0.5cm of C1, align=center] {Positive Tradeoff};
\node [above=0.5cm of D1, align=center] {Negative Tradeoff};

\node (A2) [below=of A1, purplebox] {Fixed-rate\\\swacs\\are w.l.o.g.\\(Lemma~\ref{lemma:welfare-like})};
\node (B2) [below=of B1, greenbox] {Fixed-rate\\\swacs\\are w.l.o.g.\\(Corollary~\ref{corollary:rev-like-fr-wlog})}; 
\node (C2) [below=of C1, bluebox] {Fixed-rate\\\swacs\\are w.l.o.g.\\(Lemma~\ref{lemma:positive-tradeoff-fr-wlog})};
\node (D2) [below=of D1, yellowbox] {Requires higher payment for first unit\\(Cor.~\ref{cor:neg-tadeoff-fr-not-wlog},Lem.~\ref{lemma:neg-tradeoff-threshold-auction-truthful-and-optimal})}; 

\node (ABC1) [below=of B2, greybox] {Myersonian payments are w.l.o.g.\\(Observation~\ref{ob:fr-means-myerson})};
\node (D3) [below=of D2, yellowbox] {Construct horizon-specific virtual value function\\(Section~\ref{sec:threshold-mech})}; 

\node (ABC2) [below=of ABC1, greybox, doubleborder] {Find optimal rental mechanism\\with  fixed-rate payments (Theorem~\ref{thm:optima-fr-rental-mech})\\{\footnotesize (By maximizing generalized Myersonian virtual values)}};
\node (D4) [below=of D3, yellowbox, doubleborder] {Find optimal \\threshold rental mechanism (Theorem~\ref{theorem:threshold-mech})\\{\scriptsize (By maximizing }\\{\scriptsize horizon-sensitive}\\{\scriptsize  virtual values)}};

 \draw[->]       (A1.south) --(A2.north);
 \draw[->]       (B1.south) --(B2.north);
 \draw[->]       (C1.south) --(C2.north);
 \draw[->]       (D1.south) --(D2.north);

 \draw[->]       (A2.south) --(ABC1.north);
 \draw[->]       (B2.south) --(ABC1.north);
 \draw[->]       (C2.south) --(ABC1.north);
 \draw[->]       (D2.south) --(D3.north);

 \draw[->]       (ABC1.south) --(ABC2.north);
 \draw[->]       (D3.south) --(D4.north);

\end{tikzpicture}
\caption[Flow of proofs for welfare-like, revenue-like, positive tradeoff and negative tradeoff reward]{Flow of proofs for welfare-like, revenue-like, positive tradeoff and negative tradeoff reward. For each, we give monotonicity results (for either allocation or reward), 
establish the form of the optimal mechanism (w.l.o.g.),
and finally give an optimal rental mechanism.
Although some intermediate results appear to be the same across different types, each result is proven separately and leverages characteristics specific to the reward.}
\label{fig:flowchart}
\end{figure}

\tikzset{every picture/.style={line width=0.75pt}}
\usetikzlibrary{positioning}

The optimality of \fr payment schedules is no longer true for the class of negative tradeoff reward functions (e.g., consumer surplus $g(v,p) = v-p$). %
For such reward functions, the designer is better off selecting only agents with high valuations for long rentals, but charging them low payments.
We show that this screening effect cannot be achieved by mechanisms that charge fixed-rate payments,
and is necessary for optimality.
Nevertheless, we show that there is still an optimal mechanism with a simple structure:
we charge the entire cost of the rental on the first day, and zero on the remaining days.
Despite the simple structure, designing and analyzing mechanisms for negative tradeoff requires some new ideas---interestingly, we obtain it by optimizing a generalized horizon-dependent notion of Myersonian virtual values (see Section~\ref{sec:threshold-mech}).
One take-away message is that when the designer's objective is negative tradeoff, %
utilizing \swir is no longer ``taking advantage'' of the agent but rather is beneficial for both designer and agent.

As outlined above,
our first step is to reduce the rental mechanism design problem to a predetermined sequence of stagewise auctions with seller cost (\swacs),
and this is explained in Section~\ref{subsec:rental-to-swacs}.
Next, for each reward class, we establish some structural properties, and use them to derive an optimal \swac for that class.
The flow of the proof is illustrated in Figure~\ref{fig:flowchart}.

\paragraph{Paper organization.}
We begin with a discussion of related work in Section~\ref{sec:related}.
In Section~\ref{sec:model}, we formally define the rental game, the rental mechanism, and stagewise auctions.
In Section~\ref{subsec:rental-to-swacs} we present our reduction from rental mechanisms to \swacs.
In Section~\ref{sec:properties}, we establish key properties of \swacs\ that are used throughout the paper and provide some intuition.
In Section~\ref{sec:tech-overview-mono}, we explore structural results on optimal \swacs\ for different classes of reward functions. In Section~\ref{sec:tech-overview-rental-mechs}, we derive optimal rental mechanisms using these results on \swacs.
In Section~\ref{sec:fr-gap}, we illustrate through examples how non-\fr\ pricing can significantly impact the designer's reward.
Finally, in Section~\ref{sec:our-techniques} we summarize our techniques, and in Section~\ref{sec:future} discuss potential applications and future research directions.
For brevity, some technical proofs are deferred to the appendix.

\subsection{Related Work}
\label{sec:related}

\paragraph{Renting a Durable Good.} Renting is a fundamental mode of consumption, appearing in markets such as housing, books, and fashion. For a mechanism design perspective on the sharing economy, see \cite{OrenR21}. While selling mechanisms for durable goods have been extensively studied in economics \cite{doval2024optimal}, rental markets introduce a temporal component: assets remain available for future transactions, making opportunity cost a key factor \cite{beccuti2020optimality,cremer2003rental}. These differences place rental mechanisms within the broader realm of dynamic mechanism design, but unlike dynamic selling, rental mechanisms remain relatively unexplored. Our work contributes to filling this gap by formalizing a single-parameter rental game, highlighting connections to selling mechanisms while addressing novel design challenges.

\paragraph{Prophet Inequalities Over Time.}
Prophet inequalities have long been known to have deep connections to auction design~\cite{HajiaghayiKS07}, including applications to sequential posted-price mechanisms \cite{KleinbergW12}.
\cite{AbelsPS23} take a different approach: in ~\cite{AbelsPS23}, 
instead of selling goods to agents, they lend them over time.  In their model, each agent has an i.i.d.~valuation drawn from a known distribution, and the decision-maker selects an optimal lending strategy to maximize the expectation of the sum of valuations taken in each step (i.e., welfare). A crucial distinction is that their model does not involve payments, as valuations are revealed upon arrival. Their work provides a single-threshold algorithm that determines whether to lend to an agent and for how long.

In contrast, our model considers valuations drawn from different distributions across days, necessitating a more complex allocation structure (there can be multiple thresholds in the allocation rule). We design mechanisms that incorporate payments to ensure truthful behavior, which is essential when valuations are private. Furthermore, our work generalizes beyond welfare maximization by considering multiple economic objectives, including revenue and consumer surplus. Our results generalize their approach by addressing these challenges while maintaining incentive compatibility. 
 Recent works extend \cite{AbelsPS23} to a graph setting \cite{CristiO24} and improve approximation bounds \cite{perezsalazar2024}.
 
\paragraph{Stagewise Individual Rationality.} \Swir\ is a leading requirement in the literature on dynamic auctions (with repeated sales rather than a rental)~\cite{MirrokniLTZ16,MirrokniLTZ18,BravermanSW21}, and is sometimes referred to as \emph{per-round} individual rationality or \emph{ex-post IR}\footnote{Although ex-post IR usually does not relate to utility per stage, this is how it is defined in~\cite{MirrokniLTZ19}}.
 There are many economic reasons why agents might be \swir: for example, they might be unable to go into debt, or they might not have sufficient trust established with the designer to take on negative utility in the belief that the designer will keep to an agreement that will eventually result in positive utility.
~\cite{0001DMS18,AshlagiDH16,BalseiroML17}  study a slightly stronger notion of IR, in which the agent's utility in each stage, standalone, must be nonnegative\footnote{All of our results continue to hold if we require this stronger form of IR instead of \swir.} 
(see also \cite{PapadimitriouPP16,collina2024repeated}).
\cite{BravermanSW21} consider repeated auctions with adversarial valuations and per-round IR, assuming knowledge of total buyer valuations—a key distinction from our model.
\cite{0001DMS18} approximate revenue-optimal auctions under heterogeneous buyer behaviors, whereas we focus on rental mechanisms, where seller opportunity cost plays a central role, and explore optimal designs for multiple objectives beyond revenue, including welfare and consumer surplus.
 \cite{MirrokniLTZ16} develop bank account mechanisms for repeated combinatorial auctions with a form of IR that is not far from \swir. 
\cite{BalseiroML17} introduce martingale utility constraints, which we do not impose, allowing greater flexibility in the design of mechanisms.  \cite{krahmer2016optimality} study multi-unit contracts under ex-post participation constraints, though their model differs as buyers do not always know their valuations. 
Related sequential vs. simultaneous sales mechanisms appear in \cite{hausch1986multi,PapadimitriouPP16,AshlagiDH16}. 
\cite{BergemannCW17} focus on revenue-maximizing sequential auctions under ex-post IR constraints, which also allows for some sort of screening, but different from ours; their screening is used to offer different auction terms, depending on the type that the agent reports, and they have no allocation cost. They analyze when static contracts are optimal and when dynamic ones (i.e. with screening) are optimal, depending on the valuation distribution.

\paragraph{Dynamic auctions.} Dynamic auctions are essentially repeated auctions that are optimized over time, adapting payments based on past events \cite{bergemann2019dynamic,bergemann2010dynamic}. \cite{MirrokniLTZ19} study dynamic auctions with multiple buyers and sequentially arriving items.
 \cite{0001DMS18}  examine revenue-maximizing repeated auctions with a single buyer, assuming limited trust in the seller. In contrast, we assume full commitment, enabling exact optimal mechanisms.
\cite{DevanurPS15} consider a repeated auction to a single buyer with a fixed value for the item, drawn from a known distribution, which is similar to our dynamic auction, but does not include a seller cost function. Another major difference is that they consider only cases where the seller does not have full commitment over future prices, which greatly limits the designer's capabilities. Additionally, they consider a Perfect Bayesian equilibrium (which is why they don't consider full commitment), whereas we consider dominant strategy equilibrium.  \cite{MirrokniLTZ18} focus on non-clairvoyant mechanisms, restricting future distributional knowledge, whereas our model assumes full knowledge, which is key for mechanism design.

\paragraph{General seller costs.} A general cost environment is one in which the designer must pay a service cost $c(x)$ for an allocation $x$, and is defined by~\cite[Chapter 3]{hartline2013mechanism}, and considered for example by~\cite{krahmer2016optimality}, who consider some form of a repeated auction, but in which the buyer's valuation has only 2 possible values.

\paragraph{Additional related work.}
\cite{FeldmanGL15} studies combinatorial online auctions where buyers with valuations drawn from known priors arrive sequentially and purchase their utility-maximizing bundle at initially defined posted prices. Their work focuses on achieving an approximation algorithm for social welfare maximization. While~\cite{FeldmanGL15} explores simple posted price mechanisms for combinatorial goods, our work focuses on dynamic mechanisms for the rental of a single asset under stagewise individual rationality, considering a broader range of reward functions

\section{Our Model}\label{sec:model}

In this section we formally define the rental game model,
as well as stagewise auctions with seller cost (\swacs), which model the designer's interaction with a single agent. 
Throughout,
we let $\sqrrange{n} =\set{0,\ldots,n}$ and $\psqrrange{n} = \set{1,\ldots,n}$. 

\subsection{The Rental Game}\label{subsec:rental}

Fix a domain $\V$ of possible nonnegative agent valuations.
An \emph{$(n, \boldsymbol{\D}, \g)$-rental game}
is defined by a length $n \geq 0$, which is the timeframe over which the game is played;
a sequence of distributions
of agent valuations,
$\boldsymbol{\D} = \D_1,\ldots,\D_n$ supported over $\V\subseteq\preals$, one for each day; 
and a \emph{reward function} $\g$,
whose cumulative value over time the designer aims to maximize.

The game consists of \emph{days},
on each of which a new agent $i$ arrives with an independent valuation $v \sim \D_i$, interacts with the designer, and is either allocated the asset for some number of days under some payment schedule, or turned away.
The \emph{payment schedule}
is a sequence of payments
$p = (p_1,\ldots,p_x)$,
where $x$ is the length of the allocation.
We refer to the number of remaining days at some point in the game as the \emph{horizon}.

Each distribution $\D_i$ is assumed to have a probability density function (pdf) and to be supported on a single continuous interval (not necessarily the same interval for all distributions).
We also make the following assumptions on the reward function $g$:
\begin{enumerate*}
    \item It is normalized so that $g(0,0) = 0$;
    \item It is non-decreasing in its first argument, the valuation;
    \item For all valuations $v \in \V$,
    the function $g(v, \cdot) : \preals \rightarrow \preals$ is linear (in the payment).
\end{enumerate*}
The last constraint captures the fact that the designer is motivated only by the \emph{total} payment,
not the payment \emph{schedule},
so that for any two payment schedules $(p_1,\ldots,p_x)$ and $(p_1', \ldots, p_{x}')$ for the same allocation $x$,
if $\sum_{i = 1}^x p_i = \sum_{i = 1}^{x} p_i'$,
then $\sum_{i = 1}^x g(v, p_i) = \sum_{i = 1}^{x} g(v, p_i')$ for any valuation $v$.

Next we define \emph{stagewise auctions},
the mechanism by which the designer negotiates with one individual agent;
then, in 
Section~\ref{sec:rental_mech_definitions}, we 
use stagewise auctions to define \emph{rental mechanisms},
the overall mechanism by which the designer interacts with all $n$ agents over the $n$ days of the game.

\subsection{Stagewise Auctions with Seller Cost}
\label{subsection:auction}

\paragraph{Stagewise auction settings.}
A \emph{$(n,\D,\g,c)$-stagewise auction setting with seller cost} (\emph{\swac setting} for short)
features a single \emph{seller} (a.k.a. designer) and a single \emph{buyer} (a.k.a. agent).
The seller wishes to sell up to $n$ units of a single item,
\emph{over time},
one unit per day.
The buyer has some valuation $v$ per unit, drawn from a distribution $\D$
supported on $\V\subseteq \mathbb{R}_{\ge 0}$.
If $\V$ is unbounded from above, we abuse notation by writing $\sup\V=\infty$.
The \swac\ setting is publicly known, but the valuation $v$ is known only to the buyer.

The seller has a \emph{reward function} $g(\cdot,\cdot)$
mapping a valuation~$v$ and a payment $p$ to the seller's reward from selling
a single unit (on a single day) at price $p$ to an agent with valuation $v$. The reward is additive,
so that the seller's reward from selling $k$ units at prices $p_1,\ldots,p_k$ to an agent with valuation $v$ is given by $\sum_{i = 1}^k g(v, p_i)$.
However, 
the seller also
incurs a \emph{cost} $c:\sqbr{n}\rightarrow\preals$,
mapping the number of units sold to the overall production cost that the seller pays. 
The cost function is normalized so that $c(0)=0$, and it is non-decreasing in the number of units sold.
We distinguish between the \emph{gross reward} $\sum_{i = 1}^k g(v,p_i)$ of the designer from selling $k$ units, and their \emph{net reward}, which is the gross reward minus the cost of producing the $k$ units (see below).
For brevity, in the sequel,
we use the term \emph{reward} to refer to the designer's net reward.

\paragraph{Stagewise auctions.}
In its most general form, the negotiation between the seller and the buyer in a stagewise auction may feature both randomness and daily interaction: each day,
the seller negotiates with the buyer and decides whether to sell the buyer
one unit that day, and at what cost.
The next day they negotiate for the next unit, and so on,
until eventually either all $n$ units are sold or the seller decides not to sell the buyer any more units and the buyer leaves.
To simplify the exposition, we focus on a simplified and deterministic
version where the buyer submits a single bid on the first day,
and the seller then decides how many units
to sell in total (still one unit per day), and what payment to charge on each day.
We prove in Appendix~\ref{app:generelizations-and-definitions} that this simplified form is without loss of generality.

Formally, given a \swac\ setting $(n,\D,\g,c)$,
a \emph{stagewise auction with seller cost (\swac)} $\A:\B\rightarrow \left(\preals\right)^\ast$
for this setting
is a (deterministic) mapping that takes a bid $b\in \B$, where $\B$ is a bidding set defined by the seller, and outputs a
\emph{payment schedule} $\A(b) = (p_1, \ldots,p_x)$,
where $x \leq n$ is the the number of units sold to the buyer.
Each $p_i \in \preals$ is the payment charged for the $i$-th unit on day $i$.
The buyer 
is assumed (w.l.o.g.) to be deterministic,
and
ties in the buyer's utility are broken in favor of the seller.
Our goal is to design a \swac\ that maximizes in expectation the seller's \emph{net reward}, which is the total reward over all units sold, minus the cost of the allocation.

\paragraph{Notation and terminology.}

Given a stagewise auction $\A$ and an buyer $v$ that makes bid $b$, we use the following notation and terminology (omitting $\A$ where clear from the context):
\begin{itemize}
    \item The \emph{allocation}, $\x[\A]{b}$, is the number of units allocated to the buyer.
    \item The \emph{total payment} of the buyer is denoted $\p[\A]{b} = \sum_{i = 1}^{\x[\A]{b}} p_i$, where $(p_1,\ldots,p_{\x[\A]{b}})$ 
    are the payments charged by the seller.
    \item The \emph{cumulative utility} of the buyer
    from the first $i \leq \x{b}$ 
    units they are allocated is denoted
    $\utili[\A]{i}{v}{b} = i \cdot v - \insumof{j}{1}{i} p_j$.
    \item The \emph{overall utility} of the buyer is $\util{v}{b}=\utili{\x{b}}{v}{b}$.
\item The seller's net reward is denoted by
$\rew[\A]{v}{b} = \insumi{1}{\x{b}}\g(v,p_i)-c(\x{b})$.
\end{itemize}

To simplify the notation, we let $\trew[\A]{v}$ denote the seller's net reward when an buyer with valuation $v$ bids optimally (i.e., in a manner that maximizes the buyer's utility).
The seller aims to maximize $\E[\A]=\E_{v\sim\D}[\trew[\A]{v}]$. We say that a \swac is \emph{truthful} if bidding truthfully is a weakly-dominant strategy for an agent with any valuation.
Finally, for ease of exposition, we may refer to an $(n,\D,\g,c)$-stagewise auction as an $n$-auction, when the context is clear.

\paragraph{Stagewise individual rationality (\swir).}
The key feature of our model is that agents are not willing to temporarily incur negative utility at any point in time.
Formally, a \swir agent with valuation $v$
will not make bid $b \in \B$
if there exists a day $1 \leq i \leq \x[\A]{b}$
on which $\utili[\A]{i}{v}{b} < 0$.

\paragraph{Filters.}
The filter 
represents the highest cumulative average payment encountered at any day and is given by $\f{b}=\max_{\ell\leq\x{b}}
        \insumi{1}{\ell}p_i/\ell$.
A \swir\ agent will not submit a bid resulting in a negative cumulative utility at any day, hence they can only submit a bid $b$ if $\f{b}\leq v$. 
For example, if the payments induced by some bid are $(3,4,2)$, 
the cumulative average payments at each timestep are: $
\frac{3}{1}, \frac{3+4}{2} = 3.5$ and $ \frac{3+4+2}{3} = 3$.
Thus, the \emph{filter for this bid} is \( \max \{ 3, 3.5, 3 \} = 3.5 \), which occurs after the first two days. 
Consequently, an agent with valuation \( v < 3.5 \) cannot submit this bid.
In essence, the filter served as a ``barrier to entry'' for making bid $b$:
it is the smallest valuation of any \swir\ agent that can afford
to place bid $b$,
without going into negative utility at any point. As a design tool, 
where truthfulness is concerned,
they are useful in preventing
{overbidding},
not {underbidding}. Observe that a payment schedule can, w.l.o.g., consist of a high payment on the first day—thereby defining the filter—with the remaining payments spread evenly over the subsequent days.

\paragraph{Finite-menu stagewise auctions.}
Our monotonicity results for positive and negative tradeoff are restricted to
\swacs that have a finite number of options:
there is a finite number of payment schedules
that the seller offers.
W.l.o.g., such \swacs can be represented as follows:
we say that a \swac\ is \emph{finite-menu}
if it is truthful
and its allocation function has the form
$\I_1 \mapsto x_1,\ldots, \I_k \mapsto x_k$,
where
$\I_1 = [ t_1, t_2 ), \ldots, \I_k = [t_k, t_{k+1}]$ are adjacent non-overlapping intervals,
with $\inf \V = t_1 < t_2 < \ldots < t_k < t_{k+1} = \sup \V$,
and $x_i$ is the allocation for any bid in the interval $\I_i$.
The payment schedule associated with all bids in an interval $\I_i$
is the same;
we denote by $p_i$ the total payment for a bid in $\I_i$.
Although technically agents submit bids in $\V$ (the auction is truthful),
since all bids in a given interval lead to the same outcome,
it is convenient to think of a bid as the interval $\I_i$ to which it belongs,
and accordingly we sometimes use the notation $\util{v}{\I_i}$.

We say that a \finauc\ is \emph{finite-menu optimal} (or \emph{FM-optimal}, for short)
if it is optimal within the family of all \finaucs.

\subsection{Rental Mechanisms}\label{sec:rental_mech_definitions}
 A \emph{rental mechanism} for the $(n, \boldsymbol{\D},\g)$-rental game is 
 a mapping $\mech{M}$ that
 takes the current history $\history$ of the game,
 and if the asset is currently available,
 returns a \swac\ $\M(\history)$
 that is used to negotiate with the current day's agent.
 The number of units in the \swac\ is equal to the
 number of days remaining in the game,
 but the other parameters (the distribution, reward function and cost function) can be chosen arbitrarily by the designer.
   The allocation returned by the \swac\ determines the number of days the current agent rents the asset, and its payments specify the sequence of daily rental fees charged to the agent.

Let $v_1,\ldots,v_n$ be the valuations of the $n$ agents that arrive over the $n$ days,
and suppose that under the mechanism $\M$
the designer rents the item to agent $j_i$ on day $i$ and charges payment $p_i$.
If the asset is not rented out to any agent on day $i$, then $j_i = \bot$ and $p_i = 0$.
The designer's \emph{reward}
is given by 
$\trew[\mech{M}]{v_1,\dots,v_n} = \sum_{1 \leq i \leq n : j_i \neq \bot} g( v_{j_i}, p_i)$.
The designer’s objective is to maximize their expected reward,
$
\E\sqbr{\trew[\mech{M}]{v_1,\dots,v_n}},
$ 
sometimes denoted $\E[\M]$ for short.

Throughout the remainder of the paper, we purposefully use some of the \swac notation of Section~\ref{subsection:auction} in the context of rental mechanisms as well, since a rental mechanism consists of a sequence of \swacs.

\section{From Rental Games to Stagewise Auctions with Cost}\label{subsec:rental-to-swacs}

In this section we  explain how we reduce the rental mechanism design problem from an \emph{online} problem to an \emph{offline} one by showing that the designer can fix in advance a sequence: $\A_1,\ldots,\A_n$ of optimal stagewise auctions (\swacs),
where $\A_i$ is to be used to negotiate with the agent that arrives on day $i$ if the asset is available that day. We remark that this reduction holds for \emph{any} reward function, even if it does not satisfy the assumptions in Section~\ref{subsec:rental}. 
The full results for this section appear in Appendix~\ref{app:rental-to-swacs}.

The first step in our reduction is to show that w.l.o.g.,
an optimal rental mechanism is a \emph{fixed} (i.e., predetermined) sequence of \swacs:
because agent valuations are independent of one another,
an optimal rental mechanism can be \emph{history-independent},
meaning that 
although we do not know in advance whether the asset will be available to rent on a given day and what we did with the asset prior to that day,
we can decide in advance what the mechanism should do \emph{if} the asset is available that day,
regardless of the history up to that day.
Next we show that w.l.o.g., the \swac\
for the $i$-th day uses a specific seller cost function,
which captures
the designer's \emph{opportunity cost} ---
the expected loss of future reward incurred by renting the asset to the current agent.
If the designer rents out the asset for $x\geq1$ days at horizon $n$,
they forgo the expected reward from days $n-1,\dots,n-(x-1)$, since the asset is unavailable during this period. 
However,
the asset becomes available again at horizon $n - x$, 
so the designer can gain the expected reward from that point onward.
The \emph{opportunity cost} is defined as the difference between these two expected rewards:

\begin{definition}
An $(n,\boldsymbol{\D},\g)\textnormal{-over-time cost function}$ is defined as:
\begin{align*}
    c^{\boldsymbol{\D}}_{n,\g}(x)
    &=
    \begin{cases}
        R^{\boldsymbol{\D}_{\text{last }n-1},\g}_{n-1}-R^{\boldsymbol{\D}_{\text{last }n-x},\g}_{n-x},&\text{if }x\geq 1\\
        0,&\text{otherwise}
    \end{cases} %
    \qquad
    =
    \qquad
    R^{\boldsymbol{\D_{\text{last }n-1}},\g}_{n-1}-R^{\boldsymbol{\D}_{\text{last }n-\max\{x,1\}},\g}_{n-\max\{x,1\}},
\end{align*}
where $R^{\boldsymbol{\D},\g}_m$ denotes the expected reward of an optimal $(m,\boldsymbol{\D},\g)$-rental.
\label{def:cost-function}
\end{definition}
For clarity, we may omit superscripts and subscripts when the context is unambiguous.

The reduction from rental games to \swacs\ is summarized by the following theorem:

\begin{theorem}\label{theorem:rental_as_auctions}

    For all $h\in\psqrrange{n}$, let $\A_h$ be an optimal $\brackets{h,\D_h,\g,c_{h,\g}^{\boldsymbol{\D}}}$-stagewise auction with over-time cost. Then, the rental mechanism $\brackets{\A_n,\dots,\A_1}$ is optimal for the $(n,\boldsymbol{\D},\g)$-rental game.

\end{theorem}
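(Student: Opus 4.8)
The plan is to prove the theorem by induction on the horizon $h$, establishing simultaneously a one-step recursion for the optimal rental value and the fact that the mechanism $(\A_h,\dots,\A_1)$ attains it. Write $R_h$ for the optimal expected reward of the horizon-$h$ rental game (i.e.\ $R^{\boldsymbol{\D}_{\text{last }h},\g}_h$ in the notation of Definition~\ref{def:cost-function}), and let $\D_h$ denote the distribution of the agent who arrives when $h$ days remain. The base case $h=0$ is immediate: no agent arrives, so $R_0=0$, attained by the empty mechanism.

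The first step is the structural fact flagged before the theorem, which is where independence enters. Because the $n$ valuations are mutually independent and the reward is additive over days, the optimal value attainable from the first day on which the asset is free again depends only on the number $h'$ of days then remaining, and equals $R_{h'}$ regardless of the past realized valuations and allocation decisions. Hence an optimal rental mechanism may be taken history-independent, and the optimal value obeys the Bellman equation: if (bidding optimally) a buyer with valuation $v$ obtains under a stagewise auction $\A$ played on the current (horizon-$h$) day an allocation $x(v)\in\{0,\dots,h\}$ and payment schedule $(p_1,\dots,p_{x(v)})$, then
\[
R_h \;=\; \sup_{\A}\; \E_{v\sim\D_h}\!\left[\;\sum_{i=1}^{x(v)} \g(v,p_i)\;+\;R_{\,h-\max\{x(v),1\}}\;\right],
\]
the continuation horizon being $h-x(v)$ after a rental of $x(v)\ge 1$ days and $h-1$ when the agent is turned away, which $\max\{\cdot,1\}$ unifies.

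The heart of the argument is then a bookkeeping identity linking this recursion to the over-time cost. By Definition~\ref{def:cost-function}, $R_{\,h-\max\{x,1\}} = R_{h-1} - c^{\boldsymbol{\D}}_{h,\g}(x)$, so the bracketed expression equals $\big(\sum_i \g(v,p_i) - c^{\boldsymbol{\D}}_{h,\g}(x(v))\big) + R_{h-1}$, i.e.\ the net reward $\trew[\A]{v}$ of $\A$ viewed as an $\big(h,\D_h,\g,c^{\boldsymbol{\D}}_{h,\g}\big)$-stagewise auction plus the constant $R_{h-1}$. Taking the supremum over $\A$ and using that $\A_h$ is an optimal such auction gives $R_h = R_{h-1} + \E[\A_h]$. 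For the mechanism $(\A_h,\dots,\A_1)$, let $W_h$ be its expected reward on the horizon-$h$ game; unrolling a single day and observing that after renting $x$ days (resp.\ turning the agent away) it proceeds as $(\A_{h-x},\dots,\A_1)$ (resp.\ $(\A_{h-1},\dots,\A_1)$) yields $W_h=\E_{v\sim\D_h}\big[\sum_i \g(v,p_i)+W_{\,h-\max\{x(v),1\}}\big]$, with $x(v)$ and $(p_i)$ now induced by $\A_h$. Substituting the induction hypothesis $W_{h'}=R_{h'}$ for $h'<h$ and the same identity $R_{\,h-\max\{x,1\}}=R_{h-1}-c^{\boldsymbol{\D}}_{h,\g}(x)$ collapses this to $W_h=\E[\A_h]+R_{h-1}=R_h$. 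Setting $h=n$ yields the theorem.

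I expect the main difficulty to lie in the Bellman step rather than the algebra: one must justify rigorously that restricting to history-independent mechanisms (equivalently, to a predetermined sequence of stagewise auctions) loses nothing, and that the continuation value after a length-$x$ rental is precisely $R_{h-x}$ for every $x\ge 1$ while being $R_{h-1}$ for $x=0$. These are exactly the two ingredients of the reduction sketched before the theorem---history-independence, and the identification of the correct seller cost as the opportunity cost $c^{\boldsymbol{\D}}_{h,\g}$---and the full details (Appendix~\ref{app:rental-to-swacs}) amount to making these two points precise; the induction above simply threads them together.
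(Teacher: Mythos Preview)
Your proposal is correct and takes essentially the same approach as the paper: both establish the Bellman recursion $R_h=\sup_{\A}\E\big[\sum_i g(v,p_i)+R_{h-\max\{x(v),1\}}\big]$ via history-independence, then use the identity $R_{h-\max\{x,1\}}=R_{h-1}-c^{\boldsymbol{\D}}_{h,\g}(x)$ to identify the inner supremum with the optimal $(h,\D_h,\g,c^{\boldsymbol{\D}}_{h,\g})$-\swac. The paper factors this through an intermediate notion of ``subgame optimal'' auctions (Lemmas~\ref{lemma:auction-history-independent} and~\ref{lemma:over-time-cost-subgame-optimal}), whereas you fold both steps into a single induction; the content is the same, and as you note, the substantive work lies in justifying the Bellman step (history-independence and the continuation-value identification), which is exactly what Appendix~\ref{app:rental-to-swacs} carries out.
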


The remainder of this technical overview describes how to find an optimal \swac\ under welfare-like, revenue-like, positive tradeoff or negative tradeoff reward.
However, our results are not restricted to the over-time cost from Definition~\ref{def:cost-function}:
they apply under any seller cost function, as long as the reward function
of the \swac\ is from one of the four classes.

\section{Useful Properties of Stagewise Auctions}\label{sec:properties}

The following propositions are used throughout,
and help establish some intuition for the structure of \swacs
and the way agents behave in them. These properties help establish key characteristics of truthful \swacs and illustrate how payments, allocations, and filtering interact.

\begin{proposition}
    \label{prop:decreasing-payments}
    In any truthful \swac,
    if two bids $w < v$
    receive the same allocation, $\x{w} = \x{v}$,
    then 
    the higher-valuation agent pays less,
    that is,
    $\p{w} \geq \p{v}$.
\end{proposition}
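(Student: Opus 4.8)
The plan is a short single-deviation (exchange) argument exploiting an asymmetry of \swir: a higher-valuation agent can always afford to imitate the bid that a lower-valuation agent accepts, but not conversely. So truthfulness should force the higher-valuation agent's total payment to be no larger.

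Concretely, suppose toward a contradiction that $\p{w} < \p{v}$, write $x := \x{w} = \x{v}$, and let $(p_1,\ldots,p_x)$ be the payment schedule that $\A$ outputs on bid $w$ (note this outcome depends on the bid alone, not on who submits it). The steps, in order, are: \textbf{(i)} Since $\A$ is truthful, an agent with valuation $w$ submits bid $w$, and this is a \swir-feasible action, so $\utili{\ell}{w}{w} = \ell w - \sum_{i=1}^{\ell} p_i \ge 0$ for every $\ell \le x$ (equivalently, the filter satisfies $\f{w} \le w$). \textbf{(ii)} The same bid $w$ is \swir-feasible for an agent with the strictly larger valuation $v$: for every $\ell \le x$, $\utili{\ell}{v}{w} = \ell v - \sum_{i=1}^{\ell} p_i \ge \ell w - \sum_{i=1}^{\ell} p_i \ge 0$, using $v > w \ge 0$ (equivalently $\f{w} \le w < v$). \textbf{(iii)} Hence agent $v$ may deviate to bid $w$ and obtain $\util{v}{w} = xv - \p{w} > xv - \p{v} = \util{v}{v}$, contradicting the weak dominance of truthful bidding. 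Therefore $\p{w} \ge \p{v}$. (The degenerate case $x=0$ is immediate since then $\p{w} = \p{v} = 0$.)

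There is no real obstacle here; the only point worth stating carefully is step (ii) — the mimicry goes one way only, because raising the valuation can only raise every cumulative utility along the schedule, so a higher type never runs into the \swir\ constraint on a lower type's schedule. This is precisely why the conclusion is the one-sided inequality $\p{w}\ge\p{v}$ rather than equality, and I expect this proposition to be invoked repeatedly later, e.g.\ alongside the front-loading-the-filter observation when analyzing monotonicity and payment structure of optimal \swacs.
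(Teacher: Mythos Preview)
Your proof is correct and follows essentially the same approach as the paper's: both argue that agent $v$ can afford the payment schedule attached to bid $w$ (since every cumulative utility only increases when the valuation increases), and then use truthfulness to conclude that $v$ cannot strictly prefer bid $w$, forcing $\p{w}\ge\p{v}$. Your write-up is more explicit about the \swir-feasibility step and the degenerate $x=0$ case, but the underlying argument is identical.
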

\begin{proof}
    Suppose $\p{w} \neq \p{v}$ (otherwise we are done).
    Agent $v$ can withstand any payment schedule that agent $w$ can (as $v > w$),
    but since the \swac is truthful,
    we know that $v$ chooses not to bid $w$.
    The allocations for the two bids are the same, so the only reason agent $v$ prefers to bid $v$ rather than $w$ is that
    they pay less,
    that is,
    $\p{w} > \p{v}$.
\end{proof}

\begin{proposition}
    \label{prop:benefit_swap}
Consider a (not necessarily truthful) stagewise auction $\A$,
let $w < v$ be valuations,
    and let $b, b'$ be two bids such that $\x{b} > \x{b'}$.
    Then the benefit that agent $w$
    derives by switching 
    from bid $b$ to bid $b'$
    exceeds that of agent $v$,
    that is,
        $\util{w}{b'} - \util{w}{b}
        >
        \util{v}{b'} - \util{v}{b}
        $.
    \end{proposition}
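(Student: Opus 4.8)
The plan is to reduce the proposition to a one-line algebraic identity using the closed form of the overall utility. Recall from the definitions that for any bid $\beta$ and any valuation $t$ we have $\util{t}{\beta} = \utili{\x{\beta}}{t}{\beta} = \x{\beta}\cdot t - \p{\beta}$, where $\x{\beta}$ is the allocation and $\p{\beta} = \sum_{i=1}^{\x{\beta}} p_i$ is the total payment under bid $\beta$. Crucially, this expression is purely a function of the allocation and total payment induced by the bid, so it is valid for the arbitrary (not necessarily truthful) \swac\ $\A$ in the statement; in particular $\x{b}, \x{b'}, \p{b}, \p{b'}$ are all well-defined.

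First I would write out the two switching benefits explicitly:
\[
\util{w}{b'} - \util{w}{b} = w\brackets{\x{b'} - \x{b}} - \brackets{\p{b'} - \p{b}}, \qquad \util{v}{b'} - \util{v}{b} = v\brackets{\x{b'} - \x{b}} - \brackets{\p{b'} - \p{b}}.
\]
Subtracting the second from the first, the payment terms cancel, leaving
\[
\brackets{\util{w}{b'} - \util{w}{b}} - \brackets{\util{v}{b'} - \util{v}{b}} = (w - v)\brackets{\x{b'} - \x{b}}.
\]

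To conclude, I would observe that $w - v < 0$ since $w < v$, and $\x{b'} - \x{b} < 0$ since $\x{b} > \x{b'}$ by hypothesis; hence the right-hand side is strictly positive, which is exactly the claimed strict inequality. There is no real obstacle: the proposition is simply the quantitative form of the observation that a larger allocation is worth more to a higher-valuation agent, and the only point requiring any care is that both hypotheses are strict, which is what makes the conclusion strict. I would also remark (for later use in the monotonicity arguments) that the same computation yields equality of the two switching benefits when $\x{b} = \x{b'}$, and that the direction of the inequality is reversed if instead $\x{b} < \x{b'}$.
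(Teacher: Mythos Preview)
Your proof is correct and is essentially identical to the paper's argument: both expand $\util{z}{b'} - \util{z}{b} = (\x{b'}-\x{b})z - (\p{b'}-\p{b})$ and use that $\x{b'}-\x{b}<0$ makes this strictly decreasing in $z$. The paper phrases the last step as ``strictly decreasing in $z$'' rather than subtracting the two instances, but this is the same computation.
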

\begin{proof}
    For any valuation $z$,
    \begin{equation*}
        \util{z}{b'} - \util{z}{b}
        =
        \left( \x{b'} - \x{b} \right) \cdot z - \left( \p{b'} - \p{b} \right).
    \end{equation*}
This is strictly 
decreasing
in $z$, and the claim follows.
\end{proof}

Observe that the marginal utility of allocating an additional unit at some fixed price is higher for a ``stronger'' agent than for a ``weaker'' one. This implies that, in general, the stronger agent is more willing to accept a larger allocation than the weaker agent. Thus, if a weaker agent $w$ receives more units than a stronger agent $v$ in a truthful auction, it is because the weaker agent was filtered out from bidding similarly to the stronger agent. Formally, $\p{v} \geq \f{v} > w$ and $\x{v}>1$, as the filter would not be possible otherwise. This situation is formalized in the following proposition for truthful \swacs. For non-truthful \swacs the proof is very similar.

\begin{proposition}
\label{prop:un_monotonicity_means_filter}
    Let $\A$ be a deterministic and truthful \swac. For valuations $w<v$, if $\x{w}>\x{v}$, then agent $w$ would strictly benefit from bidding $v$ but is filtered out, that is, $\util{w}{v}>\util{w}{w}$ and $\p{v}\geq\f{v}>w$.
\end{proposition}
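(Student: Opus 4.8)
The plan is to combine truthfulness with Proposition~\ref{prop:benefit_swap}. First I would extract a consequence of truthfulness for the \emph{higher}-valuation agent: since the \swac is truthful, agent $w$ submits bid $w$, and this bid, being \swir-feasible for them, satisfies $\f{w}\leq w$. As $v>w\geq\f{w}$, bid $w$ is also feasible for agent $v$, so weak dominance of truthful bidding for $v$ gives $\util{v}{v}\geq\util{v}{w}$.

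Next I would invoke Proposition~\ref{prop:benefit_swap} with valuations $w<v$ and bids $b=w$, $b'=v$; the hypothesis $\x{b}>\x{b'}$ is exactly the assumption $\x{w}>\x{v}$. The proposition yields $\util{w}{v}-\util{w}{w} > \util{v}{v}-\util{v}{w}$, and combining with $\util{v}{v}\geq\util{v}{w}$ from the first step gives $\util{w}{v}-\util{w}{w}>0$, i.e.\ $\util{w}{v}>\util{w}{w}$. This is the first assertion, and it formalizes ``agent $w$ would strictly benefit from bidding $v$''.

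For the second assertion I would argue by contradiction from truthfulness of agent $w$: if bid $v$ were \swir-feasible for valuation $w$ --- equivalently, if $\f{v}\leq w$ --- then, since $v$ lies in the bidding set (it is agent $v$'s truthful bid) and the \swac is deterministic, agent $w$ would possess a feasible deviation that strictly increases their utility (by the first assertion), contradicting weak dominance of bidding $w$. Hence $\f{v}>w$. Since $w\geq 0$ and $\f{v}=\max_{1\leq\ell\leq\x{v}}\frac{1}{\ell}\sum_{i=1}^{\ell}p_i>0$, the maximum is over a nonempty index set, so $\x{v}\geq 1$; and then for every $\ell\leq\x{v}$, using $\ell\geq 1$ and nonnegativity of payments, $\frac{1}{\ell}\sum_{i=1}^{\ell}p_i\leq\sum_{i=1}^{\ell}p_i\leq\sum_{i=1}^{\x{v}}p_i=\p{v}$, so $\p{v}\geq\f{v}>w$, completing the proof.

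The step I expect to be the most delicate is the last one: turning the informal statement ``agent $w$ would benefit from bid $v$ yet truthfulness holds, therefore the filter must block it'' into a rigorous deduction requires being explicit that the \emph{only} reason a utility-improving deviation to $v$ can be unavailable to a \swir agent with valuation $w$ is the filter constraint --- which relies on $v$ being a legitimate bid, on determinism of the \swac, and on the characterization that a \swir agent may submit bid $b$ precisely when $\f{b}$ is at most their valuation. The remaining steps are short calculations or direct appeals to truthfulness and Proposition~\ref{prop:benefit_swap}.
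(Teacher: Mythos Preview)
Your proposal is correct and follows essentially the same approach as the paper: use truthfulness of agent $v$ (who can afford bid $w$) together with Proposition~\ref{prop:benefit_swap} to get $\util{w}{v}>\util{w}{w}$, and then conclude from truthfulness of agent $w$ that the filter must block bid $v$. You are in fact more explicit than the paper in two places---you justify why bid $w$ is \swir-feasible for agent $v$, and you supply the short argument for $\p{v}\geq\f{v}$, which the paper's proof omits.
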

\begin{proof}
Let $\A$ be a deterministic and truthful \swac. Suppose that there are valuations $w<v$ for which $\x{w}>\x{v}$. Since $v$ chooses not to bid $w$ even though they can, it follows that $\util{v}{v}-\util{v}{w}$. It follows immediately from Prop.~\ref{prop:benefit_swap} that $\util{w}{v}>\util{w}{w}$. Since $w$ is truthful in $\A$ and is utility-maximizing, we know they are filtered out, i.e. $\f{v}>w$.
\end{proof}

\section{Monotonicity and Fixed-Rate Payments}
\label{sec:tech-overview-mono}

The two structural properties that play an important role in finding an optimal \swac
are \emph{monotonicity} in the reward and/or the allocation (depending on the reward function),%
\footnote{
As we pointed out in Section~\ref{sec:intro},
whereas all truthful classical auctions are monotone~\cite{myerson1981optimal}, for \swacs\ monotonicity is more subtle, and does not hold for any truthful \swac\ but rather only for optimal \swacs.}
and whether or not an optimal \swac 
is \emph{fixed rate} w.l.o.g.,
that is,
whether
we may assume that
the designer
offers only agreements where the agents pay the same amount on every day.
Fixed-rate payments \emph{imply} monotonicity:
it is not hard to show that Myerson's lemma applies in this case.
However, the converse is not necessarily true.

First, we give some formal definitions.
\begin{definition}
    A stagewise auction $\A$ is called \textnormal{\fr}
    if for every bid $b$,
    there exists a price $p$
    such that $\A[b] = (p,\ldots,p)$,
    meaning all units carry the same price $p$.
\end{definition}

 We consider several notions of monotonicity:

\begin{definition}[Allocation Monotonicity]\label{def:allocation-monotonicity}
    A \swac\ is:
    \begin{itemize}
        \item \textnormal{\Samono} if for all $w,v\in \V$ such that $w<v$, it holds that $\x{w}\leq \x{v}$.
        \item \textnormal{\Amono} if there is a set $\S$ with $\Pr_{\D}[\S]=0$, and for all $w,v\in\V\setminus\S$, if $w<v$ then $\x{w}\leq \x{v}$.
        \item \textnormal{\Wamono} if there is a set $\S$ with $\Pr_{\D}[\S]=0$ such that for all $w,v\in\V\setminus\S$, if $w<v$ and $\trew{w}\neq \trew{v}$ then $\x{w}\leq \x{v}$.
    \end{itemize}
\end{definition}
Observe that these definitions are hierarchical: a \samono\ \swac\ is also \amono, and an \amono\ \swac\ is also \wamono.

\begin{definition}[Reward Monotonicity]\label{def:rmono}
    A \swac\ is \textnormal{\rmono} if there is a set $\S$ with $\Pr_{\D}[\S]=0$ such that for all $w,v\in\V\setminus\S$, if $w<v$ then $\trew{w}\leq\trew{v}$.
\end{definition}

We now turn to the individual reward classes.

\subsection{Welfare-Like Reward: Monotone, and \FR W.L.O.G}\label{sec:welfare-like-mono}

The first class of reward functions we deal with is welfare-like reward, in which the reward is independent of the payment.
In this case, the designer's reward increases with the agent's valuation, so clearly a monotone non-decreasing allocation rule is optimal. A monotone allocation rule is clearly implementable by total payments equal to the unique payment rule of Myerson~\cite{myerson1981optimal} with no filters, i.e. \fr payments. Since in welfare-like reward the designer doesn't care about the payments made, this is also optimal, and it follows that \fr \swacs are w.l.o.g. for welfare-like reward.

For the sake of completeness, the remainder of this section is dedicated to formally proving this intuition.

\begin{lemma}\label{lemma:welfare-like}
    For a welfare-like reward function:
    \begin{enumerate}
        \item There exists a \fr\ and \samono\ optimal \swac.\label{item:first}
        \item All truthful optimal \swacs\ are \wamono. \label{item:second}
        \item If either the reward or the cost function are strictly increasing, all truthful \swacs\ are \amono.\label{item:third}
        \item All truthful optimal \swacs\ are \rmono.\label{item:fourth}
    \end{enumerate}
\end{lemma}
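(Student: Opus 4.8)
The plan is to prove the four parts roughly in the order listed, since later parts build on the structural picture established in the first. For part (1), I would start from the observation that for a welfare-like reward $g(v,p) = f(v)$ (non-decreasing in $v$), the designer's net reward from an allocation of $x$ units to agent $v$ is $x\cdot f(v) - c(x)$, which depends only on the allocation, not on the payment schedule. So I would first argue that there is an optimal allocation rule that is monotone non-decreasing: define $x^*(v) \in \arg\max_{0 \le x \le n}\, (x f(v) - c(x))$, breaking ties toward larger $x$. Since $f$ is non-decreasing, a standard exchange argument shows $x^*$ can be taken monotone non-decreasing in $v$ (if $w < v$ but the maximizer at $w$ exceeds that at $v$, the value $x f(w) - c(x)$ as a function of $x$ has increasing differences in $f(\cdot)$, so the optimal $x$ is monotone in $f(v)$, hence in $v$). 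Then I must exhibit a \emph{truthful}, \fr \swac implementing this $x^*$: charge, for an agent reporting $v$, the Myerson payment $P(v) = v\cdot x^*(v) - \int_{\inf\V}^{v} x^*(t)\,dt$ spread evenly over the $x^*(v)$ days, i.e.\ per-day price $p(v) = P(v)/x^*(v)$. The classical Myerson argument (monotone allocation $\Rightarrow$ this payment rule is truthful and individually rational) gives truthfulness; I then need the extra check that this \swac is \swir, i.e.\ the filter never exceeds $v$ — but with a fixed-rate schedule the filter equals the per-day price $p(v)$, and IR of the Myerson payment gives $x^*(v)\cdot v - P(v) \ge 0$, i.e.\ $p(v) \le v$, so the \swir constraint is automatically satisfied. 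Since the reward doesn't depend on payments, this \fr \swac achieves the optimal (pointwise-maximal) reward, proving part (1).

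For part (2), let $\A$ be any truthful optimal \swac. I would argue by contradiction: suppose on a positive-measure set there are $w < v$ with $\trew{w} \ne \trew{v}$ but $\x{w} > \x{v}$. Optimality means that for (almost) every valuation, $\A$ achieves the pointwise-optimal reward $\max_x (x f(\cdot) - c(x))$; in particular $\x{w}$ is a maximizer of $x f(w) - c(x)$ and $\x{v}$ is a maximizer of $x f(v) - c(x)$. By the increasing-differences argument, $\x{w} > \x{v}$ together with $f(w) \le f(v)$ forces $x f(w) - c(x)$ and $x f(v) - c(x)$ to be maximized on overlapping sets in a way that makes both $\x{w}$ and $\x{v}$ optimal at both valuations, hence $\x{v} f(w) - c(\x{v}) = \x{w} f(w) - c(\x{w})$ and similarly at $v$; combined with $f(w) \le f(v)$ this yields $f(w) = f(v)$ whenever the inequality $\x w > \x v$ is strict, which gives $\trew w = \trew v$, contradicting the assumption. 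Hence outside a measure-zero set, $w < v$ and $\trew w \ne \trew v$ imply $\x w \le \x v$: this is exactly \wamono.

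For part (3), if the reward $f$ or the cost $c$ is strictly increasing, I would show the hypothesis "$\trew{w} \ne \trew{v}$" in part (2) becomes automatic whenever $\x w \ne \x v$ — because with strict monotonicity of $f$ (resp.\ strict monotonicity of $c$ combined with $f$ non-decreasing), the function $x \mapsto x f(v) - c(x)$ has a \emph{unique} maximizer, or at least the optimal reward value is strictly monotone in $v$, so distinct allocations on a positive-measure pair would already force distinct rewards and we are back in the situation handled by the exchange argument; hence every truthful optimal \swac is \amono. (I would need to be a little careful here: I may only get \amono rather than \samono because there can still be a measure-zero "boundary" set of valuations where ties in the $\arg\max$ are broken inconsistently; the definition of \amono permits exactly such an exceptional null set.) For part (4), reward-monotonicity, I would use that $\trew{v} = \max_x (x f(v) - c(x))$ for almost every $v$ (by optimality), and this is a pointwise maximum of functions $x f(v) - c(x)$ each of which is non-decreasing in $v$ (since $f$ is non-decreasing and $x \ge 0$); a pointwise max of non-decreasing functions is non-decreasing, so $\trew{w} \le \trew{v}$ for $w < v$ outside the null set — this is \rmono.

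The main obstacle I expect is part (2)/(3): carefully handling the tie-breaking in the $\arg\max$ and pinning down exactly where the measure-zero exceptional set comes from, and making the "$\x w > \x v \wedge f(w)\le f(v) \Rightarrow f(w)=f(v)$" step fully rigorous via the increasing-differences / supermodularity structure of $x f(v) - c(x)$ in $(x, f(v))$. The \swir / filter check in part (1) is easy once one notices the filter of a fixed-rate schedule is just the per-day price, but it is the one place where the \swac setting genuinely differs from classical Myerson and so is worth stating explicitly.
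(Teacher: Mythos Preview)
Your proposal is correct and follows essentially the same approach as the paper: define the pointwise-optimal allocation set $\mathcal{X}_v=\arg\max_x\{x f(v)-c(x)\}$, use the exchange/increasing-differences argument to obtain a monotone selection and implement it via Myerson's payment rule spread at a fixed rate (your explicit \swir check that the filter equals the per-day price $\le v$ is a nice addition the paper only states in passing); then for parts~(\ref{item:second})--(\ref{item:fourth}) use that optimality forces $\x{v}\in\mathcal{X}_v$ almost everywhere, so the same exchange inequalities yield $f(w)=f(v)$ whenever $\x{w}>\x{v}$, from which weak-allocation-monotonicity, allocation-monotonicity under strictness, and reward-monotonicity all follow exactly as you outline.
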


\begin{proof}
    Let $\g$ be a reward function matching the conditions of the lemma. Since $\g$ is payment-independent, in this proof we abuse notation and write $\g(v)$ instead of $\g(v,\cdot)$.
    
    Consider an agent $v\in\V$. Since the designer's reward depends only on the number of units sold to the agent, an allocation that would maximize the designer's reward is in
    \begin{align}
        \mathcal{X}_v=\argmax{x\in\sqrrange{n}}\set{x\cdot\g(v)-c(x)}.
    \end{align}
    
    Note that we don't yet know this allocation is implementable, meaning that we don't know we can achieve this allocation by a truthful auction. 

    For any valuations $w<v$ and allocations $x_w\in\mathcal{X}_w$ and $x_v\in\mathcal{X}_v$, it holds that:
    \begin{align}
        &\trew{w}=x_w\cdot\g(w)-c(x_w)\geq x_v\cdot \g(w)-c(x_v)\label{eq:2}\\
        &\trew{v}= x_v\cdot \g(v)-c(x_v)\geq x_w\cdot\g(v)-c(x_w)\label{eq:1}
    \end{align}

    Rearranging the above equations, we get that 
    \begin{align}
        (x_w-x_v)\g(w)\geq c(x_w)-c(x_v) \geq (x_w-x_v)\g(v)\label{eq:7}.
    \end{align}

    Suppose $x_w>x_v$. Since $\g$ is non-decreasing in the valuation, Eq.~(\ref{eq:7}) yields $\g(w)=\g(v)$. 
    
    Define an allocation rule $\alloc'(v)=\min\mathcal{X}_v$, where we choose the largest allocation that maximizes the designer's reward for agent $v$. If $\alloc'(w)>\alloc'(v)$ we would have $\g(w)=\g(v)$, therefore $\mathcal{X}_w=\mathcal{X}_v$ and in particular $\alloc'(w)=\alloc'(v)$, which is a contradiction. Thus $\alloc'$ is a monotone (weakly) increasing allocation rule, and due to Myerson it can be implemented as a truthful \fr \swac; 
    given an agent, we ask them to reveal their bid, then sell $\alloc'(v)$ units at the total price of the unique payment rule, such that the payments are \fr over all days. Since it is a \fr \swac, \swir has no affect here and thus Myerson guarantees that this is truthful, and this proves point~\ref{item:first} from the theorem.

    Let $\A$ be an optimal and truthful $(n,\D,\g,c)$-stagewise auction.  Define $\S=\sett{v\in\V}{\x{v}\notin\mathcal{X}_v}$. For all $v\in\S$ we know that $\trew[\A]{v}$ is less than the designer's reward from $v$ in an optimal auction, and for all $v\notin\S$ we know $\trew[\A]{v}$ is as high as possible. Thus since $\A$ is optimal we know $\Pr[\S]=0$. 
    \begin{itemize}
        \item We will prove that $\A$ is \wamono, for point~\ref{item:second} of the proof. Let $w,v\in\V\setminus\S$ such that $w<v$ and $\trew[\A]{w}\neq \trew[\A]{v}$. We need to prove $\x{w}\leq\x{v}$.

    Suppose $\x{w}>\x{v}$. In this case, as we proved above, $\g(w)=\g(v)$. Thus, since $\trew[\A]{w}\neq \trew[\A]{v}$, the inequality~(\ref{eq:2}) is strict, and subsequently one of the inequalities in eq.~(\ref{eq:7}) is also strict, which is a contradiction to $\g(w)=\g(v)$ and completes the proof of point~\ref{item:second}.
    \item 
    We now prove specific cases in which $\A$ is \amono for point~\ref{item:third}. Suppose $w,v\in\V\setminus\S$ but $\trew[\A]{w}=\trew[\A]{v}$ (the case of inequality is complete, due to point~\ref{item:second}). In this case equation~(\ref{eq:1}) yields
    \begin{align*}
        \trew[\A]{w}=\x{w}\g(w)-c(\x{w})=\trew[\A]{v}\geq\x{w}\g(v)-c(\x{w}),
    \end{align*}
    giving either $\x{w}=0$, which is enough for \amono, or $\g(w)=\g(v)$. If $\g$ is strictly increasing, we are done, so suppose $c$ is strictly increasing. In this case one of the inequalities from eq.~(\ref{eq:7}) is strict, proving that $\x{w}<\x{v}$, and completing the proof of point~\ref{item:third}.
    \item Consider $w,v\in\V\setminus\S$. From equation~(\ref{eq:1}) and due to the monotonicity of $\g$, we have 
    \begin{align*}
        \trew[\A]{v}&=\x{v}\cdot \g(v)-c(\x{v})
        \geq \x{w}\cdot\g(v)-c(\x{w})
        \\&\geq \x{w}\cdot\g(w)-c(\x{w})=\trew[\A]{w}
    \end{align*}
    and we are done.

    \end{itemize}

\end{proof}

\subsection{Revenue-Like Reward: Monotone, and \FR W.L.O.G}\label{sec:revenue-like-mono}

Recall that a revenue-like reward function
has the form $\g(v,p)=\alpha p$ for some constant $\alpha>0$.
For revenue-like reward, we show that a truthful and optimal \swac\ is \rmono and \wamono. 
A \rmono \swac
allows us to apply a ``flattening'' process to the payment schedule of all bids,
transforming any optimal \swac\ into an equivalent \swac\ with \fr payments.
In this section we write $\g(p)$ instead of $\g(\cdot,p)$, for the sake of simplicity. Additionally, when we write $\rew[\A]{\cdot}{b}$ we refer to the designer's reward from bid $b$, which does not depend on the agent's valuation.

The following lemmas actually holds for arbitrary agent distributions, not just continuous ones.

\begin{lemma}\label{lemma:rev-like-rmono}
     A truthful and optimal \swac with respect to revenue-like reward is \rmono.
\end{lemma}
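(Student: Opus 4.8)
The plan is to argue by contradiction: suppose an optimal truthful \swac\ $\A$ is not \rmono, so on a positive-measure set there are valuations $w < v$ with $\trew{w} > \trew{v}$ (recall that for revenue-like reward the net reward from a bid depends only on the bid, not on the valuation, so I will write $\trew{b}$ and think of it as attached to the bid). I would like to ``repair'' the higher agent $v$ by making it mimic whatever the lower agent $w$ does, thereby raising the designer's revenue from $v$ while preserving truthfulness — and since a positive-measure set of agents is affected, this strictly improves $\E[\A]$, contradicting optimality.

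The core obstacle is that agent $v$ may not be willing to take $w$'s payment schedule — but here \swir\ cuts in our favor, not against us: since $v > w$ and $w$ itself can withstand its own payment schedule (the filter $\f{w} \le w < v$), agent $v$ can also withstand it. So the real work is to modify $\A$ into $\A'$ so that the new best response of $v$ yields the outcome formerly assigned to $w$, without disturbing the behavior of any other agent and without breaking truthfulness. The natural move is to redefine the outcome assigned to bid $v$ to be exactly the payment schedule $\A(w)$ (same allocation $\x{w}$, same total payment $\p{w}$, with the filter realized as a high first-day payment per the w.l.o.g.\ remark in Section~\ref{sec:properties}). I would then need to check: (i) agent $v$'s best response under $\A'$ is still to obtain this outcome (or something at least as good for the designer) — here I would invoke Proposition~\ref{prop:un_monotonicity_means_filter} / Proposition~\ref{prop:benefit_swap} to control how agents with valuations between $w$ and $v$, and above $v$, react; (ii) no agent $z \ne v$ changes to a worse-for-the-designer outcome. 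The filters are exactly the tool that prevents harmful overbidding: agents with valuation below $\f{w}$ still cannot place this bid, and agents above keep their incentives intact because the marginal-utility argument of Proposition~\ref{prop:benefit_swap} is monotone in the valuation.

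Concretely, the steps I would carry out are: (1) Fix notation: for revenue-like reward, $\trew{b} = \alpha\,\p{b} - c(\x{b})$, so $\trew{w} > \trew{v}$ is equivalent to a strict inequality on the payment-minus-cost expression. (2) Assume toward contradiction a positive-measure violation set $S = \{\, w \in \V : \exists v > w,\ \trew{w} > \trew{v}\,\}$; by a measurability/covering argument pick a specific witness pair, or better, directly define the ``corrected'' auction $\A'$ that, for every $v$, assigns it the revenue-maximizing (from the designer's standpoint) outcome among $\{\A(b) : b \le v,\ \f{b} \le v\}$ — i.e.\ the best schedule $v$ can afford among those currently offered to weakly-lower bidders. (3) Show $\A'$ is still implementable as a truthful \swac\ — this is where Proposition~\ref{prop:un_monotonicity_means_filter} and the filter-as-first-day-payment normalization are used; the key point is that lowering no one's affordable menu and only re-pointing higher agents to already-existing, already-\swir-feasible schedules cannot create a profitable deviation. (4) Conclude $\E[\A'] \ge \E[\A]$ with strict inequality because $S$ has positive measure, contradicting optimality of $\A$; hence no such $S$ exists and $\A$ is \rmono\ (up to a measure-zero set, matching Definition~\ref{def:rmono}). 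I expect step (3) — verifying that the re-pointing preserves truthfulness globally, rather than just fixing the one pair $(w,v)$ — to be the main obstacle, since re-pointing $v$ might in principle make bid $v$ attractive to some agent $z > v$ in a way that lowers the designer's revenue from $z$; the resolution should be that $z$ could already afford everything $v$ can, so $z$'s best response was already at least as good for the designer, and re-pointing $v$ only (weakly) improves the menu that $z$ sees.
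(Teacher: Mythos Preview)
Your plan has a genuine gap: the transformation you propose does not, in general, weakly increase the designer's reward. You correctly note that step (3) is the crux, but you diagnose the danger as \emph{overbidding} by some $z>v$; the real danger is \emph{underbidding} by $v$ itself. When you replace the outcome at bid $v$ by the revenue-maximizing schedule $\A(b^*(v))$ among lower bids, you have removed $\A(v)$ from the menu. Agent $v$ will now pick whatever maximizes their utility among the remaining outcomes --- and that may be some $\A(c)$ with $c<v$ whose designer reward is \emph{below} the original $\trew{v}$. Filters are of no help here: they only block low-valuation agents from high bids, never high-valuation agents from low bids.

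Here is a concrete three-point counterexample to your construction. Take $\V=\{1,2,3\}$ (each with mass $1/3$), $n=5$, reward $g(v,p)=p$, and cost $c(0)=c(1)=c(2)=c(3)=0$, $c(4)=c(5)=3$. Let $\A$ be:
\[
\A(1)=(1,1,1,1),\qquad \A(2)=(2,0,0),\qquad \A(3)=(2.5,0.5,0.5,0.5,0.5),
\]
so $(x,p,f)$ are $(4,4,1)$, $(3,2,2)$, $(5,4.5,2.5)$ and the net rewards are $\trew{1}=1$, $\trew{2}=2$, $\trew{3}=1.5$. One checks directly that $\A$ is truthful (agent $2$ is indifferent between bids $1$ and $2$ and tie-breaks to $2$; agent $3$ strictly prefers bid $3$); it is not \rmono\ since $\trew{2}>\trew{3}$. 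Your $\A'$ sets $\A'(1)=\A(1)$, $\A'(2)=\A(2)$, $\A'(3)=\A(2)$. Now agent $3$ in $\A'$ compares $\A(1)$ (utility $12-4=8$) to $\A(2)$ (utility $9-2=7$) and bids $1$, yielding designer reward $1<1.5=\trew[\A]{3}$. Thus $\E[\A']=(1+2+1)/3<\E[\A]=(1+2+1.5)/3$, and your contradiction argument does not go through.

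The paper's proof avoids exactly this trap. It does not globally ``re-point'' bids; instead it identifies a single target $v^*$ (via the limit construction of Claim~\ref{claim:sequence_to_supremum}) and then surgically removes two sets of bids: $\remrt$ (the high-valuation, low-reward bids whose agents we want to redirect) and, crucially, $\remlft$ (the low-valuation bids with allocation \emph{exceeding} $x^*$). The $\remlft$ removal is precisely what blocks the underbidding escape route: once those high-allocation options are gone, Proposition~\ref{prop:benefit_swap} forces any displaced agent to land on the new bid $v^*$ or higher, where the reward is at least $B$. In the example above, the paper's $\remlft=\{1\}$ (since $x(1)=4>3=x^*$), so bid $1$ is deleted and agent $3$ is forced to take the $\A(2)$ outcome, giving reward $2>1.5$ as desired. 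Your construction keeps bid $1$ alive, and that is where it breaks.
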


\begin{lemma}\label{lemma:rev-like-wamono}
     A truthful and optimal \swac with respect to revenue-like reward is \wamono.
\end{lemma}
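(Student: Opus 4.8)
The plan is to show that an optimal truthful \swac\ for revenue-like reward has weakly monotone allocations outside a measure-zero set, using the reward-monotonicity already available from Lemma~\ref{lemma:rev-like-rmono} together with the characterization of non-monotonicity from Proposition~\ref{prop:un_monotonicity_means_filter}. The key observation is that for revenue-like reward $g(p) = \alpha p$, the designer's net reward from a bid $b$ is $\rew{\cdot}{b} = \alpha \p{b} - c(\x{b})$, which depends only on the payment schedule, not on the agent's valuation. So if two bids $b, b'$ yield the same allocation $\x{b} = \x{b'}$, then the one with the larger total payment gives the designer strictly more reward (when the payments differ).

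First I would suppose, toward contradiction, that there are valuations $w < v$ with $\trew{w} \neq \trew{v}$ but $\x{w} > \x{v}$; by Lemma~\ref{lemma:rev-like-rmono} (applicable outside a null set) we may assume $\trew{w} < \trew{v}$. By Proposition~\ref{prop:un_monotonicity_means_filter}, agent $w$ would strictly prefer to bid $v$ but is filtered out: $\f{v} > w$, so in particular $\p{v} \geq \f{v} > w$, and $\x{v} \geq 2$. Now I want to derive a contradiction with optimality by exhibiting a modified \swac\ that does strictly better. The idea is to take agent $w$'s outcome and ``flatten'' it, or alternatively to reassign: consider changing the menu so that the bid currently producing $w$'s outcome instead produces a longer, higher-revenue schedule. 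Since $\trew{w} < \trew{v}$, the outcome given to $v$ is strictly more profitable than the one given to $w$, and I would like to argue that agent $w$ can be offered something at least as profitable as $\trew{v}$ while staying \swir. The obstacle is that agent $w$ has low valuation and cannot withstand the filter $\f{v} > w$; so we cannot simply hand $w$ the same schedule as $v$.

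The resolution I would pursue: use the ``flattening'' freedom. Take the schedule given to $v$, with total payment $\p{v}$ over $\x{v}$ days; replace it with the fixed-rate schedule $(\p{v}/\x{v}, \ldots, \p{v}/\x{v})$ over $\x{v}$ days. This has the same total payment, hence yields the designer the same reward $\alpha\p{v} - c(\x{v}) = \trew{v}$, but its filter drops to $\p{v}/\x{v}$. If $\p{v}/\x{v} \leq w$ — which holds precisely when $w$'s valuation is at least the average payment — then agent $w$ can afford this flattened schedule, and since it nets the designer $\trew{v} > \trew{w}$, offering it (to all bids that currently get $w$'s outcome, and verifying truthfulness is preserved by an exchange argument on the menu) strictly improves expected reward, contradicting optimality. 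When $\p{v}/\x{v} > w$, I would instead argue that the \emph{original} schedule for $v$ already had the property that its average-per-unit payment exceeds $w$, so $w \cdot \x{v} < \p{v}$; combined with the \swir\ constraint on $w$'s own (longer, $\x{w} > \x{v}$) schedule, namely $\p{w} \leq w \cdot \x{w}$, and with the truthfulness inequality $\util{v}{v} \geq \util{v}{w}$ rearranged, I expect a direct numerical contradiction showing that the designer could have charged $w$ more (e.g.\ charging $w$ a fixed rate of $w$ per day for $\x{w}$ days beats $\trew{w}$).

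The main obstacle is the bookkeeping around truthfulness when we modify the menu: changing the outcome of one bid can cause other agents to deviate. I expect the clean way around this is to modify outcomes only on the measure-zero-complement set carefully, or better, to phrase everything as: an optimal auction must already be giving each valuation $v$ the reward-maximizing outcome among all \emph{\swir-feasible} outcomes for $v$ (as in the welfare-like argument, defining $\S = \sett{v}{\x{v} \text{ is not reward-optimal for } v}$ with $\Pr[\S]=0$), and then the monotonicity contradiction becomes a statement purely about the pointwise optimal outcomes, with no need to re-verify global truthfulness. I would structure the proof that way: characterize the pointwise \swir-feasible reward-maximizing outcome, show it is weakly allocation-monotone using Proposition~\ref{prop:benefit_swap}-style exchange inequalities, and conclude.
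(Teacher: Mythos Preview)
Your high-level instinct---use reward-monotonicity (Lemma~\ref{lemma:rev-like-rmono}) together with Proposition~\ref{prop:un_monotonicity_means_filter}, then modify the auction to extract more from the low-valuation violator---is in the right direction, but the proposal has a genuine gap at exactly the point you flag as ``the main obstacle.''

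Your fallback plan is the real problem: you propose to argue that an optimal auction must, outside a null set, give each valuation $v$ the reward-maximizing \swir-feasible outcome for $v$, by analogy with the welfare-like proof. That analogy breaks for revenue-like reward. In the welfare-like case the designer's reward from agent $v$ depends only on the allocation, and any monotone allocation can be implemented truthfully with \emph{some} payments, so pointwise optimization is legitimate. For revenue-like reward the pointwise-optimal \swir-feasible outcome for agent $v$ is to charge $v$ per unit (extracting all surplus), and the resulting rule $\p{v}=v\cdot\x{v}$ is \emph{not} truthful: any agent strictly prefers to underbid. So the set $\S=\sett{v}{\x{v}\text{ not pointwise reward-optimal}}$ need not have measure zero in an optimal auction, and the argument collapses.

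Your direct modification (flatten $v$'s schedule and hand it to $w$) also does not go through as stated. In the case $\p{v}/\x{v}>w$ you suggest charging $w$ a fixed rate of $w$ per day for $\x{w}$ days; but \swir\ only gives $\p{w}\le w\cdot\x{w}$, not strict inequality, so you get no strict improvement, and in any case you have not shown the modified menu remains truthful for \emph{other} agents. Even in the ``easy'' case $\p{v}/\x{v}\le w$, replacing $w$'s outcome by the flattened $v$-outcome can cause agents both below and above $w$ to deviate, and you have not controlled their new rewards. This is not merely bookkeeping: it is precisely where the work lies. The paper handles it via a global transformation (Claim~\ref{claim:rev_create_a_tag}) that simultaneously removes all bids above $v^*$ with reward below the target $B$ \emph{and} all bids below $v^*$ with allocation above $x^*$, then inserts a single flattened bid at $v^*$; only with both removals can one prove that every agent's new bid yields reward at least $B$ where needed. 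Your sketch contains the ``flatten one bid'' ingredient but not the surgical removal of the two problematic bid regions, without which the truthfulness analysis does not close.
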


For the proofs of Lemmas~\ref{lemma:rev-like-rmono} and~\ref{lemma:rev-like-wamono} we first need to establish several preliminary results. 
For both proofs we would like to consider a truthful and optimal \swac $\A$ that is \emph{non-monotone} (with respect to either reward or allocation), and then based on that,
     obtain a new \swac $\A'$ with higher expected reward than $\A$, contradicting the optimality of $\A$.
     Our first step in both cases is to identify a ``non-negligible'' violation of monotonicity:
    for example, it is not enough to find two specific valuations $w < v$
    such that $\trew[\A]{w} > \trew[\A]{v}$,
    because even if we could modify the auction so that the reward from valuation $v$ strictly increases,
    this still does not change the expected revenue --- no single point does.
    It is highly nontrivial to find a concrete and actionable violation of monotonicity;
    our only assumption about $\A$ is that
    there is no valuation set of measure 0 whose removal would make $\A$ (reward- or allocation-)monotone.

The following claim shows that, given any set of valuations $\C$ in a truthful \swac, you can find a decreasing sequence of valuations all allocated the \emph{same} amount, whose payments (and thus rewards) steadily \emph{increase} and converge to the supremum of rewards attained on $\C$.
This sequence is crucial: in the next claim, we use its limit as a reference point to construct a transformation that ensures low-reward agents adjust their bids in a way that strictly increases the designer’s overall reward.

\begin{claim}\label{claim:sequence_to_supremum}
    Let $\A$ be a truthful \swac with revenue-like reward.
    
    Let $\C\subseteq\V$, and define $B=\sup\set{\trew[\A]{w}:w\in\C}$.

    There is some sequence $\set{v_i}_{i=1}^{\infty}\subseteq\C$, value $\vstar\in\textnormal{cl}\C$\footnote{\textnormal{cl}$\C$ denotes the closure of set $\C$} and $\xstar\in[n]$ for which:
    \begin{enumerate}
        \item $(v_i)$ is (weakly) decreasing\label{item:1_sequence_to_supremum}
        \item $\brackets{\p{v_i}}$ and $\brackets{\trew[\A]{v_i}}$ are (weakly) increasing\label{item:2_sequence_to_supremum}
        \item $\x{v_i}=\xstar\ \ \forall i$\label{item:3_sequence_to_supremum}
        \item $\mylim v_i=\vstar$\label{item:4_sequence_to_supremum}
        \item $\mylim \trew[\A]{v_i}=B$\label{item:5_sequence_to_supremum}
    \end{enumerate}
\end{claim}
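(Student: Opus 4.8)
The plan is to reduce to a fixed allocation by a pigeonhole argument, and then read the sequence off from the fact (Proposition~\ref{prop:decreasing-payments}) that, at a fixed allocation, the total payment is a monotone function of the valuation.

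First I would fix the allocation. We may assume $\C\neq\emptyset$. For $x=0,\dots,n$ put $\C_x=\set{w\in\C:\x{w}=x}$; since $\C=\bigcup_{x=0}^{n}\C_x$ is a finite union, $B=\max_{x\in\sqrrange{n}}\sup\set{\trew[\A]{w}:w\in\C_x}$, so some $\xstar$ achieves the maximum and $\C':=\C_{\xstar}$ is nonempty with $\sup\set{\trew[\A]{w}:w\in\C'}=B$. Because the reward is revenue-like and the allocation on $\C'$ is constantly $\xstar$, we have $\trew[\A]{w}=\alpha\,\p{w}-c(\xstar)$ for every $w\in\C'$, so setting $P^*=\sup\set{\p{w}:w\in\C'}$ we get $B=\alpha P^*-c(\xstar)$, and the claim reduces to producing a weakly decreasing sequence $v_i\in\C'$ along which $\p{v_i}$ is weakly increasing and converges to $P^*$; items~\ref{item:3_sequence_to_supremum} and~\ref{item:2_sequence_to_supremum} then follow, since $\x{v_i}=\xstar$ and $\trew[\A]{v_i}=\alpha\,\p{v_i}-c(\xstar)$ with $\alpha>0$.

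The key step is the choice of the sequence. By Proposition~\ref{prop:decreasing-payments}, on $\C'$ the total payment is weakly decreasing in the valuation, so the largest payments occur at the smallest valuations. I would therefore let $m=\inf\C'\in[0,\infty)$, set $\vstar=m$, and take $v_i=m$ for all $i$ if $m\in\C'$, and otherwise choose a weakly decreasing $v_i\in\C'$ with $v_i\to m$ (possible since $m=\inf\C'$). Then items~\ref{item:1_sequence_to_supremum} and~\ref{item:4_sequence_to_supremum} hold by construction, and $v_i\in\C'\subseteq\C$ with $v_i\to\vstar$ gives $\vstar\in\textnormal{cl}\,\C$. Since $v_i$ is weakly decreasing and payment is weakly decreasing in the valuation on $\C'$, the sequence $\p{v_i}$ is weakly increasing, hence so is $\trew[\A]{v_i}$, giving item~\ref{item:2_sequence_to_supremum}. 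Finally, $\p{v_i}$ is weakly increasing and bounded above by $P^*$, so it has a limit $L\le P^*$; for any $w\in\C'$ we have $v_i\le w$ for all large $i$ (as $v_i\to m\le w$), hence $\p{v_i}\ge\p{w}$ and $L\ge\p{w}$, so $L\ge\sup\set{\p{w}:w\in\C'}=P^*$ and thus $L=P^*$. Therefore $\trew[\A]{v_i}=\alpha\,\p{v_i}-c(\xstar)\to\alpha P^*-c(\xstar)=B$, which is item~\ref{item:5_sequence_to_supremum}. (Nothing here requires $B$, equivalently $P^*$, to be finite.)

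I expect the only real difficulty to be spotting the right sequence: the two demands ``weakly decreasing in the valuation'' and ``weakly increasing in the reward'' look over-constrained, but Proposition~\ref{prop:decreasing-payments} says precisely that, at a fixed allocation, large reward corresponds to small valuation, so pushing the $v_i$ down to $\inf\C'$ enforces both monotonicities at once and automatically drives the payments to $P^*$. Everything else — the pigeonhole over $\xstar$, the reward-to-payment translation from revenue-likeness, and the elementary limit argument — is routine bookkeeping.
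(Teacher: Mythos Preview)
Your proof is correct, and it takes a different route from the paper's argument. The paper first picks a sequence $(y_i)\subseteq\C$ whose rewards strictly increase to $B$, then passes to a subsequence with constant allocation $\xstar$ by pigeonhole, and only afterwards invokes Proposition~\ref{prop:decreasing-payments} to deduce that the valuations along this subsequence are decreasing; the limit $\vstar$ is then whatever this subsequence happens to converge to. You instead pigeonhole on allocation first, identify $\vstar$ explicitly as $\inf\C_{\xstar}$, and let the sequence descend to it, using Proposition~\ref{prop:decreasing-payments} in the opposite direction to conclude that payments (hence rewards) increase and must approach the supremum $P^*$. Your approach is slightly cleaner in that it avoids the paper's case split on whether $B$ is attained and produces $\vstar$ constructively; the paper's approach, on the other hand, starts from the reward-maximizing sequence directly, which makes the convergence $\trew[\A]{v_i}\to B$ immediate rather than requiring your limit comparison argument. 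Both proofs hinge on the same key observation---that at a fixed allocation, larger payment corresponds to smaller valuation---but exploit it in opposite orders.
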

\begin{proof}
    Consider some $\A,\C$ and $B$ as defined in the claim. If there is some $w\in\C$ for which $\trew[\A]{w}=B$, we can define $v_i=w$ for all $i$, $\vstar=w$ and $\xstar=\x{w}$; this sequence fulfills all the required properties.

    Suppose $B$ is not reached by any $w\in\C$. Since $B$ is a supremum (and not a maximum), there is a sequence of distinct elements $\set{y_i}_{i=1}^{\infty}\subseteq\C$ such that $\set{\trew[\A]{y_i}}_{i=1}^\infty$ is strictly increasing, and $\mylim \trew[\A]{y_i}=B$. We will find a subsequence of $\brackets{y_i}$ that will match the require properties.

    For each allocation $k\in\sqrrange{n}$, define $\set{y_i^k}_{i=1}^\infty\subseteq\set{y_i}_{i=1}^{\infty}$ to be the subsequence of the elements that are allocated $k$ units in $\A$ (while preserving the original order from $\brackets{y_i}$). The matching sequence $\set{\trew[\A]{y_i^k}}_{i=1}^\infty$ is a monotone subsequence of $\set{\trew[\A]{y_i}}_{i=1}^\infty$. All monotone subsequences of a converging sequence also converge, so $\mylim \trew[\A]{y_i^k}$ is defined for all $k\in\sqrrange{n}$. Since there are only $n+1$ possible allocations, these subsequences $\brackets{\trew[\A]{y_i^k}}$ cover the entire sequence $\brackets{\trew[\A]{y_i}}$, hence $B$ is
    \begin{align*}
        B=\mylim \trew[\A]{y_i}=\max\set{\mylim \trew[\A]{y_i^k}\mid k\in\sqrrange{n}}.
    \end{align*}

    Specifically, there is some allocation $\xstar$ for which $\mylim \trew[\A]{y_i^\xstar}=B$, so define the sequence $v_i=y_i^\xstar$ for all $i$, and we will show that it fulfills all the properties from the claim. By the choice of $(v_i)$, for every element $v_i$ in the sequence, $\x{v_i}=\xstar$ and $\mylim \trew[\A]{v_i}=B$. Thus we showed that $\brackets{v_i}$ fulfills properties~\ref{item:3_sequence_to_supremum} and~\ref{item:5_sequence_to_supremum}.

    Now consider some $v_i$ and $v_{i+1}$. Since $\x{v_i}=\x{v_{i+1}}=\xstar$ but $\trew[\A]{v_i}<\trew[\A]{v_{i+1}}$, we have
    \begin{align*}
        \trew[\A]{v_i}=\xstar\cdot\g(\p{v_i})-c(\xstar)<\trew[\A]{v_{i+1}}=\xstar\cdot\g(\p{v_{i+1}})-c(\xstar).
    \end{align*}
    Since $\g$ is non-decreasing, this proves $\p{v_i}<\p{v_{i+1}}$, thus property~\ref{item:2_sequence_to_supremum} is fulfilled. Additionally, due to Prop.~\ref{prop:decreasing-payments}, $v_i>v_{i+1}$, meaning $(v_i)$ is decreasing, which is property~\ref{item:1_sequence_to_supremum}. Finally, since $(v_i)$ is decreasing and bounded by 0, it converges to some valuation $\vstar\in\reals$, and since $v_i\in\C$ for all $i$, it holds that $\vstar\in\textnormal{cl}\C$, which fulfills the final property,~\ref{item:4_sequence_to_supremum}.
\end{proof}

Building on the sequence from Claim~\ref{claim:sequence_to_supremum}, we design a general \swac transformation that we use for the monotonicity proofs in this section.
This transformation is applied in different ways, but in both cases the \swac obtained differs from the original \swac in a similar manner: a set of agents with non-zero probability that violate monotonicity change their bid in a way that
strictly increases the designer's reward from them,
while other agents do not yield worse reward. Hence the transformation creates 
a \swac with higher expected reward.

\begin{claim}\label{claim:rev_create_a_tag}
     Let $\A$ be a truthful \swac with respect to revenue-like reward.
     Let $\C\subseteq\V$, and define $B=\sup \trew[\A]{\C}$. Let $\vstar$ be the limit of the sequence from Claim~\ref{claim:sequence_to_supremum} over $\C$, with allocation $\xstar$ for all agents with valuations in the sequence.

     There is a \swac $\A'$ in which for all $w\in\V$:  
     \begin{enumerate}
         \item If $\vstar\in\V$, then $\trew[\A']{\vstar}= B$.\label{item:1_rev_create_a_tag}
         \item If $w<\vstar$, then either $\trew[\A']{w}=\trew[\A]{w}$ or $\trew[\A']{w}=B$.\label{item:2_rev_create_a_tag}
         \item If $w<\vstar$ and $\x[\A]{w}> \xstar$, then $\trew[\A']{w}=B$.\label{item:3_rev_create_a_tag}
         \item If $w>\vstar$, then $\trew[\A']{w}\geq \trew[\A]{w}$.\label{item:4_rev_create_a_tag}
         \item If $w>\vstar$ and $\trew[\A]{w}<B$, then $\trew[\A']{w}>\trew[\A]{w}$.\label{item:5_rev_create_a_tag}
     \end{enumerate}
\end{claim}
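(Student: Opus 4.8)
The plan is to enlarge $\A$'s menu with a single new ``limit option'' $\hat b$ that turns the supremum reward $B$ into an attained outcome, and to reroute every bid whose $\A$-reward falls short of $B$ onto $\hat b$; the five items then follow from a few consequences of $\A$'s truthfulness. First I would pin down $\hat b$. Because $(\trew[\A]{v_i})$ is weakly increasing with finite limit $B$ and the allocation is the fixed value $\xstar$, and because for revenue-like reward the designer's reward at a fixed allocation is a strictly increasing affine function of the total payment, the payments $(\p{v_i})$ are weakly increasing and converge; set $p^*:=\lim_{i\to\infty}\p{v_i}$, so $B=\beta p^*-c(\xstar)$. Each $v_i$ is \swir, so $\p{v_i}\le\xstar v_i$, and in the limit $p^*/\xstar\le\vstar$. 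Let $\hat b$ be the option with allocation $\xstar$ and fixed-rate schedule $(p^*/\xstar,\ldots,p^*/\xstar)$: its filter is $p^*/\xstar\le\vstar$, its total payment is $p^*$, and the designer's reward from it is exactly $B$. (The degenerate cases $\xstar=0$ and $B=\infty$ are handled trivially.)

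Next I would record the facts about $\A$ that drive everything: (a) in a truthful \swir\ auction the optimal utility $\util[\A]{w}{w}$ is non-decreasing in $w$, since a higher valuation can afford, hence mimic, any bid a lower valuation can; (b) for $w>\vstar$, since $\f{b_i}\le v_i\downarrow\vstar$, for all large $i$ the bid $b_i$ that $v_i$ submits is affordable to $w$, so truthfulness gives $\util[\A]{w}{w}\ge\util{w}{b_i}=\xstar w-\p{v_i}$ and hence $\util[\A]{w}{w}\ge\xstar w-p^*=\util{w}{\hat b}$ --- every agent above $\vstar$ weakly prefers its own $\A$-outcome to $\hat b$; (c) dually, by (a), $\util[\A]{\vstar}{\vstar}\le\util[\A]{v_i}{v_i}=\xstar v_i-\p{v_i}$ for all $i$, so $\util[\A]{\vstar}{\vstar}\le\util{\vstar}{\hat b}$ (when $\vstar\in\V$); and (d) if $w<\vstar$ and $\x[\A]{w}>\xstar$, then Proposition~\ref{prop:un_monotonicity_means_filter} applied to the pair $w<v_i$ gives $\xstar w-\p{v_i}=\util{w}{b_i}>\util[\A]{w}{w}\ge0$, so $\p{v_i}<\xstar w$, whence $p^*/\xstar\le w$ ($w$ affords $\hat b$) and, in the limit, $\util{w}{\hat b}=\xstar w-p^*\ge\util[\A]{w}{w}$ (for $\xstar=1$ no such $w$ exists, since a $b_i$ filtering $w$ out has $\p{v_i}>w$).

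Then I would let $\A'$ be $\A$ with $\hat b$ adjoined to the menu and every bid of $\A$-reward $<B$ rerouted to $\hat b$ (for bidders that can afford it), and argue $\A'$ is a valid deterministic truthful \swac. Granting this: every agent in $\A'$ ends up either on a surviving $\A$-option (which already had reward $\ge B$) or on $\hat b$ (reward $B$), so \emph{Point 2} is immediate for $w<\vstar$; for $w<\vstar$ with $\x[\A]{w}>\xstar$, fact (d) says $w$ affords and weakly prefers $\hat b$ while its old (below-$B$) option is gone, so $\trew[\A']{w}=B$ --- \emph{Point 3}; fact (c) gives \emph{Point 1} the same way; and for $w>\vstar$, fact (b) says $w$ never abandons a surviving option for $\hat b$, so either $w$'s $\A$-outcome survives (reward unchanged) or it was a below-$B$ outcome replaced by one of reward $\ge B>\trew[\A]{w}$ --- this is exactly \emph{Points 4 and 5}.

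I expect the main obstacle to be proving that $\A'$ is still truthful and \swir-consistent after rerouting a block of bids to $\hat b$: one must rule out new profitable deviations (an agent jumping into the rerouted block for $\hat b$, or an agent with a surviving outcome now preferring $\hat b$ or another agent's surviving option), which is where facts (a)--(d) and the inequality $p^*/\xstar\le\vstar$ are used. A second, fiddlier point is the boundary at $\vstar$ and tie-breaking: when $\vstar$ (or a Point-3 agent) is indifferent between $\hat b$ and a surviving option, one must still land on reward exactly $B$, using $\vstar\in\mathrm{cl}\,\C$ and $B=\sup\trew[\A]{\C}$ to control which surviving rewards are reachable at such valuations. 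The asymmetry between Point 2 (a drop to $B$ is allowed below $\vstar$) and Points 4--5 (no drop above $\vstar$) exactly mirrors the fact that the mimicking bound (b) is available only above $\vstar$.
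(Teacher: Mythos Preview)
Your limit option $\hat b$ (allocation $\xstar$, fixed-rate total payment $p^*=\lim_i\p{v_i}$, reward $B$) and facts (a)--(d) match the paper exactly; the paper calls this new bid $\vstar$. The difference is \emph{which} old bids you delete, and that is where Point~2 breaks. You reroute every bid with $\A$-reward below $B$; the paper instead removes only $\remrt=\{b>\vstar:\trew[\A]{b}<B\}$ and $\remlft=\{b<\vstar:\x[\A]{b}>\xstar\}$, and in particular \emph{keeps} every bid $b<\vstar$ with $\x[\A]{b}\le\xstar$ regardless of its reward. Under the paper's rule such an agent $w$ still has its truthful bid available, so in $\A'$ it either stays put (reward $=\trew[\A]{w}$) or switches to the single new option (reward $=B$); no third value can arise. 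Under your rule, an agent $w<\vstar$ with $\x[\A]{w}\le\xstar$ and $\trew[\A]{w}<B$ has lost its truthful bid, and your ``Point~2 is immediate'' only yields reward $\ge B$---but Point~2 requires reward \emph{equal} to $\trew[\A]{w}$ or to $B$, and a surviving non-truthful bid may have reward strictly above $B$. Fact~(d) is of no help here (it covers only $\x[\A]{w}>\xstar$), and you have not even shown $w$ can afford $\hat b$ (only $p^*/\xstar\le\vstar$, not $\le w$); $w$ may land on a high-reward surviving bid, or on no bid at all. A similar slip appears in your Point~3 step, where you assume the old option is ``below-$B$'' and hence gone: the hypothesis $\x[\A]{w}>\xstar$ does not force $\trew[\A]{w}<B$, so under your rule the old bid may survive, and with only the \emph{weak} limit inequality in (d) a tie-break can keep $w$ on it. The paper handles this by putting such $w$ in $\remlft$ and deleting its bid outright.

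Your blanket removal does make Points~4--5 a one-liner, whereas the paper must argue separately (via the removal of $\remlft$ together with Propositions~\ref{prop:decreasing-payments} and~\ref{prop:benefit_swap}) that a deleted agent above $\vstar$ cannot profitably underbid below $\vstar$. But to get all five items simultaneously you need the paper's more surgical deletion. As a minor aside, the claim does not require $\A'$ to be truthful---only that the five reward conditions hold---so establishing truthfulness of $\A'$ is not part of the task.
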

\begin{proof}

Let $\A$ be a truthful \swac w.r.t. revenue-like reward $\g(v,p)=\alpha p$, let $\C\subseteq\V$, and define $B=\sup \trew[\A]{\C}$. Let $\brackets{v_i}$ be the sequence from Claim~\ref{claim:sequence_to_supremum} with its limit $\vstar$ and allocation $\x[\A]{v_i}=\xstar$. 

Define a set $S^\ast\coloneq\sett{v\in\V}{v>\vstar\textnormal{ and }\trew[\A]{v}<B}$.
The main effect we want from $\A'$ is to strictly increase the designer's reward from agents in $S^\ast$, while making sure no other agent yields worse reward.

     To cause agents in $S^*$ to change their bid and give us improved reward,
     we remove $S^*$
     from the bidding space ---
     but that is not enough,
     as agents in $S^*$ may change their bid to a valuation that yields \emph{lower} reward.
     To prevent this,
     we
     must also block off two types of bids:
    \begin{enumerate}
        \item Bids $w > \vstar$ that yield lower reward than $B$:
           \begin{equation*}
            \remrt=\sett{w\in\V}{w>\vstar\textnormal{ but }\trew[\A]{w}< B}.
        \end{equation*}
        This includes the set $S^*$.
        \item Bids $w < \vstar$ that receive a higher allocation than $\xstar$:
        \begin{equation*}
            \remlft=\sett{w\in\V}{w<\vstar\textnormal{ but }\x[\A]{w} > \xstar}.
        \end{equation*}
        These bids must be removed to prevent agents in $\remrt$ (including $S^*$) from bidding values lower than $v^*$, as we show below. 
    \end{enumerate}

   Formally, we define $\A'$ as follows:

   The bid space of $\A'$ will be $\brackets{\V\cup\set{\vstar}}\setminus\brackets{\remlft\cup\remrt}$, and:
    \begin{itemize}
        \item For bid $\vstar$, the designer of $\A'$ will sell $\xstar$ units at a total price of $\mylim\p[\A]{v_i}$, with \fr payments.
        \item For all other bids, $\A'$ will offer exactly the same allocation and payment schedule as $\A$ for the same bid.
    \end{itemize}

Note that indeed $\mylim \p[\A]{v_i}$ is defined, since the sequence $\set{\p[\A]{v_i}}_{i=1}^{\infty}$ is increasing and bounded from above: every element $\p[\A]{v_i}$ is bounded by $\xstar v_i$ due to IR, therefore the whole sequence is bounded by  $\xstar v_1$.

In Figure~\ref{figure:remRtLft} we can see an example of such a transformation from $\A$ to $\A'$, assuming $v_i=\vstar$ for all $i$, for ease of exposition.

\begin{figure}[H]
\centering
\begin{subfigure}[t]{0.49\textwidth}
    \centering
    \includegraphics[width=\textwidth]{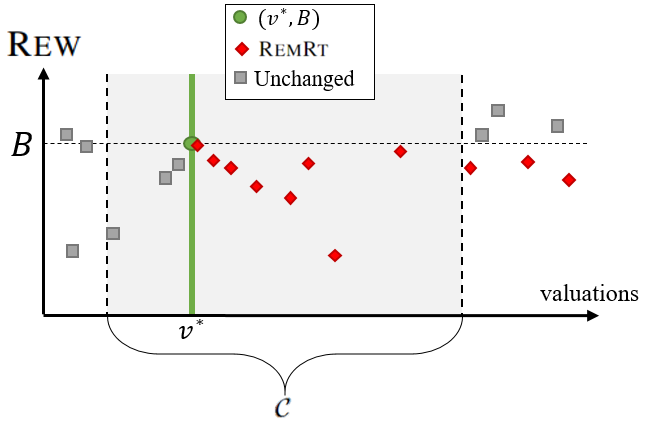}
    \caption{Part 1: Removing points in $\remrt$ from the bidding set, that is, the points $>\vstar$ with a reward less than $B=\trew[\A]{\vstar}$}\label{subfig:1-remRtLft}
\end{subfigure}
\hfill
\begin{subfigure}[t]{0.49\textwidth}
    \centering
    \includegraphics[width=\textwidth]{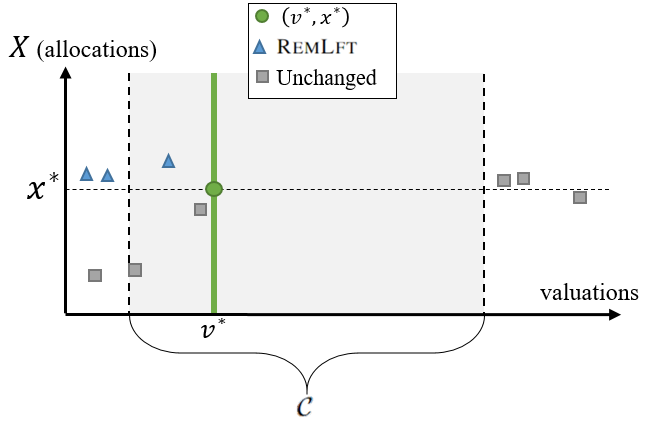}
    \caption{Part 2: Removing points in $\remlft$ from the bidding set, that is, the points $<\vstar$ with an allocation larger than $\xstar$}\label{subfig:2-remRtLft}
\end{subfigure}
\caption[An illustration of the transformation used for proofs for revenue-like reward]{An illustration of the transformation from $\A$ to $\A'$. We show for the points in $\V$, the reward the designer yields from them in $\A$ in Subfigure~\ref{subfig:1-remRtLft}, and their allocation in $\A$, in Subfigure~\ref{subfig:2-remRtLft}. On these figures we point out which valuations will be omitted from the bidding set of $\A'$.}
\label{figure:remRtLft}
\end{figure}

What is the designer's reward in $\A'$? Besides for $\vstar$, agents who don't change their bid between $\A$ and $\A'$ still yield the same designer reward, because their outcome is the same. Thus we restrict the analysis to agents who change their bid. 
There are only a few reasons an agent may change their bid: if their bid in $\A$ was removed; if their bid was changed, in the case of agent $\vstar$; or if they have a new and better option in $\A'$ --- this could only be bidding $\vstar$ as no other bid was added or changed.

Before analyzing these cases, observe that 
\begin{align*}
            \rew[\A']{\cdot}{\vstar}&=\alpha{\p[\A']{\vstar}}-c\brackets{\x[\A']{\vstar}}=\alpha\brackets{\mylim{\p[\A]{v_i}}}-c(\xstar)\\
        &=\mylim\brackets{\alpha\p[\A]{v_i}-c(\x[\A]{v_i})}=\mylim \trew[\A]{v_i}=B.\numberthis\label{eq:reward-at-sup}
\end{align*}

We now analyze each individual case, to show how the reward from each agent changes between $\A$ and $\A'$.
\begin{itemize}
    \item Agent $\vstar$ will be truthful in $\A'$ (assuming $\vstar\in\V$), as we show in each case:
    \begin{itemize}
            \item If $\x[\A']{\vstar}=\x[\A]{\vstar}$ and $\p[\A']{\vstar}=\p[\A]{\vstar}$: Since $\util[\A']{\vstar}{\vstar}=\util[\A]{\vstar}{\vstar}$, and there's no other bid with an increase in the designer's revenue from $\A$ to $\A'$, the tie-breaking does not change and agent $\vstar$ bids $\vstar$ in $\A'$.
        \item Else, let $u$ be the new bid of $\vstar$ in $\A'$, and suppose $u\neq \vstar$. By the definition of $\A'$,
        \begin{align*}
            \util[\A']{\vstar}{\vstar}&=\xhat[\A']{\vstar}\vstar-\p[\A']{\vstar}=\mylim\brackets{\x[\A]{v_i}v_i-\p[\A]{v_i}}=\mylim\util[\A]{v_i}{v_i}\\
            &\geq\mylim\util[\A]{v_i}{\vstar}=\mylim\brackets{\x[\A]{\vstar}v_i-\p[\A]{\vstar}}=\x[\A]{\vstar}\vstar-\p[\A]{\vstar}\\
            &=\util[\A]{\vstar}{\vstar}\geq\util[\A]{\vstar}{u}=\util[\A']{\vstar}{u},
        \end{align*}
        where the inequalities are due to the truthfulness of $\A$. In particular we get that $\util[\A']{\vstar}{\vstar}\geq \util[\A]{\vstar}{\vstar}$, which means that $\vstar$ can get nonnegative utility in $\A'$. Additionally this equation shows that $\util[\A']{\vstar}{\vstar}\geq \util[\A']{\vstar}{u}$;
        thus the only reason for $\vstar$ to deviate from $\vstar$ to $u$ is if $\rew[\A']{\cdot}{u}>\rew[\A']{\cdot}{\vstar}=B$. But $\rew[\A']{\cdot}{u}=\trew[\A]{u}$ and $\rew[\A']{\cdot}{\vstar}\geq \trew[\A]{\vstar}$, so in this case agent $\vstar$ would also bid $u$ in $\A$, which is a contradiction to the truthfulness of $\A$. Therefore, $\vstar$ also bids truthfully in $\A'$.
    \end{itemize}
    Thus $\trew[\A']{\vstar}=B\geq\trew[\A]{w}$, by Eq.~\ref{eq:reward-at-sup} and definition of $B$.
    \item If $w<\vstar$:
    \begin{itemize}
        \item If $w\in\remlft$, since $\x[\A]{w}>\xstar$ then by Prop.~\ref{prop:un_monotonicity_means_filter}, we have $\util[\A]{w}{w}<\utilformula[\A]{w}{v_i}$ for all $i$. Taking the limit, and since there is no filter on bid $\vstar$ in $\A'$, it holds that $\util[\A]{w}{w}<\util[\A']{w}{\vstar}$, thus $w$ will bid $\vstar$ in $\A'$; there is no other new option for $w$. Hence $\trew[\A']{w}=\rew[\A']{\cdot}{\vstar}=B$.
        \item  Else, $w\notin\remlft$ so it is possible for $w$ to bid truthfully in $\A'$, in which case $\trew[\A']{w}=\trew[\A]{w}$. If agent $w$ does not bid truthfully, it would only be to change their bid to $\vstar$, in which case $\trew[\A']{w}=B$.
    \end{itemize}
    \item If $w>\vstar$,
        \begin{itemize}
        \item If $w\notin\remrt$, they only possible non-truthful bid is to bid $\vstar$. This will not happen, since $\p[\A']{\vstar}\geq\p[\A]{v_i}$ for all $i$, and in particular for some $v_j<w$. Thus, by the truthfulness of $w$ in $\A$, we know $\util[\A']{w}{w}=\util[\A]{w}{w}\geq\util[\A]{w}{v_j}\geq\util[\A']{w}{\vstar}$.
        
        All in all, if $w\notin\remrt$, indeed $\trew[\A']{w}=\trew[\A]{w}$.
        \item If $w\in\remrt$, then $\trew[\A]{w}<B$. Since $\vstar$ would be truthful in $\A'$, then by Prop.~\ref{prop:benefit_swap} and by removal of $\remlft$ we know that if agent $w$ bids below $\vstar$ this can only be to some bid with allocation $\xstar$. But due to Prop.~\ref{prop:decreasing-payments} there is no reason for agent $w$ to prefer the lower bid over $\vstar$. Since we showed now that $w$ bids $\geq\vstar$, by design of $\A'$ we have $\trew[\A']{w}\geq B>\trew[\A]{w}$.
        
        Also note that since $w>\vstar$, agent $w$ can get nonnegative utility in $\A'$, since $\util[\A']{w}{\vstar}\geq\util[\A']{\vstar}{\vstar}\geq 0$.
    \end{itemize}

    In all the different cases, we showed that the agent has a bid in $\A'$ that yields nonnegative utility, which proves that $\A'$ is indeed IR.

\end{itemize}
\end{proof}

We are now ready to use the transformation from Claim~\ref{claim:rev_create_a_tag} to prove the monotonicity properties of optimal revenue-like \swacs.

\begin{proof}[Proof of Lemma~\ref{lemma:rev-like-rmono}: Reward Monotonicity]
Let $\A$ be a truthful and optimal \swac with respect to revenue-like reward.
 We will find a set that when removed, imposes monotonicity of the reward on the remaining valuations. We then show that if this set is of positive measure, we could create a \swac with higher expected reward, which would contradict the optimality of $\A$.
 
    For all $x\in\sqrrange{n}$ define
    \begin{align*}
        &\S_x=\sett{v\in\V}{\x[\A]{v}=x\textnormal{, }\exists w\in\V\textnormal{ s.t. } w<v\textnormal{ and }\trew[\A]{w}>\trew[\A]{v}},\textnormal{ and}\\
        &\S=\bigcup_{x=0}^n \S_x=\sett{v\in\V}{\exists w\in\V\textnormal{ s.t. }w<v\textnormal{ and }\trew[\A]{w}>\trew[\A]{v}}.
    \end{align*}

    Observe that for all $w,v\in\V\setminus\S$ such that $w<v$, necessarily $\trew[\A]{w}\leq \trew[\A]{v}$, because otherwise $v$ would be in $\S$.
    Suppose that $\A$ is not \rmono, so $\Pr[\S]>0$. Since $\S$ is a finite union of sets, thus there is some $\S_x$ with $\Pr\sqbr{\S_x}>0$. Define a set $\U\subseteq\S_x$ in the following manner:
    \begin{enumerate}
        \item If there is some $v\in\S_x$ with $\Pr[v]>0$ (in the case of a distribution with a point mass), define $\U\coloneq \set{v}$ and get $\Pr[\U]>0$. By definition of $\S_x$, there is some valuation $w<v=\min\U$ such that $\trew[\A]{w}>\trew[\A]{v}$.
        \item Else, we can choose some $u\in\S_x$ such that $\U\coloneq\sett{v\in\S_x}{v\geq u}$ has $\Pr[\U]>0$. Since $u\in\S_x$ there is a valuation $w<u$ with $\trew[\A]{w}>\trew[\A]{u}$. Thus, due to Prop.~\ref{prop:decreasing-payments} and since the reward is valuation-independent, we have that also $\trew[\A]{w}>\trew[\A]{v}$ for all $v\in\U$.
    \end{enumerate}

     Define $\C\coloneq \sett{w\in\V}{w<\min\U}$, and note that the minimum does indeed exist either way $\U$ was defined. Let $\A'$ be the \swac from Claim~\ref{claim:rev_create_a_tag} over $\C$. See Figure~\ref{fig:violation} for an illustration.
     
\begin{figure}[h]
    \centering
    \includegraphics[width=0.8\linewidth]{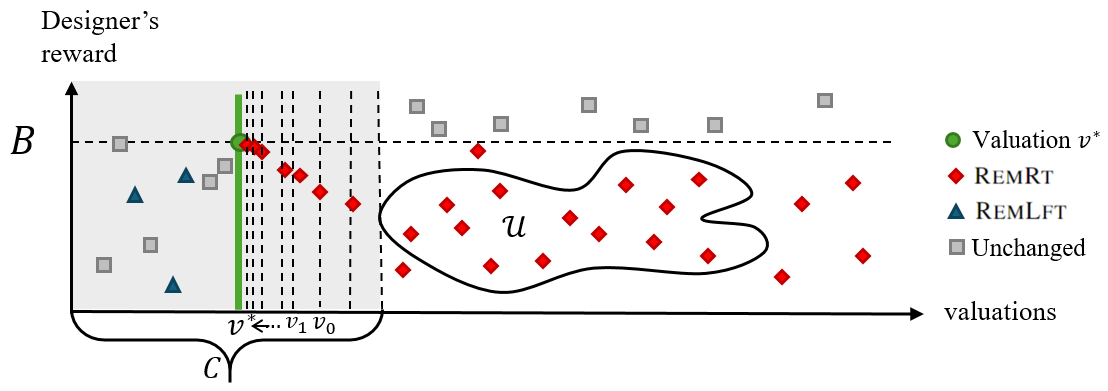}
    \caption[Illustration of the reward-monotonicity violation found in the proof of reward monotonicity for revenue-like reward]{The reward-monotonicity violation, in the form of the set $\U$ and the sequence $v_0, v_1, \ldots \rightarrow v^*$ whose
    rewards approach $B$.
    Also depicted are the sets $\remrt$
    (valuations above $v^*$ with reward below $B$)
    and $\remlft$ (valuations below $v^*$ with allocations above $x^*$; note that allocations are not shown in this figure).}
    \label{fig:violation}
\end{figure}

     We will show that:
    \begin{itemize}
        \item $\A'$ is individually rational,
        \item for all $w\in\V$, $\trew[\A']{w}\geq \trew[\A]{w}$, and
        \item for all $w\in\U$ it holds that $\trew[\A']{w}>\trew[\A]{w}$.
    \end{itemize}
    These points result in $\E[\A']>\E[\A]$, which is a contradiction to the optimality of $\A$.

    Let $\vstar$ be the limit of the sequence from Claim~\ref{claim:sequence_to_supremum}, and consider $w\in\V$. If $w\leq\vstar$, then due to Claim~\ref{claim:rev_create_a_tag}, in some cases, we have $\trew[\A']{w}=B\geq \trew[\A]{w}$, where the inequality is due to $w\in\C$. Otherwise and for all other $w\in\V$, $\trew[\A']{w}\geq \trew[\A]{w}$.  Specifically, since for all $w\in\U$ we know that there is valuation $v\in\C$ with $\trew[\A]{v}>\trew[\A]{w}$, then also $B>\trew[\A]{w}$ and thus $\trew[\A']{w}>\trew[\A]{w}$ (a strict inequality).

    Finally we show that $\A'$ has a higher expected designer's reward than $\A$:
    \begin{align*}
        \E[\A']&=\E[\underbrace{\trew[\A']{w}}_{>\trew[\A]{w}}\mid w\in\U]\cdot\underbrace{\Pr[\U]}_{>0}+\E[\underbrace{\trew[\A']{w}}_{\geq \trew[\A]{w}}\mid w\notin\U]\cdot\Pr\sqbr{\Bar{\U}}\\
        &>\E[\trew[\A]{w}\mid w\in\U]\cdot\Pr[\U]+\E[\trew[\A]{w}\mid w\notin\U]\cdot\Pr\sqbr{\Bar{\U}}
        =\E[\A']
    \end{align*}

    This is a contradiction to the optimality of $\A$, and completes the proof.
\end{proof}

Using the same transformation applied differently, we prove allocation monotonicity properties of optimal revenue-like \swacs.

\begin{proof}[Proof of Lemma~\ref{lemma:rev-like-wamono}: Weak Allocation Monotonicity]
    Let $\A$ be an optimal truthful \swac with respect to revenue-like reward. Due to  Lemma~\ref{lemma:rev-like-rmono}, $\A$ is \rmono. Let $\V_1$ be the remaining set from the definition, such that $\Pr\sqbr{\V_1}=1$ and for all valuations $w,v\in\V_1$, if $w<v$ then $\trew[\A]{w}\leq \trew[\A]{v}$. 
    
    Define the following sets for all $x\in\sqrrange{n}$:    
    \begin{align*}
        &\T_x=\sett{w\in\V_1}{\x[\A]{w}=x\textnormal{, }\exists v\in\V_1\textnormal{ s.t. } w<v\textnormal{, }x>\x[\A]{v}\textnormal{ and }\trew[\A]{w}<\trew[\A]{v}},\\
        &\T=\bigcup_{x=0}^n \T_x=\sett{w\in\V_1}{\exists v\in\V_1\textnormal{ s.t. }w<v\textnormal{, }\x[\A]{w}>\x[\A]{v}\textnormal{ and }\trew[\A]{w}<\trew[\A]{v}}.
    \end{align*}

    Observe that for all $w,v\in\V_1\setminus\T$, it holds that $\trew[\A]{w}\leq \trew[\A]{v}$, and if this inequality is strict, then $\x[\A]{w}\leq\x[\A]{v}$ (because otherwise $w$ would be in $\T$). Therefore, if $\Pr[\T]=0$ we have that $\A$ is \wamono and we are done.
    Suppose otherwise; since $\T$ is a finite union of sets, there is some $\T_x$ with $\Pr[\T_x]>0$. We define a set $\U\subseteq\T_x$ and find a valuation $v'$ in the following manner:
    \begin{enumerate}
        \item If there is some $w\in\T_x$ with $\Pr[w]>0$ (in the case of a distribution with a point mass), define $\U\coloneq \set{w}$ and get $\Pr[\U]>0$. By definition of $\T_x$, there is some valuation $v'>w=\max\U$ such that $\x[\A]{w}>\x[\A]{v'}$ and $\trew[\A]{w}<\trew[\A]{v'}$.
        \item Else, we can choose some $u\in\T_x$ such that $\U\coloneq\sett{w\in\T_x}{w\leq u}$ has $\Pr[\U]>0$. Since $u\in\T_x$ there is a valuation $v'>u=\max\U$ with $x=\x[\A]{u}>\x[\A]{v'}$ and $\trew[\A]{u}<\trew[\A]{v'}$, so in particular $\x[\A]{w}>\x[\A]{v'}$ for all $w\in\U$. Additionally, since the designer's reward is non-decreasing over valuations in $\U$, the reward function is valuation-independent, and 
        due to Prop.~\ref{prop:decreasing-payments}, the payments for all valuations in $\T_x$ are equal. Therefore also $\trew[\A]{w}<\trew[\A]{v'}$ for all $w\in\U$.
    \end{enumerate}
    
    Define $\C\coloneq \set{v'}$. Let $\A'$ be the auction from Claim~\ref{claim:rev_create_a_tag} over $\C$, with the sequence $\brackets{v_i}$, limit $\vstar$, allocation $\xstar$ and $B=\trew[\A]{v'}$. Clearly $\vstar=v'$, $v_i=v'$ for all $i$ and $\xstar=\x[\A]{v'}$. Due to Claim~\ref{claim:rev_create_a_tag}, for all $w\in\V_1$:
    \begin{itemize}
        \item If $w<v'$, then $\trew[\A']{w}=\trew[\A]{w}$ or $\trew[\A']{w}=B=\trew[\A]{v'}\geq \trew[\A]{w}$, where the inequality holds due to $w,v'\in\V_1$.
        \item For all $w\in\U$, since $w<v'$ and $\x[\A]{w}>\x[\A]{v'}=\xstar$, then $\trew[\A']{w}=B=\trew[\A]{v'}>\trew[\A]{w}$.
        \item For $w=v'$, it holds that $\trew[\A']{v'}=B=\trew[\A]{v'}$.
        \item For $w>v'$, it holds that $\trew[\A']{w}\geq \trew[\A]{w}$.
    \end{itemize}

    Putting it together, since $\U\subseteq\V_1$, we have
    \begin{align*}
         \E[\A']&=\E[\trew[\A']{w}\mid w\in\U]\cdot\Pr[\U]+\E[\trew[\A']{w}\mid w\in\V_1\setminus\U]\cdot\Pr\sqbr{\V_1\setminus\U}\\
         &\quad+\E[\trew[\A']{w}\mid w\in\V\setminus\V_1]\cdot\Pr[\V\setminus\V_1]\\
        &>\E[\trew[\A]{w}\mid w\in\U]\cdot\Pr[\U]+\E[\trew[\A]{w}\mid \V_1\setminus\U]\cdot\Pr\sqbr{\V_1\setminus\U}\\
         &\quad+\E[\trew[\A]{w}\mid w\in\V\setminus\V_1]\cdot\Pr[\V\setminus\V_1]\\
        &=\E[\A'].
    \end{align*}
    The inequality is strict because $\Pr[\U]>0$ and for $w\in\U$ the reward in $\A'$ is strictly higher than in $\A$. We also use the fact that for $w\in\V_1\setminus\U$ the reward in $\A$ does not decrease, and that $\Pr\sqbr{\V\setminus\V_1}=0$.
    We showed $\E[\A']>\E[\A]$ which is a contradiction to the optimality of $\A$.
\end{proof}

Using reward-monotonicity, we show that \fr\ \swacs\ are w.l.o.g. for revenue-like reward.

\begin{corollary}\label{corollary:rev-like-fr-wlog}
For any optimal stagewise auction $\A$ with respect to a revenue-like reward, there exists a \fr\ stagewise auction $\A'$ such that $\E\sqbr{\A'}=\E\sqbr{\A}$. 
\end{corollary}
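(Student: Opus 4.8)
The plan is to use the reward-monotonicity of an optimal revenue-like \swac\ (Lemma~\ref{lemma:rev-like-rmono}) to flatten the payment schedule of every bid without changing the designer's reward from any agent. Since for revenue-like reward $g(v,p)=\alpha p$ the designer's reward depends only on the total payment, the only concern in replacing a payment schedule $(p_1,\dots,p_x)$ with the fixed-rate schedule $(\p{b}/x,\dots,\p{b}/x)$ is that this \emph{lowers the filter} from $\f{b}$ to $\p{b}/x$, potentially causing some agents who were previously filtered out to now deviate to bid $b$. I would argue that any such deviation can only \emph{increase} the designer's reward: by Proposition~\ref{prop:un_monotonicity_means_filter} (or its reasoning), an agent $w$ is filtered out of a bid $b$ only when that bid has a higher allocation than what $w$ currently receives, and reward-monotonicity ensures $\trew[\A]{w}$ is (up to a measure-zero set) at most the reward of any higher-valuation agent that could be bidding $b$.

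The key steps, in order, are as follows. First, invoke Lemma~\ref{lemma:rev-like-rmono} to fix the measure-zero set $\S$ outside of which $\A$ is reward-monotone, and observe that for revenue-like reward, by Proposition~\ref{prop:decreasing-payments} all bids in $\V\setminus\S$ receiving the same allocation are charged the same total payment, so reward is determined by allocation and total payment alone. Second, define $\A'$ to have the same bidding set and, for each bid $b$, the same allocation $\x{b}$ and same total payment $\p{b}$ as $\A$, but with a fixed-rate payment schedule (each of the $\x{b}$ days charged $\p{b}/\x{b}$). Third, analyze how agents behave in $\A'$: an agent who bid $b$ in $\A$ can still afford $b$ in $\A'$ (the filter only went down, $\p{b}/\x{b} \le \f{b} \le v$), and their utility from $b$ strictly increased or stayed the same since the total payment is unchanged while negative intermediate utilities were removed; so the only deviations are agents moving to a bid they previously could not afford. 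Fourth, show such deviations do not hurt: if agent $w$ moves to bid $b$ in $\A'$, then $w$ was filtered out of $b$ in $\A$, and by the argument in Proposition~\ref{prop:un_monotonicity_means_filter} this means $\x{b}>\x[\A]{w}$ and $w$ would strictly prefer $b$; reward-monotonicity then gives $\trew[\A]{w}\le \rew{\cdot}{b}$ (the reward of $b$, which equals the reward $\A$ yielded from the higher-valuation agents who were actually bidding $b$), so the designer's reward from $w$ weakly increases. Handling the measure-zero exceptional set $\S$ requires a small amount of care, but since it has probability zero it does not affect $\E[\A']$; one can, e.g., absorb those agents into whichever bid they end up choosing and note the contribution is zero. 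Finally, conclude that $\E[\A'] = \E[\A]$: no agent's contribution decreased, possibly some increased, but since $\A$ is optimal the expectation cannot strictly increase, so it is exactly equal, and $\A'$ is the desired fixed-rate auction.

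The main obstacle I anticipate is the bookkeeping around agents who \emph{change} their bid in $\A'$ versus $\A$ and proving this is benign. One must be careful that a deviating agent $w$ cannot land on a bid with a \emph{smaller} allocation than it currently has (which could lower the reward if payments are structured adversarially): but moving to a bid $w$ previously could not afford means that bid's old filter exceeded $w$, which by the filter definition forces a large first-day (now per-day) payment relative to the allocation, and Proposition~\ref{prop:decreasing-payments}/Proposition~\ref{prop:un_monotonicity_means_filter} pin down that such a bid had strictly larger allocation than $w$'s. Combined with the fact that in $\A'$ all bids are fixed-rate so the filter of each bid is exactly its per-day payment, one gets a clean monotone picture. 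A secondary subtlety is tie-breaking: ties are broken in the seller's favor, so if a deviating agent is indifferent between its old outcome and a new fixed-rate bid, we may assume it picks whichever yields the seller at least as much, which only helps. Once these points are nailed down, the equality $\E[\A']=\E[\A]$ follows immediately from optimality of $\A$.
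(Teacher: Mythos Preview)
Your approach is the same as the paper's: flatten all payment schedules and use reward-monotonicity (Lemma~\ref{lemma:rev-like-rmono}) to argue that any resulting deviation can only help, then invoke optimality of $\A$ for equality. However, your claim that a deviation target $b$ must satisfy $\x{b} > \x[\A]{w}$ is backwards: Proposition~\ref{prop:un_monotonicity_means_filter} says that if $w<v$ and $\x{w}>\x{v}$ then $w$ is filtered out of $v$, not the converse, and indeed in a non-monotone truthful \swac\ the filtered-out bid can have the \emph{smaller} allocation. Fortunately this claim is unnecessary. What you actually need is that any bid $b$ from which $w$ was filtered satisfies $b \ge \f[\A]{b} > w$, so every deviation is an \emph{overbid}; reward-monotonicity on valuations then gives $\trew[\A]{w}\le \trew[\A]{b}=\rew[\A']{\cdot}{b}$ directly, with no reference to allocation sizes. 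Your worry about ``landing on a bid with smaller allocation'' is therefore misplaced.

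There is also a genuine technical gap in your handling of $\S$. Reward-monotonicity is only guaranteed on $\V\setminus\S$, but in your construction the bidding set still contains $\S$, so an agent $w\notin\S$ could overbid to some $b\in\S$, where $\trew[\A]{w}\le\trew[\A]{b}$ is not available. A positive-measure set of agents could all deviate to a single such $b$, so this is not absorbed by your ``measure-zero'' remark, which only covers agents \emph{in} $\S$, not agents bidding \emph{into} $\S$. The paper's fix is to drop $\S$ from the bidding set of $\A'$ (adding a null bid for IR); then every overbid lands in $\V\setminus\S$ and the monotonicity inequality applies cleanly.
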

To show this,
we prove that
we can transform any truthful optimal \swac $\A$ into a new auction $\A'$ which ``flattens'' all payment schedules of $\A$,
so it becomes a \fr\ auction:
instead of charging variable-rate payments $(p_1,\ldots,p_x)$,
we simply charge the average $\sum_{i = 1}^x p_i / x$ every day,
preserving the total payment.
The effect is that some agents may \emph{overbid} in $\A'$, because it has no filters.
However, since the reward function is valuation-independent (this is key), and due to reward-monotonicity, overbidding does not decrease the designer's reward,
and $\A'$ has the same expected reward as $\A$.
\begin{proof}[Proof of Corollary~\ref{corollary:rev-like-fr-wlog}]
    Let $\A$ be an optimal and truthful \swac with respect to revenue-like reward, where truthfulness is without loss due to Lemma~\ref{lemma:simplifications-wlog}. Recall that due to Lemma~\ref{lemma:rev-like-rmono}, there is a set $\S$ of probability measure zero, such that for all $w,v\in\S$, if $w<v$ then $\trew[\A]{w}\leq \trew[\A]{v}$. Define a new auction $\A'$ which will be like $\A$ except for the following modifications:
    \begin{enumerate}
        \item Let $\B=\brackets{\V\setminus\S}\cup\set{\psi}$ be the bidding set of $\A'$.
        \item For any bid $v\in\B$, the designer in $\A'$ will sell $\x[\A]{v}$ units at a total price of $\p[\A]{v}$ with uniform payments. Specifically, all filters from $\A$ were removed.
        \item For bid $\psi$, zero units are allocated with zero payment.
    \end{enumerate}

    Suppose there is an agent $w$ that changed his bid between $\A$ and $\A'$. Observe that the only reasons for an agent to change his bid, are either if $w\in\S$ and he was forced to change his bid, or if he could improve his utility (since for any bid in $\V$, the designer receives the same utility for that bid in $\A$ and in $\A'$, this tie-breaking rule will not cause a change in bid). Increasing utility in $\A'$ is not possible by underbidding, since it was just as possible in $\A$ and would yield the same utility in $\A'$ as in $\A$. Also bidding $\psi$ will not increase the agent's utility. On the other hand, there are now no filters on higher bids, so $w$ now overbids some $v>w$. Due to $\A$ being \rmono, we know that $\trew[\A']{w}=\rew[\A']{\cdot}{v}=\trew[\A]{v}\geq \trew[\A]{w}$.

    All in all we have $\trew[\A']{v}\geq \trew[\A]{v}$ for all $v\in\V$, and this gives us
    \begin{align*}
        \E[\A']&=\E[\trew[\A']{v}\mid v\in\S]\cdot\underbrace{\Pr[\S]}_{=0}+\E[\trew[\A']{v}\mid v\notin\S]\cdot\Pr[\Bar{\S}]\\
        &\geq\E[\trew[\A]{v}\mid v\in\S]\cdot\Pr[\S]+\E[\trew[\A]{v}\mid v\notin\S]\cdot\Pr[\Bar{\S}]\\
        &=\E[\A].
    \end{align*}

    Since any agent can bid $\psi$ to get zero utility in every timestep, $\A'$ is also IR. The fact that $\A'$ is \fr is immediate since any bid results in uniform payments. We showed that $\A'$'s expected reward is at least that of an optimal auction, completing the proof.
\end{proof}

\subsection{Positive and Negative Tradeoff Reward: Allocation Monotonicity}

In this section we deal with positive tradeoff and negative tradeoff reward functions: positive tradeoff 
is the reward function $g(v, p) = \alpha v + \beta p$ where $\alpha, \beta > 0$ are constants, 
and negative tradeoff is the reward function $g(v,p) = \alpha v- \beta p$,
where $\alpha\geq\beta > 0$.\footnote{Observe that this section does not cover welfare-like reward functions because here we restrict only to functions linear in $v$. Moreover, the results here are mostly restricted to \finaucs, while for welfare-like reward we gave more general results (see Section~\ref{sec:welfare-like-mono}).  Also note that the restriction $\alpha>0$ plays an important part in the proofs, which is why they do not cover revenue-like reward.}

\begin{proposition}\label{prop:finauc_payments_at_least_thresh}
        The payments of each interval in a \finauc\ satisfy $p_i\geq t_i$ for all intervals $\I_i$.
\end{proposition}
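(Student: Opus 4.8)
The plan is to use truthfulness of the \finauc together with the filter/\swir structure. Recall that a \finauc\ has allocation intervals $\I_1 = [t_1,t_2), \ldots, \I_k = [t_k, t_{k+1}]$ with $\inf\V = t_1 < \cdots < t_{k+1} = \sup\V$, and every bid in $\I_i$ yields the same allocation $x_i$ and the same total payment $p_i$. The claim $p_i \geq t_i$ says the total payment in interval $\I_i$ is at least the left endpoint of that interval. I would first dispose of $\I_1$: since $t_1 = \inf\V$ and all payments are nonnegative (payments lie in $\preals$), we have $p_1 \geq 0 = t_1$ when $\inf\V = 0$; more generally, for the first interval one can note that even a single unit at total payment $p_1$ requires $p_1 \le x_1 t_2$ from the \swir\ of agents just below $t_2$ but there is no lower obstruction, so I would instead argue $p_1 \ge 0 \ge$ is insufficient — rather I should handle all intervals uniformly via the filter.

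The key step: fix an interval $\I_i$ with $i \ge 1$ and consider an agent with valuation $v$ slightly above $t_i$ (in particular $v \in \I_i$). This agent bids into $\I_i$ truthfully and receives allocation $x_i$ and payment schedule with total $p_i$. By \swir, this agent can only submit this bid if the associated filter $\f{\I_i} \le v$; since this holds for all $v \in \I_i$ arbitrarily close to $t_i$ from above, we get $\f{\I_i} \le t_i$. Now suppose for contradiction $p_i > t_i$. Recall the filter is $\f{\I_i} = \max_{\ell \le x_i} \frac{1}{\ell}\sum_{j=1}^\ell p_j$, and in particular taking $\ell = x_i$ gives $\f{\I_i} \ge p_i / x_i$. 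If $x_i = 1$ this immediately yields $\f{\I_i} \ge p_i > t_i$, contradicting $\f{\I_i} \le t_i$, so we are done in that case. The subtlety is when $x_i \ge 2$: then $p_i / x_i$ could be small even if $p_i > t_i$, so this crude bound does not suffice.

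To handle $x_i \ge 2$, I would instead compare $\I_i$ with the neighboring interval $\I_{i-1}$ (when $i \ge 2$) using Propositions~\ref{prop:benefit_swap} and~\ref{prop:un_monotonicity_means_filter}, or argue directly: an agent with valuation just above $t_i$ who is offered allocation $x_i$ at total payment $p_i$ must find this offer weakly better than deviating to $\I_{i-1}$'s offer $(x_{i-1}, p_{i-1})$; truthfulness, combined with the fact that an agent with valuation just below $t_i$ (in $\I_{i-1}$) strictly prefers (or is filtered into) $\I_{i-1}$, pins down the relationship. Actually, the cleanest route is: by \swir\ the agent just above $t_i$ can afford bid $\I_i$ only if every prefix-average of the payment schedule is at most $t_i$, hence in particular the total payment satisfies $p_i = \sum_{j=1}^{x_i} p_j \le x_i \cdot t_i$ — but this is an \emph{upper} bound, the opposite direction. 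So the lower bound $p_i \ge t_i$ must come from incentive compatibility against lower intervals, not from \swir. The main obstacle, and where I would focus the argument, is establishing that an agent at valuation $t_i^-$ (just below $t_i$, hence in $\I_{i-1}$) would strictly prefer to jump up to $\I_i$'s offer were it affordable — forcing $\f{\I_i} \ge t_i$, hence (since some prefix average equals the filter and all payments are nonnegative, the relevant prefix being a sub-sum of $p_i$) ultimately $p_i \ge t_i$. Concretely: if the \finauc\ is "non-degenerate" at $t_i$ (i.e. $(x_i,p_i) \ne (x_{i-1}, p_{i-1})$), truthfulness of $\I_{i-1}$-agents forces a filter on $\I_i$ of value exactly $\ge t_i$; if it is degenerate, merge the intervals and induct. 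This reduction to the non-degenerate case, plus the filter computation showing $\f{\I_i} \ge t_i$ implies $p_i \ge t_i$ (since $\f{\I_i} \le p_i$ always, as every prefix average of a nonnegative sequence is at most the largest prefix average which is $\le$ the... actually $\f{\I_i}\le p_i$ holds because the full average is $\le$ the full sum when $x_i \ge 1$), completes the proof.
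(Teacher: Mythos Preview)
Your proposal wanders and never closes the main gap. Two concrete issues:

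\textbf{Direction slip.} When you write ``suppose for contradiction $p_i > t_i$,'' you have the inequality backwards; the goal is $p_i \ge t_i$, so the contradiction hypothesis is $p_i < t_i$. The subsequent filter discussion inherits this confusion.

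\textbf{The real gap.} You eventually (correctly) recognize that the lower bound $p_i \ge t_i$ must come from incentive compatibility of agents in $\I_{i-1}$, not from \swir. But you never \emph{prove} the crucial step: that some agent $w \in \I_{i-1}$ would strictly prefer the offer $(x_i, p_i)$ over $(x_{i-1}, p_{i-1})$. Your ``non-degenerate'' condition $(x_i,p_i)\ne(x_{i-1},p_{i-1})$ does not imply this; two offers can differ without the lower-valuation agent preferring the higher one. Your attempted conclusion ``$\f{\I_i}\ge t_i$, hence $p_i \ge \f{\I_i} \ge t_i$'' therefore rests on an unjustified premise.

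The paper's argument fills exactly this gap with a short computation. Assume $p_i < t_i$ and pick $w \in \I_{i-1}$ with $w > p_i$ (possible since $\I_{i-1}$ reaches up to $t_i$). First rule out $x_{i-1} > x_i$: by Proposition~\ref{prop:un_monotonicity_means_filter} that would force $p_i \ge \f{\I_i} > w$, contradicting $w > p_i$. Hence $x_i \ge x_{i-1}+1$, and then
\[
\util{w}{\I_i} = x_i w - p_i \;\ge\; (x_{i-1}+1)w - p_i \;>\; x_{i-1} w \;\ge\; x_{i-1} w - p_{i-1} = \util{w}{\I_{i-1}},
\]
using $w > p_i$ in the strict step and $p_{i-1}\ge 0$ in the last. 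Since all individual payments are nonnegative, $\f{\I_i} \le p_i < w$, so $w$ can afford bid $\I_i$; this contradicts truthfulness. Note how the specific choice $w > p_i$ is what makes the utility comparison go through---this is the piece your sketch is missing.
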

\begin{proof}
    Suppose the claim is false, so let $\A$ be such an auction and $\I_i$ be an interval for which this claim does not hold: there is some $v\in\I_i$ and $\p{v}<t_i$. Select some $w\in\I_{i-1}$ such that $w>\p{v}$. If $x_{i-1}>x_{i}=\x{v}$ we would reach a contradiction from Prop.~\ref{prop:un_monotonicity_means_filter}, thus $x_{i-1}<x_i$. Observe that
    \begin{align*}
        \util{w}{v}=x_i w-\p{v}\geq(x_{i-1}+1)w-\p{v}>x_{i-1} w\geq \util{w}{w},
    \end{align*}
    which is a contradiction to the truthfulness of $\A$.
\end{proof}

\begin{lemma}\label{lemma:pos-neg-tradeoff-samono}
    Any \finauc\ that is FM-optimal with respect to a positive or negative tradeoff reward
    is \samono.
\end{lemma}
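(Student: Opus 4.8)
The plan is to argue by contradiction. Suppose the \finauc\ $\A$ is FM‑optimal, with intervals $\I_1,\dots,\I_k$, boundaries $t_1<\dots<t_{k+1}$, allocations $x_1,\dots,x_k$ and total payments $p_1,\dots,p_k$, but is \emph{not} \samono. First I would reduce to a \emph{consecutive} violation: there are valuations $w<v$ with $\x{w}>\x{v}$, lying in intervals $\I_i\ni w$ and $\I_j\ni v$ with $i<j$ and $x_i>x_j$, so the finite sequence $x_i,x_{i+1},\dots,x_j$ must drop somewhere, and after renaming we may assume $x_i>x_{i+1}$ for two adjacent intervals.

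Next I would read off the structure this forces. Taking $w\in\I_i$ and $v\in\I_{i+1}$, Proposition~\ref{prop:un_monotonicity_means_filter} gives $\f{\I_{i+1}}>w$ for all $w<t_{i+1}$, hence $\f{\I_{i+1}}\ge t_{i+1}$; since an agent at $t_{i+1}$ must afford $\I_{i+1}$'s contract, also $\f{\I_{i+1}}\le t_{i+1}$, so $\f{\I_{i+1}}=t_{i+1}$ exactly. Consequently $p_{i+1}\le x_{i+1}t_{i+1}$ and $p_i\le x_i\f{\I_i}\le x_i t_i$, while Proposition~\ref{prop:finauc_payments_at_least_thresh} gives $p_i\ge t_i$ and $p_{i+1}\ge t_{i+1}$. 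Finally, an agent $v\in\I_{i+1}$ can afford $\I_i$'s contract ($\f{\I_i}\le t_i<t_{i+1}$) yet prefers $\I_{i+1}$, so $x_{i+1}v-p_{i+1}\ge x_i v-p_i$ on all of $\I_{i+1}$, i.e.\ $p_i-p_{i+1}\ge(x_i-x_{i+1})v$ there, and letting $v\to t_{i+2}$ gives $p_i-p_{i+1}\ge(x_i-x_{i+1})t_{i+2}$.

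The main work is then to modify $\A$ only over the combined range $[t_i,t_{i+2})$, leaving every contract outside it untouched, so as to obtain a \finauc\ $\A'$ with strictly larger expected reward, contradicting FM‑optimality. The natural move is to reassign the \emph{larger} allocation $x_i$ to the higher valuations and the \emph{smaller} allocation $x_{i+1}$ to the lower ones --- either by merging $\I_i,\I_{i+1}$ or by re‑splitting them at a new point --- with the payment schedules pushed to the designer's advantage subject to affordability (for positive tradeoff, fixed‑rate payments as large as the interval's left endpoint allows; for negative tradeoff, the whole payment charged up front, as large as that endpoint allows). I would then verify that $\A'$ is IR (every agent can afford the $x_{i+1}$‑contract), that it is still truthful (agents below $t_i$ cannot afford the altered contracts; agents at or above $t_{i+2}$ still prefer their original contract $\I_j$ to the altered ones, since $p_i\le x_i t_i$ together with truthfulness of $\I_j$'s agents against $\I_i$ force $x_j v-p_j\ge x_i(v-t_i)$ there; and the agents in $[t_i,t_{i+2})$ split correctly between the altered contracts), and that $\E[\A']>\E[\A]$, the decisive estimate combining $\alpha>0$ (and $\alpha\ge\beta$ in the negative case) with $p_i-p_{i+1}\ge(x_i-x_{i+1})t_{i+2}$ to show the reward gained from serving the high‑valuation mass with the larger allocation (or extracting the larger payment from it) strictly exceeds the change in production cost.

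The step I expect to be the main obstacle is making the altered region's contracts and split point \emph{precisely} right so that truthfulness is preserved and the reward strictly improves \emph{simultaneously}. The subtle point is that removing the descent necessarily lowers the utility of some agents formerly in $\I_i$ or $\I_{i+1}$, so one must rule out --- or, failing that, bound the reward effect of --- those agents deviating to an untouched contract $\I_\ell$ with $\ell<i$ or $\ell\ge i+2$; this is exactly where the single‑crossing inequality of Proposition~\ref{prop:benefit_swap} and the lower bounds on the $p_\ell$ forced by truthfulness of the low agents of each $\I_\ell$ must be combined with care, and it may require choosing which inversion to eliminate with more care than simply ``any adjacent one.'' Because the cost function is arbitrary (not convex, not even continuous), the whole reward comparison has to be combinatorial in the allocations and thresholds rather than appeal to smoothness; and the positive and negative tradeoff cases must be carried out in parallel, since the favorable payment direction, and hence the shape of the altered contracts, is reversed between them.
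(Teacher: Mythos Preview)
Your plan is a plausible outline, but the step you flagged as the main obstacle is in fact unresolved, and resolving it is precisely where the paper's proof differs from yours. You pick an arbitrary adjacent descent $x_i>x_{i+1}$ and then try to ``swap'' the two allocations over $[t_i,t_{i+2})$. As you note, the danger is that agents in the altered region may deviate to an untouched contract $\I_\ell$ with $\ell<i$, and you do not show how to rule this out. The paper sidesteps this entirely by a more careful choice of which violation to attack: it takes $x^\ast$ to be the \emph{maximum} allocation that violates monotonicity from the left, and $\I_i$ the \emph{first} interval with $x_i=x^\ast$. This forces $x_j<x_i$ for every $j<i$, so by Proposition~\ref{prop:benefit_swap} any agent who would prefer some $\I_j$ ($j<i$) to $\I_i$ would have done so already in $\A$, contradicting truthfulness. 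This single observation dissolves the underbidding problem you identified.

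The paper's modification is also simpler than your swap. It does not reassign allocations at all; it \emph{merges} $\I_i$ and $\I_{i+1}$ into one interval, choosing which of the two contracts survives according to a two-case test: compare $\E[\trew[\A]{v}\mid v\in\I_i]$ with $\E[\rew[\A]{v}{\I_{i+1}}\mid v\in\I_i]$. If the former is larger, keep the $\I_i$ contract and add filters on $\I_j$ ($j\ge i+2$) so agents formerly in $\I_{i+1}$ cannot overbid; they are then forced down to $\I_i'$ (the underbidding argument above applies). If the latter is larger, keep the $\I_{i+1}$ contract and lower its filter to $t_i$; agents formerly in $\I_i$ now bid $\I_{i+1}'$ by Proposition~\ref{prop:un_monotonicity_means_filter}. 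In each case the strict improvement drops out of the case hypothesis plus $\alpha>0$ and $\E[v\mid\I_{i+1}]>\E[v\mid\I_i]$, with no need to separate positive and negative tradeoff or to invoke the inequality $p_i-p_{i+1}\ge(x_i-x_{i+1})t_{i+2}$ you derived. The cost function never enters beyond the fact that it depends only on the allocation, so no convexity or continuity is needed.
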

\begin{proof}
Suppose for contradiction that $\A$
is a \finauc\ (and as such, truthful)
that is FM-optimal w.r.t.\ the reward function
$g(v,p) = \alpha v + \beta p$,
where  either $\alpha,\beta > 0$ or $\beta<0$ and $\alpha\geq|\beta|$.
This covers positive and negative tradeoff reward that is not welfare-\ or revenue-like (recall that for welfare-\ or revenue-like reward our results are not restricted to FM-optimality).

Let $\I_1,\ldots,\I_k$
be the menu intervals of $\A$,
and let $x_1,\ldots,x_k$
and $p_1,\dots,p_k$
be the corresponding allocations and total payments, respectively.
Since $\A$ is not \amono\ (which is equivalent to \samono\ in a \finauc), we can define 
\begin{equation*}
    x^\ast\coloneq\max\set{x\in\sqrrange{n}:x\textnormal{ violates monotonicity from the left}},
\end{equation*}
where we say that ``$x$ violates monotonicity from the left'' if there is some agent $w$ with allocation $x$ in $\A$, and another agent $v>w$ that gets a smaller allocation.

Let $\I_i$ be the \emph{first} interval with $x_i=x^\ast$. By the choice of $\I_i$ it follows that:
\begin{itemize}
    \item All lower intervals have lower allocations:
    for all $j<i$ we have $x_j<x_i$.
    \item The next interval, $I_{i+1}$, does not have a higher allocation: $x_i\geq x_{i+1}$.
\end{itemize}

We construct a new \finauc\ that achieves strictly higher expected reward than $\A$, contradicting its FM-optimality.
The transformation depends on the relationship between 
$\E\sqbr{\trew[\A]{v}\mid v\in\I_i}$,
the actual expected reward from agents in $\I_i$,
and
$\E\sqbr{\rew[\A]{v}{\I_{i+1}}\mid v\in\I_i}$, 
the expected reward from agents in $\I_i$ if they were to bid $\I_{i+1}$ in $\A$.

\paragraph{The case of $\E\sqbr{\trew[\A]{v}\mid v\in\I_i}\geq\E\sqbr{\rew[\A]{v}{\I_{i+1}}\mid v\in\I_i}$.}

We design a \swac $\A'$ that is identical to $\A$ except for the following changes:
\begin{enumerate}
    \item We remove interval $\I_{i+1}$,
    and instead extend interval $\I_i$ into $\I_i' = \I_i \cup \I_{i+1}$.
    \item Using filters, we adjust the payment schedule for bids above $\I_{i+1}$ while keeping total payments unchanged, so  that agents in $\I_{i+1}$ \emph{cannot overbid},
    but agents in intervals above $\I_{i+1}$
    do not have to change their bid. Formally, for intervals $\I_j$ where $j>i+1$ rearrange the payment schedule to be as follows:
            \begin{itemize}
                \item The payment for the first unit is $\max\set{t_j,\sfrac{p_j}{x_j}}$ (possible due to Prop.~\ref{prop:finauc_payments_at_least_thresh}).
                \item The payment for subsequent units is divided evenly and sums up, together with the first unit payment, to $p_i$.
            \end{itemize}
\end{enumerate}
The intervals below and above intervals $\I_i$ and $\I_{i+1}$ remain unchanged,
$\I_j' = \I_j$ for each $j \notin \set{i, i+1}$.
Intervals below $\I_i$ keep the same allocation and payment schedule, and intervals above $\I_{i+1}$ keep the same allocation but have a modified payment schedule as described above.

Observe that no agent
\emph{outside} $\I_{i+1}$
has an incentive to change their bid in $\A'$ compared to $\A$:
no filter was removed and no payment was reduced.
Therefore, agents with valuations outside $\I_{i+1}$ are still truthful in $\A'$.
On the other hand, agents in $\I_{i+1}$ are \emph{forced} to change their bid.
By design of $\A'$ they cannot overbid,
so they must switch to a lower interval $\I_j'$
for some $j \leq i$.
In fact, all agents in $\I_{i+1}$
switch to $\I_i'$:
by Prop.~\ref{prop:benefit_swap},
if some agent $v \in \I_{i+1}$
prefers to switch to some lower interval $\I_j'$
where $j < i$,
then 
since that interval has a lower allocation ($x_j < x_i$),
agents from $\I_i$ \emph{also}
prefer $\I_j$ to $\I_i$.
This would also be true in the original auction $\A$,
contradicting its truthfulness.
Since $\I_{i+1} \subseteq \I_i'$ and agents in $\I_{i+1}$
now bid $\I_i'$,
their bids in $\A'$ are truthful.

 Thanks to agents in $\I_{i+1}$
 switching their bid to $\I_i'$, 
 the designer's reward strictly increases in $\A'$ compared to $\A$, as we prove in either case:
\begin{itemize}
    \item  If $x_i=x_{i+1}$, then due to Prop.~\ref{prop:decreasing-payments} we have $p_i>p_{i+1}$. Together with the assumption $\E\sqbr{\trew[\A]{v}\mid v\in\I_i}\geq\E\sqbr{\rew[\A]{v}{\I_{i+1}}\mid v\in\I_i}$, which gives $\beta(p_i-p_{i+1})\geq0$, we get $\beta>0$. Therefore:
            \begin{align*}
            \E\sqbr{\trew[\A']{v}\given v\in\I_{i+1}}&=\alpha x_i\E\sqbr{v\given v\in\I_{i+1}}+\beta p_i-c(x_i)\\
            &>\alpha x_{i+1}\E\sqbr{v\given v\in\I_{i+1}}+\beta p_{i+1}-c(x_{i+1})\\
            &=\E\sqbr{\trew[\A]{v}\mid v\in\I_{i+1}}
            \end{align*}
    \item Else, $x_i>x_{i+1}$. In this case, 
 \begin{align*}
    \E\sqbr{\A'}-\E\sqbr{\A}&=\E\sqbr{\rew[\A']{v}{\I_i'}\given v\in\I_{i+1}}-\E\sqbr{\rew[\A]{v}{\I_{i+1}}\given v\in\I_{i+1}}\\
    &=\alpha(x_i-x_{i+1})\E\sqbr{v\given v\in\I_{i+1}}+\beta(p_i-p_{i+1})-c(x_i)+c(x_{i+1})\\
    &>\alpha(x_i-x_{i+1})\E\sqbr{v\given v\in\I_{i}}+\beta(p_i-p_{i+1})-c(x_i)+c(x_{i+1})\\
    &=\E\sqbr{\trew[\A]{v}\mid v\in\I_i}-\E\sqbr{\rew[\A]{v}{\I_{i+1}}\mid v\in\I_i}\geq0.
\end{align*}
The first inequality is due to
the fact that $x_i>x_{i+1}$ (allocation-monotonicity is violated)
and $\I_{i+1}$ contains only valuations greater than $\I_i$;
the second is our assumption in this case.
\end{itemize}

This shows that $\A'$ is a \finauc that achieves higher expected reward than $\A$,
contradicting the FM-optimality of $\A$.

\paragraph{The case of $\E\sqbr{\trew[\A]{v}\mid v\in\I_i}<\E\sqbr{\rew[\A]{v}{\I_{i+1}}\mid v\in\I_i}$.}
Design a new \finauc\ $\A'$ that differs from $\A$ in the following manner:
        \begin{itemize}
            \item Bid $\I_i$ is removed, and instead we extend interval $\I_{i+1}$ into $\I_{i+1}'=\I_i\cup\I_{i+1}$.
            \item The filter on bid $\I_{i+1}$ is lowered to $t_i$ (possible due to Prop.~\ref{prop:un_monotonicity_means_filter}).
        \end{itemize}
        Clearly agents not in $\I_i$ do not change their bid, and are truthful in $\A'$. Agents in $\I_i$ will now bid $\I_{i+1}'$ since the filter was lowered and due to Prop.~\ref{prop:un_monotonicity_means_filter}. Therefore 
        \begin{align*}
            \E\sqbr{\trew[\A']{v}\given v\in\I_i}=\E\sqbr{\rew[\A]{v}{\I_{i+1}}\mid v\in\I_i}>\E\sqbr{\trew[\A]{v}\mid v\in\I_i}.
        \end{align*}

    We showed that also in this case, $\E\sqbr{\A'}>\E\sqbr{\A}$, which is a contradiction to the FM-optimality of $\A$.
\end{proof}

We remark that while Lemma~\ref{lemma:pos-neg-tradeoff-samono} establishes allocation-monotonicity of \emph{FM-optimal} \swacs,  for consumer surplus
we can show allocation monotonicity for \emph{any} optimal \swac:
\begin{lemma}\label{lemma:consumer-surplus-amono}
For consumer surplus, %
    all truthful and optimal \swacs\ are \amono.
\end{lemma}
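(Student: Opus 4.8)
The plan is a proof by contradiction, following the template of Lemma~\ref{lemma:pos-neg-tradeoff-samono} but now for an \emph{arbitrary} optimal \swac rather than a finite-menu one. Suppose $\A$ is truthful, optimal for consumer surplus ($g(v,p)=v-p$), yet not \amono. First I would extract a ``robust'' violation: letting $V_{\mathrm{bad}}=\{w\in\V:\exists v\in\V,\ v>w,\ \x{w}>\x{v}\}$, if $\Pr_{\D}[V_{\mathrm{bad}}]=0$ then taking $\S=V_{\mathrm{bad}}$ already witnesses \amono-ness, so $\Pr_{\D}[V_{\mathrm{bad}}]>0$. For each $w\in V_{\mathrm{bad}}$, Proposition~\ref{prop:un_monotonicity_means_filter} applied to $w$ and a witness $v_w>w$ with $\x{w}>\x{v_w}$ shows that $w$ is \emph{filtered out} of the payment schedule assigned to $v_w$ and strictly prefers it, $\util{w}{v_w}>\util{w}{w}$. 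Partitioning $V_{\mathrm{bad}}$ by the finitely many possible values of $\x{w}$ and of $\x{v_w}$, and then pinning down a single target total payment by a limiting argument in the spirit of Claim~\ref{claim:sequence_to_supremum}, I would obtain: an allocation value $x^*$, a strictly smaller allocation $x_{\mathrm{lo}}<x^*$, a total payment $P_{\mathrm{lo}}$, a filter level $f^*$, and a positive-measure set $W\subseteq V_{\mathrm{bad}}$ of agents, each with $\x{w}=x^*$, each filtered out of the bundle $(x_{\mathrm{lo}},P_{\mathrm{lo}})$ (whose filter in $\A$ is $f^*>w$) but strictly preferring it: $x_{\mathrm{lo}}\,w-P_{\mathrm{lo}}>\util{w}{w}$. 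Since $\util{w}{w}\ge 0$ by IR, this forces $w>P_{\mathrm{lo}}/x_{\mathrm{lo}}$ for every $w\in W$.

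Next comes the transformation. I would build $\A'$ from $\A$ by making the bundle $(x_{\mathrm{lo}},P_{\mathrm{lo}})$ available with a \emph{fixed-rate} payment schedule, i.e.\ by lowering its filter from $f^*$ down to $P_{\mathrm{lo}}/x_{\mathrm{lo}}$ (and, among all $\A$-bundles realizing $(x_{\mathrm{lo}},P_{\mathrm{lo}})$, doing this for the one with the largest filter). Since this only \emph{enlarges} the set of options, no agent is forced off its current choice. Every agent in $W$ can now afford $(x_{\mathrm{lo}},P_{\mathrm{lo}})$ (as $w>P_{\mathrm{lo}}/x_{\mathrm{lo}}$) and strictly prefers it to its $\A$-outcome, so it switches; its allocation drops from $x^*$ to $x_{\mathrm{lo}}$ while its utility strictly increases. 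For consumer surplus the designer's reward from an agent $w$ ending up with $(x,P)$ is $xw-P-c(x)$, so the reward change for $w\in W$ is
\begin{align*}
 \bigl[\,(x_{\mathrm{lo}}w-P_{\mathrm{lo}})-\util{w}{w}\,\bigr]\;+\;\bigl[\,c(x^*)-c(x_{\mathrm{lo}})\,\bigr]\;>\;0,
\end{align*}
because the first bracket is strictly positive and the second is nonnegative ($x^*>x_{\mathrm{lo}}$ and $c$ non-decreasing). Agents with valuation at least $f^*$ already had this bundle available in $\A$, hence their behavior is unchanged. So $W$ alone improves the designer's expected reward by a strictly positive amount, and as long as nothing else changes for the worse we contradict the optimality of $\A$.

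The main obstacle is controlling the \emph{collateral} switchers: agents with valuation in $(P_{\mathrm{lo}}/x_{\mathrm{lo}},f^*)$ that are not in $W$ but also switch to $(x_{\mathrm{lo}},P_{\mathrm{lo}})$ --- if such an agent previously received an allocation \emph{smaller} than $x_{\mathrm{lo}}$, both its allocation and its cost $c(\cdot)$ rise, which could in principle reduce the designer's reward. I expect this to be dispatched by exploiting the single-crossing structure special to consumer surplus --- for the objective $xv-P-c(x)$ the relative desirability of a larger allocation is increasing in $v$ --- together with optimality of $\A$ and the choice of $(x_{\mathrm{lo}},P_{\mathrm{lo}})$ as the highest-filter realization, to argue that no such switch can strictly hurt the designer (or to shrink the newly opened interval so that no positive-measure set of such agents is admitted). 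The other, more technical, difficulty is making the first-paragraph extraction rigorous: producing a genuinely positive-measure set $W$ together with a \emph{single} target bundle $(x_{\mathrm{lo}},P_{\mathrm{lo}})$ --- precisely the kind of nontrivial ``find a concrete violation of monotonicity'' step that Claim~\ref{claim:sequence_to_supremum} carries out for revenue-like reward, adapted here to a valuation-dependent objective.
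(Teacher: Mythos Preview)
Your overall strategy is the paper's: lower the filter on a small-allocation bundle so that a positive-measure set of over-allocated agents switches to it, gaining utility and (since reward $=$ utility $-$ cost under consumer surplus) strictly increasing the designer's reward. Your reward computation for agents in $W$ is correct.

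The genuine gap is exactly the obstacle you flag and do not resolve: the collateral switchers with allocation \emph{below} $x_{\mathrm{lo}}$. Neither of your proposed fixes works as stated. Single-crossing tells you such an agent's utility rises, but its cost rises too, and the net effect on reward is ambiguous; optimality of $\A$ gives no leverage because in $\A$ this agent was filtered out of the bundle, so the designer never ``chose'' against it. Shrinking the opened interval is the right instinct, but you must simultaneously guarantee that (i) after shrinking, no agent with allocation $<x_{\mathrm{lo}}$ is admitted, and (ii) a positive-measure subset of $W$ survives. You have no mechanism ensuring (ii).

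The paper closes this gap with one idea you are missing: choose the \emph{target} allocation minimally. Concretely, let $\S_x=\{v:\exists u>v,\ \x{v}>\x{u}=x\}$ and take $x^\ast=\min\{x:\Pr[\S_x]>0\}$; then pick a single bid $v^\ast$ with $\x{v^\ast}=x^\ast$ and $\Pr[\S_{v^\ast}]>0$ (no limiting argument is needed---the distribution has a density, so such a $v^\ast$ exists). Now set $w=\sup\{v<v^\ast:\x{v}<x^\ast\}$ and lower the filter on bid $v^\ast$ only to $\max\{w,\p{v^\ast}/\x{v^\ast}\}$. By construction the only agent in $[w,v^\ast)$ with allocation below $x^\ast$ is the single point $w$ (measure zero), so the collateral problem disappears. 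Minimality of $x^\ast$ is then used a second time to show $\S_{v^\ast}\cap[w,v^\ast)$ still has positive measure: if it did not, one finds $u\le w$ with $\x{u}<x^\ast$ and $\Pr[\S_{v^\ast}\cap[0,u)]>0$, but $\S_{v^\ast}\cap[0,u)\subseteq\S_{\x{u}}$, contradicting the minimality of $x^\ast$. This two-fold use of ``take the smallest violating target allocation'' is the missing ingredient in your sketch.
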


The proof uses a crucial property specific to consumer surplus: if an agent 
switches to a bid that results in a lower allocation
because it increases their utility, 
then the designer's \emph{reward} from the agent also increases.
\begin{proof}[Proof of Lemma~\ref{lemma:consumer-surplus-amono}]
    Let $\g(v,p)=v- p$ be the reward function consumer surplus, and
    let $\A$ be an optimal \swac with respect to $\g$ and some cost function $c$.
    Suppose, for contradiction, that $\A$ is not \amono.
    We use the following terminology an notation in this proof: If there are valuations $w<v$ such that $\x{w}>\x{v}$ we say that $w$ violates monotonicity \emph{from the left} (with respect to $v$) and $v$ violates monotonicity \emph{from the right} (with respect to $w$). For each $v\in\V$ we define the set $\S_v\coloneq\sett{u\in\V}{u<v\textnormal{ and }\x{u}>\x{v}}$  to be all the valuations that violate monotonicity from the left with respect to $v$.
    
    The idea behind the proof is to construct a new auction $\A'$ based on $\A$. We find a valuation $\vstar$ that violates monotonicity from the right with respect to a set $\mathcal{U}$ that has positive probability. We use Prop.~\ref{prop:un_monotonicity_means_filter} to show that if we reduce the filter on bid $\vstar$, it results in all the bidders of $\mathcal{U}$ changing their bid to $\vstar$ due to an increase in utility. Our reward function increases with the agents' utility, and the new allocation for agents in $\mathcal{U}$ is lower than their original allocation (meaning it has a lower cost) the designer's reward from all agents in $\mathcal{U}$ strictly increases. If we show that no other agent changes their bid, we reach a contradiction to the optimality of $\A$.
    
    We now give the details and prove the above formally. 
    For each allocation $x\in\sqrrange{n}$ we define the set of valuations $\S_x$ as the set of valuations that violate allocation-monotonicity from the left with respect to some valuation with allocation $x$. Formally, 
    $$\S_x\coloneq\sett{v\in\V}{\exists u\in\V\textnormal{ s.t. }v<u\textnormal{ and }\x{v}>\x{u}=x}.$$

    If for each $x$ the set $\S_x$ would be of probability zero, then the set 
    $${\S\coloneq\bigcup_{x\in\sqrrange{n}}\S_x}=\sett{v\in\V}{\exists u\in\V\textnormal{ s.t. }v<u\textnormal{ and }\x{v}>\x{u}}$$
    would also be of probability zero, which would be a contradiction to the fact that $\A$ is \emph{not} \amono. Thus we can define 
    \begin{align*}
        &\xstar\coloneq\min\sett{x\in\sqrrange{n}}{\Pr[\S_x]>0}.
    \end{align*}
    Since we are dealing with a continuous distribution, there is some $\vstar$ with $\x{\vstar}=\xstar$ such that ${\Pr[\S_{\vstar}]>0}$.
    Define
    \begin{align*}
        &w\coloneq\sup\sett{v\in\V}{v<\vstar\textnormal{ and }\x{v}<\xstar}.
    \end{align*}
    
    Define a new auction $\A'$ that is based on $\A$, but for bid $\vstar$ the filter is reduced to $\max\set{w,\frac{\p{\vstar}}{\x{\vstar}}}$. Note that this change does not affect the agents with valuations below $w$, since the filter is not lowered enough. Additionally it does not affect valuations from $\vstar$ or higher, since the total payment does not change. Thus the only agents who may bid non-truthfully in $\A'$ are those in $[w,\vstar)$. Let $v\in[w,\vstar)$:
    \begin{enumerate}
        \item If $\x[\A]{v}>\xstar=\x[\A]{\vstar}$ due to Prop.~\ref{prop:un_monotonicity_means_filter} we know that agent $v$ strictly increases their utility by bidding $\vstar$, and this is possible in $\A'$ since the filter is lowered enough, thus $v$ will bid $\vstar$ in $\A'$. Additionally since $\x[\A]{v}>\xstar=\xhat[\A']{v}$ we have $c\brackets{\xhat[\A']{v}}\leq c\brackets{\xhat[\A]{v}}$, thus:
        \begin{align*}
            \trew[\A']{v}-\trew[\A]{v}&=\util[\A']{v}{\vstar}-\util[\A]{v}{v}+c\brackets{\xhat[\A]{v}}-c\brackets{\xhat[\A']{v}}>0.
        \end{align*}
        This shows that the designer's reward from $v$ is strictly larger in $\A'$ than in $\A$.
        \item If $\x[\A]{v}=\xstar$ and $v$ bids $\vstar$ in $\A'$, this also strictly increases the designer's reward as in the above case, since the agent's utility strictly increases and the cost does not change.
        \item Else, $\x[\A]{v}<\xstar$. But this is only possible for $v=w$ by definition of $w$. Since the distribution is continuous, $\Pr[w]=0$, so even if bidder $w$ yields less reward in $\A'$ than in $\A$, this does not affect the expected utility of $\A'$.
    \end{enumerate}
    
    We showed that all agents in $\mathcal{U}\coloneq\S_{\vstar}\cap[w,\vstar)$ yield strictly larger utility for the designer in $\A'$ relative to $\A$, and there is no other change in terms of expected reward for the designer. We want to show that $\Pr[\mathcal{U}]>0$ and we do this by contradiction. Suppose $\Pr[\mathcal{U}]=0$. In this case $\Pr[\S_{\vstar}\cap[0,w)]>0$. Once again due to the continuity of the distribution, we can find a valuation $u\leq w$ with $\x[\A]{u}<\xstar$ and  $\Pr[\S_{\vstar}\cap[0,u)]>0$.
    By definition of $\S_u$, $S_{\vstar}$ and since $\x[\A]{u}<\x[\A]{\vstar}$  it holds that $\S_{\vstar}\cap[0,u)\subseteq\S_u$, so also $\Pr[\S_u]>0$. This is a contradiction to the choice of $\xstar$, so we deduce that $\Pr[\mathcal{U}]>0$.

    Putting it all together, we show that the expected reward of $\A'$ is larger than that of $\A$:
    \begin{align*}
        \E\sqbr{\A'}&=\E\sqbr{\trew[\A']{v}\mid v\notin\mathcal{U}}\cdot\Pr[v\notin\mathcal{U}]+\E\sqbr{\rew[\A']{v}{\vstar}\mid v\in\mathcal{U}}\cdot\Pr[v\in\mathcal{U}]\\
        &>\E\sqbr{\trew[\A]{v}\mid v\notin\mathcal{U}}\cdot\Pr[v\notin\mathcal{U}]+\E\sqbr{\trew[\A]{v}\mid v\in\mathcal{U}}\cdot\Pr[v\in\mathcal{U}]
        =\E\sqbr{\A}
    \end{align*}

    This result is in contradiction to the optimality of $\A$, completing the proof.
\end{proof}

\begin{corollary}\label{corollary:consumer-surplus-samono-wlog}
    There exists an optimal stagewise auction \(\A'\) with respect to consumer surplus that is \emph{\samono}.
\end{corollary}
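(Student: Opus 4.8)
Starting point: an optimal truthful \swac $\A$ (truthfulness is w.l.o.g.\ by Lemma~\ref{lemma:simplifications-wlog}). By Lemma~\ref{lemma:consumer-surplus-amono}, $\A$ is \amono, so there is a null set $\S$ outside of which the (integer-valued) allocation $\x[\A]{\cdot}$ is non-decreasing; hence on $\V\setminus\S$ it is a step function with breakpoints $\inf\V = t_1 < t_2 < \dots < t_{k+1} = \sup\V$ ($k\le n$) and distinct increasing allocations $x_1 < \dots < x_k$ on the pieces $\I_j := (t_j, t_{j+1})$. The plan is to replace $\A$ by a \emph{finite-menu} \swac $\A'$ whose allocation is exactly $x_j$ on each $[t_j, t_{j+1})$: such an auction is \samono by construction, and I will choose its payments so that it is truthful, IR, and has $\E[\A'] \ge \E[\A]$, which by optimality of $\A$ makes $\A'$ an optimal \samono \swac.

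Choice of payments: on the $j$-th menu interval assign total payment $p_j := \inf\{\p[\A]{v} : v \in \I_j\setminus\S\} = \lim_{v\uparrow t_{j+1}}\p[\A]{v}$ (the limit exists since, by Proposition~\ref{prop:decreasing-payments}, $\p[\A]{\cdot}$ is non-increasing on $\I_j$, and it is bounded by IR), spread over the $x_j$ days with a first-day payment $\hat\phi_j$ --- which equals the interval's filter --- to be fixed in a range $[\,\cdot\,,\,\min\{p_j, t_j\}]$ described below (for $j=1$ no lower bound is imposed since $t_1 = \inf\V$). Because the reward is $\g(v,p) = v - p$, for a.e.\ $v$ (all but the null set $\S$), namely $v\in\I_j\setminus\S$, agent $v$ in $\A'$ receives the same allocation $x_j$ but pays $p_j \le \p[\A]{v}$, so the designer's reward $x_j v - p_j - c(x_j) \ge x_j v - \p[\A]{v} - c(x_j) = \trew[\A]{v}$; integrating over $\D$ gives $\E[\A'] \ge \E[\A]$.

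Truthfulness and IR of $\A'$: the identity $\lim_{v\uparrow t_{j+1}}\util[\A]{v}{v} = x_j t_{j+1} - p_j$, together with the fact that $\util[\A]{\cdot}{\cdot}$ is non-decreasing with a.e.\ derivative $\x[\A]{\cdot}$, shows that the per-interval utility line $v \mapsto x_j v - p_j$ lies weakly above all the lower intervals' lines on $[t_j,\infty)$ (just like Myerson supporting lines), so no agent gains by \emph{underbidding}, and every $v\in[t_j, t_{j+1})$ gets utility $x_j v - p_j \ge x_j t_j - p_j \ge 0$, so $\A'$ is IR. The delicate part --- and the main obstacle --- is ruling out profitable \emph{overbidding}, i.e.\ an agent $w$ below the $j$-th interval profitably bidding into it (obtaining $x_j w - p_j$). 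I would argue that every such $w$ satisfies $w < \liminf_{v\uparrow t_{j+1}}\f[\A]{v}$: otherwise $\A$'s own outcomes for valuations $v\in\I_j$ near $t_{j+1}$ --- which have allocation $x_j$ and payment approaching $p_j$ --- would be affordable \emph{and} strictly profitable for $w$ in $\A$, contradicting truthfulness of $\A$. Since a filter is a prefix average of a non-negative payment schedule, it is at most the total payment, so $\liminf_{v\uparrow t_{j+1}}\f[\A]{v} \le \liminf_{v\uparrow t_{j+1}}\p[\A]{v} = p_j$; together with the trivial bound $w < t_j$ this places all profitable overbidders strictly below $\min\{p_j, t_j\}$. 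Choosing $\hat\phi_j$ in $[\max\{p_j/x_j,\ \text{that bound}\},\ \min\{p_j, t_j\}]$ --- a nonempty range in which the filter is realizable as a first-day payment out of the budget $p_j$ --- simultaneously blocks every profitable overbidder, keeps interval $j$ affordable to all $v\ge t_j$, and (applying the same estimate one interval higher) keeps the agents of interval $j$ from overbidding into interval $j+1$. Hence each agent $v$ bids its own interval in $\A'$, so the allocation of $\A'$ is $x_j$ on $[t_j, t_{j+1})$ --- at a breakpoint the buyer is indifferent between the two adjacent intervals (or strictly prefers the lower one) and the tie is broken toward lower cost, i.e.\ lower allocation, which keeps the allocation non-decreasing --- so $\A'$ is \samono. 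As noted $\E[\A'] \ge \E[\A]$, hence $\A'$ is optimal, completing the proof.

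The step I expect to be hardest is the overbidding analysis: carefully importing the incentive constraints of $\A$, through one-sided limits at each breakpoint $t_j$, to certify that the filter required on each menu interval of $\A'$ never exceeds the (reduced) payment budget available there, so that a valid truthful $\A'$ can in fact be assembled.
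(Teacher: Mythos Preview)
Your approach works and is genuinely different from the paper's, which is considerably simpler. The paper defines $\A'$ by merely restricting the bidding set to $(\V\setminus\S)\cup\{\emptyset\}$, keeping all payment schedules from $\A$ intact: agents in $\V\setminus\S$ remain truthful (their menu only shrank), so expected reward is preserved, and strong allocation monotonicity is obtained by a short exchange argument---if $u<v$ yet $\xhat[\A']{u}>\xhat[\A']{v}$, then both bids lie in $\V\setminus\S$, monotonicity there forces $u$'s bid to exceed $v$'s, so $v$ can afford $u$'s bid and, via Proposition~\ref{prop:benefit_swap}, strictly prefers it, contradicting $v$'s optimality. No payment or filter is touched. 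You instead rebuild $\A'$ as an explicit finite-menu auction with infimum payments $p_j$ and handcrafted filters; this buys you a stronger object (your $\A'$ is truthful and finite-menu), at the price of the overbidding analysis you rightly flag as the hard part. There your claim that profitable overbidders satisfy $w<\liminf_{v\uparrow t_{j+1}}\f[\A]{v}$ only yields the weak inequality from the argument you sketch, so to ensure the filter range is nonempty and blocks every such $w$ you implicitly need $p_j\ge t_j$ for each $j\ge 2$. This holds by the same reasoning as Proposition~\ref{prop:finauc_payments_at_least_thresh}: if $\p[\A]{v}<t_j$ for some $v\in\I_j\setminus\S$, then any $w\in\I_{j-1}\setminus\S$ with $w>\p[\A]{v}$ is not filtered from $v$ (since $\f[\A]{v}\le\p[\A]{v}<w$) and strictly prefers it, contradicting truthfulness of $\A$. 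With $p_j\ge t_j$ established you can simply take $\hat\phi_j=t_j$, and the $\liminf$ detour becomes unnecessary.
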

\begin{proof}
    Let $\A$ be a truthful optimal \swac with respect to consumer surplus matching the conditions of the Lemma. Due to Lemma~\ref{lemma:consumer-surplus-amono} we know that $\A$ is \amono, thus there exists a set $\S$ with $\Pr[\S]=0$ such that for all $u,v\in\V\setminus\S$ it holds that $\x[\A]{u}\leq\x[\A]{v}$. Define an auction $\A'$ with a bidding set $\B\coloneq\brackets{\V\setminus\S}\cup\set{\emptyset}$. For bids in $\V\setminus\S$ agents get in $\A'$ the exact same conditions as in $\A$ (allocation and payment schedule), while an agent bidding $\emptyset$ gets nothing and pays nothing.

    Clearly all agents in $\V\setminus\S$ will still bid truthfully in $\A'$, and the change in reward from agents in $\S$ has no affect on the total expected reward of $\A'$ when compared to $\A$, due to the fact that $\Pr[\S]=0$. Additionally $\A'$ is \swir\ since agents have the option of bidding $\emptyset$ for a utility of zero.

    Finally, it remains to show that $\A'$ is \samono. Let $u,v\in\V$ and suppose that $\xhat[\A']{u}>\xhat[\A']{v}$. Due to Prop.~\ref{prop:un_monotonicity_means_filter}, neither agent bids $\emptyset$ in $\A'$. Thus by the design of $\A'$, agent $w$ bids a value higher than that of $v$'s bid in $\A'$, but this means that agent $w$ \emph{is} able to make the same bid as $w$ in $\A'$, in contradiction to Prop.~\ref{prop:un_monotonicity_means_filter}. Therefore $\A'$ is \samono\ and this completes the proof.
\end{proof}

\subsection{Positive Tradeoff: Reward Monotonicity}
\begin{lemma}\label{lemma:pos-tradeoff-rmono}
    Let $\g(v,p)=\alpha v+\beta p$ be a positive tradeoff reward function ($\alpha,\beta>0$). An FM-optimal \swac, w.r.t. to $\g$ and any cost function $c$, is \rmono.
\end{lemma}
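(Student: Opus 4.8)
The plan is to combine the allocation-monotonicity already available with a boundary analysis of the menu. By Lemma~\ref{lemma:pos-neg-tradeoff-samono}, an FM-optimal positive-tradeoff \swac $\A$ is \samono, so its menu can be written as adjacent intervals $\I_1 = [t_1,t_2),\dots,\I_k = [t_k,t_{k+1}]$ with non-decreasing allocations $x_1 \le \dots \le x_k$ and total payments $p_1,\dots,p_k$. For $v\in\I_j$ the designer's reward is $\alpha x_j v + \beta p_j - c(x_j)$, which is non-decreasing in $v$ since $\alpha x_j \ge 0$; and since $\D$ has a pdf supported on the whole interval $[t_1,t_{k+1}]$, every interior threshold $t_{j+1}$ has positive density on both sides. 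Hence $\A$ is \rmono if and only if the reward does not \emph{strictly drop} across any threshold, i.e.\ for all $j$ we have $\alpha x_j t_{j+1} + \beta p_j - c(x_j) \le \alpha x_{j+1}t_{j+1} + \beta p_{j+1} - c(x_{j+1})$; the ``if'' direction follows by telescoping through the thresholds, using monotonicity inside intervals. So it suffices to rule out a reward drop at any threshold of an FM-optimal auction.

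Assuming a reward drop at $t_{i+1}$, I would first record its consequence: writing $R_j(v) = \alpha x_j v + \beta p_j - c(x_j)$, a drop means $R_i(v) > R_{i+1}(v)$ on a nondegenerate interval $[t_{i+1},t_{i+1}+\delta)$, i.e.\ the designer earns strictly more treating a valuation just above $t_{i+1}$ as an $\I_i$-agent than as an $\I_{i+1}$-agent. The plan is then to build a \finauc $\A'$ with strictly larger expected reward by \emph{shrinking} $\I_{i+1}$ from the left: raise the entry threshold of the $\I_{i+1}$-option from $t_{i+1}$ to some $t'\in(t_{i+1},t_{i+1}+\delta]$, which is achieved by front-loading and inflating $\I_{i+1}$'s total payment (feasible in the spirit of Proposition~\ref{prop:finauc_payments_at_least_thresh}, and it only raises the filter). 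The agents in $[t_{i+1},t')$ are forced off the $\I_{i+1}$-option and the target is for them to fall onto the $\I_i$-option, whose reward $R_i(v)$ strictly exceeds their old reward $R_{i+1}(v)$ there; the agents remaining in $\I_{i+1}$ now pay more, so (as $\beta>0$) the reward from them weakly increases; agents in $\I_i$ and below are untouched, and agents above $\I_{i+1}$ see nothing cheaper and keep their bids by Proposition~\ref{prop:benefit_swap}. This yields $\E[\A'] > \E[\A]$ and contradicts FM-optimality.

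The hard part is showing that the agents shed from $[t_{i+1},t')$ really land on the $\I_i$-option rather than \emph{cascading} to some higher interval $\I_j$ with $j>i+1$ (whose reward $R_j(v)$ at these small valuations is not controlled), or — in the subcase $x_i = x_{i+1}$, where a drop forces $p_i > p_{i+1}$ by Proposition~\ref{prop:decreasing-payments}, so the $\I_i$-option is now \emph{more expensive} than their old option — cascading down to $\I_{i-1}$. I expect to handle this by (i) choosing the reward-drop threshold extremally, e.g.\ the one with the largest index $i$, so that the tail $\I_{i+1},\dots,\I_k$ is already reward-monotone and upward re-bids cannot lower the total reward, together with an argument via Propositions~\ref{prop:benefit_swap} and~\ref{prop:un_monotonicity_means_filter} pinning down where the shed agents go; and (ii) treating $x_i = x_{i+1}$ separately, splitting on whether $t_{i+1}$ is a \emph{filter threshold} (agents below $t_{i+1}$ cannot afford $\I_{i+1}$) or a \emph{preference threshold}, in parallel to the $\E\sqbr{\trew[\A]{v}\mid v\in\I_i}$ versus $\E\sqbr{\rew[\A]{v}{\I_{i+1}}\mid v\in\I_i}$ case split used in the proof of Lemma~\ref{lemma:pos-neg-tradeoff-samono}. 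All the truthfulness bookkeeping in these cases reduces to monotone-utility arguments via Propositions~\ref{prop:benefit_swap}, \ref{prop:un_monotonicity_means_filter} and~\ref{prop:decreasing-payments}.
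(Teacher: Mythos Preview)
Your reduction to ``no reward drop at any threshold'' is correct and is actually a sharper starting point than the paper's: the paper begins by assuming there are adjacent intervals with $\E[\trew{v}\mid v\in\I_i]>\E[\trew{v}\mid v\in\I_{i+1}]$, which does not obviously follow from the pointwise definition of \rmono. The idea of shrinking $\I_{i+1}$ by a small $\delta$ (rather than removing it) is natural precisely because the drop is only guaranteed locally near $t_{i+1}$.

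That said, your plan for the contradiction has two concrete gaps. First, your proposed handling of upward cascades does not work: choosing the largest $i$ with a drop gives $R_{i+1}(t_{i+2})\le R_{i+2}(t_{i+2})$, but since $R_{i+1}(v)-R_{i+2}(v)=\alpha(x_{i+1}-x_{i+2})v+\text{const}$ is non-increasing in $v$, for $v<t_{i+2}$ you may well have $R_{i+1}(v)>R_{i+2}(v)$. So a shed agent jumping to $\I_{i+2}$ can strictly lower the reward, and neither Prop.~\ref{prop:benefit_swap} nor Prop.~\ref{prop:un_monotonicity_means_filter} rules this out. The paper's fix is direct and you should adopt it: for every $j>i+1$, front-load $\I_j$'s payment schedule so its filter sits at $t_j$ (feasible by Prop.~\ref{prop:finauc_payments_at_least_thresh}), without changing total payments. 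This blocks all overbidding by the shed agents and leaves everyone in $\I_j$ unaffected. Second, you assert ``the agents remaining in $\I_{i+1}$ now pay more, so the reward from them weakly increases'', but you never argue they \emph{remain}: once you inflate $p_{i+1}$, those agents may prefer $\I_i$ or $\I_{i+2}$, and the case $p_{i+1}=t_{i+1}$ forces inflation for any $\delta>0$. The paper sidesteps this entirely by removing $\I_{i+1}$ altogether and extending $\I_i$ to $\I_i\cup\I_{i+1}$; there are no ``remaining'' agents to track. With the filter trick in place and the downward-cascade argument via Props.~\ref{prop:decreasing-payments} and~\ref{prop:benefit_swap} (which you already have), the paper's computation $\E[\A']-\E[\A]=R_i(v_{i+1})-R_{i+1}(v_{i+1})>0$ then goes through. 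If you want to keep your local-shrinking variant, the cleanest route is to split $\I_{i+1}$ into $[t_{i+1},t')$ and $[t',t_{i+2})$, put a filter at $t'$ on the second piece \emph{without} changing its total payment (choose $\delta$ small enough that $t'\le p_{i+1}$ when $p_{i+1}>t_{i+1}$, and handle $p_{i+1}=t_{i+1}$ as a separate preference-threshold case), and still add the higher-interval filters.
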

\begin{proof}
    Let $\A$ be an FM-optimal \swac w.r.t. a positive tradeoff reward $\g(v,p)=\alpha v+\beta p$ and cost function $c$. Denote its intervals in increasing order by $\I_1,\dots,\I_k$ and their respective allocations with $x_1\leq\dots\leq x_k$. Due to Lemma~\ref{lemma:pos-neg-tradeoff-samono}, $\A$ is \samono. Suppose, for contradiction, that $\A$ is not \rmono. In this case there is some interval $\I_i$ such that
    \begin{equation}
        \E\sqbr{\trew[\A]{v}\given v\in\I_i}>\E\sqbr{\trew[\A]{v}\given v\in\I_i}.
    \end{equation}
    Denoting $v_j=\E\sqbr{v\in\I_j}$ for all $j\in[k]$ and expanding on the definition of reward gives us
    \begin{equation}\label{eq:pos-tradeoff-rmono}
        \alpha x_iv_i+\beta p_i-c(x_i)>\alpha x_{i+1} v_{i+1}+\beta p_{i+1}-c(x_{i+1}).
    \end{equation}
    Design a new \swac $\A'$ that is based on $\A$ but for these differences:
    \begin{enumerate}
        \item We remove interval $\I_{i+1}$,
        and instead extend interval $\I_i$ into $\I_i' = \I_i \cup \I_{i+1}$.
        \item We adjust the payment schedule for bids above $\I_{i+1}$ while keeping total payments unchanged, so that agents in $\I_{i+1}$ \emph{cannot} overbid (this is feasible due to Prop.~\ref{prop:finauc_payments_at_least_thresh}).
    \end{enumerate}
    The intervals below and above intervals $\I_i$ and $\I_{i+1}$ remain unchanged,
    $\I_j' = \I_j$ for each $j \notin \set{i, i+1}$.
    Intervals below $\I_i$ keep the same allocation and payment schedule, and intervals above $\I_{i+1}$ keep the same allocation but have a possibly modified payment schedule.
    Observe that no agent
\emph{outside} $\I_{i+1}$
has an incentive to change their bid in $\A'$ compared to $\A$:
no filter was removed and no payment was reduced.
Therefore, agents with valuations outside $\I_{i+1}$ are truthful in $\A'$.
On the other hand, agents in $\I_{i+1}$ are \emph{forced} to change their bid.
By design of $\A'$ they cannot overbid,
so they must switch to a lower interval $\I_j'$
for some $j \leq i$.
In fact, we show that agents in $\I_{i+1}$
must switch to $\I_i'$:
suppose they switch to some lower interval $\I_j' = \I_j$
where $j < i$.
By choice of $\I_i$ we have $x_j \leq x_i$. 

If $x_j=x_i$ then by Prop.~\ref{prop:decreasing-payments} agents in $\I_{i+1}$ would prefer to bid $\I_i'$ in $\A'$. 

Else, $x_j<x_i$.
However,
if 
agents from $\I_{i+1}$ prefer to bid $\I_j' = \I_j$
over $\I_i'$,
this means agents from $\I_i$ \emph{also}
prefer $\I_j'$ to $\I_i'$,
as their utility would be improved due to Prop.~\ref{prop:benefit_swap}.
Consequently, since $\I_i$ and $\I_j$ are equivalent to $\I_i'$ to $\I_j'$,
we obtain a contradiction to the truthfulness of agents in $\I_i$ in the original auction $\A$.
 Thanks to agents in $\I_{i+1}$
 switching their bid, 
 the designer's reward strictly increases in $\A'$ compared to $\A$:
\begin{align*}
    \E[\A']-\E[\A]&=\E\sqbr{\trew[\A']{v}\given v\in\I_{i+1}}-\E\sqbr{\trew[\A]{v}\given v\in\I_{i+1}}\\
    &=\alpha v_{i+1}x_i+\beta p_i-c(x_i)-\alpha v_{i+1}x_{i+1}-\beta p_{i+1}+c(x_{i+1})\\
    &>\alpha v_{i+1}x_i+\beta p_{i+1}-c(x_{i+1})-\alpha x_iv_i-\beta p_{i+1}+c(x_{i+1})\\
    &=\alpha x_i (v_{i+1}-v_i)\geq 0.
\end{align*}
The first inequality is due to our assumption, as seen in Eq.~\ref{eq:pos-tradeoff-rmono}.
Since $\A'$ is a \finauc we reach a contradiction to the FM-optimality of $\A$.
\end{proof}

\subsection{Positive Tradeoff: Fixed-Rate W.L.O.G} 
In order to show that \fr\ \swacs\ are w.l.o.g. for positive tradeoff reward, we actually show that although Myerson's payment rule is not the only payment that guarantees truthfulness,
in truthful positive tradeoff \swacs, it is in fact FM-optimal among \finaucs.

\begin{lemma}\label{lemma:positive-tradeoff-fr-wlog}
    Any truthful and FM-optimal \swac w.r.t. a positive tradeoff reward must follow the unique payment rule of Myerson: for every interval $\I_i$ with allocation $x_i$ and threshold $t_i$, the payment must be
    \begin{equation}
        p_i=p_{i-1}+(x_i-x_{i-1})t_i,
    \end{equation}
    where $x_0=p_0=0$.
    Moreover, \fr\ pricing is w.l.o.g.
\end{lemma}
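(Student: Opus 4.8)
The plan is to sandwich the total payments of the given \finauc\ $\A$ between the Myerson values --- from above by incentive compatibility, and from below by FM-optimality --- by comparing $\A$ to the classical fixed-rate Myerson auction built on the same menu. Write $\I_1,\dots,\I_k$ for the menu intervals of $\A$, with thresholds $t_1<\dots<t_{k+1}$, allocations $x_1,\dots,x_k$, and total payments $p_1,\dots,p_k$. By Lemma~\ref{lemma:pos-neg-tradeoff-samono}, $\A$ is \samono, so $0=x_0\le x_1\le\dots\le x_k$. Put $P_i\coloneq\sum_{j=1}^i(x_j-x_{j-1})t_j$ (with $x_0=p_0=0$): this is the claimed Myerson payment, and since $t_j\le t_i$ for $j\le i$ and the increments $x_j-x_{j-1}$ are nonnegative, $P_i\le t_i\sum_{j=1}^i(x_j-x_{j-1})=t_ix_i$, hence $P_i/x_i\le t_i$.

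First I would prove the upper bound $p_i\le P_i$ by induction on $i$. For $i=1$, the \swir\ constraint forces the filter of $\I_1$ to be at most $\inf\V=t_1$, and any payment schedule summing to $p_1$ over $x_1$ days has filter at least $p_1/x_1$, so $p_1\le x_1t_1=P_1$. For the step, the filter of $\I_{i-1}$ is at most $t_{i-1}<t_i$, so every agent in $\I_i$ can afford the bid $\I_{i-1}$; truthfulness of $\A$ applied to the agent with valuation $t_i$ then gives $x_it_i-p_i\ge x_{i-1}t_i-p_{i-1}$, i.e.\ $p_i\le p_{i-1}+(x_i-x_{i-1})t_i\le P_{i-1}+(x_i-x_{i-1})t_i=P_i$.

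Next I would introduce the comparison auction $\A'$: the same menu $\I_1,\dots,\I_k$ and the same allocations, but with the fixed-rate schedule $(P_i/x_i,\dots,P_i/x_i)$ on $\I_i$. Since $P_i/x_i\le t_i$, all of $\I_i$ can afford its bid, so $\A'$ is \swir. The crux is that $\A'$ is truthful: this is Myerson's lemma, and the only check beyond the classical single-parameter argument is that, because $\A'$ carries no filters, an agent could try to \emph{overbid} into a higher interval. But for $v\in\I_i$ and $j>i$ with $x_j>x_i$, $P_j-P_i=\sum_{m=i+1}^j(x_m-x_{m-1})t_m\ge t_{i+1}(x_j-x_i)>v(x_j-x_i)$ because $v<t_{i+1}$, so overbidding strictly decreases utility; underbidding is ruled out by $P_i-P_j\le t_i(x_i-x_j)\le v(x_i-x_j)$ for $j<i$. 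Hence $\A'$ is a \finauc. I expect this truthfulness verification --- essentially re-deriving the relevant half of Myerson's lemma in the \swir\ setting --- to be the main obstacle.

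Finally I would compare rewards and conclude. As $\g(v,p)=\alpha v+\beta p$ is additive and linear in the payment, the net reward from an agent $v\in\I_i$ equals $\alpha x_iv+\beta(\textnormal{total payment})-c(x_i)$; allocations, and thus costs, coincide in $\A$ and $\A'$, so with $\beta>0$ the upper bound yields $\E[\A']-\E[\A]=\beta\sum_i\Pr_{\D}[\I_i]\brackets{P_i-p_i}\ge0$. FM-optimality of $\A$ forces $\E[\A']=\E[\A]$, so every nonnegative summand vanishes: $p_i=P_i$ for every interval $\I_i$ with $\Pr_{\D}[\I_i]>0$, and on any null interval we may freely take $p_i=P_i$. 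This is exactly the rule $p_i=p_{i-1}+(x_i-x_{i-1})t_i$; and since $\A'$ is fixed-rate and attains the optimal expected reward, fixed-rate pricing is without loss.
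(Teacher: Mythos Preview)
Your proposal is correct and follows essentially the same route as the paper: use allocation monotonicity (Lemma~\ref{lemma:pos-neg-tradeoff-samono}), derive the upper bound $p_i\le P_i$ from the truthfulness/underbidding constraint, build the fixed-rate Myerson auction $\A'$ on the same menu, and invoke FM-optimality together with $\beta>0$ to force equality. Your version is slightly more explicit about the base case $i=1$ (via the IR/filter constraint) and about verifying truthfulness of $\A'$ by hand, whereas the paper simply appeals to Myerson's lemma for that step.
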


For intuition, consider first why Myerson's classic proof for the unique payment rule of truthful auctions does \emph{not} apply in our setting, even when 
we restrict to monotone allocation rules.
In order to ensure truthfulness with a monotone allocation rule, the designer must prevent overbidding and underbidding. Without \swir (i.e., under \oir), the only way to
achieve this is through the total payment: it must be set precisely so as to ensure truthfulness, and there is only one way to do so, which is exactly the unique payment rule.
However, under \swir, the designer has an additional tool, \emph{filters}, which can prevent overbidding by restricting lower-valuation agents from bidding on high allocations.
This means that high total payments are not always necessary to prevent overbidding,
and Myerson's payment rule is no longer unique.

Despite this additional flexibility, when maximizing a positive tradeoff reward,
filters are not \emph{necessary} to prevent overbidding:
the designer can always prevent low-valuation agents from bidding like high-valuation agents by increasing the total payments,
and this only increases the designer's reward.
The key limitation on the total payment is that it must remain small enough to prevent \emph{underbidding}.
Crucially, we prove that the maximum total payment that still ensures truthfulness in this setting aligns exactly with the unique payment rule of Myerson. Thus, no matter the allocation, an \emph{FM-optimal truthful} (and not just truthful) \finauc\ must charge a total payment equal to Myerson’s payment rule.
Since filters are not required to prevent overbidding under positive tradeoff reward, and are \emph{never} useful to prevent underbidding,
the FM-optimal \swac for positive tradeoff can use \fr payments w.l.o.g.;
it uses Myerson's payment rule, and the payment is equally split over the allocation.

We now give the formal proof which follows the above reasoning.

\begin{proof}[Proof of Lemma~\ref{lemma:positive-tradeoff-fr-wlog}]
    Let $\g$ and $c$ be function per the claim, and let $\A$ be an FM-optimal \swac. Due to Lemma~\ref{lemma:pos-neg-tradeoff-samono}, $\A$ is \samono. We denote the intervals of $\A$ by $\I_1,\dots,\I_k$ with allocations $x_1,\dots,x_k$, thresholds $t_1,\dots,t_k$ and total payments $p_1,\dots,p_k$.

    Consider an interval $\I_i$ in $\A$ and the following interval $\I_{i+1}$. If $p_{i+1}>p_i+(x_{i+1}-x_i)t_{i+1}$, for some $v\in\I_{i+1}$ we have
    \begin{align*}
        \util{v}{\I_i}=x_iv-p_i>x_iv-p_{i+1}+x_{i+1}t_{i+1}-x_it_{i+1}=\util{v}{\I_{i+1}}+x_i(v-t_{i+1})\geq \util{v}{\I_{i+1}},
    \end{align*}
    which is a contradiction to the truthfulness of $\A$, giving us
    \begin{align}\label{eq:positive-tradeoff-max-pay}
        p_{i+1}\leq p_i+(x_{i+1}-x_i)t_{i+1}.
    \end{align}

    We design a stagewise auction $\A'$ over the same intervals as $\A$, and set the payments recursively as ${p_{i+1}'=p_i'+(x_{i+1}-x_i)t_{i+1}}$, where we define $p_0'=0$ and $x_0=0$ ($p_i'$ is the payment in $\A'$ for interval $\I_i$). The payment schedule is \fr. Observe that this is exactly the unique payment rule of Myerson~\cite{myerson1981optimal} that guarantees optimality for a monotone allocation rule. Since in a \fr \swac there are no filters, Myerson's proof applies as is, and $\A'$ is truthful and a \finauc.

    Given Eq.~\ref{eq:positive-tradeoff-max-pay}, and due to the optimality of $\A$, we have that $p_{i+1}= p_i+(x_{i+1}-x_i)t_{i+1}$ exactly.  
\end{proof}

\subsection{Negative Tradeoff: Requires Threshold Payments} Although optimal negative tradeoff \swacs are \amono, in contrast to the previous reward functions they require variable payment schedules. In fact, variable payments are significantly better than \fr payments:
consider an $n$-unit \swac,
where valuations are drawn i.i.d\ uniformly from \([0,1]\),
with consumer surplus reward and the over-time cost function of Def.~\ref{def:cost-function}.
A \fr \swac can achieve an expected reward of \(1/2\):
this is attained by selling a single unit to the agent,
for free, yielding a reward of $\E_v[g(v, 0)-c(1)] =1/2$.
We show in Lemma~\ref{lemma:best-fr-consumer-surplus}
that \emph{no fixed-rate mechanism can do better.}  
In contrast we show in Lemma~\ref{lem:best-consumer-surplus-uni} that there is a \emph{non-fixed-rate} \swac \(\A\) that has
$\lim_{n\rightarrow\infty}{\E[\A]}=1$.
Due to the over-time cost function
and the fact that $g(v,p) = v - p \leq 1$,
this is the best net reward possible.

We prove that although \fr payments are not optimal, an optimal \finauc for negative threshold can still have a simple structure, which we call \emph{threshold payments}:
for each interval $\I_i = [t_i, t_{i+1})$
with corresponding allocation $x_i$,
the payment is $t_i$ on the first day, and $0$ on the $x_i - 1$ remaining days. We say that a \swac is a \emph{threshold auction} if the allocation rule is monotone and it requires threshold payments. If the allocation of a threshold auction is either $1$ or $n$, we call it a \emph{one-or-all} threshold auction.

\begin{lemma}\label{lemma:neg-tradeoff-threshold-auction-truthful-and-optimal}
    Let $\alloc:\V\rightarrow\sqrrange{n}$ be a monotone non-decreasing step-function, and let $\A$ be the matching threshold auction. It holds that:
    \begin{enumerate}
        \item $\A$ is truthful.
        \item If the reward is a negative tradeoff reward, then 
        given another truthful auction $\A'$ with allocation rule $\alloc$, the expected rewards satisfy $\E\sqbr{\trew[\A]{v}}\geq\E\sqbr{\trew[\A']{v}}$.
    \end{enumerate}
\end{lemma}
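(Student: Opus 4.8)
The plan is to prove the two parts in turn. Throughout, take the \emph{canonical} interval representation $\I_1,\ldots,\I_k$ of $\alloc$, where each $\I_i=[t_i,t_{i+1})$ is a maximal interval on which $\alloc$ is constant; since $\alloc$ is monotone non-decreasing, this makes the allocations strictly increasing, $x_1<x_2<\cdots<x_k$ (this is also needed for $\A$ to even be truthful: an agent at the bottom of an interval would otherwise prefer the cheaper equal-allocation interval just below it). Recall that the threshold schedule for $\I_i$ is $(t_i,0,\ldots,0)$, whose partial averages are $t_i,t_i/2,\ldots,t_i/x_i$, so its filter equals $t_i$.

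\emph{Part 1 (truthfulness).} An agent with valuation $v\in\I_i$ can afford exactly the bids $\I_1,\ldots,\I_i$ (their filters satisfy $t_1\le\cdots\le t_i\le v$) and no bid above $\I_i$ (whose filter is $t_{i+1}>v$), so overbidding is impossible. Among the affordable bids, $\util{v}{\I_j}=x_jv-t_j$, and for $j<i$,
\[
  \util{v}{\I_i}-\util{v}{\I_j}=\sum_{\ell=j+1}^{i}\bigl((x_\ell-x_{\ell-1})v-(t_\ell-t_{\ell-1})\bigr)\ \ge\ 0,
\]
because each summand is at least $v-(t_\ell-t_{\ell-1})\ge t_\ell-(t_\ell-t_{\ell-1})=t_{\ell-1}\ge 0$, using $x_\ell-x_{\ell-1}\ge 1$ and $v\ge t_i\ge t_\ell$. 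Hence bidding truthfully is weakly optimal (and weakly dominant once ties are broken toward the seller), and since the day-$j$ cumulative utility $jv-t_i$ is nonnegative, the schedule is \swir-feasible; so $\A$ is truthful.

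\emph{Part 2 (optimality).} Let $\g(v,p)=\alpha v-\beta p$ be a negative tradeoff reward ($\alpha\ge\beta>0$) and let $\A'$ be truthful with allocation rule $\alloc$. For $v\in\I_i$ both auctions allocate $x_i$ units and incur cost $c(x_i)$, so $\trew[\A]{v}-\trew[\A']{v}=\beta\bigl(\p[\A']{v}-t_i\bigr)$, where $\p[\A']{v}$ is the total payment a value-$v$ bidder makes in $\A'$. Since $\beta>0$, it suffices to show $\p[\A']{v}\ge t_i$ for every $v\in\I_i$; then $\trew[\A']{v}\le\trew[\A]{v}$ pointwise, and taking expectations over $v\sim\D$ gives the claim. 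To prove this bound, fix $v_0\in\I_i$ with $i\ge 2$ and suppose $\p[\A']{v_0}<t_i$. The filter of $v_0$'s schedule in $\A'$ is at most its total payment (no partial average exceeds the total), hence is $<t_i$. Pick $w$ with $\max\{\p[\A']{v_0},t_{i-1}\}<w<t_i$; this is a nonempty subinterval of $\V$, and $w\in\I_{i-1}$. In $\A'$, agent $w$ truthfully gets allocation $x_{i-1}$, but $w$ can also submit $v_0$'s bid (its filter is $\le\p[\A']{v_0}<w$), obtaining allocation $x_i$ at total payment $\p[\A']{v_0}$. Truthfulness of $\A'$ then gives $x_{i-1}w-\p[\A']{w}\ge x_iw-\p[\A']{v_0}$, so
\[
  \p[\A']{w}\ \le\ \p[\A']{v_0}-(x_i-x_{i-1})w\ \le\ \p[\A']{v_0}-w\ <\ 0,
\]
contradicting nonnegativity of payments. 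For $i=1$ the bound is immediate since payments are nonnegative and $t_1=\inf\V=0$.

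The main obstacle is exactly this payment bound. Because $\A'$ may charge different (though, by Prop.~\ref{prop:decreasing-payments}, non-increasing) payments to different valuations within one interval, one cannot appeal to the standard Myersonian two-sided squeeze, which only pins down the payment at an interval's bottom endpoint. The \swir-specific fact that a schedule's filter is bounded by its total payment is what makes the argument go through, by letting a valuation $w$ just below $t_i$ mimic an allegedly cheap schedule and force a negative payment. One must also be slightly careful to work in the canonical interval representation (so that consecutive allocations are strictly increasing) and to invoke the standing convention $\inf\V=0$ in the base case.
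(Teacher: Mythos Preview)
Your proof is correct and follows essentially the same approach as the paper: in both parts the core ideas match (filters block overbidding and underbidding is weakly worse in Part~1; in Part~2, the threshold $t_i$ lower-bounds any truthful payment by exhibiting an agent $w$ just below $t_i$ who could otherwise profitably mimic $v$'s bid). You are slightly more explicit than the paper in observing that a schedule's filter never exceeds its total payment (which is what lets $w$ afford $v_0$'s bid in $\A'$), and you prove Part~2 pointwise rather than via contradiction on expectations, but these are stylistic rather than substantive differences.
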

\begin{proof}
    Consider some $\alloc,\A$ and $\g(v,p)=\alpha v-\beta p$ matching the conditions of the claim ($\alpha\geq\beta>0$).
    
   Let $v\in\V$ and suppose, to contradict truthfulness, that it is not optimal for $v$ to bid truthfully, i.e. there is a bid $b$ with  $\util{v}{b}>\util{v}{v}$. Due to the payment schedule of $\A$ we know that $v\geq\filter(b)=\p{b}$. Additionally, since bids with the same allocation also have the same payment, it follows that $\x{b}+1\leq\x{v}$ and $\p{b}<\p{v}$. Hence
   \begin{align*}
       \x{b}v-\p{v}=\util[\A]{v}{b}>\util[\A]{v}{v}=\x{v}v-\p{v}\geq\x{b}v+v-\p{v}\geq\x{b}v,
   \end{align*}
   which is a contradiction to $\p{v}\geq0$. This completes the proof of truthfulness of $\A$.

   Next, suppose there a truthful auction $\A'$ with allocation rule $\alloc$ and $\E\sqbr{\trew[\A]{v}}<\E\sqbr{\trew[\A']{v}}$. This means that there is at least one agent $v$ with $\trew[\A]{v}<\trew[\A']{v}$, because otherwise we'd have
   \begin{align*}
       0&\leq\E\sqbr{\left|\trew[\A]{v}-\trew[\A']{v}\right|}=\E\sqbr{\trew[\A]{v}-\trew[\A']{v}}\\
       &=\E\sqbr{\trew[\A]{v}}-\E\sqbr{\trew[\A']{v}}<0,
   \end{align*}
   which is a contradiction. Therefore for agent $v$ we have:
   \begin{align*}
       &\alpha\x[\A]{v}v-\beta\p[\A]{v}-c\brackets{\x[\A]{v}}=\trew[\A]{v}<\trew[\A']{v}\\
       &\quad=\alpha\x[\A']{v}v-\beta\p[\A']{v}-c\brackets{\x[\A']{v}}
       =\alpha\x[\A]{v}v-\beta\p[\A']{v}-c\brackets{\x[\A]{v}}
   \end{align*}
   It follows that $\p[\A']{v}<\p[\A]{v}$. We will show that this contradicts the truthfulness of $\A'$. Consider agent ${w=\frac{\p[\A']{v}+\p[\A]{v}}{2}>\p[\A']{v}}$. Since $w<\p[\A]{v}=\filter[\A](v)$ we deduce $\x[\A']{w}=\x[\A]{w}\leq\x[\A]{v}-1=\x[\A']{v}-1$. But observe that
   \begin{align*}
       \util[\A']{w}{w}&=\x[\A']{w}w-\p[\A']{w}\leq\x[\A']{v}w-\p[\A']{w}-w<\x[\A']{v}w-\p[\A']{w}-\p[\A']{v}\\
        &=\util[\A']{w}{v}-\p[\A']{w} \leq\util[\A']{w}{v}
   \end{align*}

   which contradicts the truthfulness of $\A'$. We conclude that $\E\sqbr{\trew[\A]{v}}\geq\E\sqbr{\trew[\A']{v}}$.
\end{proof}

Together with Lemma~\ref{lemma:pos-neg-tradeoff-samono} and Corollary~\ref{corollary:consumer-surplus-samono-wlog}, we get the following result:

\begin{corollary}
\label{cor:neg-tradeoff-threshold-payment}
Under negative tradeoff reward, threshold payments are FM-optimal.
Moreover, if the reward is consumer surplus, they are globally optimal.
\end{corollary}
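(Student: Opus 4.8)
The plan is to reduce both claims to the three results already cited. The workhorse is Lemma~\ref{lemma:neg-tradeoff-threshold-auction-truthful-and-optimal}: for a negative tradeoff reward and any fixed monotone non-decreasing step allocation rule $\alloc$, the threshold auction matching $\alloc$ is truthful and has expected reward at least that of every truthful \swac\ with the same allocation rule. So once we know that an optimal auction --- FM-optimal, or globally optimal in the consumer-surplus case --- can be taken to have a monotone step allocation rule, we may swap in threshold payments without losing reward, and we only need to verify that the resulting auction is still a legitimate competitor in the relevant benchmark.

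For FM-optimality I would start from an FM-optimal \finauc\ $\A$ for a negative tradeoff reward $g(v,p)=\alpha v-\beta p$ with $\alpha\ge\beta>0$. By definition a \finauc\ is truthful, and by Lemma~\ref{lemma:pos-neg-tradeoff-samono} it is \samono, so its allocation rule $\alloc$ is a monotone non-decreasing step function $\V\to\sqrrange{n}$ with finitely many pieces (it has finitely many menu intervals). Let $\A^\ast$ be the threshold auction matching $\alloc$. Because $\alloc$ is a monotone finite step function, $\A^\ast$ partitions $\V$ into finitely many adjacent intervals, on each of which it offers a fixed allocation and the fixed payment schedule $(t_i,0,\dots,0)$; combined with truthfulness from Lemma~\ref{lemma:neg-tradeoff-threshold-auction-truthful-and-optimal}(1), this makes $\A^\ast$ itself a \finauc. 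Lemma~\ref{lemma:neg-tradeoff-threshold-auction-truthful-and-optimal}(2) then gives $\E\sqbr{\trew[\A^\ast]{v}}\ge\E\sqbr{\trew[\A]{v}}$, so $\A^\ast$ is FM-optimal and uses threshold payments, which is the first claim.

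For consumer surplus ($g(v,p)=v-p$, a negative tradeoff reward with $\alpha=\beta=1$) I would instead invoke Corollary~\ref{corollary:consumer-surplus-samono-wlog}: there is a globally optimal \samono\ \swac\ $\A$, which by its construction (and Lemma~\ref{lemma:simplifications-wlog}) may be taken to be truthful. The point to notice here is that a monotone non-decreasing map into the \emph{finite} set $\sqrrange{n}$ is automatically a step function with finitely many pieces, so $\alloc$ is a monotone step allocation rule and Lemma~\ref{lemma:neg-tradeoff-threshold-auction-truthful-and-optimal} applies verbatim; the matching threshold auction $\A^\ast$ is truthful and attains $\E\sqbr{\trew[\A^\ast]{v}}\ge\E\sqbr{\trew[\A]{v}}$, which is the global optimum. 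Hence $\A^\ast$ is a globally optimal \swac\ using threshold payments.

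There is no genuine obstacle --- the argument is an assembly of earlier lemmas. The one place that needs a moment of care is the elementary observation that a monotone allocation rule with finite range is automatically a finite step function: this is precisely what lets us (i) apply Lemma~\ref{lemma:neg-tradeoff-threshold-auction-truthful-and-optimal} to the globally optimal consumer-surplus auction even though it is not a priori a \finauc, and (ii) conclude that the threshold auction we construct is itself a bona fide \finauc\ and therefore a valid competitor for the FM-optimal benchmark. One should also double-check that the truthfulness guaranteed by Corollary~\ref{corollary:consumer-surplus-samono-wlog} (via its bidding-set restriction plus the empty bid) is exactly the hypothesis consumed by Lemma~\ref{lemma:neg-tradeoff-threshold-auction-truthful-and-optimal}(2), which it is.
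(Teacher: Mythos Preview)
Your proposal is correct and follows essentially the same approach as the paper, which simply states that the corollary follows from combining Lemma~\ref{lemma:neg-tradeoff-threshold-auction-truthful-and-optimal} with Lemma~\ref{lemma:pos-neg-tradeoff-samono} and Corollary~\ref{corollary:consumer-surplus-samono-wlog}. You have spelled out explicitly the assembly that the paper leaves implicit, including the observation that a monotone allocation rule into the finite range $\sqrrange{n}$ is automatically a finite step function, and the verification that the resulting threshold auction is itself a \finauc\ (hence a valid competitor for FM-optimality).
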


Having shown that the FM-optimal (or, in some cases, the globally optimal) \swac for negative threshold uses threshold payments,
it remains to find the intervals
and the corresponding allocations that make up the auction (under threshold payments, the payment schedule is determined given the intervals).
We explain how this is done in the next section.

\section{Finding Optimal Rental Mechanisms}\label{sec:tech-overview-rental-mechs}

Recall that an optimal rental mechanism consists, w.l.o.g., of a sequence of optimal \swacs $\A_1,\ldots,\A_n$,
where $\A_i$ is responsible for allocating the item on day $i$:
it has a horizon of $n - i+1$ days,
the same reward function as the rental game,
and the over-time cost function from Definition~\ref{def:cost-function}.

In this section we derive two optimal \swacs: one for reward functions where \fr\ payments are w.l.o.g. (welfare-like, revenue-like and positive tradeoff),
and another for reward functions that require threshold payments (negative tradeoff). 

A unifying theme for both types of \swacs is that they optimize a generalized notion of \emph{ironed virtual values}.
Myerson~\cite{myerson1981optimal} defines a virtual value $\varphi(v)=v-(1-F(v))/f(v)$ for revenue, where $F$ and $f$ are the cdf and pdf (resp.) of the agent's distribution.
This captures the optimal per-unit revenue that a seller can obtain
in a standard auction,
in the sense that
$\E[\varphi(v)\x{v}]=\E[\p{v}]=E[\mathit{revenue}]$.
Virtual values, together with Myerson's Lemma, essentially reduce the mechanism design problem --- i.e., the problem of finding an optimal global allocation and payment rule that result in truthfulness --- to a pointwise single-parameter optimization problem,
that of finding an optimal allocation for each possible valuation in isolation. The formal ironing definition is presented in Section~\ref{sec:ironing}.

Using the following Lemma, and our generalized notion of virtual values, we are able to reduce  the \swac design problem to a pointwise single-parameter optimization problem: 
finding a monotone allocation rule $\alloc$ that maximizes the expression ${\Bar{\theta}(v)\,\x{v}-c\bigl(\x{v}\bigr)}$ 
for each $v\in\V$.
We then show,
on a per-case basis,
that given such an allocation rule, we can design a payment schedule for each bid that implements this allocation rule in a truthful and optimal way.

\begin{lemma}\label{lemma:irn-maximizer-maximizes-expected}
    Let $\theta:\V\rightarrow\reals$ be a function, and let $\Bar{\theta}$ be its ironing. A monotone non-decreasing allocation rule $\alloc:\V\rightarrow\sqbr{n}$ that maximizes $\Bar{\theta}(v)\x{v}-c\brackets{\x{v}}$ pointwise for all $v\in\V$, it holds that the value $\E\sqbr{\theta(v)\x{v}-c\brackets{\x{v}}}$ is maximized among all monotone non-decreasing allocation rules.
\end{lemma}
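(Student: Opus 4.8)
The plan is to run the classical Myerson ironing argument, extended to the extra cost term $c$ and to an arbitrary (not necessarily virtual-value) function $\theta$. Recall from Section~\ref{sec:ironing} that the ironing is built from a convex hull: writing $q=F(v)$ for the quantile, $h(q)=\theta(F^{-1}(q))$ and $\Theta(q)=\int_0^q h(r)\,dr$, one lets $\hat\Theta$ be the largest convex function with $\hat\Theta\le\Theta$ on $[0,1]$ and sets $\bar\theta(v)=\hat\Theta'(F(v))$. The facts I will use are: (i) $\hat\Theta$ is convex, so $\bar\theta$ is non-decreasing; (ii) $\Theta-\hat\Theta\ge 0$ on $[0,1]$, it equals $0$ at $q=0$ and $q=1$, and it equals $0$ outside the open ``ironed intervals'' (the maximal intervals on which $\hat\Theta<\Theta$); on each ironed interval $\hat\Theta$ is affine, so $\bar\theta$ is constant there.

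The core of the proof is an \emph{ironing inequality}: for every monotone non-decreasing allocation rule $Y$, $\E[\theta(v)Y(v)]\le\E[\bar\theta(v)Y(v)]$, with equality whenever $Y$ is constant on every ironed interval. To prove it, write $y(q)=Y(F^{-1}(q))$, which is non-decreasing, and treat $dy$ as a non-negative Stieltjes measure on $[0,1]$. Integration by parts gives $\E[\theta(v)Y(v)]=\int_0^1 h(q)y(q)\,dq=\Theta(1)y(1)-\int_0^1\Theta(q)\,dy(q)$ (using $\Theta(0)=0$) and, identically, $\E[\bar\theta(v)Y(v)]=\hat\Theta(1)y(1)-\int_0^1\hat\Theta(q)\,dy(q)$ (using $\hat\Theta(0)=0$). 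Subtracting and using $\hat\Theta(1)=\Theta(1)$,
\[
\E[\bar\theta(v)Y(v)]-\E[\theta(v)Y(v)]=\int_0^1\bigl(\Theta(q)-\hat\Theta(q)\bigr)\,dy(q)\ \ge\ 0,
\]
since $\Theta-\hat\Theta\ge 0$ and $dy\ge 0$; and the left-hand side is $0$ exactly when $dy$ puts no mass where $\Theta>\hat\Theta$, i.e.\ when $Y$ does not jump inside any ironed interval.

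Next, let $X$ be a monotone non-decreasing pointwise maximizer of $\bar\theta(v)x-c(x)$ as in the statement; such an $X$ exists because $\arg\max_{x\in\{0,\dots,n\}}\{\lambda x-c(x)\}$ is non-decreasing in $\lambda$ for any $c$ (add the two optimality inequalities at $\lambda_1<\lambda_2$), and $\bar\theta$ is non-decreasing, so a monotone selection is available. On any ironed interval the map $x\mapsto\bar\theta(v)x-c(x)$ does not depend on $v$ (since $\bar\theta$ is the constant $\bar\theta_0$ there), so its set of maximizers is the same for all such $v$; hence $X$ takes one fixed optimal value throughout the interval, i.e.\ $X$ is constant on every ironed interval (ties, if any, are broken consistently). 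By the equality case of the ironing inequality, $\E[\theta(v)X(v)-c(X(v))]=\E[\bar\theta(v)X(v)-c(X(v))]$. Now for an arbitrary monotone non-decreasing rule $Y$, the ironing inequality and the pointwise optimality of $X$ give
\[
\E[\theta(v)Y(v)-c(Y(v))]\ \le\ \E[\bar\theta(v)Y(v)-c(Y(v))]\ \le\ \E\Bigl[\max_{x}\bigl(\bar\theta(v)x-c(x)\bigr)\Bigr],
\]
and the right-hand side equals $\E[\bar\theta(v)X(v)-c(X(v))]$ (as $X$ is a pointwise maximizer), which equals $\E[\theta(v)X(v)-c(X(v))]$ by the previous sentence. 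Thus $\E[\theta(v)Y(v)-c(Y(v))]\le\E[\theta(v)X(v)-c(X(v))]$ for every monotone non-decreasing $Y$, and since $X$ is itself such a rule it attains the maximum, which is the claim.

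I expect the main obstacle to be the ironing inequality and its equality characterization: justifying the Stieltjes integration by parts cleanly (for a step-function rule it is just Abel summation, but the general statement needs care at the boundary term), and pinning down that exactly the mass of $dy$ lying inside ironed intervals is what would make the two sides differ. The remaining ingredients --- existence of a monotone pointwise maximizer, its constancy on each ironed interval, and the three-term sandwich --- are short once the inequality is established; everything else is bookkeeping with the convex-hull properties set up in Section~\ref{sec:ironing}.
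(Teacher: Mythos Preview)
Your proposal is correct and follows essentially the same approach as the paper. Both arguments hinge on the ironing inequality $\E[\theta(v)Y(v)]\le\E[\bar\theta(v)Y(v)]$ for monotone $Y$ (the paper cites this as Lemma~\ref{lemma:ironing-properties}, whereas you reprove it via Stieltjes integration by parts), observe that a pointwise maximizer of $\bar\theta(v)x-c(x)$ is constant on ironed intervals since $\bar\theta$ is constant there, and then conclude by the same three-term sandwich; you are simply more self-contained and spell out the sandwich explicitly, while the paper's proof is terser and leaves that last step implicit.
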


The proof is deferred to Section~\ref{sec:proof-optimal-rental-mechs}, as it relies strongly on the ironing procedure definitions and is rather straightforward.

\subsection{Optimal \FR Mechanism}\label{sec:fr-mech}

In this section we present a \fr rental mechanism that is optimal among all \fr mechanisms for any reward function. To be precise,
it is optimal for welfare- and revenue-like reward and FM-optimal for positive tradeoff.

The \swac  we design optimizes a \emph{generalized} notion of virtual values~\cite{HartlineR08} 
which achieves the same goal, specific to a given family of \swacs,
in the following sense:
\begin{definition}[Fixed-rate-optimal virtual value]
\label{def:virtual}
A function \(\theta:\V\to\mathbb{R}\) is called a \emph{\fr-optimal virtual value} 
for a \swac setting \(\mathcal{S} = (n, \D, \g, c)\) 
if there exists a \swac that is optimal within the family of \fr \swacs 
for \(\mathcal{S}\) with an allocation rule \(X\) such that
\begin{equation}\label{eq:virv-reward}
    \E\bigl[\trew{v}\bigr] = \E\Bigl[\theta(v) X(v) - c\bigl(X(v)\bigr)\Bigr].
\end{equation}
\end{definition}

As can be seen from the definition, $\theta$
is a \fr-optimal virtual value if it captures the optimal per-unit reward achievable, just like Myerson's virtual value.
However, unlike standard auctions,
in our case there is no unique payment rule,
so the definition explicitly refers to some optimal \swac 
and is not necessarily unique \footnote{
    In previous sections we showed that for the reward classes we are interested in, \fr mechanisms are w.l.o.g., justifying the restriction to monotone allocation rules.
    }. We give an explicit example for a \fr-optimal virtual value below.

For \fr \swacs, defining generalized virtual values is relatively straightforward. Since a \fr \swac has no filters, an agent’s decision depends solely on the total payment associated with each bid—mirroring the characteristics of classical Myersonian auctions. Consequently, the proof of Myerson’s lemma~\cite{myerson1981optimal} applies directly, leading to the following observation:

\begin{observation}\label{ob:fr-means-myerson}
    A truthful \fr \swac has a monotone non-decreasing allocation rule, and its total payments follow the unique payment rule of Myerson:
    \begin{align}
        \p{v}=\sum_{i=1}^l z_i\cdot\Delta_i,
    \end{align}
    where $z_1,\dots,z_l$ are the locations of the jumps in allocation up to (and including) $v$ and $\Delta_1,\dots,\Delta_l$ are the sizes of the jumps.
\end{observation}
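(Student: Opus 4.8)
The plan is to reduce a \fr \swac to a classical single-parameter (overall-IR) mechanism, after which the statement is just Myerson's Lemma~\cite{myerson1981optimal}. The key fact I would establish first is that in a \fr \swac the \swir constraint is \emph{vacuous}. Indeed, if a bid $b$ has allocation $\x{b}=x>0$ and total payment $\p{b}=P$, then its payment schedule is $(P/x,\ldots,P/x)$, so every partial average equals $P/x$ and hence $\f{b}=P/x$; and if $\x{b}=0$ the schedule is empty, so $\p{b}=0$. Therefore agent $v$'s overall utility from $b$ is $\util{v}{b}=x\cdot v-P=x\,(v-\f{b})$, which is strictly negative \emph{exactly when} $v<\f{b}$, i.e.\ exactly when $v$ is filtered out of $b$. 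So the bids \swir forbids $v$ from making are precisely the bids that would give $v$ strictly negative overall utility, and --- symmetrically --- every bid $v$ \emph{is} allowed to make yields $v$ nonnegative overall utility.

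Given this, I would consider the filter-free mechanism $\A^{o}$ that has the same bid set and the same maps $b\mapsto\x{b}$ and $b\mapsto\p{b}$ as $\A$, but in which any agent may submit any bid. For each $v$, the bids available in $\A$ are exactly the bids of $\A^{o}$ that give $v$ nonnegative utility, while the extra bids $\A^{o}$ makes available to $v$ all give $v$ strictly negative utility; hence $v$'s utility-maximizing bid is the same in $\A$ and in $\A^{o}$, and the seller-favorable tie-breaking at the optimum is unaffected (a strictly-negative-utility bid cannot tie with the chosen bid, whose utility is $\ge 0$). Consequently $\A^{o}$ is a truthful classical single-parameter mechanism with the same allocation and payment functions as $\A$, and Myerson's Lemma applies verbatim: the allocation rule $v\mapsto\x{v}$ is monotone non-decreasing, and the total payment is determined by it through the Myersonian identity $\p{v}=v\,\x{v}-\int x$. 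Specializing to the step function $\x{\cdot}$ and integrating then yields exactly $\p{v}=\sum_{i=1}^{\ell}z_{i}\Delta_{i}$, the sum running over the jumps $z_{1}<\cdots<z_{\ell}$ of the allocation up to (and including) $v$.

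The one step that deserves genuine care --- and the only place where fixed-rateness is essential --- is the assertion that \swir is vacuous. This is precisely the property that \emph{fails} for general \swacs: in Example~\ref{ex:intro-example} a large first-day payment filters out low-valuation agents from an offer that higher-valuation agents still accept at \emph{positive} overall utility, so there the filter blocks bids that are not dominated, which is exactly why monotonicity and the unique payment rule can break down. A secondary, minor point is the additive constant in the Myersonian payment identity: it is fixed by the boundary behavior at $\inf\V$ (the lowest type is charged $0$ when $\inf\V=0$, and in general the payment is pinned down by the allocation rule only up to this standard constant), and I would simply note this rather than dwell on it.
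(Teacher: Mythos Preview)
Your proposal is correct and follows essentially the same approach as the paper: the paper's justification (given in the sentence immediately preceding the observation) is simply that ``a \fr\ \swac\ has no filters, [so] an agent's decision depends solely on the total payment associated with each bid---mirroring the characteristics of classical Myersonian auctions,'' after which Myerson's Lemma applies directly. Your write-up fleshes this out carefully---in particular your explicit computation that $\f{b}=\p{b}/\x{b}$ and hence \swir\ blocks exactly the negative-utility bids, together with the tie-breaking remark---but the underlying reduction is the same.
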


Using this observation, it follows that in a \fr \swac the equality $\E\sqbr{\p{v}}=\E\sqbr{\varphi(v)\x{v}}$ holds (where $\varphi(\cdot)$ is the virtual value for revenue). Due to the properties of the reward function, we can define a function that is a \fr-optimal virtual value function for reward functions in which \fr \swacs are w.l.o.g. For example, if $\g(v,p)=3v+2p$, we define $\varphi_{\g}(v)=3v+2\brackets{v-{(1-F(v))}/{f(v)}}=5v-2{(1-F(v))}/{f(v)}$.

\begin{proposition}\label{prop:generelized-virv}
    The function $\varphi_{\g}^{\D}(v)=\g\brackets{v,\varphi(v)}$, where $\varphi(\cdot)$ is the virtual value for revenue, is a \fr-optimal virtual value.
\end{proposition}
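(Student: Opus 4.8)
The plan is to prove the stronger fact that the identity of Equation~\eqref{eq:virv-reward} holds for \emph{every} fixed-rate \swac (with its own allocation rule), not merely for an optimal one; the proposition follows immediately by specializing this to a \swac that is optimal within the family of fixed-rate \swacs. Two ingredients are used. First, the payment identity for fixed-rate \swacs recorded above: if $\A$ is fixed-rate with allocation rule $X$, then by Observation~\ref{ob:fr-means-myerson} the rule $X$ is monotone non-decreasing and the total payments follow Myerson's unique payment rule, so $\E[\p[\A]{v}]=\E[\varphi(v)\,X(v)]$. Second, the structural fact that for welfare-like, revenue-like and positive tradeoff reward the map $p\mapsto\g(v,p)$ is affine with a slope that does \emph{not} depend on $v$, i.e.\ $\g(v,p)=\g(v,0)+\beta\,p$ for a constant $\beta\ge 0$ (with $\beta=0$ for welfare-like, and $\beta$ the coefficient of $p$ for revenue-like and positive tradeoff).

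Fix such a reward $\g$ and an arbitrary fixed-rate \swac $\A$ with allocation rule $X$; it is truthful by Observation~\ref{ob:fr-means-myerson}, so $\trew[\A]{v}$ is the net reward obtained when $v$ bids truthfully. Writing $p_1,\dots,p_{X(v)}$ for the (fixed-rate) payment schedule of bid $v$, and using additivity of the reward together with affinity of $\g(v,\cdot)$ --- the case $X(v)=0$ being trivial since then both sides below vanish --- we get
\[
\trew[\A]{v} \;=\; \sum_{i=1}^{X(v)}\g(v,p_i)-c\bigl(X(v)\bigr) \;=\; X(v)\,\g(v,0)+\beta\sum_{i=1}^{X(v)}p_i-c\bigl(X(v)\bigr) \;=\; X(v)\,\g(v,0)+\beta\,\p[\A]{v}-c\bigl(X(v)\bigr).
\]
Taking expectations over $v\sim\D$ and substituting $\beta\,\E[\p[\A]{v}]=\beta\,\E[\varphi(v)\,X(v)]$ yields
\[
\E\bigl[\trew[\A]{v}\bigr] \;=\; \E\bigl[X(v)\,\g(v,0)+\beta\,\varphi(v)\,X(v)-c(X(v))\bigr] \;=\; \E\bigl[\bigl(\g(v,0)+\beta\,\varphi(v)\bigr)X(v)-c(X(v))\bigr].
\]
Since $\g(v,0)+\beta\,\varphi(v)=\g\bigl(v,\varphi(v)\bigr)=\varphi_{\g}^{\D}(v)$, again by affinity of $\g(v,\cdot)$, the right-hand side is exactly $\E[\varphi_{\g}^{\D}(v)\,X(v)-c(X(v))]$, which is Equation~\eqref{eq:virv-reward}. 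As this holds for every fixed-rate \swac, it holds for one that is optimal among fixed-rate \swacs, so $\varphi_{\g}^{\D}$ is a fixed-rate-optimal virtual value.

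The only step requiring genuine care is the appeal to the $v$-independence of the slope $\beta$: the weighted payment identity $\E[\beta(v)\,\p{v}]=\E[\beta(v)\,\varphi(v)\,X(v)]$ is \emph{false} for a general valuation-dependent weight $\beta(v)$ (the Fubini / integration-by-parts step underlying Myerson's payment identity breaks), so the proposition genuinely relies on the special form of welfare-like, revenue-like and positive tradeoff reward, in which all valuation-dependence sits in $\g(v,0)$ and the payment enters with a constant coefficient. Beyond this, the argument is pure bookkeeping: no optimization or exchange argument is needed, only Observation~\ref{ob:fr-means-myerson} and the affine form of $\g$.
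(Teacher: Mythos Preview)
Your proof is correct and follows essentially the same route as the paper: expand the reward using the affine form of $\g$ in its payment argument, invoke Observation~\ref{ob:fr-means-myerson} to replace $\E[\p{v}]$ by $\E[\varphi(v)\x{v}]$, and recombine to obtain $\g(v,\varphi(v))$. Your version is in fact slightly cleaner---you establish the identity for \emph{every} fixed-rate \swac rather than only an optimal one, and you make explicit that the substitution step requires the payment coefficient $\beta$ to be \emph{valuation-independent} (which the paper's proof uses but does not flag).
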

\begin{proof}
    Let $\g$ be a reward function, and let $\A$ be a truthful \fr\ \swac\ that is optimal within the family of \fr\ \swacs. 
    By Observation~\ref{ob:fr-means-myerson} the payments of $\A$ are \fr Myerson payments, and thus by~\cite{myerson1981optimal} and as restated in~\cite[Chapter 3]{hartline2013mechanism}, we have $\E_{v\sim\D}\sqbr{\p{v}}=\E_{v\sim\D}\sqbr{\x{v}\varphi(v)}$.

    For $v\in\V$ such that $\x{v}=0$ we have $\trew[\A]{v}=0$. It follows that:
    \begin{align*}
        \E\sqbr{\trew[\A]{v}}&= \E\sqbr{\trew[\A]{v}\given\x{v}>0}\cdot\Pr[\x{v}>0]\\
        &=\E\sqbr{\x{v}\g\brackets{v,\frac{\p{v}}{\x{v}}}-c\brackets{\x{v}}\given\x{v}>0}\cdot\Pr[\x{v}>0]\numberthis\label{eq:fr-rew}
    \end{align*}

    Since our reward takes the form of $\g(v,p)=\alpha_v+\beta_p p$ and using the linearity of expectation:
    \begin{align*}
        &\E\sqbr{\x{v}\g\brackets{v,\frac{\p{v}}{\x{v}}}\given\x{v}>0}=\E\sqbr{\alpha_v\x{v}+\beta_v\p{v}\given\x{v}>0}\\
        &\quad\quad=\E\sqbr{\alpha_v\x{v}+\beta_v\x{v}\varphi(v)\given\x{v}>0}
        =\E\sqbr{\x{v}\g\brackets{v,\varphi(v)}\given\x{v}>0}\\
        &\quad\quad=\E\sqbr{\x{v}\varphi_{\g}^{\D}(v)\given\x{v}>0}\numberthis\label{eq:rew-to-virv-fr}
    \end{align*}

    Combining Equations~\ref{eq:fr-rew} and~\ref{eq:rew-to-virv-fr} completes the proof:
    \begin{align*}
        \E\sqbr{\trew[\A]{v}}&= \E\sqbr{\x{v}\varphi_{\g}^{\D}(v)-c\brackets{\x{v}}\given\x{v}>0}\cdot\Pr[\x{v}>0]\\
        &=\E\sqbr{\x{v}\varphi_{\g}^{\D}(v)-c\brackets{\x{v}}}
    \end{align*}

    The last transition is due to $c(0)=0$ and using the law of total expectation.
\end{proof}

Using the above characterization, we design a \swac that is optimal within \fr \swacs for settings with an over-time cost function. 
It consists of several increasing thresholds, each with a different allocation of days,
using Myerson's payment rule.
We derive a dynamic programming algorithm that systematically builds on solutions for smaller horizons to find solutions for longer ones. The algorithm
runs in polynomial time in the horizon, assuming constant-time access to the pdf and the cdf
of each agent distribution,
and to the inverse virtual value.

\begin{theorem}\label{thm:optima-fr-rental-mech}
Algorithm~\ref{alg:fr-rental} is optimal among all \fr \swacs.
\end{theorem}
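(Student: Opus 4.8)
The plan is to mirror Myerson's reduction of mechanism design to a pointwise optimization, but with the \fr-optimal virtual value $\varphi_{\g}^{\D}$ of Proposition~\ref{prop:generelized-virv} in place of the Myersonian one, and then to verify that Algorithm~\ref{alg:fr-rental} realizes this pointwise problem (via ironing) while correctly unwinding the recursive definition of the over-time cost by dynamic programming over horizons. Concretely, fix a \swac setting $(h,\D_h,\g,c)$. By Observation~\ref{ob:fr-means-myerson}, a truthful \fr \swac $\A$ is determined by a monotone non-decreasing allocation rule $\alloc:\V\to\range{n}$, and its total payments are the unique Myerson payments, so $\E_{v\sim\D_h}[\p{v}]=\E[\varphi(v)\x{v}]$, where $\varphi$ is the revenue virtual value of $\D_h$. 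Since $\g(v,\cdot)$ is linear, repeating the computation in the proof of Proposition~\ref{prop:generelized-virv} gives, for \emph{every} truthful \fr \swac $\A$, that $\E[\trew[\A]{v}]=\E[\varphi_{\g}^{\D_h}(v)\,\x{v}-c(\x{v})]$ with $\varphi_{\g}^{\D_h}(v)=\g(v,\varphi(v))$. Hence maximizing the designer's expected reward over all \fr \swacs for this setting is \emph{equivalent} to maximizing $\E[\varphi_{\g}^{\D_h}(v)\,\x{v}-c(\x{v})]$ over all monotone non-decreasing allocation rules.

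Next I would bring in ironing. Writing $\theta\coloneq\varphi_{\g}^{\D_h}$ and letting $\Bar{\theta}$ be its ironing (as defined in Section~\ref{sec:ironing}), Lemma~\ref{lemma:irn-maximizer-maximizes-expected} says it suffices to produce a monotone non-decreasing $\alloc$ maximizing $\Bar{\theta}(v)\,x-c(x)$ over $x\in\range{n}$ pointwise in $v$: such an $\alloc$ maximizes $\E[\theta(v)\,\x{v}-c(\x{v})]$ among monotone rules, which by the previous paragraph is exactly the optimum over \fr \swacs. A monotone pointwise maximizer exists because ironing yields a non-decreasing function, so $\psi(v,x)\coloneq\Bar{\theta}(v)\,x-c(x)$ has increasing differences in $(v,x)$: for $v<v'$ and $x<x'$, $\psi(v',x')-\psi(v',x)-\psi(v,x')+\psi(v,x)=\bigl(\Bar{\theta}(v')-\Bar{\theta}(v)\bigr)(x'-x)\ge 0$. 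By the standard monotone-comparative-statics argument the largest element of $\argmax{x\in\range{n}}\psi(v,x)$ is non-decreasing in $v$; being monotone, this allocation rule is a step function, hence implementable by a truthful \fr \swac that charges the Myerson payment of $\alloc$ split evenly over the allocated days (Observation~\ref{ob:fr-means-myerson}), and that \swac attains the optimum. Algorithmically, the maximizer of $s\,x-c(x)$ over $x\in\range{n}$ changes only at the $O(n)$ slopes $s$ given by the edges of the lower convex hull of $\set{(x,c(x)):x\in\range{n}}$, so the optimal $\alloc$ has at most $n$ thresholds, located at the preimages of those slopes under $\Bar{\theta}$; computing the hull, ironing $\theta$, and inverting $\Bar{\theta}$ at $O(n)$ points runs in polynomial time given the assumed constant-time access to the pdf, cdf, and inverse virtual value. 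This is what Algorithm~\ref{alg:fr-rental} does at a single horizon, so its single-horizon output is optimal among \fr \swacs for $(h,\D_h,\g,c)$.

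Finally I would handle the over-time cost. In $c^{\boldsymbol{\D}}_{h,\g}$ of Definition~\ref{def:cost-function}, the cost at horizon $h$ is written through $R_1,\dots,R_{h-1}$, the optimal rental rewards at shorter horizons; for welfare-like, revenue-like, and positive tradeoff reward these equal the expected rewards of the optimal \fr \swacs at those horizons (by Theorem~\ref{theorem:rental_as_auctions}, since \fr---resp.\ FM-optimal \fr---\swacs are w.l.o.g.\ there). Algorithm~\ref{alg:fr-rental} therefore sweeps $h=1,\dots,n$: with $R_1,\dots,R_{h-1}$ already stored, it forms $c^{\boldsymbol{\D}}_{h,\g}$, irons $\varphi_{\g}^{\D_h}$, solves the single-horizon problem above to obtain $\A_h$, and records $R_h=\E_{v\sim\D_h}[\trew[\A_h]{v}]$. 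An induction on $h$ (base case $R_0=0$) shows every $R_h$ is correct, so each $\A_h$ is optimal among \fr \swacs for its setting, and by Theorem~\ref{theorem:rental_as_auctions} the sequence $(\A_n,\dots,\A_1)$ is an optimal \fr rental mechanism.

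The step I expect to be the main obstacle is the second paragraph: guaranteeing that the ironed objective $\Bar{\theta}(v)\,x-c(x)$ admits a \emph{monotone} pointwise maximizer---so that Lemma~\ref{lemma:irn-maximizer-maximizes-expected} actually applies and the output is a legitimate truthful \fr \swac---when the seller cost $c$ is merely non-decreasing, not convex as in the classical Myerson setting. This forces an argument via increasing differences and the lower-convex-hull description of the maximizers of $s x-c(x)$ rather than a verbatim copy of Myerson's proof, and care with the boundary cases ($\x{v}=0$, ties in the $\argmax$, and the $\max\set{x,1}$ in Definition~\ref{def:cost-function}). Everything else---the payment identity, the reduction to virtual values, the DP bookkeeping, and the running-time count---is routine once Observation~\ref{ob:fr-means-myerson}, Proposition~\ref{prop:generelized-virv}, Lemma~\ref{lemma:irn-maximizer-maximizes-expected}, and Theorem~\ref{theorem:rental_as_auctions} are in hand.
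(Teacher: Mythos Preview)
Your high-level strategy matches the paper's: reduce to the virtual-value form via Observation~\ref{ob:fr-means-myerson} and Proposition~\ref{prop:generelized-virv}, invoke Lemma~\ref{lemma:irn-maximizer-maximizes-expected} to pass to a pointwise maximization of the ironed virtual value, argue the maximizer is monotone, and induct on the horizon to unwind the over-time cost.

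The real divergence is in the step you flag as routine. You assert ``this is what Algorithm~\ref{alg:fr-rental} does at a single horizon,'' but the verification of that assertion is the bulk of the paper's proof, and the paper's route is not your convex-hull description. Algorithm~\ref{alg:cal-intervals} does not compute the lower convex hull of $\{(x,c_h(x))\}$ directly; it builds the horizon-$h$ intervals recursively from the horizon-$(h{-}1)$ intervals by incrementing each allocation by one and prepending a fresh allocation-$1$ interval with an explicitly computed right endpoint. The paper verifies this recursion by a case split on the returned allocation ($x=0$, $x=1$, $x\ge 2$): for $x\ge 2$ it uses the induction hypothesis that the horizon-$(h{-}1)$ intervals already solved the pointwise problem $\max_{y'}\{q y'+R_{h-1-y'}\}$, and for $x\le 1$ it checks the boundary conditions tied to the $\max\{x,1\}$ in Definition~\ref{def:cost-function}. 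Your convex-hull framing is a legitimate alternative, but to finish it you still owe a lemma that the recursion reproduces the hull; without that, the specific algorithm is not proved correct. You also misidentify the obstacle: monotonicity of the pointwise maximizer is immediate from your own increasing-differences line (the cost cancels in the cross-difference, so convexity of $c$ is irrelevant); the work is the algorithm verification. Two minor points: Algorithm~\ref{alg:cal-intervals} breaks ties by taking the \emph{smallest} maximizing allocation, not the largest, so match its tie-breaking if you are verifying it specifically; and your appeal to ``\fr is w.l.o.g.'' in the third paragraph is needed only for the downstream corollary, not for Theorem~\ref{thm:optima-fr-rental-mech} itself, where $R[h]$ is simply the \fr-optimal reward at horizon $h$ that Algorithm~\ref{alg:set-payments-reward} records.
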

\begin{proof}
    To begin with, we explain the precomputations of Algorithm~\ref{alg:cal-intervals}. For each horizon $h$, the algorithm returns a set of disjoint consecutive intervals over the ironed virtual value space, such that there is a single interval for each allocation in $\set{1,\dots,h}$.
Each interval is of type $\textsc{Interval}$, that contains fields $left,\ right,\ alloc$ and $pay$.

    We use $\A_h$ to denote Algorithm~\ref{alg:fr-rental} at horizon $h$, for each $h\in\set{0,1,\dots,n}$. We will prove by induction on $h$ that (1) $\A_h$ is optimal, among all \fr \swacs, with respect to the over-time cost function, and (2) that $R[h]$ is exactly $R_h$, the expected reward from the final $h$ horizons.

    The horizon of $h=0$ is trivial, so we assume that $\A_i$ is  optimal (among \fr\ \swacs) for $i<h$ and prove for $h>0$. Consider an agent $v\in\V$. Let $J$ be the interval in $intervals[h]$ that includes $q\coloneq\irn_h(v)$, and denote $x=J.alloc$. We will show that $x$ maximizes the expression $qy-c_{n,\g}^{\boldsymbol{\D}}(y)$ over all $y\in\sqrrange{n}$, and chooses the \emph{smallest} maximizing $y$. By definition of the over-time cost function, this is equivalent to maximizing $qy+R_{h-\max\set{y,1}}$. 

    \begin{itemize}
        \item If $x=0$, since $R_{h}$ is always nonnegative and monotone non-decreasing, we have that $q<0$, because $q$ wasn't even in an interval with allocation $1$. Therefore $0\cdot q+R_{h-1}\geq yq+R_{h-y}$ for any $y\geq1$.
        \item If $x=1$, by the algorithm's definition $0\leq q\leq \frac{R[h-1]-R[h-y]}{y-x}$ for $y>x$. It follows that:
            \begin{itemize}
                \item Since $c(0)=c(1)$ and $0\leq q$, clearly $q+R_{h-1}\geq 0+R_{h-\max\set{0,1}}$.
                \item By the induction hypothesis, $q\leq \frac{R_{h-1}-R_{h-y}}{y-x}$ for all $y>x$, meaning $yq+R_{h-y}\leq xq+R_{h-x}$ for all $y>1$.
            \end{itemize}
        \item Else, $x\geq2$. By the algorithm's definition, $q$ was allocated $x-1\geq 1$ units at horizon $h-1$, and thus $(x-1)q+R_{h-1-(x-1)}\geq y'q+R_{h-1-y'}$ for all $y'\in\sqrrange{h-1}$.
        Setting $y=y'+1$ it follows that $xq+R_{h-x}\geq yq+R_{h-y}$ for all $y\in\psqrrange{h}$. Additionally since $q$ is greater than the $left$ of the interval, and all of the $left$ fields are at least $0$ by the algorithm's definition, we have $q\geq0$ and therefore $xq+R_{h-x}\geq q+R_{h-1}\geq R_{h-1}$.
    \end{itemize}

    We finished showing that $x$ maximizes the wanted expression. The fact that the allocation chosen is the smallest maximizing one follows from the design of the algorithm, that 
    starts an interval only after the end of the intervals for smaller allocations.
    We will show that the allocation rule $\alloc$ we get from Algorithm~\ref{alg:cal-intervals} is monotone non-decreasing, using the fact that it maximizes $\irn_h(v)\x{v}-c\brackets{\x{v}}$. Let $w<v$ be valuations and suppose, for contradiction, that $\x{w}>\x{v}$. Due to the maximization of $\alloc$ and by the fact that the smallest allocation is chosen,
    \begin{align*}
    &\irn_h(w)\x{w}-c\brackets{\x{w}}>\irn_h(w)\x{v}-c\brackets{\x{v}},\textnormal{ and}\\
        &\irn_h(v)\x{v}-c\brackets{\x{v}}\geq\irn_h(v)\x{w}-c\brackets{\x{w}}.
    \end{align*}
    Combining the above inequalities we get $\irn_h(w)\brackets{\x{w}-\x{v}}>\irn_h(v)\brackets{\x{w}-\x{v}}$. Since $\irn_h(\cdot)$ is monotone non-decreasing, this yields $\irn_h(w)\brackets{\x{w}-\x{v}}>\irn_h(w)\brackets{\x{w}-\x{v}}$ which is a contradiction. Thus, $\alloc$ is monotone non-decreasing.   

    Due to Observation~\ref{ob:fr-means-myerson}, in a \fr \swac the allocation determines the payment schedules, and its allocation rule is monotone. Additionally, as we showed in Proposition~\ref{prop:generelized-virv}, $\theta_h$ is a \fr-optimal virtual value.
    Thus, together with Lemma~\ref{lemma:irn-maximizer-maximizes-expected}, $\A_h$ is optimal among all \swacs with monotone allocation rules, and specifically, among all \fr \swacs.
    
    Additionally, based on the induction hypothesis and since we showed that $x\irn(v)+R_{h-x}$ is maximized, also $R[h]$ is exactly the expected reward from horizon $h$. Finally, we conclude that $\A_h$ is an optimal \swac\ for horizon $h$, among \fr\ \swacs. 
    
     This completes the proof by induction.
\end{proof}

Using Lemma~\ref{lemma:welfare-like}, Corollary \ref{corollary:rev-like-fr-wlog} and Lemma~\ref{lemma:positive-tradeoff-fr-wlog} we get the following result:
\begin{corollary}
    Algorithm~\ref{alg:fr-rental} is optimal for welfare- and revenue-like reward, and FM-optimal for positive tradeoff reward.
\end{corollary}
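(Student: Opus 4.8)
The plan is to read the corollary off Theorem~\ref{thm:optima-fr-rental-mech} together with the three ``\fr is without loss'' results, one per reward class. Theorem~\ref{thm:optima-fr-rental-mech} already states that Algorithm~\ref{alg:fr-rental} (at each horizon $h$, with the over-time cost of Definition~\ref{def:cost-function}) is optimal within the family of \fr \swacs; equivalently, its expected reward equals the supremum of $\E[\cdot]$ over all \fr \swacs. So for each class it remains only to argue that this supremum coincides with the appropriate unrestricted supremum.

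For welfare-like reward, point~\ref{item:first} of Lemma~\ref{lemma:welfare-like} exhibits a \fr (and \samono) \swac that is optimal among \emph{all} \swacs; hence the best \fr \swac is globally optimal, and by Theorem~\ref{thm:optima-fr-rental-mech} so is Algorithm~\ref{alg:fr-rental}. For revenue-like reward, Corollary~\ref{corollary:rev-like-fr-wlog} gives, for every optimal \swac $\A$, a \fr \swac $\A'$ with $\E[\A']=\E[\A]$, so the global optimum is attained inside the \fr family, and Algorithm~\ref{alg:fr-rental}, being optimal there, attains it. For positive tradeoff reward, Lemma~\ref{lemma:positive-tradeoff-fr-wlog} shows that any truthful FM-optimal \swac follows Myerson's payment rule and that \fr pricing is w.l.o.g.; hence there is an FM-optimal \swac that is \fr, so the FM-optimal reward lies within the \fr family, and Theorem~\ref{thm:optima-fr-rental-mech} again finishes the argument. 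One should note here that Algorithm~\ref{alg:fr-rental} itself produces a \finauc (it uses finitely many intervals), so it is a legitimate competitor for the FM-optimal benchmark, and that the optimality notion matched in this case is FM-optimality rather than global optimality, as reflected in the corollary's phrasing.

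There is no real obstacle: the corollary is an assembly of already-established pieces. The only thing the write-up must be careful about is bookkeeping — that the cost function for which the ``\fr w.l.o.g.'' statements are applied is exactly the over-time cost that Algorithm~\ref{alg:fr-rental} uses, and that the three optimality benchmarks (global for welfare- and revenue-like, FM for positive tradeoff) are each matched by the corresponding lemma. Optionally, combining with the reduction of Theorem~\ref{theorem:rental_as_auctions} upgrades these \swac-level statements to the corresponding optimal (resp.\ FM-optimal) rental mechanisms, but that is not needed for the corollary as stated.
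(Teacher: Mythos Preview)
Your proposal is correct and matches the paper's approach exactly: the paper proves this corollary in one line by citing Lemma~\ref{lemma:welfare-like}, Corollary~\ref{corollary:rev-like-fr-wlog}, and Lemma~\ref{lemma:positive-tradeoff-fr-wlog}, and you have unpacked precisely how these combine with Theorem~\ref{thm:optima-fr-rental-mech}. Your additional remark that Algorithm~\ref{alg:fr-rental} produces a \finauc (needed so that its FM-optimality is meaningful) is a useful detail the paper leaves implicit.
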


\begin{algorithm}[h!]
    \KwIn{Distributions $\D_{n},\dots,\D_1$ for horizons $n,\dots,1$ respectively, and reward function $\g$.}
    \KwOut{For each horizon $1,\dots,n$, a set of \textsc{Interval}s.}

    $intervals[0]\gets\emptyset,R[0]\gets0$\;
    \For{h=1:n}{
        $\varphi_h(v)\gets\g(v,\varphi_{\D_h}(v))$\;
        $\irn_h\gets$ ironing of $\varphi_h$\;
        $last\_right\gets\min\brackets{\frac{R[h-1]-R[h-2]}{1},\frac{R[h-1]-R[h-3]}{2},\dots,\frac{R[h-1]}{h-1},\infty}$\;
        $intervals[h]\gets\set{\textsc{NewInterval}(left=0,right=last\_right,alloc=1)}$\;
        
        \For{$i=1:(h-1)$}{
            $J\gets intervals[h-1].forAlloc(i)$\;
            $lft\gets\max\set{J.left, \frac{R[h-1]-R[h-i]}{i-1},last\_right}$\;
            \If{$lft<J.right$}{
                $I\gets \textsc{NewInterval}(left=lft,right=J.right,alloc=J.alloc+1)$\;
                Add $I$ to $intervals[h]$\;
                $last\_right\gets J.right$\;
            }
        }
        $\textsc{SetPaymentsAndReward}(intervals,h,R,\irn_h)$\;
    }
    \Return $intervals$\;
	\caption{Precomputation for Algorithm~\ref{alg:fr-rental}}
    \label{alg:cal-intervals}
\end{algorithm}

\begin{savenotes}
\begin{algorithm}[h!]
    \tcc{Sets the required payment for each interval, and adds the expected reward from horizon $h$ to the list $R$ (i.e. $R_h$).\\
    Assumes $intervals[h]$ is sorted in increasing order by $alloc$.}

        $prev\_alloc\footnote{If the reward is decreasing in payment, $prev\_alloc$ can be set to the first allocation size actually given, even if it's positive. This just removes a constant from all the payments, and is still considered Myerson's unique payment rule.}\gets 0,pay\gets 0,R[h]\gets 0$\;
        \For{$J$ \textnormal{in} $intervals[h]$}{
            $ prob\gets \Pr_{v\sim\D_i}\sqbr{\irn_h(v)\in J}$\;
            \If{$prob>0$}{
                $pay\gets pay+\irn_h^{-1}(J.thresh)\cdot(J.alloc-prev\_alloc)$\footnote{As $\irn_h$ may not have a well-defined inverse function,  we define 
$\irn_h^{-1}(y)\coloneq\sup\sett{v\in\textnormal{support}(\D)}{\irn_h(v)\leq y}$, and if that does not exist, we set it to be $0$.}\;
                $J.setPay(pay)$\;
                $R[h]\gets R[h]+\E_{v\sim\D_i}\sqbr{\irn_h(v)\given \irn_h(v)\in J}\cdot J.alloc+R[h-J.alloc]$\;
                $prev\_alloc\gets J.alloc$\;
            }
        }
	\caption{\textsc{SetPaymentsAndReward}($intervals,h,R,\irn_h$)}
    \label{alg:set-payments-reward}
\end{algorithm}
\end{savenotes}

\begin{algorithm}[h!]
\tcc{Requires precomputation by Algorithm~\ref{alg:cal-intervals}.\\
We set $\varphi_n(v)=\g(v,\varphi(v))$, and $\irn_n$ is the ironing of $\varphi_n$.}
	\KwIn{Distributions $\boldsymbol{\D}=(\D_n,\dots,\D_1)$ for horizons $n,\dots,1$ respectively, and reward $\g$.}
          Accept bid $v$ from agent\;
          \If{$\irn_n(v)$\textnormal{ is in some interval of }intervals[n]} {
              $J\gets$ interval in $intervals[n]$ that includes $\irn_n(v)$\;
             Sell $J.alloc$ units, with a total payment of $J.pay$, divided equally over all days\; 
          }
	\caption{\FR Stagewise Auction with Over-Time Cost}
	\label{alg:fr-rental}
\end{algorithm}

\subsection{Optimal Threshold Mechanism}
\label{sec:threshold-mech}

In this section, we derive a threshold mechanism  for i.i.d.\ agents; we explain the reason for this in Section~\ref{sec:future} below. We show that this mechanism is FM-optimal for any negative tradeoff reward, while for consumer surplus we have global optimality.

\paragraph{Horizon-specific virtual values for threshold payments.}
Recall that negative tradeoff requires threshold payments, a non-\fr payment schedule where
the full payment is charged up front on the first day.
Moreover,
since the payment depends only on the \emph{threshold} for the allocation and not the \emph{allocation size}, the influence that a given payment has on the designer's reward decreases as the horizon increases:
the payment never exceeds $\sup \V$ (the highest possible valuation),
even for very long allocations.
It is therefore impossible to design a generalized virtual value
that
encapsulates the per-unit contribution of the payment to the reward
without taking into consideration the horizon.
To handle this, we introduce a different kind of reward-\ and \emph{horizon-}specific virtual value function $\varphi_{\g,n}(\cdot)$ for a negative tradeoff reward function $\g$ at horizon $n$:
\begin{align}
    \varphi_{\g,n}=\g\brackets{v,\frac{\varphi(v)}{n-1}},
    \quad \text{ where $\varphi(\cdot)$ is the virtual value for revenue.}
\end{align}

There are four steps in the process of finding an optimal threshold mechanism. 
Our first step is to prove that in an optimal threshold auction under a specific cost assumption, all agents receive at least one unit.
Next, we use this characterization to prove that given a monotone allocation rule $\alloc$ implemented by threshold payments, the reward of a threshold auction for each valuation $v$ is \emph{bounded from above} by the reward- and horizon-specific virtual welfare: 
\begin{equation}\label{eq:horizon-virv-bound}
    \trew{v}\leq\varphi_{\g,n}(v)\x{v}-c\brackets{\x{v}}.
\end{equation} This follows from the threshold-payment structure, which ensures that payments are no smaller than a fraction of the Myerson payment. 
The third step is to
find an allocation rule $\x{\cdot}$ that maximizes $\irn_{\g,n}(v)\x{v}-c\brackets{\x{v}}$ pointwise (for every $v$). 
The fourth and final step is to prove that 
the reward obtained from 
our \swac that implements $\x{\cdot}$
 satisfies~\eqref{eq:horizon-virv-bound} \emph{with equality},
proving its optimality.

In order to prove the upper bound on the reward, we give a characterization of optimal allocation rules. 

\begin{lemma}\label{lemma:consumer-surplus-jumps}
    Let $\A$ be a threshold auction that is optimal with respect to negative tradeoff reward, among all threshold auctions.
    Let $z_1,\dots,z_l$ be the valuations in which there is a jump in the allocation $\x{\cdot}$, and $\Delta_i$ be the jump in allocation at $z_i$.
    If the cost function $c$ is such that $c(0)=c(1)=0$ then
     all agents are allocated at least 1 unit, and $\sum_{i=1}^l\Delta_i\leq n-1$.
\end{lemma}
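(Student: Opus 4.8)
The plan is to prove the two assertions in turn; the first — that every agent receives at least one unit — carries all the content, and the bound $\sum_{i=1}^{l}\Delta_i\le n-1$ then follows at once. Write $\g(v,p)=\alpha v-\beta p$ with $\alpha\ge\beta>0$ for the negative tradeoff reward, and recall two features of threshold auctions that the argument uses: by Lemma~\ref{lemma:neg-tradeoff-threshold-auction-truthful-and-optimal}, \emph{every} monotone non-decreasing step allocation rule is implemented by a truthful threshold auction; and the total payment on an interval $[t,t')$ of a threshold auction equals its left endpoint $t$, charged up front, independently of all other intervals. Hence, altering the allocation on one interval while leaving the intervals above it in place leaves the designer's reward from every agent above that interval unchanged (at worst, if the interval immediately above acquires the same allocation and merges with it, those agents pay a weakly smaller threshold and their reward weakly increases).

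For the first assertion, suppose toward a contradiction that the optimal threshold auction $\A$ has $\Pr\sqbr{Z}>0$, where $Z=\set{v\in\V:\x{v}=0}$. Since $\x{\cdot}$ is monotone non-decreasing, $Z$ is downward closed, so it is a nondegenerate bottom interval $\I_1=[\inf\V,z_0)$ with $z_0>\inf\V$ (a single point carries probability zero, the distribution having a density). Build $\A'$ from $\A$ by changing the allocation on $\I_1$ from $0$ to $1$ and leaving everything above unchanged; by Lemma~\ref{lemma:neg-tradeoff-threshold-auction-truthful-and-optimal} this is again a truthful threshold auction, and by the remark above the reward from every agent outside $\I_1$ does not decrease. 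Under $\A$, an agent in $\I_1$ has an empty allocation, yielding reward $-c(0)=0$; under $\A'$, such an agent with valuation $v$ pays the threshold $t_1=\inf\V$ on the single day allocated, yielding reward $\g(v,\inf\V)-c(1)=\alpha v-\beta\inf\V$ (using $c(1)=0$), which is at least $\alpha v-\beta v=(\alpha-\beta)v\ge 0$ since $\inf\V\le v$ and $\alpha\ge\beta$, and is strictly positive whenever $v>\inf\V$. As $\Pr\sqbr{\I_1}>0$, the expected reward from $\I_1$ strictly increases while nothing else decreases, so $\E\sqbr{\A'}>\E\sqbr{\A}$, contradicting the optimality of $\A$ among threshold auctions. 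Therefore $Z$ is a null set; since monotonicity then forces $Z\subseteq\set{\inf\V}$, we may take $\x{\inf\V}$ to equal the allocation on the interval just above it (a measure-zero adjustment that preserves truthfulness, expected reward, and hence optimality) and conclude $\x{v}\ge 1$ for every $v\in\V$.

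For the second assertion, $z_1<\dots<z_l$ are precisely the interior discontinuities of the step function $\x{\cdot}$ (the initial value $\x{\inf\V}$ is not counted as a jump), so the jumps telescope: $\sum_{i=1}^{l}\Delta_i=\x{\sup\V}-\x{\inf\V}$. The allocation rule takes values in $\set{0,\dots,n}$, so $\x{\sup\V}\le n$, and by the first assertion $\x{\inf\V}\ge 1$; hence $\sum_{i=1}^{l}\Delta_i\le n-1$.

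The only genuine obstacle is the first assertion, where the delicate point is that the modified mechanism must stay a bona fide truthful threshold auction whose reward from higher valuations is not hurt — exactly what the two structural features of threshold auctions (implementability of arbitrary monotone step rules, and locality of threshold payments) provide. The hypothesis $c(1)=0$ is also used essentially: it makes serving a single unit to a low-valuation agent costless, so replacing a $0$-allocation by a $1$-allocation can only help the designer.
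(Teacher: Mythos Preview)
Your proof is correct and follows essentially the same approach as the paper: assume a nontrivial zero-allocation bottom interval, raise its allocation to $1$, and show the expected reward strictly increases while nothing above is hurt, contradicting optimality among threshold auctions; the bound $\sum_i\Delta_i\le n-1$ then follows by telescoping. The only notable difference is presentational: the paper modifies $\A$ and then tracks how agents in the adjacent allocation-$1$ interval deviate to the new free option, whereas you directly construct the modified allocation rule, invoke Lemma~\ref{lemma:neg-tradeoff-threshold-auction-truthful-and-optimal} for truthfulness, and use the locality of threshold payments (each interval's reward depends only on its own threshold and allocation) to argue rewards above are preserved or improved. Your route is a bit cleaner and also handles $\inf\V>0$ explicitly, but the underlying idea is the same.
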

\begin{proof}
    Let $\A$ be a threshold auction that is optimal for negative tradeoff, among all threshold auctions, with cost function $c$ with $c(0)=c(1)=0$. $\A$ is truthful due to Lemma~\ref{lemma:neg-tradeoff-threshold-auction-truthful-and-optimal}. Let $[0,t)\cap\V$ be the interval of agents allocation $0$ units, and $[t,s)$ the interval of agents allocated 1 unit. If no agents are allocated 0 units we are done, as $\alloc$ is monotone and its maximum value is at most $n$.
    
    Else, $\Pr[v<t]>0$. Design a new auction $\A'$ which is like $\A$ but allocates 1 unit to agents with valuations below $t$.
    Let $[t,s)$ be the interval of agents who are allocated 1 item in $\A$ (it can be empty). Since in $\A'$ they can receive 1 unit for free instead of 1 unit at the cost of $t$, they will bid below $t$ in $\A'$. In contrast, an agent $v\geq s$ will not change their bid in $\A'$, since
    \begin{align*}
        \util[\A']{v}{v}=\x{v}v-\p{v}\geq 2v-v=v
    \end{align*}
    and $v$ is their utility from bidding below $t$, which is the only difference between $\A$ and $\A'$.

    Therefore, if we denote the reward as $\g(v,p)=\alpha v-\beta p$ (with $\alpha\geq\beta$):
    \begin{align*}
        \E\sqbr{\trew[\A']{v}}
        &=\E\sqbr{\alpha v-0-c(1)\mid v<t}\cdot\Pr[v<t]+\E\sqbr{\alpha v-0-c(1)\mid t\leq v<s}\cdot\Pr[t\leq v<s]\\
        &\ \ \ \ \ +\E\sqbr{\trew[\A]{v}\mid s\leq v}\cdot\Pr[s\leq v]\\
        &\geq\E\sqbr{\alpha v\mid v<t}\cdot\Pr[v<t]+\E\sqbr{\alpha v-\beta t-c(1)\mid t\leq v<s}\cdot\Pr[t\leq v<s]\\
        &\ \ \ \ \ +\E\sqbr{\trew[\A]{v}\mid s\leq v}\cdot\Pr[s\leq v]\\
        &>0\cdot\Pr[v<t]+\E\sqbr{\trew[\A]{v}\mid t\leq v<s}\cdot\Pr[t\leq v<s]\\
        &\ \ \ \ \ +\E\sqbr{\trew[\A]{v}\mid s\leq v}\cdot\Pr[s\leq v]\\
        &=\E\sqbr{\trew[\A]{v}},
    \end{align*}

    which contradicts the optimality of $\A$.
\end{proof}

We are now ready to prove the upper bound on the expected reward for a negative tradeoff reward \swac.
\begin{lemma}\label{lemma:consumer_surplus_virtual_welfare_bounds_reward}
    Let $\A$ be a threshold auction with $n$ units that is optimal with respect to negative tradeoff reward $\g(v,p)=\alpha v-\beta p$, among all threshold auctions.
   If the cost function $c$ is such that $c(0)=c(1)=0$ then
    \begin{equation*}
        \E\sqbr{\alpha\x{v}v-\beta\p{v}}\leq\E\sqbr{\varphi_{\g,n}(v)\x{v}}.
    \end{equation*}
\end{lemma}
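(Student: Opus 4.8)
The plan is to reduce the stated inequality to a comparison between the \emph{threshold} payments of $\A$ and the \emph{Myerson} payments of its allocation rule, and then to show that the threshold payments are never smaller than a $\tfrac{1}{n-1}$-fraction of the Myerson payments; this last step is where the structural input of Lemma~\ref{lemma:consumer-surplus-jumps} will be essential. Concretely, by Lemma~\ref{lemma:neg-tradeoff-threshold-auction-truthful-and-optimal} the auction $\A$ is truthful and $\x{\cdot}$ is a monotone non-decreasing step function; I would write its menu intervals as $\I_1=[t_1,t_2),\dots,\I_k=[t_k,t_{k+1}]$ with $t_1=\inf\V$ and strictly increasing allocations $x_1<\dots<x_k$, so the threshold-payment structure gives $\p{v}=t_i$ for every $v\in\I_i$. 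Since $\varphi_{\g,n}(v)=\g\!\brackets{v,\tfrac{\varphi(v)}{n-1}}=\alpha v-\tfrac{\beta}{n-1}\varphi(v)$, the claimed inequality $\E[\alpha\x{v}v-\beta\p{v}]\le\E[\varphi_{\g,n}(v)\x{v}]$ becomes, after cancelling the common term $\alpha\,\E[v\,\x{v}]$ on both sides and rearranging (the direction flips because $\beta>0$), equivalent for $n\ge 2$ to
\[
(n-1)\,\E\!\sqbr{\p{v}}\ \ge\ \E\!\sqbr{\varphi(v)\,\x{v}}.
\]

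Next I would invoke Lemma~\ref{lemma:consumer-surplus-jumps}, whose hypothesis $c(0)=c(1)=0$ holds here — and in particular holds for the over-time cost of Definition~\ref{def:cost-function}, since $c(1)=R_{n-1}-R_{n-1}=0$. It gives that every agent gets at least one unit, so $x_1\ge 1$, and that the total jump of the allocation is at most $n-1$, so $x_i-x_1\le n-1$ for all $i$. Let $p^{\mathsf M}(\cdot)$ be the total payment of the auxiliary overall-IR implementation of the \emph{same} allocation rule $\x{\cdot}$ that uses Myerson's unique payment rule with zero utility for the lowest type; Myerson's payment identity gives $\E[\varphi(v)\,\x{v}]=\E[p^{\mathsf M}(v)]$. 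For $v\in\I_i$ that rule yields $p^{\mathsf M}(v)=t_1x_1+\sum_{j=2}^{i}(x_j-x_{j-1})t_j$, which, since $t_1\le\dots\le t_i$, is at most $t_i\,x_i$; when $\inf\V=0$ the base-unit term $t_1x_1$ vanishes and I obtain the pointwise bound $p^{\mathsf M}(v)\le (x_i-x_1)\,t_i\le (n-1)\,t_i=(n-1)\,\p{v}$. Taking expectations over $v\sim\D$ then gives $\E[\varphi(v)\,\x{v}]=\E[p^{\mathsf M}(v)]\le (n-1)\,\E[\p{v}]$, which is precisely the displayed inequality, and combined with the reduction above this proves the lemma.

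The main obstacle is the pointwise domination $p^{\mathsf M}(v)\le (n-1)\,\p{v}$: a threshold auction charges only the single threshold $t_i$, while Myerson's rule stacks a threshold increment onto \emph{every} allocated unit, so one must cap the number of "paid" units — and this is exactly what Lemma~\ref{lemma:consumer-surplus-jumps} supplies, since beyond the first unit (handed to everyone for free) at most $n-1$ further units are ever allocated. The one delicate point is the contribution of the lowest menu interval, whose base-unit Myerson threshold is $\inf\V$: this term is $0$ in the setting of interest (e.g.\ i.i.d.\ $\Uni[0,1]$, where $\inf\V=0$), and otherwise is controlled by the same inequality $t_1=\inf\V\le t_i$. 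A self-contained alternative to the Myerson-payment detour is a direct computation using $\int_a^b\varphi(s)f(s)\,ds=a(1-F(a))-b(1-F(b))$, which after telescoping gives $\E[\varphi(v)\,\x{v}]=\sum_i (x_i-x_{i-1})\,t_i(1-F(t_i))$ and then the same allocation-size bound from Lemma~\ref{lemma:consumer-surplus-jumps}; either route makes that lemma the essential ingredient.
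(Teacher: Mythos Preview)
Your proof is correct and takes essentially the same route as the paper: both reduce the claim to the pointwise bound $\p{v}\ge\tfrac{1}{n-1}\,p^{\mathsf M}(v)$, establish it by using Lemma~\ref{lemma:consumer-surplus-jumps} to cap the total allocation jump by $n-1$ (so that $p^{\mathsf M}(v)=\sum_j\Delta_j z_j\le z_j\sum_j\Delta_j\le(n-1)\,\p{v}$), and then invoke the Myerson identity $\E[p^{\mathsf M}(v)]=\E[\varphi(v)\x{v}]$. Your flagged ``delicate point'' when $\inf\V>0$ is real, but your suggested fix via $t_1\le t_i$ alone only gives $p^{\mathsf M}(v)\le x_i\,t_i$ rather than $(n-1)\,t_i$; the paper treats this the same way (a footnote taking the lowest Myerson payment to be~$0$), so in both arguments the clean bound goes through for $\inf\V=0$.
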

\begin{proof}
    Consider an auction matching the conditions of the claim, and some agent $v$. Denote the locations of the jumps in $\alloc$ by $z_1,\dots,z_l$ and the jumps themselves by $\Delta_1,\dots,\Delta_l$ respectively. By design of the threshold payment, and denoting $j\coloneq\arg\max_{i\in\psqrrange{l}}\set{z_i:z_i\leq v}$,
    \begin{align*}
        \p{v}=z_j=\frac{\sum_{i=1}^{j}\Delta_i}{\sum_{i=1}^{j}\Delta_i}z_j\geq\frac{\sum_{i=1}^{j}\Delta_iz_i}{\sum_{i=1}^{j}\Delta_i}\geq\frac{1}{n-1}\sum_{i=1}^{j}\Delta_iz_i.
    \end{align*}
    The last transition is due to Lemma~\ref{lemma:consumer-surplus-jumps}. Recall that $\sum_{i=1}^{j}\Delta_iz_i$ would be the payment in a \fr auction~\cite{myerson1981optimal}
    \footnote{This is assuming the lowest payment is $0$, and this is fine for us since we are minimizing payments.}, and denote it by $\p[My]{v}$. Since $\p{v}\geq\frac{\p[My]{v}}{n-1}$ and $\E\sqbr{\p[My]{v}}=\E\sqbr{\varphi(v)\x{v}}$ (as shown in~\cite{myerson1981optimal}), it follows that
    \begin{align*}
        \E\sqbr{\alpha\x{v}v-\beta\p{v}}\leq\E\sqbr{\alpha\x{v}v-\frac{\beta\p[My]{v}}{n-1}}=\E\sqbr{\alpha\x{v}v-\frac{\beta\varphi(v)}{n-1}\x{v}}=\E\sqbr{\varphi_n(v)\x{v}}.
    \end{align*}
\end{proof}

Finally, we give the (FM-)optimal negative tradeoff maximizing auction with an over-time cost function. 
\begin{savenotes}
    
\begin{algorithm}[h!]
\caption{A Stagewise Auction for Negative Tradeoff Maximization}\label{alg:consumer-surplus-auc}
 Accept a bid $v\in\V$\;
    \eIf {$n=1$ or  $\irn_n(v)<\frac{R_{\g,n-1}^{\D}}{n-1}$}{
        Sell $1$ unit for free\;
    }{
        Sell all $n$ units and charge $\irn_n^{-1}\brackets{\frac{R_{\g,n-1}^{\D}}{n-1}}$\footnote{As $\irn_n$ may not have a well-defined inverse function,  we define 
$\irn_n^{-1}(y)\coloneq\sup\sett{v\in\V}{\irn_n(v)=y}$.} on the first day.
    }
\end{algorithm}

\end{savenotes}
\begin{savenotes}
    
\begin{algorithm}[h!]
\caption{Precomputation for Algorithm~\ref{alg:consumer-surplus-auc}}\label{alg:threshold-precomputation}
 \KwIn{Reward function $\g(v,p)=\alpha v-\beta p$, and distributions $\D_1,\dots,\D_n$ for horizons $1,\dots,n$ respectively.}
 \KwOut{\begin{enumerate}
 \item A list $R$, such that for all $i\in\sqbr{n}$, $R[i]$ is the expected reward of a rental mechanism consisting of \swacs that run Algorithm~\ref{alg:consumer-surplus-auc} in the last $i$ days.
     \item 
 Ironed virtual values for threshold payments $\irn_1,\dots,\irn_n$, for horizons $1,\dots,n$ .
 \end{enumerate}}
 $R[1]=\alpha\E\limits_{v\sim\D_1}[v]$\;
 \For{$i=2,\dots,n$}{
$\tau_i\gets \irn_i^{-1}\brackets{\frac{R[i-1]}{i-1}}$\;
 $R[i]\gets \brackets{\alpha\E\limits_{v\sim\D_i}\sqbr{v\given v<\tau_i}+R[i-1]}\Pr\limits_{v\sim\D_i}\sqbr{v<\tau_i}
    +\brackets{\alpha i\E\limits_{v\sim\D_i}\sqbr{v\given v\geq \tau_i}-\beta \tau_i}\Pr\limits_{v\sim\D_i}\sqbr{v\geq \tau_i}$\;
 }
 \Return $R$
\end{algorithm}

\end{savenotes}

\begin{theorem}\label{theorem:threshold-mech}
    Algorithm~\ref{alg:consumer-surplus-auc} is an FM-optimal and truthful negative tradeoff \swac with over-time cost, for any horizon $n$ and distribution $\D$. For consumer surplus, it is globally optimal.
\end{theorem}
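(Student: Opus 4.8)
The plan is to establish the two assertions---truthfulness and (FM-)optimality---separately; optimality is obtained by exhibiting an upper bound, via a virtual-welfare argument, on the reward of an optimal threshold auction, and then showing that Algorithm~\ref{alg:consumer-surplus-auc} meets this bound with equality. Truthfulness is the quick part: Algorithm~\ref{alg:consumer-surplus-auc} is a \emph{one-or-all threshold auction}. Its allocation rule gives $1$ unit when $n=1$ or $\irn_{\g,n}(v)<R^{\D}_{\g,n-1}/(n-1)$ and $n$ units otherwise, which is a monotone non-decreasing step function since $\irn_{\g,n}$ is non-decreasing, and the payment schedule charges the threshold $\tau_n=\irn_{\g,n}^{-1}\bigl(R^{\D}_{\g,n-1}/(n-1)\bigr)$ on the first day of an $n$-unit rental and $0$ on every other day. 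Hence Lemma~\ref{lemma:neg-tradeoff-threshold-auction-truthful-and-optimal}, part~1, applies verbatim.

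For optimality, recall from Corollary~\ref{cor:neg-tradeoff-threshold-payment} that under negative tradeoff reward there is an FM-optimal \swac using threshold payments; since any monotone allocation rule into $\{0,\dots,n\}$ is automatically a finite step function, it suffices to show Algorithm~\ref{alg:consumer-surplus-auc} is optimal \emph{among threshold auctions}, and global optimality for consumer surplus then follows from the ``moreover'' clause of Corollary~\ref{cor:neg-tradeoff-threshold-payment}. The over-time cost satisfies $c(0)=c(1)=0$, so for the optimal threshold auction $\mathcal A^{\ast}$ with allocation $X^{\ast}$, Lemma~\ref{lemma:consumer_surplus_virtual_welfare_bounds_reward} (subtracting $\E[c(X^{\ast}(v))]$ from both sides and using the definition of net reward) gives $\E[\trew[\mathcal A^{\ast}]{v}]\le\E[\varphi_{\g,n}(v)X^{\ast}(v)-c(X^{\ast}(v))]$. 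Applying Lemma~\ref{lemma:irn-maximizer-maximizes-expected} with $\theta=\varphi_{\g,n}$, the right-hand side is at most $\E[\irn_{\g,n}(v)\hat X(v)-c(\hat X(v))]$, where $\hat X$ is any monotone rule maximizing $\irn_{\g,n}(v)\,x-c(x)$ pointwise over $x\in\{0,\dots,n\}$. It therefore remains to verify (i) that the allocation rule of Algorithm~\ref{alg:consumer-surplus-auc} is such a pointwise maximizer $\hat X$, and (ii) that Algorithm~\ref{alg:consumer-surplus-auc} actually collects expected reward $\E[\irn_{\g,n}(v)\hat X(v)-c(\hat X(v))]$; together these give the reverse inequality.

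Claim~(ii) is the easy direction. The allocation of Algorithm~\ref{alg:consumer-surplus-auc} has a single jump, of size $n-1$, at $\tau_n$, so (after normalizing the lowest payment to $0$, which is without loss since the reward is decreasing in the payment) its Myerson payment is $(n-1)\tau_n$ for $v\ge\tau_n$ and $0$ otherwise, while the threshold payment it charges is $\tau_n$ (resp.\ $0$)---exactly $1/(n-1)$ of the Myerson payment \emph{with equality}. Thus both inequalities in the proof of Lemma~\ref{lemma:consumer_surplus_virtual_welfare_bounds_reward} are tight for this auction, and combining Myerson's payment identity with the fact that $\hat X$ is constant on the ironed intervals of $\varphi_{\g,n}$ gives collected reward $\E[\varphi_{\g,n}(v)\hat X(v)-c(\hat X(v))]=\E[\irn_{\g,n}(v)\hat X(v)-c(\hat X(v))]$, the claimed maximum.

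Claim~(i) is where the real work lies, and I expect it to be the main obstacle. Writing $q=\irn_{\g,n}(v)$, the over-time cost turns ``maximize $qx-c(x)$ over $x\in\{0,\dots,n\}$'' into ``maximize $qx+R^{\D}_{\g,n-\max\{x,1\}}$''. Since $\varphi(v)\le v$ and $\alpha\ge\beta$, for $n\ge2$ we have $\varphi_{\g,n}(v)\ge v(\alpha-\beta/(n-1))\ge0$, so after ironing $q\ge0$ and $x=1$ always weakly dominates $x=0$ (the case $n=1$, where the horizon-specific virtual value is undefined, is the trivial base case). Comparing $x=1$ with $x=n$ shows that $x=n$ is preferred exactly when $q\ge R^{\D}_{\g,n-1}/(n-1)$, matching the test used by Algorithm~\ref{alg:consumer-surplus-auc}. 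What remains is to rule out intermediate $1<x<n$, and this follows once $m\mapsto R^{\D}_{\g,m}$ is known to be \emph{convex}---equivalently, that the marginal reward $R^{\D}_{\g,m}-R^{\D}_{\g,m-1}$ is non-decreasing in $m$---since then $x\mapsto qx+R^{\D}_{\g,n-x}$ is convex on $\{1,\dots,n\}$ and hence attains its maximum at an endpoint. I would prove this convexity by induction on the horizon, carried out \emph{jointly} with the optimality statement: the inductive hypothesis that Algorithm~\ref{alg:consumer-surplus-auc} is optimal for all smaller horizons pins down $R^{\D}_{\g,m}$ via the recursion of Algorithm~\ref{alg:threshold-precomputation}, from which monotonicity of the increments can be checked. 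This is exactly the step that exploits the i.i.d.\ assumption, since with non-identical distributions the relevant optimal-rental values are taken over \emph{different} distribution suffixes and convexity may fail. A final technical nuisance is the case in which $\irn_{\g,n}$ is flat at the value $R^{\D}_{\g,n-1}/(n-1)$: one then has to check that the candidate allocation and the charged price are well defined on this flat interval and that the measure-zero behavior at the exact threshold does not change the expected reward, which I would handle by noting that $\irn_{\g,n}$ is constant on ironed intervals and so the pointwise maximizer is forced to be constant there too.
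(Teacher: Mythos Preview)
Your plan matches the paper's proof almost step for step: truthfulness via Lemma~\ref{lemma:neg-tradeoff-threshold-auction-truthful-and-optimal}; the upper bound on any optimal threshold auction via Lemma~\ref{lemma:consumer_surplus_virtual_welfare_bounds_reward} followed by Lemma~\ref{lemma:irn-maximizer-maximizes-expected}; and the matching lower bound by observing that the one-or-all threshold payment equals exactly $1/(n-1)$ of the Myerson payment, so Algorithm~\ref{alg:consumer-surplus-auc} attains $\E[\varphi_{\g,n}(v)\x{v}-c(\x{v})]$ with equality. The reduction to a pointwise maximization of $qx+R_{n-x}$ and the check that $x=0$ is dominated by $x=1$ are also exactly what the paper does.

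The one substantive divergence is in how you rule out intermediate allocations $1<x<n$. You propose to prove that $m\mapsto R_m$ is \emph{convex} (equivalently, that $R_m-R_{m-1}$ is non-decreasing), which indeed makes $x\mapsto qx+R_{n-x}$ convex on $\{1,\dots,n\}$ and forces the maximum to an endpoint. The paper instead proves the strictly weaker property of \emph{average-reward monotonicity}, $R_m/m$ non-decreasing in $m$ (Lemma~\ref{lemma:avg_reward_increase_with_horizon}), and then handles the two cases $q<R_{n-1}/(n-1)$ and $q\ge R_{n-1}/(n-1)$ directly. Since $R_0=0$, convexity implies average monotonicity but not conversely, so your route asks for more than is needed. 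More importantly, your ``check it from the recursion'' is where the difficulty hides: the obvious replication argument (run the horizon-$i$ auction at horizon $i{+}1$) does not immediately give non-decreasing increments, because the over-time cost at horizon $i{+}1$ penalizes an allocation of $i$ by $R_i-R_1$ rather than $R_{i-1}$. The paper's proof of Lemma~\ref{lemma:avg_reward_increase_with_horizon} avoids this by an explicit construction that \emph{scales} the optimal $m$-auction into an $n$-auction (same threshold $t_m$, allocation $1$ or $n$, total payment $\tfrac{n}{m}p_m$), and this is what makes the joint induction go through cleanly. If you want to stick with convexity you will need a comparably concrete argument; otherwise, switching to average monotonicity as in the paper is both sufficient and easier.
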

\begin{proof} For simplicity, when clear, we omit the superscripts and/or subscripts from $R_{\g,n}^{\D}$ and $c_{\g,n}^{\D}$ in this proof. Additionally, when we write \emph{optimal} in this proof, we mean FM-optimal, unless the reward is consumer surplus, in which case we are really referring to global optimality. Assume that the reward takes the form of $\g(v,p)=\alpha v-\beta p$.

First, observe that truthfulness follows from Lemma~\ref{lemma:neg-tradeoff-threshold-auction-truthful-and-optimal}.

    We move on to show, by induction on $n$, that:
    \begin{enumerate}
        \item The allocation from Algorithm~\ref{alg:consumer-surplus-auc}, which we denote $\alloc$, maximizes $\irn_n(v)\x{v}-c\brackets{\x{v}}$ pointwise, and\label{item:1_consumer_surplus}
        \item  Algorithm~\ref{alg:consumer-surplus-auc} for horizon $n$ is optimal.\label{item:2_consumer_surplus}
        \item The value $R[n]$ returned from Algorithm~\ref{alg:threshold-precomputation} is equal to the expected reward from horizon $n$, which we denote $R_n$.
    \end{enumerate}

     For $n=1$ all claims are easy to see (when there is a single unit, for negative tradeoff it is clearly optimal to give the unit for free to any agent, resulting in an expected reward of $\E[\alpha v]$). Now assume correctness up to $n-1$ and prove for $n\geq2$.

    The first step is to show that an allocation of $0$ is never required for optimality:
    by definition of the virtual value for revenue, 
    \begin{equation*}
        \varphi_{\g,n}(v)
        =
        \alpha v-\frac{\beta}{n-1}\brackets{v-\frac{1-F(v)}{f(v)}}
        \geq
        \beta v\brackets{1-\frac{1}{n-1}}+\beta\frac{1-F(v)}{f(v)(n-1)}
        \geq 0
        .
        \end{equation*}
        By definition of the ironing process (Def.~\ref{def:ironing}), also $\irn_n(v)\geq0$.
         Since $c_n(1)=c_n(0)$,
          $
             \irn_n\cdot0 - c_n\brackets{0}\leq \irn_n\cdot 1 - c_n\brackets{1},
             $
         {and}
         hence 
         {an optimal allocation}
         can always allocate at least one day to any agent.
         {This allows us to focus on strictly positive allocations,
         and prove pointwise optimality
         by showing that
         for any valuation $v \in \V$,
        our allocation rule is no worse than any other positive allocation $x \geq 1$:
         \begin{equation*}
            \irn_n(v)\cdot\x{v}-c_n(\x{v})\geq\irn_n(v)\cdot x-c_n(x).
        \end{equation*}
         }

Suppose $\irn_n(v)<R_{n-1}/(n-1)$ ,
        so that $\x{v} = 1$.
        Then for any $x \geq 1$,
        \begin{align*}
            \irn_n(v)x-c_n(x)&=\irn_n(v)+(x-1)\irn_n(v)-R_{n-1}+R_{n-x}
            <
            \irn_n(v)+\frac{x-1}{n-1}R_{n-1}-R_{n-1}+R_{n-x}
            \\
            &=
            \irn_n(v)+(n-x)\brackets{\frac{R_{n-x}}{n-x}-\frac{R_{n-1}}{n-1}}.
        \end{align*}
        We now rely on an interesting (and intuitive) fact, which we prove in Lemma~\ref{lemma:avg_reward_increase_with_horizon}: if a monotone allocation rule of \emph{one or all}, as in Algorithm~\ref{alg:consumer-surplus-auc} is w.l.o.g., then
the expected average daily reward from a rental game with negative tradeoff reward is non-decreasing in the horizon;
that is,
if $n > m$
then ${R_n}/{n}\geq {R_m}/{m}$.
We can apply Lemma~\ref{lemma:avg_reward_increase_with_horizon} thanks to our induction hypothesis, which tells us that Algorithm~\ref{alg:consumer-surplus-auc} is indeed optimal.
Therefore,        
        \begin{align*}
                    \irn_n(v)+(n-x)\brackets{\frac{R_{n-x}}{n-x}-\frac{R_{n-1}}{n-1}}
            \leq \irn_n(v)
            =
            \irn_n(v) \cdot 1 -c_n(1).
        \end{align*}

        Now suppose $\irn_n(v)\geq R_{n-1}/(n-1)$, so that $\x{v}=n$. Then for any $x\geq 1$,
        \begin{align*}
            n\irn_n(v)-c(n)&=x\irn_n(v)-R_{n-1}+R_{n-x}+(n-x)\irn_n(v)-R_{n-x}\\
            &\geq x\irn_n(v)-R_{n-1}+R_{n-x}+(n-x)\brackets{\frac{R_{n-1}}{n-1}-\frac{R_{n-x}}{n-x}}\\
            &\geq x\irn_n(v)-R_{n-1}
        \end{align*}
        Here too the last inequality is due to Lemma~\ref{lemma:avg_reward_increase_with_horizon} and Item~\ref{item:2_consumer_surplus} from the induction hypothesis.

This completes the induction proof for Item~\ref{item:1_consumer_surplus} for $n$. 

By Lemma~\ref{lemma:pos-neg-tradeoff-samono} (and Lemma~\ref{lemma:consumer-surplus-amono} for consumer surplus), allocation monotonicity is w.l.o.g. for $\g$, meaning that there is an optimal \swac $\A^\ast$ with a monotone allocation rule $\x[\A^\ast]{\cdot}$.
Since $\alloc$ is pointwise maximizing, we can apply Lemma~\ref{lemma:irn-maximizer-maximizes-expected}, telling us that $\E\sqbr{\varphi_{n}(v)\x{v}}-c\brackets{\x{v}}$ is maximized among all monotone non-decreasing allocation rules. Specifically,
\begin{align}\label{eq:varphi-smaller}
    \E\sqbr{\varphi_{n}(v)\x[\A^\ast]{v}-c\brackets{\x[\A^\ast]{v}}}\leq\E\sqbr{\varphi_{n}(v)\x{v}-c\brackets{\x{v}}}.
\end{align}

Combining Equation~\ref{eq:varphi-smaller} with Lemma~\ref{lemma:consumer_surplus_virtual_welfare_bounds_reward} gives us:
\begin{align}\label{eq:opt-bound}
    \E\sqbr{\trew[\A^\ast]{v}}\leq\E\sqbr{\varphi_{n}(v)\x{v}-c\brackets{\x{v}}}.
\end{align}

 Denote Algorithm~\ref{alg:consumer-surplus-auc} at horizon $n$ by $\A_n$.
Observe that since $\A_n$ is a one-or-all threshold auction, the payment charged is exactly equal to ${1}/{(n-1)}$ of the Myerson payment: if $v$ is allocated one unit, its threshold payment and the Myerson payment are both zero; if $v$ is greater than the threshold $t$ and the agent is allocated $n$ units, the threshold payment is $t$ while the Myerson payment is $(n-1)\cdot t$.
Therefore, if we denote the total payment rule of $\A_n$ by $\p{\cdot}$ and the Myerson payment rule by $\p[My]{\cdot}$,
\begin{align*}
    \E\sqbr{\trew[\A_n]{v}}&=\E\sqbr{\alpha\x{v} v -\beta\p{v}-c\brackets{\x{v}} }
    \\
                        &=\E\sqbr{\alpha\x{v} v -\frac{\beta}{n-1}\p[My]{v}-c\brackets{\x{v}} }
                        =\E\sqbr{\alpha\x{v} v -\beta\frac{\varphi(v)\x{v}}{n-1}-c\brackets{\x{v}}}
                        \\
                        &
                        =\E\sqbr{\x{v}\g\brackets{v,\frac{\varphi(v)}{n-1}}-c\brackets{\x{v}}}
                        =\E\sqbr{\varphi_{n}(v)\x{v}-c\brackets{\x{v}}}.
\end{align*}

Combining this equality with Equation~\ref{eq:opt-bound},
\begin{align}
    \E\sqbr{\trew[\A_n]{v}}\geq\E\sqbr{\trew[\A^\ast]{v}},
\end{align}
completing the proof of the optimality of $\A_n$, which is Item~\ref{item:2_consumer_surplus} in the induction claim.

Once we showed $\A_n$ is optimal, it is clear that $R[n]=R_n$, as it calculates the expected reward from an optimal \swac.
    
\end{proof}

The following lemma shows that in some families of reward functions, the average reward per day grows as the horizon grows. This is true, among others, for welfare and consumer surplus.
\begin{lemma}\label{lemma:avg_reward_increase_with_horizon}
    Let  $\S$ be a \swac setting with over-time cost, such that there is an optimal \swac for horizons $\leq n$ in which the allocation for horizon $h$ is in $\set{1,h}$. 

    It holds that for all $m<n$,
    \begin{equation*}
        \frac{R_n^{\D,\g}}{n}\geq\frac{R_m^{\D,\g}}{m}.
    \end{equation*}
\end{lemma}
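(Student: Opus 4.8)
The claim is that $R_n/n \geq R_m/m$ for $m < n$ under the hypothesis that a one-or-all optimal \swac exists at every horizon $\leq n$. My plan is to prove this by induction on $n$, reducing the general case $m < n$ to the single step $m = n-1$ (since if $R_h/h$ is non-decreasing between consecutive horizons, it is non-decreasing overall). So it suffices to show $R_n/n \geq R_{n-1}/(n-1)$, equivalently $(n-1)R_n \geq n R_{n-1}$, equivalently $R_n \geq R_{n-1} + R_{n-1}/(n-1)$.

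The key idea is to \emph{construct} a feasible (not necessarily optimal) \swac at horizon $n$ whose expected reward is at least $R_{n-1} + R_{n-1}/(n-1)$, and then invoke optimality of $R_n$. The natural candidate: run an optimal one-or-all \swac tuned for horizon $n-1$ (which exists by hypothesis and earns $R_{n-1}$), but embed it into the $n$-horizon game. Concretely, at horizon $n$, if the horizon-$(n-1)$ mechanism would sell one unit for free to the agent, we instead sell \emph{one} unit, and then the remaining $n-1$ days form a fresh rental game — wait, that does not quite close the recursion. Let me reconsider: the cleanest construction uses the structure of the over-time cost. At horizon $n$, offer the agent the same menu as the optimal horizon-$(n-1)$ \swac, \emph{shifted}: an agent who would have received $1$ unit there receives $1$ unit here (leaving $n-1$ future days, which recursively yield $R_{n-1}$), and an agent who would have received all $n-1$ units there receives all $n$ units here at the same threshold payment. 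One checks this is still truthful (threshold payments, Lemma~\ref{lemma:neg-tradeoff-threshold-auction-truthful-and-optimal}) and \swir. For the ``all'' branch, the reward changes from $\alpha(n-1)v - \beta\tau - c_{n-1}(n-1)$ to $\alpha n v - \beta \tau - c_n(n)$; using the over-time cost definition, $c_n(n) = R_{n-1} - R_0 = R_{n-1}$ while $c_{n-1}(n-1) = R_{n-2} - R_0 = R_{n-2}$, and the extra $\alpha v$ term only helps. For the ``one'' branch, the reward is $\alpha v - c_n(1) + R_{n-1} = \alpha v + R_{n-1}$ versus $\alpha v + R_{n-2}$ at horizon $n-1$. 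Summing in expectation, $R_n \geq R_{n-1} + (R_{n-1} - R_{n-2})$, and then a second induction (or the same statement at $n-1$) gives $R_{n-1} - R_{n-2} \geq R_{n-1}/(n-1)$, which is exactly $R_{n-1}/(n-1) \geq R_{n-2}/(n-2)$ rearranged — the inductive hypothesis. Chaining these yields $R_n \geq R_{n-1} + R_{n-1}/(n-1)$, as desired.

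I would organize the write-up as: (i) reduce to consecutive horizons; (ii) state and justify the shifted construction, verifying feasibility (truthfulness via Lemma~\ref{lemma:neg-tradeoff-threshold-auction-truthful-and-optimal}, \swir, and that the allocation rule is still monotone one-or-all so the over-time cost recursion applies); (iii) compute the expected reward of the construction using Definition~\ref{def:cost-function} for $c_n$, obtaining $R_n \geq 2R_{n-1} - R_{n-2}$; (iv) apply the inductive hypothesis $(n-1)R_{n-1} \geq (n-2) \cdot \text{wait}$ — more carefully, the hypothesis at horizon $n-1$ gives $R_{n-1}/(n-1) \geq R_{n-2}/(n-2) \geq \dots$, and in particular $R_{n-1}/(n-1) \geq (R_{n-1}-R_{n-2})$ requires care; the cleaner route is to prove directly by induction the equivalent statement that $R_h - R_{h-1}$ is non-increasing in $h$ together with $R_h/h$ non-decreasing, or simply observe $2R_{n-1} - R_{n-2} \geq R_{n-1} + R_{n-1}/(n-1)$ iff $R_{n-1} - R_{n-2} \geq R_{n-1}/(n-1)$ iff $(n-1)(R_{n-1}-R_{n-2}) \geq R_{n-1}$ iff $(n-2)R_{n-1} \geq (n-1)R_{n-2}$ iff $R_{n-1}/(n-1) \geq R_{n-2}/(n-2)$, the hypothesis.

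\textbf{Main obstacle.} The delicate point is step (ii): verifying that the ``shifted'' mechanism is genuinely a valid \swac in the horizon-$n$ rental game with the \emph{correct} over-time cost function — i.e., that plugging in an optimal horizon-$(n-1)$ structure and extending the long allocation by one day really does give net reward $\geq 2R_{n-1} - R_{n-2}$ once the cost $c_n$ (which itself is defined via the $R$ values) is accounted for. One must be careful that the recursion ``remaining $n-1$ days earn $R_{n-1}$'' is legitimate, which is where the one-or-all hypothesis and Theorem~\ref{theorem:rental_as_auctions} are used: after the single-unit allocation the game genuinely restarts with a fresh horizon-$(n-1)$ instance. I expect the arithmetic with the cost function to be routine once this structural point is nailed down; the risk is an off-by-one in the cost indices ($c_n(1)$ vs $c_n(n)$ vs the $R$ subscripts), so I would write those expansions out explicitly from Definition~\ref{def:cost-function}.
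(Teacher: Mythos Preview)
Your construction is valid (it is a threshold auction, hence truthful by Lemma~\ref{lemma:neg-tradeoff-threshold-auction-truthful-and-optimal}), and the cost-function arithmetic you flag as the ``main obstacle'' is indeed routine. The real gap is elsewhere: your claim that the construction yields $R_n \geq 2R_{n-1} - R_{n-2}$.

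Write $q = \Pr[v < \tau]$, $A = \alpha\,\E[v \mid v < \tau]$, $B = \E[v \mid v \geq \tau]$, where $\tau = \tau_{n-1}$. Your construction earns total rental reward $\tilde{R}_n = q(A + R_{n-1}) + (1-q)(\alpha n B - \beta\tau)$, while $R_{n-1} = q(A + R_{n-2}) + (1-q)(\alpha(n-1)B - \beta\tau)$. Hence $\tilde{R}_n - R_{n-1} = q(R_{n-1} - R_{n-2}) + (1-q)\alpha B$. For this to be at least $R_{n-1} - R_{n-2}$ you need $\alpha B \geq R_{n-1} - R_{n-2}$: the one-day welfare from a high-value agent must dominate the marginal value of an extra horizon day. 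Your phrase ``the extra $\alpha v$ term only helps'' asserts this without argument. It is not obvious, and the pointwise version fails already in the running example (consumer surplus, uniform $[0,1]$, $n-1=4$: $\tau = 0.5$ but $R_4 - R_3 = 0.625 > 0.5$). The expectation version may well be true, but you would need an additional argument tying $\alpha B$ to the increment $R_{n-1}-R_{n-2}$ via optimality of $\tau$.

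The paper sidesteps this with a different construction: it compares horizon $n$ directly to an arbitrary $m<n$ and \emph{scales the payment}, setting $p_n = \tfrac{n}{m}\,p_m$ (with filter still at $t_m$). This forces the per-unit reward $g(v,p_n/n)=g(v,p_m/m)$ to coincide in the ``all'' branch, so that branch contributes zero to $R_n/n - R_m/m$; the entire comparison collapses to the ``one'' branch, which the induction hypothesis handles. This also keeps the proof at the generality stated in the lemma (arbitrary $g$ with the one-or-all property), whereas your construction is specific to threshold payments.
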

It is trivial that a shorter rental game yields no-greater reward than a longer one, but it is not trivial to show that the \emph{average} daily reward is no greater, as both the numerator (the reward) and the denominator (the number of days) decrease.
In fact, currently our proof is only for negative tradeoff,
although we expect that this fact is true for any reward function.
\begin{proof}[Proof of Lemma~\ref{lemma:avg_reward_increase_with_horizon}]
    We prove this claim by induction on $n$. For $n=1$ it is immediate, so we assume correctness up to $n-1$ and prove for $n\geq 2$.
    Let $\g,\D$ as defined in the lemma, and let $n>m\geq 1$. By choice of $\g$ and $\D$, we can select an optimal and truthful $\brackets{m,\D,\g,c_m^{\D,\g}}$-auction $\A_m$ with $\x[\A_m]{v}\in\set{1,m}$ for all $v\in\V$. Denote the threshold of $\alloc[\A_m]$ with $t_m$, and the matching total payments with $p_1,p_m$ such that:
    \begin{align*}
     \x[\A_m]{v}=\begin{cases}
        1,&v<t_m\\
        m,&t_m\leq v
    \end{cases}\ \ \ \ \textnormal{and}\ \ \ \ \ \    \p[\A_m]{v}=\begin{cases}
        p_1,&v<t_m\\
        p_m,&t_m\leq v
    \end{cases}.
    \end{align*}
    Additionally, denote the corresponding filters with $f_m$.

    We first make an observation that will aid as in the design of a new auction. Let $v<t_m$. We know that either $v<f_m$ or $mv-p_m\leq v-p_1$. If $mv-p_m\leq v-p_1$ it follows that $v\leq(p_m-p_1)/(m-1)\leq p_m$. So either way, $p_m\geq t_m$.

    We design an $n$-auction as follows:
    \begin{align*}
        \x[\A_n]{v}=\begin{cases}
            1,& v<t_m\\
            n,& v\geq t_m
        \end{cases}.
    \end{align*}
    For bids $v$ such that $\x[\A_m]{v}= 1$ we keep $\p[\A_n]{v}=\p[\A_m]{v}=p_1$. For larger bids, the payment rule is a bit more complicated:
    \begin{itemize}
        \item The total payment will be
        \begin{equation}
            \p[\A_n]{v}=p_n\coloneq\frac{n}{m}p_m
        \end{equation}
        \item The payment for the first day will be the \emph{filter}:
        \begin{equation}
            \f[\A_n]{v}=f_n\coloneq t_m
        \end{equation}
        That is, the filter is the lowest valuation of agent who is allocated $m$ units in $\A_m$. Indeed it's possible, as we observed that $t_m\leq p_m\leq \frac{n}{m}p_m$. 
        \item The payment for the subsequent units will be \fr and amount to the remaining payment due. That is, for units $2,\dots,n$ the payment will be $\frac{p_n-f_n}{n-1}$ each day.
    \end{itemize}

    We will show truthfulness of $\A_n$. Consider some $v\in\V$.
    \begin{itemize}
        \item If $v<t_m$, the agent can only bid below $t_m=f_n$, which will always result in  the same outcome: 1 unit for payment $p_1$. Hence in this case, agent $v$ is truthful.
        \item If $v\geq t_m$, we have:
        \begin{align*}
            \util[\A_n]{v}{v}=\x[\A_n]{v}v-\p[\A_n]{v}=nv-\frac{n}{m}p_m=\frac{n}{m}\util[\A_m]{v}{v}\geq \util[\A_m]{v}{v}.
        \end{align*}
        Since in $\A_m$ agent $v$ is truthful, and besides for bidding $v$ there is no other bid in $\A_n$ that would give a different outcome compared to $\A_m$, this proves that also in $\A_n$ agent $v$ is truthful.
    \end{itemize}

    We move on to prove the claim. Let $\A_1,\dots,\A_{m-1},\A_{m+1},\dots,\A_n$ be  optimal auctions for horizons $1,\dots,m-1,m+1,\dots,n-1$ respectively, distribution $\D$, reward function $\g$ and the matching over-time cost function. Due to Theorem~\ref{theorem:rental_as_auctions}, the rental mechanism $\M_m\coloneq\brackets{\A_1,\dots,\A_m}$ is an optimal $\brackets{m,\D,\g}$-rental mechanism, i.e. $R_m=\E\sqbr{\M_m}$. Define similarly $\M_n\coloneq\brackets{\A_1,\dots,\A_n}$.
    It holds that:
    \begin{align*}
        \frac{R_n}{n}-\frac{R_m}{m}&\geq\frac{\E\sqbr{\M_n}}{n}-\frac{\E\sqbr{\M_m}}{m}\\
        &=\Pr\sqbr{v<t_m}\brackets{\brackets{\frac{1}{n}-\frac{1}{m}}\E\sqbr{\g(v,p_1)\mid v<t_m}+\frac{R_{n-1}}{n}-\frac{R_{m-1}}{m}}\numberthis\label{eq:4.6}\\
        &\ \ \ \ +\Pr[t_m\leq v]\brackets{\frac{n\cdot\E\sqbr{\g\brackets{v,\frac{p_n}{n}}\mid t_m\leq v}}{n}-\frac{m\cdot\E\sqbr{\g\brackets{v,\frac{p_m}{m}}\mid t_m\leq v}}{m}}
    \end{align*}

    Since $p_n=\frac{n}{m}p_m$, the second summand is zero. We continue the analysis depending on the value of $m$. If $m=1$, using the induction hypothesis $\frac{R_{n-1}}{n}\geq\frac{R_{1}(n-1)}{n}=R_1\brackets{\frac{1}{m}-\frac{1}{n}}$. Otherwise, $\frac{R_{n-1}}{n}-\frac{R_{m-1}}{m}\geq\frac{R_{m-1}(n-1)}{n(m-1)}-\frac{R_{m-1}}{m}=\frac{R_{m-1}}{m-1}\brackets{\frac{1}{m}-\frac{1}{n}}\geq R_1\brackets{\frac{1}{m}-\frac{1}{n}}$. 
    Therefore,
    \begin{align}\label{eq:4.7}
        \frac{R_n}{n}-\frac{R_m}{m}\geq\Pr\sqbr{v<t_m}\brackets{\brackets{\frac{1}{m}-\frac{1}{n}}\brackets{R_1-\E\sqbr{\g(v,p_1)\mid v<t_m}}}.
    \end{align}

    Observe that we can define a $(1,\D,\g)$-rental mechanism $\M$ as follows:
    \begin{align*}
        \x[\M]{v}=1\ \ \ \ \ \ \textnormal{ and }\ \ \ \ \ \ \p[\M]{v}=p_1
    \end{align*}
    Since $\g$ is non-decreasing with the valuations, we get $R_1\geq\E[\M]=\E[\g(v,p_1)]\geq\E[\g(v,p_1)\mid v<t_m]$. Plugging this into Equation~\ref{eq:4.7} yields $\frac{R_n}{n}-\frac{R_m}{m}$ and completes the proof.
\end{proof}

\section{Gap Between Variable Pricing to Fixed-Rate Pricing}\label{sec:fr-gap}

In this section we demonstrate, by example, how big the gap between \fr \swacs to \swacs with variable pricing can be.
The \swac setting we focus on is one in which the agent's valuations are distributed uniformly over $[0,1]$, the reward function is consumer surplus ($\g(v,p)=v-p$), and the cost function is the over-time cost function from an $n$-rental game in which the agent's valuations are distributed i.i.d. (uniformly over $[0,1]$). We denote this \swac setting by $\S_n$. 

\begin{lemma}\label{lemma:best-fr-consumer-surplus}
    In the \swac setting $\S_n$:
    \begin{enumerate}
        \item Among \fr \swacs it is w.l.o.g.\ to sell one unit to any agent that arrives, for free, and this achieves an expected reward of $0.5$.
        \item $R_n\geq0.5n$
        \item A rental mechanism that is \fr-optimal has an expected reward of $0.5n$.
    \end{enumerate}
\end{lemma}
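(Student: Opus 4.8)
The plan is to prove the three parts in the order (2), (1), (3), since each builds on the previous. Throughout write $\g(v,p)=v-p$, let $R_m$ denote the optimal reward of the length-$m$ i.i.d.\ $\Uni[0,1]$ rental game with reward $\g$, and let $c$ be the over-time cost function of $\S_n$ (Definition~\ref{def:cost-function}), so that $c(0)=0$ and $c(x)=R_{n-1}-R_{n-x}$ for $x\ge1$; in particular $c(1)=0$. The first move is an \emph{increment bound}: $R_m\ge R_{m'}+\tfrac{m-m'}{2}$ for all $0\le m'\le m$. The witnessing mechanism sells one unit for free to the agent arriving on each of the first $m-m'$ days --- each such day contributes $\E_{v\sim\Uni[0,1]}[v]=\tfrac12$ and frees the asset immediately --- and then runs an optimal rental on the remaining $m'$ days, which contributes $R_{m'}$ because that continuation is again a length-$m'$ i.i.d.\ game. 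Taking $m'=0$ gives $R_n\ge\tfrac n2$, which is part~(2).

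For part~(1), let $\A$ be a truthful \fr \swac for $\S_n$ (truthfulness is w.l.o.g.). By Observation~\ref{ob:fr-means-myerson}, $\A$ has a monotone non-decreasing allocation rule $X$ and charges Myerson's total payments $\p{\cdot}$, so $\E[\p{v}]=\E[\varphi(v)X(v)]$ with $\varphi(v)=2v-1$ the revenue virtual value of $\Uni[0,1]$; hence
\[
\E\sqbr{\trew{v}}=\E\bigl[X(v)v-\p{v}-c(X(v))\bigr]=\E\bigl[(1-v)X(v)-c(X(v))\bigr],
\]
which also matches the \fr-optimal virtual value $\varphi_{\g}^{\D}(v)=\g(v,\varphi(v))=1-v$ of Proposition~\ref{prop:generelized-virv}. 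Since $X$ is non-decreasing and $1-v$ is non-increasing on $[0,1]$, they are negatively correlated, so $\E[(1-v)X(v)]\le\E[1-v]\,\E[X(v)]=\tfrac12\E[X(v)]$, giving $\E\sqbr{\trew{v}}\le\E\bigl[\tfrac12 X(v)-c(X(v))\bigr]$. Now bound $\tfrac12 x-c(x)$ pointwise over $x\in\{0,\dots,n\}$: it is $0$ at $x=0$, and for $x\ge1$ the increment bound gives $c(x)=R_{n-1}-R_{n-x}\ge\tfrac{x-1}{2}$, so $\tfrac12 x-c(x)\le\tfrac12$, with equality at $x=1$. Hence no \fr \swac exceeds reward $\tfrac12$, while the allocation $X\equiv1$ --- which has no jumps, so Myerson payments are identically $0$, i.e.\ ``sell one unit for free'' --- attains it, with expected reward $\E_v[v-0-c(1)]=\tfrac12$. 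This proves part~(1). (Equivalently one may invoke Lemma~\ref{lemma:irn-maximizer-maximizes-expected}, since the ironing of $1-v$ over $\Uni[0,1]$ is the constant $\tfrac12$.)

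For part~(3), let $\tilde R_m$ be the optimal reward among \fr rental mechanisms for the length-$m$ i.i.d.\ $\Uni[0,1]$ game; I would show $\tilde R_m=\tfrac m2$ by induction on $m$ (base $\tilde R_0=0$). The lower bound $\tilde R_m\ge\tfrac m2$ is witnessed by the \fr mechanism that sells one unit for free every day. For the upper bound, as in the reduction of Theorem~\ref{theorem:rental_as_auctions} a \fr-optimal mechanism may be taken history-independent, hence a fixed sequence of \fr \swacs; let $X$ and $\p{\cdot}$ be the allocation and payment of its day-$1$ \swac. After day $1$ the continuation on the remaining $n-\max\{X(v),1\}$ days is a \fr mechanism for a shorter i.i.d.\ game, so
\[
\tilde R_n\le\E\bigl[(X(v)v-\p{v})+\tilde R_{n-\max\{X(v),1\}}\bigr].
\]
Using Observation~\ref{ob:fr-means-myerson} once more, $\E[X(v)v-\p{v}]=\E[(1-v)X(v)]$, and the induction hypothesis gives $\tilde R_{n-\max\{X(v),1\}}=\tfrac12(n-\max\{X(v),1\})$, so
\[
\tilde R_n\le\E[(1-v)X(v)]+\tfrac n2-\tfrac12\E[\max\{X(v),1\}]=\tfrac n2+\E\bigl[(\tfrac12-v)X(v)\bigr]-\tfrac12\Pr[X(v)=0].
\]
Both tail terms are $\le0$ --- the last by inspection, and $\E[(\tfrac12-v)X(v)]\le\E[\tfrac12-v]\,\E[X(v)]=0$ by the same negative-correlation fact --- so $\tilde R_n\le\tfrac n2$, closing the induction and giving part~(3).

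The crux is part~(1): one must recognize via Observation~\ref{ob:fr-means-myerson} that a \fr \swac is \emph{forced} to charge Myerson's payments (precisely why \fr pricing is wasteful for consumer surplus), so that its reward collapses to $\E[(1-v)X(v)-c(X(v))]$, and then establish $c(x)=R_{n-1}-R_{n-x}\ge\tfrac{x-1}{2}$, for which the increment bound of the first paragraph is indispensable. The negative-correlation step --- used in both parts~(1) and~(3) and depending only on monotonicity of the allocation rule --- is short but load-bearing, and in part~(3) some care is needed with the $X(v)=0$ case when passing between $\max\{X(v),1\}$ and $X(v)$.
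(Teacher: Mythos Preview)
Your proof is correct and reaches the same conclusions as the paper, but the two arguments differ in emphasis. For part~(1) the paper computes the generalized virtual value $\varphi_{\g}^{\D}(v)=1-v$, observes that its ironing over $\Uni[0,1]$ is the constant $\tfrac12$, and then traces through Algorithm~\ref{alg:cal-intervals} (using the splitting bound $R_{n-1}-R_{n-x}\ge R_{x-1}\ge\tfrac{x-1}{2}$) to conclude that every agent gets allocation~$1$. You instead bypass the algorithm: the Chebyshev-type negative-correlation inequality $\E[(1-v)X(v)]\le\E[1-v]\,\E[X(v)]$ is exactly what ironing to the constant $\tfrac12$ buys (cf.\ Lemma~\ref{lemma:ironing-properties}), and you then bound $\tfrac12 x-c(x)\le\tfrac12$ pointwise using the same increment bound. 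This is more elementary and self-contained; the paper's route has the advantage of exhibiting the optimal \fr \swac as an instance of the general machinery. For part~(3) the paper argues in one line that ``since at every horizon it is optimal to rent for a single day for free, the rental mechanism yields $0.5n$''; your induction on $m$ with the explicit recursion $\tilde R_n\le\E[(1-v)X(v)]+\tilde R_{n-\max\{X(v),1\}}$ and the $\max\{X(v),1\}=X(v)+\mathbb I_{X(v)=0}$ split is the more careful version of that sentence, and in particular makes transparent why the FR-optimal continuation (rather than the globally optimal continuation $R_{n-x}$) is the right object here.
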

\begin{proof}
    Clearly a rental mechanism in our setting over $n$ days can get an expected reward of at least $0.5n$ by renting the asset to every agent for a single day, each day getting a reward of $\E[v]=0.5$.

    We now prove the first point using Algorithm~\ref{alg:fr-rental}, which we showed is optimal among all \fr \swacs in Theorem~\ref{thm:optima-fr-rental-mech}. Denoting $\D=\Uni[0,1]$, the virtual value for revenue is $\varphi(v)=v-\frac{1-v}{1}=2v-1$. The virtual value for consumer surplus is $\varphi_{\g}^{\D}(v)=v-\brackets{2v-1}=1-v$. This is a \emph{decreasing} function of $v$. Applying the ironing process steps by step gives $\irn_{\g}^{\D}(v)=0.5$. 

    The lower bound for allocation $1$ is $\brackets{\irn_{\g}^{\D}}^{-1}(0)=0$, as we defined the inverse function. We now want to calculate the upper bound, which we denote $U$. By the first definition of $last\_right$ in Algorithm~\ref{alg:cal-intervals}, for $n=1$ clearly all valuations will get an allocation of $1$, so we can move on to $n>1$. Observe that an $(n_1+n_2)$-rental mechanism with i.i.d. distributions can consist of two shorter rental mechanisms with horizons $n_1$ and $n_2$, such that the agents in the first rental mechanism are not allocated the asset past the first $n_1$ days, in which the first rental mechanism is applied, and after $n_1$ days the second rental mechanism is applied. Therefore, using the induction hypothesis, for any $x\in\set{1,\dots,n}$,
    \begin{align*}
        \frac{R_{n-1}-R_{n-x}}{x-1}\geq\frac{R_{x-1}}{x-1}\geq\frac{1}{2}.
    \end{align*}
    This means that $\brackets{\irn_{\g}^{\D}}^{-1}(U)\geq\brackets{\irn_{\g}^{\D}}^{-1}(0.5)=1$. Therefore, all agents are allocated 1 unit in the \swac with $n$ units, which gives a reward of $\E[v]-0=0.5$. Due to the algorithm's optimality among \fr \swacs, we are done proving the optimal among \fr \swac.

    In terms of a rental mechanism that is \fr-optimal, observe that since at every horizon it is optimal for the designer to rent to the agent for a single day for free (as we just proved), the rental mechanism would yield an expected reward of $n\cdot\E[v]=0.5n$.
\end{proof}

\begin{lemma}\label{lemma:consumer_surplus_filter}
    There is a non \rmono \swac $\A$ that is optimal for $\S_4$ with $\E[A]=0.625$.

    Additionally, $R_1=0.5,\ R_2=1$ and $R_3=1.5$. The thresholds from Algorithm~\ref{alg:threshold-precomputation} are $\tau_2=\tau_3=1$.
\end{lemma}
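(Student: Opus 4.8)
The plan is to trace Algorithms~\ref{alg:threshold-precomputation} and~\ref{alg:consumer-surplus-auc} on the setting $\S_n$ for $n\le 4$, relying on Theorem~\ref{theorem:threshold-mech} for the optimality guarantee, and then to read off the reward values and check the non-monotonicity by direct computation. The one prerequisite is the horizon-specific virtual value. Since $\D=\Uni[0,1]$ has revenue virtual value $\varphi(v)=v-(1-v)/1=2v-1$, we get $\varphi_{\g,n}(v)=\g(v,\varphi(v)/(n-1))=v-(2v-1)/(n-1)$. Hence $\varphi_{\g,2}(v)=1-v$ is strictly decreasing, so its ironing is the constant $\E_{v\sim\D}[1-v]=1/2$ --- exactly the computation already carried out in the proof of Lemma~\ref{lemma:best-fr-consumer-surplus} --- giving $\irn_2\equiv 1/2$; next $\varphi_{\g,3}(v)=v-(2v-1)/2=1/2$ is already constant, so $\irn_3\equiv 1/2$; and $\varphi_{\g,4}(v)=v-(2v-1)/3=(v+1)/3$ is strictly increasing, so no ironing is needed and $\irn_4(v)=(v+1)/3$.

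Next I would run the recursion of Algorithm~\ref{alg:threshold-precomputation} with $\alpha=\beta=1$. It sets $R[1]=\E_{v\sim\D}[v]=1/2$, so $R_1=1/2$. For $i=2$: $\tau_2=\irn_2^{-1}(R[1]/1)=\irn_2^{-1}(1/2)=\sup\set{v\in[0,1]:\irn_2(v)=1/2}=1$; since $\Pr[v\ge 1]=0$ the second summand of the update vanishes and $R[2]=(\E[v\mid v<1]+R[1])\cdot 1=1/2+1/2=1$, so $R_2=1$. For $i=3$: likewise $\tau_3=\irn_3^{-1}(R[2]/2)=\irn_3^{-1}(1/2)=1$, and $R[3]=(\E[v\mid v<1]+R[2])\cdot 1=1/2+1=3/2$, so $R_3=3/2$. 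By Theorem~\ref{theorem:threshold-mech} these $R[i]$ are the true optimal rental rewards $R_i$, which establishes the second assertion of the lemma and the values $\tau_2=\tau_3=1$.

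For horizon $4$, Algorithm~\ref{alg:consumer-surplus-auc} compares $\irn_4(v)=(v+1)/3$ with $R_3/(n-1)=(3/2)/3=1/2$: the inequality $(v+1)/3<1/2$ holds iff $v<1/2$. So the \swac $\A$ it outputs --- globally optimal for consumer surplus by Theorem~\ref{theorem:threshold-mech}, since the agents are i.i.d.\ and the cost is the over-time cost --- sells one unit for free to every $v\in[0,1/2)$, and sells all four units, charging $\irn_4^{-1}(1/2)=1/2$ on the first day and $0$ afterwards, to every $v\in[1/2,1]$. Using $R_0=0$, $c_4(1)=R_3-R_3=0$ and $c_4(4)=R_3-R_0=3/2$, the net reward is $\trew[\A]{v}=v$ for $v<1/2$ and $\trew[\A]{v}=4v-1/2-3/2=4v-2$ for $v\ge 1/2$; at $v=1/2$ this drops from the left-limit $1/2$ down to $0$. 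This downward jump shows $\A$ is not \rmono: for every $v\in[1/2,5/8)$ we have $\trew[\A]{v}=4v-2\in[0,1/2)$, and every $w$ in the positive-measure interval $(4v-2,\,1/2)$ satisfies $w<v$ but $\trew[\A]{w}=w>\trew[\A]{v}$, so no set of measure zero can be removed to make the reward monotone. Finally $\E[\A]=\int_0^{1/2}v\,dv+\int_{1/2}^1(4v-2)\,dv=1/8+1/2=5/8=0.625$, as claimed.

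The computation is essentially mechanical once the virtual values are in hand; the only step requiring a little care is the last one --- turning the single downward jump of $\trew[\A]{\cdot}$ at $v=1/2$ into a proof that \emph{a.e.}\ reward-monotonicity (not merely pointwise monotonicity) fails --- and the positive-measure interval argument above handles exactly that.
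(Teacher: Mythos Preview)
Your proof is correct and follows essentially the same route as the paper: trace Algorithm~\ref{alg:threshold-precomputation} for $i=1,2,3$ to obtain $R_1,R_2,R_3,\tau_2,\tau_3$, then instantiate Algorithm~\ref{alg:consumer-surplus-auc} at horizon $4$ and compute the expected net reward. Your treatment is in fact slightly more complete than the paper's, which computes $\E[\A]=0.625$ but never spells out the non-\rmono\ claim; your positive-measure interval argument around the downward jump of $\trew[\A]{\cdot}$ at $v=1/2$ is exactly what is needed to rule out a measure-zero fix, and fills that gap.
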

\begin{proof}
We use the optimal threshold allocation from Algorithm~\ref{alg:consumer-surplus-auc}, which we prove that is optimal in Theorem~\ref{theorem:threshold-mech}.

We begin by the precomputations needed, as done by Algorithm~\ref{alg:threshold-precomputation}. First $R_1=0.5$.
Next for $i=2$, the horizon-specific virtual value is 
$\varphi_{2}(v)=v-\frac{v-\frac{1-v}{1}}{1}=1-v$. Therefore, following the ironing procedure, $\irn_{2}(v)=0.5$. We get $\tau_2=\irn_2^{-1}(0.5)=1$, meaning $R_2=0.5+0.5=1$. Similarly for $i=3$ we have $\varphi_{3}(v)=v-\frac{v-\frac{1-v}{1}}{2}=0.5$. Since this is already non-decreasing, also $\irn_3(v)=0.5$. 
Thus $\tau_3=\irn_3^{-1}(1/2)=1$ and $R_3=0.5+1=1.5$.

Finally we can calculate $\E[\A]$. For $i=4$ we have $\varphi_4(v)=v-\frac{v-\frac{1-v}{1}}{3}=\frac{1}{3}(v+1)$. This is non-decreasing, so also $\irn_4(v)=\frac{1}{3}(v+1)$. It follows that $\tau_4=\irn_4^{-1}(1.5/3)=3\cdot0.5-1=0.5$. Therefore, since at horizon $4$ we have $c(1)=R_3-R_3=0$ and $c(4)=R_3=1.5$,
\begin{align*}
    \E[\A]&=\E\sqbr{\trew[\A]{v}\given v<0.5}\cdot\Pr[v<0.5]+\E\sqbr{\trew[\A]{v}\given v\geq0.5}\cdot\Pr[v\geq0.5]\\
    &=(0.25)\cdot0.5+(4\cdot0.75-0.5-1.5)\cdot0.5=0.625.
\end{align*}

\end{proof}

Comparing the results of Lemma~\ref{lemma:best-fr-consumer-surplus} with Lemma~\ref{lemma:consumer_surplus_filter} gives us the following result:
\begin{corollary}\label{cor:neg-tadeoff-fr-not-wlog}
    For negative tradeoff reward, \fr payment schedules are not w.l.o.g.
\end{corollary}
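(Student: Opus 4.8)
The plan is to refute the ``w.l.o.g.'' claim by exhibiting a single concrete negative tradeoff \swac setting in which every \fr \swac is strictly suboptimal, so that no optimal \swac can be \fr. The natural witness is the setting $\S_4$ studied in this section: the reward is consumer surplus $\g(v,p)=v-p$, which is a negative tradeoff reward with $\alpha=\beta=1$ (admissible, since negative tradeoff only requires $\alpha\geq\beta>0$); the agent's valuation is uniform on $[0,1]$; and the cost is the over-time cost function of the $4$-day i.i.d.\ (uniform on $[0,1]$) rental game. I would first check the routine bookkeeping that makes the cited lemmas applicable to $\S_4$: consumer surplus lies in the negative tradeoff class, and at horizon $4$ the over-time cost satisfies the normalization $c(0)=0$ and $c(1)=R_3-R_3=0$ required by the threshold-mechanism results, so both Lemma~\ref{lemma:best-fr-consumer-surplus} and Lemma~\ref{lemma:consumer_surplus_filter} indeed apply here.

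The argument is then a one-line juxtaposition and does not even require knowing the true optimum of $\S_4$. Instantiating Lemma~\ref{lemma:best-fr-consumer-surplus} at $n=4$, among all \fr \swacs for $\S_4$ it is w.l.o.g.\ to sell one unit for free, so the best expected reward attainable by any \fr \swac for $\S_4$ is exactly $0.5$. On the other hand, Lemma~\ref{lemma:consumer_surplus_filter} exhibits a (non-\rmono, hence variable-payment) \swac $\A$ for $\S_4$ with $\E[\A]=0.625>0.5$. Therefore no \fr \swac is optimal for $\S_4$, which is a negative tradeoff setting, and hence \fr payment schedules are not without loss of generality for negative tradeoff reward. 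By Theorem~\ref{theorem:threshold-mech} the value $0.625$ is in fact the global optimum for $\S_4$, so the exact additive gap at $n=4$ is $0.125$; I would include this only as a remark, since it is not needed for the statement.

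I do not expect a genuine obstacle here: both inputs to the proof --- the \fr upper bound of $0.5$ and the variable-payment value of $0.625$ --- are already established (Lemmas~\ref{lemma:best-fr-consumer-surplus} and~\ref{lemma:consumer_surplus_filter}), and the corollary is simply their combination. The only care needed is the bookkeeping described above. It is also worth remarking that the separation is not an artifact of small $n$: by Lemma~\ref{lem:best-consumer-surplus-uni} a variable-payment \swac approaches reward $1$ as $n\to\infty$ while the best \fr reward stays at $0.5$, so for consumer surplus the gap in fact tends to a multiplicative factor of $2$.
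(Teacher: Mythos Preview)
Your proposal is correct and follows exactly the paper's approach: the corollary is placed immediately after Lemmas~\ref{lemma:best-fr-consumer-surplus} and~\ref{lemma:consumer_surplus_filter} and is proved simply by comparing the \fr bound of $0.5$ for $\S_n$ with the achievable value $0.625$ for $\S_4$. Your additional bookkeeping and the asymptotic remark are fine extras but not needed for the statement.
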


\begin{lemma}\label{lem:best-consumer-surplus-uni}
    For the settings $\S_n$ it holds that $\lim\limits_{n\rightarrow\infty}\frac{R_n}{n}=1$.
\end{lemma}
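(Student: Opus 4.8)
The plan is to prove the two bounds $\limsup_{n\to\infty} R_n/n \le 1$ and $\liminf_{n\to\infty} R_n/n \ge 1$ separately. The first is essentially free: for any rental mechanism $\M$ on the $(n,\boldsymbol{\D},\g)$-rental game with consumer-surplus reward, each day contributes $\g(v,p) = v-p \le v \le 1$ (the support of $\Uni[0,1]$ lies in $[0,1]$), and there are at most $n$ such contributions, so $\trew[\M]{v_1,\dots,v_n} \le n$ pointwise; hence $R_n \le n$, i.e.\ $R_n/n \le 1$ for every $n$. (Combined with Lemma~\ref{lemma:avg_reward_increase_with_horizon}, this already shows $R_n/n$ is non-decreasing and bounded by $1$, so $\lim_n R_n/n$ exists and it only remains to pin down its value.)

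For the lower bound I would exhibit a concrete family of mechanisms approaching the bound. Fix $\tau \in (0,1)$ and consider the history-independent rental mechanism $\M_\tau$ that, at horizon $h \ge 2$, offers the one-or-all threshold \swac allocating $1$ unit for free to any agent with $v < \tau$ and all $h$ remaining units to any agent with $v \ge \tau$, charging the threshold payment $\tau$ on the first day and $0$ afterwards (at horizon $1$, rent the single unit for free). This allocation rule is a monotone non-decreasing step function implemented by threshold payments, so by Lemma~\ref{lemma:neg-tradeoff-threshold-auction-truthful-and-optimal} the \swac is truthful and individually rational — the first-day payment $\tau$ acts as a filter that prevents agents with $v<\tau$ from claiming the long allocation, which is exactly what makes this non-fixed-rate schedule work. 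Thus $\M_\tau$ is a valid rental mechanism and $R_n \ge \E\sqbr{\trew[\M_\tau]{v_1,\dots,v_n}}$.

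Writing $R_h'$ for the expected reward of $\M_\tau$ on an $h$-horizon i.i.d.\ $\Uni[0,1]$ game and conditioning on the first day's valuation, the first agent either has $v<\tau$ (probability $\tau$, contributing consumer surplus $v$, after which a fresh $(h-1)$-horizon game with the same distribution remains) or $v \ge \tau$ (probability $1-\tau$, and over the $h$ rented days the total consumer surplus is $hv-\tau$, after which the game ends). Using $\E[v\mid v<\tau]=\tau/2$ and $\E[v\mid v\ge\tau]=(1+\tau)/2$ this gives
\[
  R_h' \;=\; \tau\!\left(\tfrac{\tau}{2} + R_{h-1}'\right) + (1-\tau)\!\left(h\cdot\tfrac{1+\tau}{2} - \tau\right) \;=\; \tau\,R_{h-1}' \;+\; \tfrac{1-\tau^2}{2}\,h \;+\; \text{(constant in }h\text{)} .
\]
This is a first-order linear recurrence with forcing term affine in $h$; matching an affine particular solution (using $\tfrac{1-\tau^2}{2} = (1-\tau)\cdot\tfrac{1+\tau}{2}$) yields $R_h' = \tfrac{1+\tau}{2}\,h + b_\tau + c_\tau\,\tau^h$ for constants $b_\tau,c_\tau$ depending only on $\tau$. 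Since $0<\tau<1$ the transient $c_\tau\tau^h\to 0$, so $\lim_{h\to\infty} R_h'/h = \tfrac{1+\tau}{2}$, and therefore $\liminf_{n\to\infty} R_n/n \ge \tfrac{1+\tau}{2}$. Letting $\tau\uparrow 1$ gives $\liminf_{n\to\infty} R_n/n \ge 1$, and together with the upper bound we conclude $\lim_{n\to\infty} R_n/n = 1$.

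The only genuinely non-routine ingredient is the construction in the second paragraph: we need a \emph{valid} mechanism that still extracts almost all the surplus, and the point is that a high up-front (threshold) payment lets us give long, cheap rentals only to high-valuation agents without breaking truthfulness — precisely the phenomenon formalized by Lemma~\ref{lemma:neg-tradeoff-threshold-auction-truthful-and-optimal}. Everything else (the pointwise bound $\g \le 1$, the conditioning computation, and solving the scalar linear recurrence) is elementary. As an alternative to the recurrence one could instead take a slowly vanishing $\tau = \tau_n \to 1$ (e.g.\ $\tau_n = 1-n^{-1/2}$) and argue by concentration that with probability $\to 1$ a valuation exceeding $\tau_n$ appears within the first $n^{2/3}$ days, whence the realized reward is $n(1-o(1))$; but the recurrence argument above is cleaner and self-contained.
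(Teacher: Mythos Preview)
Your proof is correct and takes a genuinely different route from the paper's. The paper works with the \emph{optimal} threshold mechanism (Algorithm~\ref{alg:consumer-surplus-auc}): it computes the horizon-specific virtual values for $\Uni[0,1]$, derives coupled recurrences for the optimal thresholds $\tau_i$ and the average rewards $\ell_i=R_i/i$, invokes Lemma~\ref{lemma:avg_reward_increase_with_horizon} to get monotonicity (hence convergence) of these sequences, and then solves the fixed-point equation $L=L^2/2+1/2$ to conclude $L=1$. By contrast, you never touch the optimal mechanism at all: you bound $R_n$ from below by the reward of a \emph{sub-optimal} one-or-all threshold mechanism with a fixed $\tau$, solve its simple scalar recurrence to get $R_n'/n\to(1+\tau)/2$, and then let $\tau\uparrow 1$. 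Your argument is more elementary and self-contained---it only needs Lemma~\ref{lemma:neg-tradeoff-threshold-auction-truthful-and-optimal} for truthfulness, and avoids the machinery of horizon-specific virtual values, Theorem~\ref{theorem:threshold-mech}, and Lemma~\ref{lemma:avg_reward_increase_with_horizon}. The paper's approach, on the other hand, has the benefit of simultaneously characterizing the optimal thresholds $\tau_n$ (which are used elsewhere, e.g.\ in Lemma~\ref{lemma:consumer_surplus_filter}), whereas your family $\M_\tau$ gives no such structural information. Your parenthetical invocation of Lemma~\ref{lemma:avg_reward_increase_with_horizon} is unnecessary for the argument, since $\limsup\le 1$ and $\liminf\ge 1$ already force the limit to exist.
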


\begin{proof}
    Before we begin, observe that for $n>3$ it holds that the horizon-specific virtual value for consumer surplus is 
    \begin{align*}
        \varphi_n(v)=v-\frac{2v-1}{n-1}=\brackets{1-\frac{2}{n-1}}v+\frac{1}{n-1}.
    \end{align*}

    Since for $n>3$ this is non-decreasing in $v$, also $ \irn_n(v)=\brackets{1-\frac{2}{n-1}}v+\frac{1}{n-1}$, and thus
    \begin{align*}
        \irn_n^{-1}(x)=\frac{(n-1)x-1}{n-3}.
    \end{align*}

    Recall that Algorithm~\ref{alg:consumer-surplus-auc} is optimal for $\S_n$. Based on its precomputations in Algorithm~\ref{alg:threshold-precomputation} and due to its correctness,
    we consider two recurring relations that are intertwined with each other:
    \begin{align}
        &\tau_i=\frac{R_{i-1}-1}{i-3}=\brackets{1-\frac{1}{i-3}}\ell_{i-1}-\frac{1}{i-3},\label{eq:tau}\\
        &\ell_i\coloneq\frac{R_i}{i}=\brackets{\frac{\tau_i}{2i}+\frac{i-1}{i}\frac{R_{i-1}}{i-1}}\tau_i+\brackets{\frac{\tau_i+1}{2}-\frac{\tau_i}{i}}\brackets{1-\tau_i}\label{eq:ell}.
    \end{align}

    The above sequences start at $\ell_1=\ell_2=\ell_3=0.5$ and $\tau_2=\tau_3=1$ based on Lemma~\ref{lemma:consumer_surplus_filter}.

    Observe that since the per-day reward of a consumer surplus \swac is bounded by $1$ (the highest possible valuation), $\ell_i=R_i/i\leq1$. Additionally by Lemma~\ref{lemma:avg_reward_increase_with_horizon} the sequence $\ell_i$ is non-decreasing in $i$, and thus also $\tau_i$ is non-decreasing with $i$. It follows that $\tau_i$ and $\ell_i$ have a constant limit: denote their limits by $L_1$ and $L_2$ respectively.

    Substituting these limits into Eq.~\ref{eq:tau} and taking $i$ to infinity, we get $L_1=L_2$, so denote $L=L_1=L_2$. Now substituting this limit with $\ell_i$ and $\tau_i$ in Eq.~\ref{eq:ell} and taking $i$ to infinity we get:
    \begin{align*}
        L=L^2+\brackets{\frac{L}{2}+\frac{1}{2}}\brackets{1-L}=\frac{L^2}{2}+\frac{1}{2}.
    \end{align*}
    Solving the above equation we get $L=1$, completing the proof.
\end{proof}

Together with Lemma~\ref{lemma:best-fr-consumer-surplus} we get the following result:
\begin{corollary}\label{cor:fr-gap}
    For a consumer-surplus-maximizing rental mechanism where the agent's valuations are independently and identically distributed (i.i.d.) uniformly over $[0,1]$, the ratio of variable pricing to \fr pricing tends to $2$.

    Formally, the ratio between the optimal rental mechanism and the optimal \fr rental mechanism tends to $2$.
\end{corollary}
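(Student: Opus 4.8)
The plan is to obtain this as an immediate consequence of the two preceding lemmas by passing to the limit in the ratio. First I would fix the setting: $\g(v,p)=v-p$ is consumer surplus, $\boldsymbol{\D}$ is the i.i.d.\ $\Uni[0,1]$ sequence, and we compare, at each horizon $n$, the expected reward of an optimal (unrestricted) rental mechanism against that of an optimal \fr\ rental mechanism. By Theorem~\ref{theorem:rental_as_auctions} together with Definition~\ref{def:cost-function}, the former quantity is exactly $R_n = R_n^{\boldsymbol{\D},\g}$; by the third item of Lemma~\ref{lemma:best-fr-consumer-surplus}, the latter is exactly $0.5n$.

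Hence the ratio of interest is
\[
\frac{R_n}{0.5\,n} \;=\; \frac{R_n/n}{0.5},
\]
and the remaining step is to evaluate its limit. This is supplied by Lemma~\ref{lem:best-consumer-surplus-uni}, which gives $\lim_{n\to\infty} R_n/n = 1$; plugging this in, the displayed ratio tends to $1/0.5 = 2$, which is precisely the claim. The two directions needed to read off the statement are already baked into the cited lemmas: Lemma~\ref{lemma:best-fr-consumer-surplus} shows no \fr\ mechanism can beat $0.5n$, and Lemma~\ref{lem:best-consumer-surplus-uni} shows the unrestricted optimum asymptotically attains the information-theoretic ceiling $n$ (since $g(v,p)=v-p\le v\le 1$ per day and the over-time cost exactly offsets the opportunity cost).

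There is no real obstacle left at this point — the corollary is essentially a one-line division — so the ``hard part'' was entirely discharged earlier, in the analysis of the intertwined recurrences \eqref{eq:tau}--\eqref{eq:ell} that underlies Lemma~\ref{lem:best-consumer-surplus-uni}. The only thing to be careful about when writing it up is bookkeeping: ensuring the numerator is the optimal \emph{unrestricted} mechanism and the denominator the optimal \emph{\fr} mechanism, both taken over the same horizon $n$ and the same setting $\S_n$, so that the two lemmas line up. If one wanted a finite-horizon bound rather than the limiting ratio, the (minor) additional work would be to extract an explicit rate of convergence for $R_n/n \to 1$ from those recurrences; but for the asymptotic statement claimed here this is unnecessary.
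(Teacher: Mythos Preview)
Your proposal is correct and matches the paper's approach exactly: the paper simply states the corollary as an immediate consequence of Lemma~\ref{lemma:best-fr-consumer-surplus} (the \fr\ optimum is $0.5n$) together with Lemma~\ref{lem:best-consumer-surplus-uni} ($R_n/n\to 1$), with no additional argument beyond dividing. Your write-up is, if anything, more careful than the paper's, which does not spell out the one-line computation.
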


\section{Summary of Our Techniques}
\label{sec:our-techniques}

Our main technical contributions in this work rely on several ideas.

\paragraph{Proving monotonicity for optimal \swacs by exploiting violations.}
Since not every truthful \swac is reward- or allocation-monotone,
our results require a different style of proof compared to the standard~\cite{myerson1981optimal}. 
To establish reward- and allocation-monotonicity
for each class of reward function (except negative threshold, where reward-monotonicity does not hold),
 we assume that some optimal \swac is not monotone, and then carefully identify an ``actionable'' violation of monotonicity.
In its most general form, the violation takes the form of two disjoint valuation sets $S_1, S_2 \subseteq \V$ such that $\sup S_1 \leq \inf S_2$, yet the reward or allocation is \emph{greater} for all valuations in $S_1$ than for those in $S_2$.
Our goal now is to force all agents in one of the two sets, which has non-zero measure, %
to switch to bids that yield improved reward,
 using the existence of the other set and the monotonicity violation between the two sets to show that such a bid exists and becomes desirable for these agents.

It is easy to get agents in a truthful \swac to switch their bid by simply removing their valuations from the \swac's bidding space.
However, if we are not careful,
the results can be unpredictable:
these agents can switch to a bid
that will yield lower reward.
To prevent this,
we further shape the auction by
\begin{enumerate*}
    \item removing undesirable bids from the bidding space, so that the agents we are interested in cannot switch to them; and
    \item adding or removing \emph{filters}:
    adding a filter can
prevent agents below some threshold from bidding in a certain range,
and removing a filter ensures that a desired bid becomes available.
\end{enumerate*}
Of course, these changes may also cause \emph{other} agents to now switch their bids as well,
and care is required to ensure that this does not decrease our reward.
 Ultimately we obtain a \swac that yields better reward than the original one, contradicting its optimality.

\paragraph{Generalized virtual values and their role in optimal mechanisms.}
Myerson's virtual values~\cite{myerson1981optimal} rely on  the fact that in a standard auction,
a given allocation can only be truthfully implemented by a unique payment rule.
This is no longer true for \swacs.
Thus, we define \emph{generalized virtual values} with respect to \emph{some} optimal and truthful \swac whose per-unit reward they capture;
our generalized virtual values do not capture the per-unit reward of \emph{every} truthful \swac.
Nevertheless, we show that they are useful in developing optimal \swacs,
as they still reduce the mechanism design problem to 
that of finding a pointwise-optimal allocation rule,
analogous to Myerson's approach.
This still leaves open the problem of finding a truthful payment rule matching the allocation that we  found,
which can be delicate
and even intertwined with the properties of the generalized virtual value itself:
for example, in the case of negative reward,
the generalized virtual value that we define
is always an \emph{upper bound} on the reward,
but is only guaranteed to be a \emph{lower bound} (i.e., tight)
under a specific payment structure (threshold payments with a single threshold). 

\section{Future Directions}\label{sec:future}

We highlight several avenues for future research, including open questions and extensions of the rental game model.

\paragraph{Optimal rental mechanisms for negative tradeoff with non-identical distributions.}  
Our optimal rental mechanism for negative tradeoff reward is currently restricted to i.i.d.\ agent valuations:
the function that we define in~\eqref{eq:horizon-virv-bound}
is a generalized virtual value only for threshold mechanisms that always allocate either 1 or $n$ units.
When agent distributions are not identical,
an optimal allocation may need to allocate intermediate values 
(e.g., rent for $2$ days if a stronger agent is guaranteed to arrive in two days, but not in one day).
This causes the horizon-specific virtual value to no longer reflect the reward accurately.
It is readily seen that the function defined in Eq.~\ref{eq:horizon-virv-bound} treats all days symmetrically,
and is therefore not suitable for non-identical distributions --- although interestingly, it is still an \emph{upper bound} on the reward even in this case.
Recently, we showed in~\cite{berzack2026optimal} that if a stagewise auction has any \emph{concave} cost function, then it is still optimal to always allocate either $0$ or $n$ units; this allows extending the rental game beyond i.i.d.\ buyers, as long as the over-time cost function remains concave.
However, finding a generalization
for arbitrary non-identical agent distributions
remains an interesting open problem.

\paragraph{Competitive ratios and benchmarks.}  
Understanding the \emph{competitive ratio} of online rental mechanisms presents two key challenges: defining an appropriate benchmark and computing the ratio itself. Ideally, the benchmark for an online problem should be an \emph{offline} version of the same problem, where the entire input is known in advance. In our case this is difficult to pin down. It is natural to consider the \emph{prophet benchmark}, which is the standard in prophet inequalities (and was used, for example, in~\cite{AbelsPS23}). 
This benchmark assumes that the mechanism has full {advance knowledge} of the realized valuation sequence, but this mixes up the \emph{online} aspect --- uncertainty about future agents --- with the \emph{mechanism design} aspect, uncertainty about the \emph{current} agent's valuation.
(In revenue maximization,
for example, this is very powerful:
if the agent's valuation is known, it can always be extracted in full).
Weaker prophet models may offer a more fair comparison, but it is challenging to define a benchmark that preserves the game-theoretic aspects of the problem while eliminating the online aspects.

On the computational side, prior work~\cite{AbelsPS23} 
with known agent valuations
was only able to analyze competitive ratios for \emph{suboptimal} algorithms, and only for welfare maximization. Extending their techniques to our broader class of reward functions remains an open problem even when the agent valuations are {known} to the designer (as in~\cite{AbelsPS23}).

\paragraph{Rental extensions and market structures.}  
In our model, a rental agreement cannot be extended, but in some scenarios extensions make economic sense. Interestingly, if we allow the current renter of the asset to negotiate for an extension after their (irrevocable) initial agreement ends and the next agent arrives,
we can potentially exploit the competition between the two agents. Similar competition may arise if agents can \emph{reserve} the item ahead of time, as in many rental markets (hotels, equipment rentals, etc.).
We point out that some variations on the rental game may still be reducible to fixed sequences of 
\swacs
with modified cost functions (reflecting a different opportunity cost), in which case our structural results from Section~\ref{sec:tech-overview-mono} would still apply. Recently, we made a step in this direction~\cite{berzack2026optimal},  effectively extending the rental game to allow for multiple customers arriving every day, but still assuming one customer can rent the asset at a time.

\paragraph{Infinite horizons and discounting.}  
Our results assume a \emph{finite and known horizon}, a standard assumption in {secretary problems} and {prophet inequalities}.
It is interesting to analyze the \emph{infinite-horizon} case with \emph{discounted reward} (see, e.g.,~\cite{BalseiroML17}), where optimal strategies remain unclear.

\paragraph{Non-Bayesian and learning-based mechanisms.}  
Our model assumes \emph{prior knowledge of valuation distributions}. While this is common in auction theory, a fascinating question is: what can be achieved with limited information? Investigating mechanisms that operate with \emph{no prior knowledge} or a limited number of \emph{sampled valuations},
or mechanisms that
have no prior knowledge and
\emph{minimize regret},
is an interesting challenge.

\paragraph{Combinatorial Preferences and Learning-Based Mechanisms.}  
We assume that the agents' utilities are additive over time, but real-world preferences are often more complex. Some agents may have a preferred rental duration, while others may experience diminishing returns. Extending rental mechanisms to accommodate such combinatorial preferences is an open question.
Additionally, our analysis assumes \emph{prior knowledge of valuation distributions}. While this is common in auction theory, a key question is: what can be achieved with limited information? Investigating mechanisms that operate with \emph{no prior knowledge} or a limited number of \emph{sampled valuations},
or mechanisms that \emph{minimize regret},
is an interesting challenge.

\section*{Acknowledgments}
We thank Zohar Barak for helpful comments on an earlier draft and additional feedback. We also thank Michal Feldman, Amos Fiat, Adi Fine and Omri Porat for valuable discussions and suggestions.
This work
is funded by the European Union (ERC grant No. 101077862, project: ALGOCONTRACT, PI:
Inbal Talgam-Cohen), by the Israeli Science Foundation (Grants No. 2801/20, 3725/24 and 3331/24), by a Google Research Scholar Award, and by the United States-Israel Binational Science Foundation (BSF grant no.~2021680).

\bibliography{main}

\appendix
\section{Appendix Organization}
\begin{itemize}
    \item In Appendix~\ref{app:generelizations-and-definitions} we give the general definitions of the rental game, rental mechanism and the \swac, including some additional notation. We also prove the simplified version considered in the main body is w.l.o.g., and give an overview of \oir.
    \item In Appendix~\ref{app:rental-to-swacs} we provide the full results on the reduction from the rental mechanism design problem to the problem of predetermining an optimal sequence of \swacs.
    \item In Appendix~\ref{app:rental-mechs} we provide the full ironing procedure definition, together with some missing proofs from Section~\ref{sec:tech-overview-rental-mechs}, finding the optimal rental mechanisms.
\end{itemize}

\section{Our Model: Generalizations and Formal Definitions}\label{app:generelizations-and-definitions}
In this section we give general definitions of the rental game and the stagewise auction with seller cost setting, for which we present simplified versions in Section~\ref{subsec:rental}. We then go on to show that the simplifications are without loss of generality, which allows us to focus on the simplified versions in the main body of the paper. The main generalizations are that the stagewise auction can include randomization, as well as back-and-forth interaction between the designer and each agent.

\subsection{General Definitions}\label{sec:general}
The rental game setting and the stagewise auction setting are the same settings as described above, but the \swac itself, and thus the rental mechanism, is defined differently.

\paragraph{Stagewise auctions.} An auction $\A$ is a designer algorithm that governs the process of deciding on the number of units sold and the payment schedule, based on the agent behavior. The process of the auction is as follows:
\begin{itemize}
    \item For $i=1,\dots,n$:
    \begin{itemize}
        \item Designer declares the bidding set $\B_i$, and a mapping $\spay_i:\B_i\rightarrow\preals\cup\set{\psi}$, which represents the payment an agent would need to make for the $i$th unit, or a termination of the auction (given a bid $\psi$).
        \item Agent submits a bid $b\in\B_i$
        \item If $b=\psi$, terminate the auction.
        \item Else, agent receives the $i$th unit after paying $\spay_i(b)$
    \end{itemize}
\end{itemize}
The bidding sets $\B_i$ and mappings $\spay_i$ can be randomized, and the agent can have a mixed (randomized) strategy. Additionally the designer's choices of bidding sets and mappings at each day can be dependent on the history, that is, the previous bids of the agent. The agent is aware of the algorithm the designer uses to determine the future behavior of the auction.

Observe that at each day, the choice of the mapping $\spay_i$ can be random, but once $\spay_i$ is decided the payments per bid are fixed. It is an open question whether allowing the agent to buy lotteries would benefit the designer, for example, given bid $b$ the $i$th unit will be sold at price $p$ with probability $q$ and given away for free otherwise.

The \emph{history} of a \swac at some day is a pair $(R,\preals^\ast)$ of the randomness $R$ used by the designer, and the bids submitted by the agent up to the current day. A \emph{strategy} of an agent is a sequence of bids, and we denote by $\sigma^{\A}(v)$ the strategy chosen by the utility-maximizing agent $v$.

For auction $\A$ and agent $v$, we denote by $\xhat[\A]{v}$ the actual allocation of the agent, that is, $\xhat[\A]{v}=\x[\A]{\sigma^{\A}(v)}$. Similarly, we denote $\phat[\A]{v}=\p[\A]{\sigma^{\A}(v)}$ as the \emph{total} payment made by agent $v$, when they bid according to their strategy. 
For payments $p_1,\dots,p_{\x{b}}$ induced by bid $b$, the cumulative average payment at day $\ell$ is defined as $\insumi{1}{\ell}p_i/\ell$.

\paragraph{Rental mechanism and history.} The history of the rental game is defined as  a pair $(R,\preals^\ast)$ of the randomness $R$ used by the designer, and the bids submitted by all the agents up to the current timestep. From this the designer can deduce, deterministically, what the horizon at the current timestep is, and whether the asset is available or not. In the general definition, a rental mechanism $\M$ is defined just as in the simplified version, but with a general history and general \swacs.

\subsection{Simplifications of Stagewise Auctions are W.L.O.G.}\label{app:simplification-are-wlog}

\begin{lemma}\label{lemma:auction_wlog_deterministic}
    A \swac is w.l.o.g.  
 deterministic.
\end{lemma}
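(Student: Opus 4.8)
The plan is to show that any randomized \swac can be replaced by a deterministic one with at least the same expected net reward, by a standard averaging/derandomization argument that respects the \swir constraint and truthfulness. First I would fix a randomized \swac $\A$ and recall that its randomness $R$ is drawn by the designer \emph{before} any interaction (or can be re-sampled day by day, but can be coupled into a single seed $R$ since the agent sees the algorithm, not the realized coins). Condition on a realization $r$ of $R$: this yields a deterministic auction $\A_r$ in which, on each day, the bidding set $\B_i$ and payment map $\spay_i$ are fixed functions of the history of past bids. Since the agent is utility-maximizing and, by assumption, deterministic with ties broken in the seller's favor, for each valuation $v$ the agent picks a best-response strategy $\sigma^{\A}(v)$ against the \emph{distribution} over $r$; but I want to compare against the agent best-responding to each $\A_r$ separately. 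The key observation is that the agent's expected utility from a fixed strategy is linear in the distribution over $r$, so
\[
\E_{v\sim\D}\E_{r}\bigl[\trew[\A_r]{v}\bigr] \;\le\; \E_{v\sim\D}\max_{r}\E\bigl[\text{something}\bigr]
\]
— more precisely, there exists at least one realization $r^\ast$ with $\E_{v\sim\D}\bigl[\trew[\A_{r^\ast}]{v}\bigr] \ge \E_{r}\E_{v\sim\D}\bigl[\trew[\A_r]{v}\bigr] = \E[\A]$, simply because an average cannot exceed its maximum. So I would let $\A^\ast := \A_{r^\ast}$.

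The main obstacle — and the step that needs care — is that $\trew[\A_{r^\ast}]{v}$ must be computed with the agent best-responding to the \emph{deterministic} auction $\A_{r^\ast}$, whereas in $\E[\A]$ the agent best-responds to the randomized $\A$. I would argue this only helps the comparison in the direction we need, or handle it by a cleaner route: instead of conditioning on the full seed, observe that by the revelation-principle-style simplification already invoked in the paper (the simplified single-bid form, Lemma~\ref{lemma:simplifications-wlog}), it suffices to treat \swacs as maps from a bid $b$ to a (possibly random) payment schedule. A randomized \swac is then a distribution over deterministic allocation/payment rules $\{(\x[r]{\cdot},\p[r]{\cdot})\}_r$. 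For \swir\ and truthfulness to hold for the \emph{randomized} mechanism, each realization must already be \swir\ (an agent will not risk a realization with temporary negative utility, since \swir\ is an ex-post/per-round constraint the agent self-imposes against every contingency), and the agent's reported bid is a best response to the mixture. Fixing $r^\ast$ achieving the above inequality, $\A_{r^\ast}$ is \swir; it may fail to be truthful, but I then apply the simplification/truthfulness reduction (Lemma~\ref{lemma:simplifications-wlog}) to $\A_{r^\ast}$ to obtain a truthful deterministic \swac with no loss of expected reward, since that lemma converts any \swir\ deterministic mechanism into a truthful one implementing the same outcome function up to the agent's own best response.

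In summary, the steps in order are: (1) write the randomized \swac as a distribution over deterministic ones, noting per-realization \swir; (2) use linearity of expectation over the coins to pick a realization $r^\ast$ whose expected net reward is at least the overall expected net reward $\E[\A]$; (3) invoke the already-established reduction to a truthful, single-bid deterministic form to clean up $\A_{r^\ast}$; (4) conclude that the resulting deterministic \swac is \swir, truthful, and has expected net reward $\ge \E[\A]$, so randomization is without loss of generality. I expect step (1)–(2) to be essentially routine, and the only genuinely delicate point to be verifying that fixing the seed does not \emph{break} the incentive guarantees in a way that the subsequent truthfulness reduction cannot repair — which is exactly why the paper proves this lemma in the appendix right after stating the general (randomized, interactive) model.
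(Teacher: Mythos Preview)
Your core idea---view the randomized \swac\ as a distribution over deterministic auctions and pick a realization whose expected net reward is at least the average---is exactly the paper's argument. The paper simply sets $R=\trew[\A]{v}$, observes that some $s$ must satisfy $\E[R\mid S=s]\ge\E[R]$, and declares $\A':=\A\mid S=s$ to be the required deterministic auction; it performs no further ``cleanup'' step.

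The genuine gap in your proposal is the circular appeal to Lemma~\ref{lemma:simplifications-wlog}. In the paper, that lemma is proved \emph{after} Lemma~\ref{lemma:auction_wlog_deterministic}, and its proof opens with ``Let $\A$ be a deterministic \swac\ (w.l.o.g.\ due to Lemma~\ref{lemma:auction_wlog_deterministic})''---so you cannot invoke it here. You also use the single-bid reduction to justify treating a \emph{randomized} \swac\ as a map from bids to random payment schedules, but Lemma~\ref{lemma:simplifications-wlog} is stated and proved only for deterministic auctions, so it does not license that pre-reduction either. Finally, even setting circularity aside, Lemma~\ref{lemma:simplifications-wlog} converts a given deterministic auction into a truthful simplified one without decreasing \emph{its own} expected reward; it does not compare the reward of $\A_{r^\ast}$ under a re-optimizing agent to the conditional reward $\E[R\mid S=r^\ast]$ under the original strategy, so it would not close the gap you (rightly) flagged. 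The paper's proof simply identifies $\E[R\mid S=s]$ with the reward of $\A\mid S=s$ and does not engage with that concern; if you want to make it airtight you must argue the re-optimization point directly, without leaning on Lemma~\ref{lemma:simplifications-wlog}.
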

\begin{proof}
    Let $\A$ be a \swac with randomization, and denote its randomness by $S$. Recall that a random auction is a distribution over deterministic auctions. We will show that one of the deterministic auctions in the support of $\A$ yields as much reward as $\A$. Denote $R=\trew[\A]{v}$, which is the random variable whose expectation we aim to maximize.
    
     Assume for the sake of contradiction that for all $s\in S$ it holds that $\E[R\given S=s]<\E[R]$. Define the random variable $Y=\E[R]-R$, and observe that due to  our assumption,
     \begin{align}
         &\E[Y\given S]=\E[R]-\E[R|S]>0\label{eq:random-to-det}\\
         &\then \E[Y\given S]=|\E[Y\given S]|.
     \end{align}

    Using the law of total expectation,     
     \begin{align}\label{eq:total-prob}
         0=\E[Y]=\E[\E[Y\given S]]=\E\sqbr{|\E[Y\given S]|}.
     \end{align}

    From expectation properties, equation~(\ref{eq:total-prob}) yields $\E[Y|S]=0$, but this is a contradiction to equation~(\ref{eq:random-to-det}).

    Thus there is some $s$ for which $\E[R\given S=s]\geq\E[R]$, proving that the deterministic auction $\A'\coloneq \A\given S=s$ has at least as much reward as $\A$.
\end{proof}

\begin{lemma}\label{lemma:simplifications-wlog}
    Let $\A$ be a \swac in setting $\S$. There exists a \swac $\A'$ for the same setting, such that:
    \begin{enumerate}
        \item $\A'$ takes the simplified \swac form, that is, the agent submits a single bid at the beginning of the auction, resulting in a deterministic allocation and per-day payments.
        \item $\A'$ is truthful.
        \item $\E\sqbr{\A'}\geq\E\sqbr{\A}$.
    \end{enumerate}
\end{lemma}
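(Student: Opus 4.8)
The plan is to prove Lemma~\ref{lemma:simplifications-wlog} in two stages: first reduce a general (possibly randomized, possibly interactive) \swac\ to an equivalent deterministic single-bid \swac, then apply the revelation principle to obtain truthfulness without loss of reward. For the first stage, I would invoke Lemma~\ref{lemma:auction_wlog_deterministic} to replace $\A$ by a deterministic \swac\ $\A_0$ with $\E[\A_0]\ge\E[\A]$; this already kills randomization both in the designer's declared bidding sets/payment mappings and, after fixing the designer's coin flips, lets us take the agent's best response to be deterministic as well (ties broken for the seller, as stipulated in the model). The remaining gap is that $\A_0$ may still involve back-and-forth: the agent submits a sequence of bids $b_1,\dots,b_k$ (one per day, possibly terminating early with $\psi$), with each day's bidding set and payment map depending on the history of prior bids.

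The key step is to collapse this interaction into a single up-front bid. Since $\A_0$ is deterministic and the agent knows the designer's algorithm, a utility-maximizing agent with valuation $v$ has a well-defined optimal strategy $\sigma^{\A_0}(v)$, which is simply a full sequence of bids (one for each day it reaches). So I would let the new \swac\ $\A_1$ have bidding set $\B = $ the set of all such full bid-sequences that are ``feasible'' (consistent with $\A_0$'s history-dependent bidding sets), and on receiving a bid-sequence $\vec b\in\B$, $\A_1$ simulates $\A_0$ against an agent who plays exactly $\vec b$, outputting the resulting allocation $\x[\A_0]{\vec b}$ and the induced per-day payment schedule $(p_1,\dots,p_{\x[\A_0]{\vec b}})$. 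This is a legitimate simplified-form \swac: a single deterministic bid determines a deterministic allocation and payment schedule. Crucially, the set of outcomes (allocation, payment schedule, utility at each prefix) available to agent $v$ in $\A_1$ is exactly the set available in $\A_0$, so an agent's best achievable utility is unchanged, and --- because the agent picks the same outcome --- the seller's reward from each valuation is identical: $\trew[\A_1]{v}=\trew[\A_0]{v}$, hence $\E[\A_1]=\E[\A_0]\ge\E[\A]$. I should check the \swir\ bookkeeping here: the per-day utility trajectory under $\vec b$ in $\A_1$ is the same cumulative-utility trajectory the agent saw in $\A_0$, so a \swir\ agent rejects $\vec b$ in $\A_1$ iff it would have abandoned that trajectory in $\A_0$; thus feasibility of bids for \swir\ agents is preserved.

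For the truthfulness stage, I would apply the revelation principle to $\A_1$: define $\A' $ with bidding set $\B' = \V$, where on report $v'$ the mechanism computes $\sigma^{\A_1}(v')$ (the optimal bid-sequence for valuation $v'$) and runs $\A_1$ on it. Then reporting truthfully is weakly dominant --- an agent with valuation $v$ who reports $v'$ obtains exactly the outcome $v$ would get by deviating to $\sigma^{\A_1}(v')$ in $\A_1$, which is no better than $\sigma^{\A_1}(v)$ by definition of optimal play (again with ties broken toward the seller, so the seller's tie-break is inherited). Since truthful agents reproduce the behavior of optimally-playing agents in $\A_1$, we get $\trew[\A']{v}=\trew[\A_1]{v}$ for all $v$, so $\E[\A']=\E[\A_1]\ge\E[\A]$, and \swir\ is preserved because $\sigma^{\A_1}(v)$ is \swir-feasible for agent $v$ by construction. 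Note $\A'$ still has the simplified single-bid form, so all three claimed properties hold.

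The main obstacle is being careful about the definition of ``optimal strategy'' and its measurability/well-definedness: the agent's strategy space is the (potentially large) set of feasible bid-sequences, and I need the argmax over utility (with the seller-favorable tie-break) to be well-defined for every $v$ and to depend on $v$ in a way that does not break the seller's expected-reward computation. In the finite-horizon deterministic setting the strategy space per agent is effectively a finite tree of choices once the designer's randomness is fixed, so existence of an optimal strategy is immediate by finiteness (or, if bidding sets are infinite, by the fact that only the induced outcome --- allocation and payment schedule --- matters, and there are finitely many allocations, with payments pinned down by the auction), and the seller-favorable tie-break makes $\trew$ a deterministic function of $v$; so I expect this obstacle to be surmountable with a short argument rather than a genuine difficulty. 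A secondary point to get right is that ``feasible'' bid-sequences for $\A_1$ must include early termination via $\psi$, and that an agent who would terminate early in $\A_0$ is represented by a bid-sequence of the corresponding shorter length with the same partial allocation and payments.
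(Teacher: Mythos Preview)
Your proposal is correct and follows essentially the same approach as the paper: reduce to a deterministic auction via Lemma~\ref{lemma:auction_wlog_deterministic}, then apply the revelation principle by mapping each reported valuation to that type's optimal strategy. The only difference is cosmetic---you insert an explicit intermediate auction $\A_1$ whose bid space is the set of feasible bid-sequences before passing to the direct-revelation mechanism $\A'$, whereas the paper collapses these two steps into one; the paper also spends a paragraph explicitly handling the case where the agent's original strategy was mixed (arguing by contradiction that the seller's reward cannot drop when replacing the mixed strategy by a single utility-maximizing pure strategy), a point you dispatch by invoking seller-favorable tie-breaking.
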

\begin{proof}
    Let $\A$ be a deterministic \swac (w.l.o.g. due to Lemma~\ref{lemma:auction_wlog_deterministic}).
    Let $v\in\V$. If agent $v$ has a pure strategy, denote 
    $x_v=\xhat[\A]{v}$ (this is a constant) and 
    denote by $p_{v_1},\dots,p_{v_{x_v}}$ the payment schedule of $v$ in $\A$. Otherwise, agent $v$ has a mixed strategy that is a convex combination of different strategies. Since $\A$ is deterministic, there is a single strategy $s_v$ with allocation $x_v$ and payment schedule $p_{v_1},\dots,p_{v_{x_v}}$ that results in at least as much utility for $v$ as the mixed strategy.

    Define a new \swac $\A'$ that accepts a single bid from $\V$ at the beginning of the auction. For a bid $v\in\V$ it allocates $x_v$ units, and the payment schedule is exactly $p_{v_1},\dots,p_{v_{x_v}}$. It is easy to see that $\A'$ takes the simplified \swac form and that it is truthful, since all agents can get at least as much utility in $\A'$ by bidding truthfully as they did in $\A$, and there are no new payment schedules (so there is no new option that was not available in $\A'$).

    For an agent $v$ with pure strategy we have $\trew[\A']{v}=\trew[\A]{v}$ exactly, as the outcomes are exactly the same. For an agent $v$ with mixed strategy, a distribution over a set of strategies $\Sigma$, suppose for the sake of contradiction that $\trew[\A']{v}<\trew[\A]{v}$. Since $\trew[\A']{v}=\rew[\A]{v}{s_v}$, there is a subset $\Sigma^\ast\subseteq\Sigma$ of strategies with $\Pr\sqbr{\Sigma^\ast}$ that result in higher designer reward than $s_v$. Let $s^\ast\in\Sigma^\ast$.
    Since tie-breaking is in favor of the designer, we get that $\util[\A]{v}{s^\ast}<\util[\A]{v}{s_v}$. But this contradicts the utility-maximization of $v$ in $\A$, because removing $\Sigma^\ast$ from the pool of strategies $v$ might play in $\A$ would strictly increase $v$'s utility in $\A$. Therefore we deduce that $\trew[\A']{v}\geq\trew[\A]{v}$, resulting in $\E\sqbr{\A'}\geq\E\sqbr{\A}$.
\end{proof}

\subsection{\Oir: Structural Results}\label{appendix:oir}

In the body of the paper we focus on \swir, in which the agent requires nonnegative utility at every stage of the game. In contrast, we can consider the \oir setting, in which the agent can go into temporary deficit, as long as their overall utility will be nonnegative. Formally, individual rationality requires that for every agent $v$, there is some bid $b$ such that:
\begin{align*}
     \util{v}{b}\coloneq\utilformula{v}{b}\geq0
\end{align*}

In this setting, the reductions shown in Appendix~\ref{app:simplification-are-wlog} are still relevant and important. Once the rental problem is reduced to history-independent stagewise auctions with an over-time cost function, the mechanism design problem in the \oir setting is much simpler when compared to the \swir setting, especially due to Lemma~\ref{lemma:overall-ir-oneshot}.

In the \oir setting Myerson's lemma applies, since the over-time flavor of the stagewise auction has no practical effect, and thus a truthful stagewise auction is indeed \samono (see Definition~\ref{def:allocation-monotonicity}) and requires the unique payment rule of Myerson. We provide the proof here for conciseness. 

\begin{lemma}\label{lemma:oir-alloc-mono}
    In the \oir setting, all truthful stagewise auctions with seller cost are \samono.
\end{lemma}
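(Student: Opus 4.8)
The plan is to observe that under \oir\ the stagewise/temporal structure of the auction is immaterial: an agent with valuation $v$ who submits bid $b$ cares only about the overall utility $\x{b}\, v - \p{b}$, so the design problem collapses to a classical single-parameter one-shot auction whose ``allocation'' is the total number of units $\x{b}$ and whose ``payment'' is the total payment $\p{b}$. Accordingly, the first step is to invoke Lemma~\ref{lemma:simplifications-wlog} (whose reductions, as noted above, remain valid in the \oir\ model) to restrict attention to a truthful \swac\ in the simplified form, where the agent submits a single bid and bidding truthfully is weakly dominant. Note that the seller cost $c$ never enters the agent's optimization, so it plays no role in this argument.

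With that reduction in place, I would run the standard weak-monotonicity (cycle-monotonicity) argument of Myerson on the pair $(\x{\cdot}, \p{\cdot})$. Fix valuations $w < v$. Truthfulness of agent $v$ (who weakly prefers bidding $v$ to bidding $w$) and of agent $w$ (who weakly prefers bidding $w$ to bidding $v$) give
\begin{align*}
    \x{v}\, v - \p{v} &\geq \x{w}\, v - \p{w}, \\
    \x{w}\, w - \p{w} &\geq \x{v}\, w - \p{v}.
\end{align*}
Adding the two inequalities and cancelling $\p{v} + \p{w}$ from both sides yields $\x{v}\, v + \x{w}\, w \geq \x{w}\, v + \x{v}\, w$, i.e.\ $(\x{v} - \x{w})(v - w) \geq 0$. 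Since $v > w$, this forces $\x{w} \leq \x{v}$, which is exactly the definition of \samono. (The analogous integral/envelope characterization of the total payment then gives the companion claim that such auctions must use Myerson's unique payment rule, as stated in the surrounding discussion.)

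I do not expect a genuine obstacle here: the content is entirely the classical Myerson monotonicity lemma, and the only point requiring care is the justification that the \oir\ model legitimately reduces to the one-shot form with the agent optimizing over the pair $(\x{b},\p{b})$ alone. This is precisely what distinguishes \oir\ from \swir: \oir\ imposes no filter-type restriction on which bids an agent may submit (only that \emph{some} bid yields nonnegative overall utility), so no bid is ever ``unavailable'' to a higher-valuation agent, and the two truthfulness inequalities above are unconditionally valid. Once this is observed, the proof is a few lines, and the subsequent Lemma on Myersonian payments follows from the same reduction.
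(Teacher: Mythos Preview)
Your proposal is correct and takes essentially the same approach as the paper: both write down the two truthfulness inequalities $\util{v}{v}\ge\util{v}{w}$ and $\util{w}{w}\ge\util{w}{v}$ and combine them to obtain $(\x{v}-\x{w})(v-w)\ge 0$, which is the standard Myerson weak-monotonicity argument. One minor remark: the invocation of Lemma~\ref{lemma:simplifications-wlog} is superfluous here, since the lemma's hypothesis already assumes the \swac\ is truthful (and the statement asserts monotonicity of \emph{all} such auctions, not merely of some optimal representative), so you can proceed directly to the two inequalities.
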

\begin{proof}
    Let $\auc{A}$ be a truthful stagewise auction with seller cost. Let $w,v\in \V$ such that $w<v$. In the \oir setting, due to the utility maximization of the agents, we have that $\util{w}{w}\geq \util{w}{v}$ and $\util{v}{v}\geq \util{v}{w}$. Using the definition of $ \util{\cdot}{\cdot}$ in the \oir setting, we have $(\x{w}-\x{v})w\geq\p{w}-\p{v}$ and $(\x{w}-\x{v})v\leq\p{w}-\p{v}$, thus $(\x{w}-\x{v})w\geq (\x{w}-\x{v})v$. Since  $w<v$ it immediately follows that $\x{w}\leq \x{v}$.
\end{proof}

In addition, we have the following immediate result from this setting.
\begin{lemma}\label{lemma:overall-ir-oneshot}
    In the \oir setting, all truthful stagewise auctions with seller cost are \fr without loss of generality.
\end{lemma}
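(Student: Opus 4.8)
In the \oir setting, all truthful stagewise auctions with seller cost are \fr without loss of generality.

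The plan is to leverage the preceding Lemma~\ref{lemma:oir-alloc-mono}, which already gives us that every truthful \swac in the \oir setting is \samono, together with the observation that under \oir the agent cares only about the \emph{total} payment and never about the per-day schedule (there are no filters, since a \swir-style barrier to entry is meaningless when the agent only checks overall utility). So the only constraints that the payment schedule must satisfy are (i) the total payment as a function of the bid, and (ii) incentive compatibility, both of which depend solely on $\p{\cdot}$ and $\x{\cdot}$, not on how the total is spread across days. This means that given any truthful \swac $\A$, I can define $\A'$ with the same allocation rule $\x{\cdot}$ and the same total payments $\p{\cdot}$, but with each bid's payment schedule flattened to the uniform schedule $(\p{b}/\x{b},\ldots,\p{b}/\x{b})$.

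The key steps, in order: First, invoke Lemma~\ref{lemma:oir-alloc-mono} to get that $\A$ is \samono, hence by the standard argument (essentially Myerson) the pair $(\x{\cdot},\p{\cdot})$ is an implementable monotone allocation with total payments; in particular $\p{\cdot}$ follows Myerson's unique payment rule given $\x{\cdot}$. Second, construct $\A'$ as above: same bidding set, same allocation, total payment $\p[\A']{b}=\p[\A]{b}$ for each $b$, spread evenly over the $\x[\A]{b}$ days. Third, verify that $\A'$ is still \oir and truthful: since the \oir agent's utility from bid $b$ is $\util{v}{b}=\x{b}v-\p{b}$, which is unchanged between $\A$ and $\A'$ for every bid, the agent's entire preference order over bids is identical in $\A$ and $\A'$, so bidding truthfully remains weakly dominant, and the IR guarantee (some bid with nonnegative overall utility) carries over verbatim. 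Fourth, observe that the designer's reward is unchanged: by assumption on $\g$, the designer's reward from selling $k$ units with payments summing to a fixed total is independent of the split, so $\trew[\A']{v}=\trew[\A]{v}$ for all $v$ and hence $\E[\A']=\E[\A]$. Finally, $\A'$ is \fr by construction, which completes the proof.

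I do not expect a serious obstacle here; the argument is essentially bookkeeping once Lemma~\ref{lemma:oir-alloc-mono} is in hand. The one point that deserves a careful sentence is the edge case $\x{b}=0$, where the ``uniform schedule'' is the empty schedule with total payment $0$ (which must indeed be $0$ by normalization and truthfulness, since an agent allocated nothing would never pay); this is consistent and requires no special handling. The only other thing to state cleanly is \emph{why} the per-day split is irrelevant both to the agent (because \oir utility depends only on the final payment) and to the designer (because of the linearity/additivity assumption on $\g$ in its payment argument, i.e. $\sum_i g(v,p_i)$ depends only on $\sum_i p_i$ when the allocation is fixed) --- this contrast with the \swir case, where filters make the split meaningful, is exactly what makes \oir tractable and is worth flagging explicitly.
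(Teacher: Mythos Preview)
Your proposal is correct and takes essentially the same approach as the paper: flatten each bid's payment schedule to the uniform rate $\p{b}/\x{b}$, then observe that under \oir{} both the agent's utility and the designer's reward depend only on the \emph{total} payment, so the agent's bid and the designer's reward are unchanged. The only difference is that you invoke Lemma~\ref{lemma:oir-alloc-mono} and Myerson's unique payment rule as a first step, but neither is actually needed for this lemma --- monotonicity and the payment characterization play no role in the flattening argument, so that step can be dropped.
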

\begin{proof}
    Given an auction $\auc{A}$ with variable prices for some bids, we can design a new auction $\auc{A'}$ which has the same outcomes $\x{\cdot}$ and $\p{\cdot}$ for all bids, but with uniform prices, i.e. it is \fr. Since the utility of the agents in this setting is not affected by the payment schedule and only by the total payment, just as the designer's reward, all agents will submit the same bid in $\auc{A}$ and $\auc{A'}$, thus $\E\sqbr{\trew[A]{v}}=\E\sqbr{\trew[A']{v}}$ and we are done.
\end{proof}

Once we proved allocation-monotonicity and \fr payments schedules w.l.o.g., we can skip ahead to Section~\ref{sec:fr-mech}, where the optimal rental mechanism is detailed: all the conditions apply, for all reward functions.

\section{Full Results and Proofs for Section~\ref{subsec:rental-to-swacs}: Rental to Stagewise Auctions}\label{app:rental-to-swacs}
In this section we provide all the details and proofs needed for the two-part reduction of the rental game from an \emph{online} problem to an \emph{offline} problem, presented in Section~\ref{subsec:rental-to-swacs}. 

\subsection{Required Definitions and Notations}
For the reductions and definition in this section, all subscripts will refer to the \emph{horizon} of the rental, as opposed to the \emph{timestep}, since the proofs here depend on the horizon of the game rather than the current time. Specifically: $\history_h$ is the collection of bids given \emph{before} horizon $h$, and $\ind{h}$ is the indicator for the event that the asset was available at horizon $h$.

Given a rental mechanism $\mech{M}$ and a stagewise $h$-auction $\A_h$, we denote by $\brackets{\mech{M}_{-h},\A_h}$  a rental mechanism defined as,
\begin{align*}
    \brackets{\mech{M}_{-h},\A_h}(\history_l)=\begin{cases}
        \mech{M}(\history_l),& l\neq h\\
        \A_h,& l=h
    \end{cases},
\end{align*}
where $\history_l$ is the history at horizon $l$. That is, $\brackets{\mech{M}_{-h},\A_h}$ follows $\M$ for all timesteps, except for horizon $h$ in which $\A_h$ is used.

Given $n$ stagewise auctions $\A_n,\dots, \A_1$ for $n,\dots,1$ units (resp.), we define the rental mechanism $ \brackets{\A_n,\dots,\A_1}$ as:
\begin{align*}
    \brackets{\A_n,\dots,\A_1}(\history_h)=\A_h.
\end{align*}

Additionally we  define the $n$-rental mechanism $\brackets{\M,\A_h,\dots,\A_1}$ as:
\begin{align}
    \brackets{\M,\A_h,\dots,\A_1}(\history_l)=\begin{cases}
        \M(\history_l),& l>h\\
        \A_l,& l\leq h
    \end{cases}.
\end{align}
This rental mechanism can be history-dependent until horizon $h+1$, after which it becomes history-independent. Similarly we define $\brackets{\A_h,\M}$ when $\M$ is an $(h-1)$-rental mechanism.

Using the definition of \emph{reward} and our assumptions on the reward function, we have that given histories $\history_n,\dots,\history_1$ and agent valuations $v_n,\dots,v_1$,  for rental mechanism $\M$:
\begin{align*}
    \trew[\M]{v_n,\dots,v_1}&=\sumof{l}{1}{n}\ind[\M]{l}\xhat[\M(\history_l)]{v_l}\g\brackets{v_l,\frac{\phat[\M(\history_l)]{v_l}}{\xhat[\M(\history_l)]{v_l}}}.
\end{align*}

We define \emph{subgame optimal} in a recursive manner. 
\begin{definition}
    An $h$-stagewise auction $\A_h^*$ is said to be \textnormal{subgame optimal} if, given $h-1$ subgame optimal auctions $\A_{h-1},\dots,\A_1$ for horizons $h,\dots,1$,
    \begin{align*}
    \brackets{\A_h^*,\A_{h-1},\dots,\A_1}\in\arg\max_{\A_h}\E\sqbr{\brackets{\A_h,\A_{h-1},\dots,\A_1}}
    \end{align*}
\end{definition}

\subsection{Part 1: History Independence}\label{appendix:history-independent}

\begin{lemma}\label{lemma:auction-history-independent}
Let $\A_h$ be a truthful subgame optimal $(h,\D_h,\g,c)$-stagewise auction, for all $h\in\nat$ and distributions $\D_1,\D_2,\dots$. It holds that $\brackets{\A_n,\dots,\A_1}$ is an optimal $(n,\boldsymbol{\D},\g)$-rental mechanism, where $\boldsymbol{\D}=\brackets{\D_n,\dots,\D_1}$.
\end{lemma}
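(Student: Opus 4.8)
The plan is to prove the statement by induction on the horizon $n$, using the ``one-shot deviation'' structure already baked into the definition of subgame optimality. The base case $n=0$ (or $n=1$) is trivial. For the inductive step, fix subgame-optimal truthful auctions $\A_h$ for all $h$, and let $\M^{\textnormal{opt}}$ be any optimal $(n,\boldsymbol{\D},\g)$-rental mechanism; by Lemma~\ref{lemma:simplifications-wlog} and Lemma~\ref{lemma:auction_wlog_deterministic} we may assume it is deterministic and that each $\M^{\textnormal{opt}}(\history)$ is a truthful simplified \swac. I would first argue that we may replace $\M^{\textnormal{opt}}$ at horizon $n$ by $\A_n$ without loss. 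The key observation is that, by independence of the agents' valuations, the reward decomposes: conditioned on the asset being available at horizon $n$ (equivalently, on day $1$, since $\ind[\M]{n}=1$ always on the first day), the expected total reward of any mechanism $\M$ equals
\begin{equation*}
    \E_{v\sim\D_n}\Bigl[\textstyle\sum_{i=1}^{\xhat[\M(\history_n)]{v}} g\bigl(v, p_i\bigr) + R_{n-\max\{\xhat[\M(\history_n)]{v},1\}}^{\text{(continuation)}}\Bigr],
\end{equation*}
where the continuation term is the expected reward the mechanism $\M$ collects on the remaining horizons given that $\xhat[\M(\history_n)]{v}$ units were allocated to the first agent. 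Because the remaining agents' valuations are independent of $v$ and of the designer's randomness so far, the continuation reward starting from horizon $n-\max\{x,1\}$ is at most $R_{n-\max\{x,1\}}$, the optimal $(n-\max\{x,1\},\boldsymbol{\D}_{\text{last }\cdot},\g)$-rental reward, with equality achievable. This is exactly the structure encoded by the over-time cost function of Definition~\ref{def:cost-function}: the first-day interaction of $\M$ is (weakly dominated by) a stagewise auction with horizon $n$ and cost $c_{n,\g}^{\boldsymbol{\D}}$.

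The next step is to make this precise via the $(\M_{-h},\A_h)$ notation: I would show that replacing $\M^{\textnormal{opt}}$'s first-day behavior with $\A_n$ and its post-horizon-$n$ behavior with subgame-optimal auctions $\A_{n-1},\dots,\A_1$ does not decrease expected reward. Concretely, by definition of subgame optimality, $\brackets{\A_n,\A_{n-1},\dots,\A_1}$ maximizes $\E\sqbr{\brackets{\A_n,\A_{n-1},\dots,\A_1}}$ over the choice of the horizon-$n$ auction, given that $\A_{n-1},\dots,\A_1$ are themselves subgame optimal for the shorter games. By the inductive hypothesis, $\brackets{\A_{n-1},\dots,\A_1}$ is an \emph{optimal} $(n-1,\boldsymbol{\D}_{\text{last }n-1},\g)$-rental mechanism, and more generally the tails $\brackets{\A_{h},\dots,\A_1}$ are optimal for their respective games, so ``subgame optimal'' tails coincide with ``optimal'' tails. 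Hence it suffices to compare $\brackets{\A_n,\A_{n-1},\dots,\A_1}$ against $\M^{\textnormal{opt}}$ directly: using the decomposition above, $\E\sqbr{\M^{\textnormal{opt}}} \le \E_{v\sim\D_n}\bigl[\textsc{Rew}\text{ from first agent under some horizon-}n\text{ \swac}\bigr] + \E[\text{optimal continuation}]$, and the right-hand side is exactly what $\brackets{\A_n,\A_{n-1},\dots,\A_1}$ achieves because $\A_n$ is an \emph{optimal} $(n,\D_n,\g,c_{n,\g}^{\boldsymbol{\D}})$-stagewise auction (subgame optimality together with the inductive hypothesis forces this).

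I expect the main obstacle to be the bookkeeping that connects the abstract ``subgame optimal'' definition to the concrete over-time cost function, i.e., verifying carefully that the expected reward of any history-dependent mechanism from horizon $n$ onward, conditioned on the first agent's allocation $x$, is bounded by $g$-reward-from-first-agent $+\;R_{n-\max\{x,1\}}$, and that this bound is tight and history-independent. The subtlety is that $\M^{\textnormal{opt}}$ may behave differently on the remaining days depending on the first agent's bid (not just the allocation); one must argue that since future valuations are independent of the first agent, the \emph{best} such continuation has expected value $R_{n-x}$ regardless of what happened, so conditioning on the history provides no advantage. This is where independence of the $\D_i$ is essential, and it is the heart of why history-independence holds. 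The remaining pieces --- truthfulness of $\brackets{\A_n,\dots,\A_1}$ (immediate, since each $\A_h$ is truthful and an agent's strategy in its own \swac is unaffected by other agents), and the reduction to deterministic simplified \swacs --- follow directly from Lemmas~\ref{lemma:auction_wlog_deterministic} and~\ref{lemma:simplifications-wlog} already established.
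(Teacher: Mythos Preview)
Your proof plan is correct, but it takes a different route from the paper's. The paper fixes $n$ and argues by induction on $h$ (from $0$ to $n$) that the hybrid mechanism $\brackets{\mopt,\ldots,\mopt,\A_h,\ldots,\A_1}$ remains optimal: at each step it replaces the \emph{last} history-dependent auction (at horizon $h$) by $\A_h$, using that the already-replaced tail $\A_{h-1},\ldots,\A_1$ is history-independent so that subgame optimality of $\A_h$ applies directly against $\mopt(\history_h)$. You instead induct on $n$ and replace from the \emph{front}: bound the continuation of any $\mopt$ after the first agent by $R_{n-\max\{x,1\}}$ (using independence and the strong inductive hypothesis that all shorter tails $\brackets{\A_m,\ldots,\A_1}$ are optimal), then invoke subgame optimality of $\A_n$ against $\mopt$'s first-day auction. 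Your approach is the standard Bellman-equation argument and is arguably more transparent; the paper's approach avoids needing strong induction but has to carry the hybrid mechanism through the argument. One remark: your detour through the over-time cost function and the claim that ``$\A_n$ is an optimal $(n,\D_n,\g,c_{n,\g}^{\boldsymbol{\D}})$-\swac'' is unnecessary here---that identification is the content of the \emph{next} lemma (Lemma~\ref{lemma:over-time-cost-subgame-optimal}), whereas the present statement is purely about subgame optimality and follows from the chain $\E[\mopt]\le\E[(\B_n,\A_{n-1},\ldots,\A_1)]\le\E[(\A_n,\A_{n-1},\ldots,\A_1)]$ without mentioning any cost function.
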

    \begin{proof}
    Let $\mopt$ be an optimal mechanism in the $(n,\boldsymbol{\D},\g)$-rental game. Recall that $\mopt$ is a mapping from a history, which also reflects the current horizon, to a stagewise auction with the corresponding number of units (i.e. at horizon $h$ we get an $h$-unit auction). We will prove by induction on $h$ (from $0$ to $n$) that the rental mechanism $\brackets{\mopt,\dots,\mopt,\A_h,\dots,\A_1}$ is optimal.

    The base of horizon $0$ is immediate. Now suppose the claim is true for horizons below $h$ and prove for $h$. By the induction hypothesis, the rental mechanism $\M^1=\brackets{\mopt,\dots,\mopt,\A_{h-1},\dots,\A_1}$  is optimal. We now analyze the mechanism $\M^2=\brackets{\mopt,\dots,\mopt,\A_{h},\dots,\A_1}=\brackets{\M^1_{-h},\A_h}$ and prove that it is also optimal. For an illustration of $\M^1$ and $\M^2$ see figure~\ref{figure:hi-mechanism}.

\begin{figure}[H]
\centering
\begin{subfigure}[t]{0.47\textwidth}
    \centering
    \includegraphics[width=\textwidth]{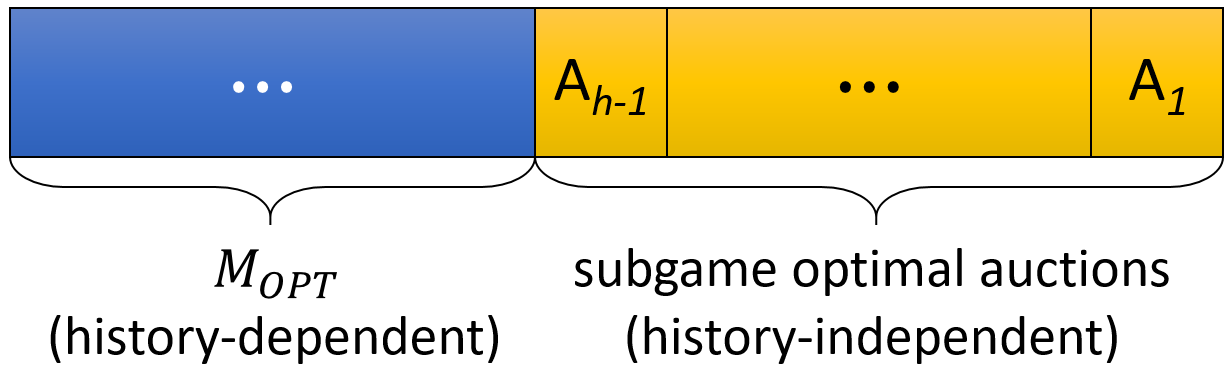}
    \caption{Rental mechanism $\M^1$, which is optimal}
\end{subfigure}
\begin{subfigure}[t]{0.5\textwidth}
    \centering
    \includegraphics[width=\textwidth]{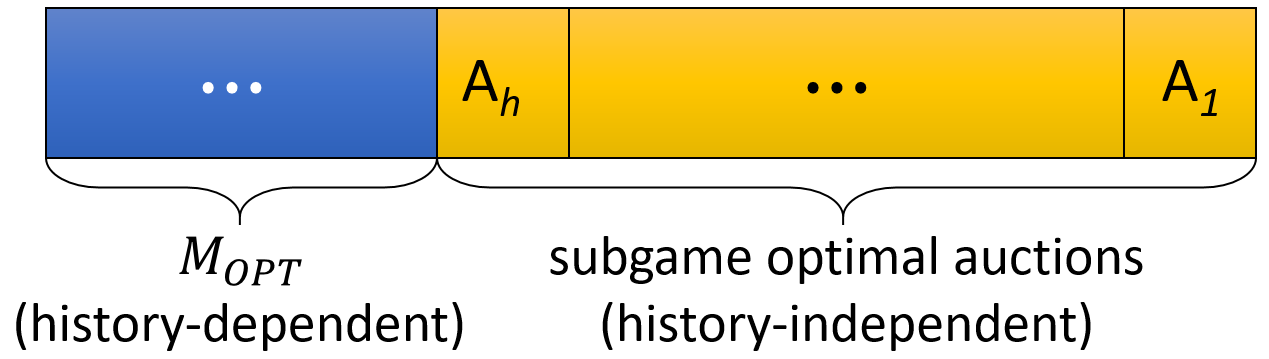}
    \caption{Rental mechanism $\M^2$}
\end{subfigure}
\caption[A comparison of mechanisms designed in the proof of history-independent rental mechanisms being w.l.o.g.]{A comparison of $\M^1$ and $\M^2$}
\label{figure:hi-mechanism}
\end{figure}

    For any auction $\A$, denote by $G^{\A}(v)$  the \emph{gross} reward of the auction, given an agent $v$ and history $\history_h$. Specifically: 
    \begin{align*}
        G^{\A}(v)=\xhat[\A]{v}\g\brackets{v,\frac{\phat[\A]{v}}{\xhat[\A]{v}}}=\trew[\A]{v}+c\brackets{\xhat[\A]{v}}
    \end{align*}

     Fix the valuations of the agents arriving in the rental game, and denote them according to their horizon: $v_n,\dots,v_1$.      
     We prove by backwards induction on the horizon that, for all $l\in \set{h+1,\dots,n+1}$:
     \begin{enumerate}
         \item $\history^{\M^1}_{l-1}=\history^{\M^2}_{l-1}$,\label{item:1_history_independent}
         \item $G^{\M^1(\history_l)}(v_l)=G^{\M^2(\history_l)}(v_l)$ for $l\leq n$ (superscripts in the histories omitted due Item~\ref{item:1_history_independent}), and  \label{item:2_history_independent}
         \item $\ind[\M^1]{l-1}=\ind[\M^2]{l-1}$.\label{item:3_history_independent}
     \end{enumerate}

    This will essentially show that $\M^1$ and $\M^2$ will behave exactly the same until horizon $h$ and yield the same reward from the first $n-h+1$ agents. This is very intuitive since for all $l\in \set{h+1,\dots,n+1}$, we have $\M^1(\history_l)=\mopt(\history_l)=\M^2(\history_l)$, but we will prove it formally. The induction is backwards on the horizon, which means that it goes forwards with the timesteps, from before the rental begins, and until horizon $h+1$.
     If $h=n$ this is immediate, since in this case $\M^1$ and $\M^2$ are identical, so suppose $h<n$. For the base of  $l=n+1$ this is trivial.  Suppose the claim is true for horizons $l>t$, and prove for  $t\in \set{h+1,\dots,n}$. By the induction hypothesis, $\ind[\M^1]{t}=\ind[\M^2]{t}$, so omit the superscript for simplicity. If $\ind{t}=0$, all items in the claim follow immediately. If $\ind{t}=1$, since both mechanisms are deterministic (as they're truthful), and $l>h$, we have that the auctions $\M^1(\history_t)$ and $\M^2(\history_t)$ are identical, thus the agents bids will be identical in either case, and so will the auction outcomes, proving the claim for $l=t$.
     
    In particular, we deduce from this that:
    \begin{align}
        &\sumof{l}{h+1}{n}G^{\M^1(\history_l)}(v_l)=\sumof{l}{h+1}{n}G^{\M^2(\history_l)}(v_l),\label{eq:123}\\
        &\ind{h}\coloneq\ind[\M^1]{h}=\ind[\M^2]{h},\textnormal{ and}\\
        &\history_h\coloneq\history^{\M^1}_{h}=\history^{\M^2}_{h}
    \end{align}

     Analyzing the total reward of mechanism $\M^2$, we split to cases depending on $\ind{h}$.
     
     If $\ind{h}=0$, then since the histories prior to horizon $h$ were the same in either mechanism, and the realized auctions were the same, then the next timestep in which the asset becomes available will be identical in either case. Since the auctions in $\M^1$ and $\M^2$ after horizon $h$ are identical, and due to Equation~\ref{eq:123}, we have that $\trew[\M^2]{v_n,\dots,v_1}=\trew[\M^1]{v_n,\dots,v_1}$.

     If $\ind{h}=1$ we have, due to the above claim:
     \begin{align*}
         \trew[\M^2]{v_n,\dots,v_1}&=\sumof{l}{h+1}{n}G^{\M^2(\history_l)}(v_l)+G^{\M^2(\history_h)}(v_h)+\sumof{l}{1}{h-1}G^{\M^2\brackets{\history_l^{\M^2}}}(v_l)\\
         &=\sumof{l}{h+1}{n}G^{\M^1(\history_l)}(v_l)+G^{\A_h}(v_h)+\sumof{l}{1}{h-1}\A_l(v_l)\\
         &\geq\sumof{l}{h+1}{n}G^{\M^1(\history_l)}(v_l)+G^{\M^1(\history_h)}(v_h)+\sumof{l}{1}{h-1}\A_l(v_l)\\
         &=\sumof{l}{h+1}{n}G^{\M^1(\history_l)}(v_l)+G^{\M^1(\history_h)}(v_h)+\sumof{l}{1}{h-1}G^{\M^1\brackets{\history_l^{\M^1}}}(v_l)\\
         &=\trew[\M^1_{-h}]{v_n,\dots,v_1}
     \end{align*}
    The inequality holds due to our induction hypothesis, that $\A_{h-1},\dots,\A_1$ are subgame optimal $(h-1)$-rental mechanism, and since $\A_h$ is subgame optimal.

     Since $\trew[\M^2]{v_n,\dots,v_1}\geq\trew[\M^1]{v_n,\dots,v_1}$ for any realization of agents' valuations, we have $\E[\M^2]\geq\E[\M^1]=\E[\mopt]$,
    which proves that $\M^2$ is indeed optimal and completes the proof.
    
    \end{proof} 

\begin{lemma}\label{lemma:rental_history_independent}
Any rental mechanism
$\mech{M}$
can be converted
into a rental mechanism $\mech{M}'$,
such that
\begin{enumerate}
\item The designer's expected reward does not change, i.e. $\E\sqbr{\M}=\E\sqbr{\M'}$ ,
and
\item $\mech{M'}$ is history-independent.
\end{enumerate}

\end{lemma}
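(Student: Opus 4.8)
The plan is to build $\M'$ from $\M$ by ``averaging out'' the history at each horizon, and then argue the expected reward is unchanged by a coupling argument. For each horizon $h\in\psqrrange{n}$ with $\Pr_{\M}[\ind[\M]{h}=1]>0$, let $\mu_h$ denote the distribution of the pre-horizon-$h$ history $\history_h$ under $\M$, \emph{conditioned on the asset being available at horizon $h$}. Define the $h$-\swac $\A_h$ as follows: using internal randomness it draws $\history_h\sim\mu_h$, \emph{announces $\history_h$ to the agent} on its first day, and then behaves exactly as $\M(\history_h)$. The announcement is legitimate within the general \swac model of Appendix~\ref{app:generelizations-and-definitions}, since that model permits the designer's per-day declarations to depend jointly on the agent's history in the \swac and on the designer's own randomness (concretely, $\A_h$ can declare a day-one bidding set all of whose elements carry $\history_h$ as a label, so the agent reads off $\history_h$ before committing to any bid). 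For the probability-zero remaining horizons let $\A_h$ be arbitrary, say allocating $0$ units. Set $\M'\coloneq(\A_n,\dots,\A_1)$, which is history-independent by construction, and note that $\A_h$ inherits its unit count from $\M(\history_h)$, so it is indeed an $h$-\swac, and $\mu_h$ is a fixed distribution (determined by $\M$ and $\D_{h+1},\dots,\D_n$), so sampling it is legitimate.

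To show $\E[\M']=\E[\M]$ I would couple the executions of $\M$ and $\M'$ on a common valuation vector $(v_n,\dots,v_1)$ and common randomness so that the two produce identical transcripts, and then conclude $\trew[\M']{v_n,\dots,v_1}=\trew[\M]{v_n,\dots,v_1}$ pointwise. The coupling is built by downward induction on the horizon $h$, with inductive claim $Q(h)$: ``the two executions agree on every horizon $\ge h$'' — same availability indicator, same \swac invoked when available, same agent bid, same allocation and payments. The base case ($h=n+1$) is vacuous. For the step, assuming $Q(h+1)$: availability of the asset at horizon $h$ is a deterministic function of the allocations made at horizons $>h$, which agree by $Q(h+1)$, so the executions agree on whether the asset is available at horizon $h$; if it is not, $Q(h)$ follows immediately. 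If it is available, then conditioned on this event the actual pre-horizon-$h$ history of the $\M$ execution is distributed exactly as $\mu_h$ (this is the \emph{definition} of $\mu_h$), so we may couple $\A_h$'s internal draw of $\history_h$ to equal that actual history. After $\A_h$ announces $\history_h$ and runs $\M(\history_h)$, the agent faces the very same \swac $\M(\history_h)$ as the agent at history $\history_h$ in $\M$ — knowing the ``label'' $\history_h$ does not change a utility-maximizing agent's best response, as it does not affect the \swac's outcome map, and ties are broken identically — so the two agents bid the same and receive the same allocation and payments, establishing $Q(h)$. Inducting down to $Q(1)$ gives identical transcripts, hence identical per-horizon gross rewards, hence $\trew[\M']{\cdot}=\trew[\M]{\cdot}$ and $\E[\M']=\E[\M]$.

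The main obstacle is exactly the insistence on \emph{announcing} the sampled history before the agent commits: the naive construction that keeps $\history_h$ hidden can strictly \emph{lose} reward, because a \swir agent facing a hidden mixture of \swacs must pick a bid that avoids temporary deficit in \emph{every} realization, which can be strictly worse than what the agent would do knowing the realization — so one cannot simply ``mix the \swacs'' and must work inside the multi-round general model. The two points to verify carefully are therefore (i) that the announcement is genuinely permitted by the general \swac definition and does not cost a unit or a day (it lives in the day-one bidding-set declaration, before any bid is submitted), and (ii) that revealing this extra label leaves the agent's optimal behavior unchanged so that the coupling's agreement propagates. Everything else — that availability at a horizon depends only on higher horizons, and that $\mu_h$ is by definition the conditional law of the realized history — is routine.
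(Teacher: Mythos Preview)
Your construction of $\M'$ is sound, but the coupling argument has a genuine gap. By setting $\A_h$'s internal draw equal to the realized history $\history_h^{\M}$, you make the randomness $R_h$ a function of the higher-horizon valuations $(v_n,\ldots,v_{h+1})$; under this coupling the joint law of $(v_n,\ldots,v_1,R_n,\ldots,R_1)$ is \emph{not} the product law that defines $\E[\M']$, so ``transcripts identical pointwise'' under the coupling does not by itself yield $\E[\M']=\E[\M]$. Indeed, no coupling with the correct $\M'$-marginal can make transcripts agree almost surely: take $n=2$ with the horizon-$2$ \swac always allocating one day, and let $\M$'s horizon-$1$ \swac depend nontrivially on the horizon-$2$ bid; then under independent $R_1$ the sampled history disagrees with the realized one with positive probability, and the horizon-$1$ transcripts differ. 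What you need instead is a direct distributional induction (downward in $h$) that $\Pr[\ind[\M']{h}=1]=\Pr[\ind[\M]{h}=1]$ and that, conditioned on availability, the law of the horizon-$h$ outcome (allocation and payments) is the same in both mechanisms. The key step uses exactly the independence of $R_h$ from $(v_{>h},R_{>h})$ on the $\M'$ side and the defining property $\history_h^{\M}\mid\ind[\M]{h}=1\sim\mu_h$ (independent of $v_h$) on the $\M$ side; from this the per-horizon expected gross rewards factor identically, and the availability law propagates. Your write-up identifies the right ingredients but packages them as a coupling that cannot exist.

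For comparison, the paper takes a completely different route: it does not build $\M'$ from $\M$ at all, but simply invokes Lemma~\ref{lemma:auction-history-independent}, which already constructs an \emph{optimal} history-independent mechanism $(\A_n,\ldots,\A_1)$. Taking that as $\M'$ gives $\E[\M']\geq\E[\M]$ for arbitrary $\M$ (strictly speaking only $\geq$, which is all that is needed for ``history-independent is w.l.o.g.''). Your direct construction, once the distributional argument replaces the coupling, has the merit of giving exact equality without first establishing the existence of subgame-optimal auctions.
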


\begin{proof}
     This follows immediately from Lemma~\ref{lemma:auction-history-independent}, since a history-independent mechanism is constructed and proven to be optimal.
\end{proof}

\subsection{Part 2: Defining the Cost Functions}%

\begin{lemma}
\label{lemma:over-time-cost-subgame-optimal}
For every horizon $h\in\nat$, an optimal stagewise $\brackets{h,\D_h,\g,c_{h}^{\D,\g}}$-auction is subgame optimal, where $c_{h}^{\boldsymbol{\D}_{\text{last }h},\g}$ is the $(h,\boldsymbol{\D}_{\text{last }h},\g)$-over-time cost function.
\end{lemma}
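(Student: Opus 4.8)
The plan is to show that, up to an additive constant that does not depend on $\A_h$, the designer's expected reward from the rental mechanism $\brackets{\A_h,\A_{h-1},\dots,\A_1}$ equals the expected net reward $\E[\A_h]$ of the stagewise auction $\A_h$ when run with the over-time cost function. Once this identity is established, the two maximization problems coincide, so being an optimal stagewise $\brackets{h,\D_h,\g,c_h^{\boldsymbol{\D},\g}}$-auction is the same as being subgame optimal.

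Concretely, fix $h\in\nat$ and let $\A_{h-1},\dots,\A_1$ be subgame optimal auctions for horizons $h-1,\dots,1$ with the corresponding over-time cost functions (w.l.o.g.\ truthful and in simplified form, by Lemma~\ref{lemma:simplifications-wlog}). By Lemma~\ref{lemma:auction-history-independent} (whose induction only uses subgame optimal auctions for horizons up to the rental length), for every $m\in\set{0,1,\dots,h-1}$ the rental mechanism $\brackets{\A_m,\dots,\A_1}$ is an optimal $\brackets{m,\boldsymbol{\D}_{\text{last }m},\g}$-rental mechanism, so its expected reward is $R_m\coloneq R^{\boldsymbol{\D}_{\text{last }m},\g}_m$ (with $R_0=0$). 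Now take any $h$-stagewise auction $\A_h$ and form the history-independent mechanism $\M=\brackets{\A_h,\A_{h-1},\dots,\A_1}$ for the length-$h$ rental game. On the first day the asset is free, so $\M$ runs $\A_h$ on the agent, whose valuation is $v\sim\D_h$; write $x$ and $p_1,\dots,p_x$ for the allocation and payment schedule $\A_h$ produces for $v$. After this agent, the asset is unavailable for the days at horizons $h-1,\dots,h-x+1$ (none if $x=0$) and becomes free again at horizon $h-\max\set{x,1}$; since $\M$ is history-independent and the remaining agents' valuations are independent of $v$, from that point $\M$ behaves as the optimal rental mechanism $\brackets{\A_{h-\max\set{x,1}},\dots,\A_1}$, contributing expected reward $R_{h-\max\set{x,1}}$ (this also covers the degenerate cases $x=0$, where the asset is free the very next day, and $x=h$, where the tail is empty). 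Hence
\[
    \E[\M]=\E_{v\sim\D_h}\sqbr{\ \sum_{i=1}^{x}\g(v,p_i)+R_{h-\max\set{x,1}}\ }.
\]
On the other hand, by Definition~\ref{def:cost-function} the net reward of $\A_h$ on agent $v$ is $\trew[\A_h]{v}=\sum_{i=1}^{x}\g(v,p_i)-\brackets{R_{h-1}-R_{h-\max\set{x,1}}}$, so
\[
    \E[\A_h]=\E[\M]-R_{h-1}.
\]

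Since $R_{h-1}$ is a constant independent of $\A_h$, the auction $\A_h$ maximizes $\E[\A_h]$ over all $h$-stagewise auctions if and only if $\brackets{\A_h,\A_{h-1},\dots,\A_1}$ maximizes $\E\sqbr{\brackets{\cdot,\A_{h-1},\dots,\A_1}}$. An optimal $\brackets{h,\D_h,\g,c_h^{\boldsymbol{\D},\g}}$-stagewise auction satisfies the former by definition, hence it satisfies the latter, which is precisely the definition of being subgame optimal. The main obstacle I anticipate is the careful bookkeeping that replaces the tail of $\M$ by the constant $R_{h-\max\set{x,1}}$: one must argue cleanly — from history-independence and independence of valuations across days — that the expected reward accrued after the asset frees up equals the optimum of a fresh rental game of the appropriate horizon-indexed length, uniformly over the realization of $v$, and then handle the $\max\set{x,1}$ and the $x\in\set{0,h}$ boundary cases. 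Everything else is the two-line algebraic identity above. (If one wishes to avoid invoking Lemma~\ref{lemma:auction-history-independent}, the same facts can instead be obtained by an induction on $h$ that mirrors the recursive definition of subgame optimality; I would mention this but use the lemma for brevity.)
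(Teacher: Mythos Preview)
Your proof is correct and follows essentially the same approach as the paper: both hinge on the identity $\E\sqbr{\brackets{\A_h,\A_{h-1},\dots,\A_1}}=\E[\A_h]+R_{h-1}$, obtained by observing that once the asset frees up, the subgame-optimal tail contributes $R_{h-\max\{x,1\}}$ and then invoking the definition of the over-time cost. The only stylistic difference is that the paper packages this identity inside a proof by contradiction (introducing an auxiliary $\A_h'$), whereas you argue directly that the additive constant makes the two optimization problems equivalent; your presentation is cleaner.
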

\begin{proof}
    For simplicity, in this proof we write $c_h$ instead of $c_{h}^{\boldsymbol{\D}_{\text{last }h},\g}$ and $R_h$ instead of $R_h^{\boldsymbol{\D}_{\text{last }h},\g}$. Recall that $R_h^{\boldsymbol{\D}_{\text{last }h},\g}$ represents the expected reward from the final $h$ days, using an optimal $(h,\D_{\text{last }h},\g)$ rental mechanism.
    
    Suppose the claim is false. Thus there is a horizon $h$ and an optimal $\brackets{h,\D_h,\g,c}$-auction which we call $\A$, such that $\A$ is not subgame optimal. Thus, by definition of subgame optimal, there are subgame optimal auctions $\A_h,\dots,\A_1$ such that
    \begin{align*}
        \E\sqbr{\brackets{\A,\A_{h-1},\dots,\A_1}}<\E\sqbr{\brackets{\A_h,\A_{h-1},\dots,\A_1}}
    \end{align*}

    The last transition is due to Lemma~\ref{lemma:auction-history-independent}.

    It holds that, since $A_{h-1},\dots,\A_1$ are subgame optimal:
    \begin{align*}
        \E\sqbr{\A_h,\A_{h-1},\dots,\A_1}&=\E\sqbr{\A_h(v_h)+\sumof{l}{h-1}{1}\ind[\A_h,\A_{h-1},\dots,\A_1]{l}\A_l(v_l)}\\
        &=\E\sqbr{\A_h(v_h)+\sumof{l}{1}{h-\max\set{\xhat[\A_h]{v_h},1}}\ind[\A_h,\A_{h},\dots,\A_1]{l}\A_l(v_l)\given\ind[\A_h,\A_{h},\dots,\A_1]{h-\max\set{\xhat[\A_h]{v_h},1}}=1}\\
        &=\E\sqbr{\A_h(v_h)+\trew[\A_{h-\max\set{\xhat[\A_h]{v_h},1}},\dots,\A_1]{v_{h-\max\set{\xhat[\A_h]{v_h},1}},\dots,v_1}}\\
        &=\E\sqbr{\A_h(v_h)+R_{h-\max\set{\xhat[\A_h]{v_h},1}}}
    \end{align*}

    And in the same manner, then using the over-time cost function of $\A$:
    \begin{align}
        \E\sqbr{\A,\A_{h-1},\dots,\A_1}&=\E\sqbr{\A_h(v_h)+R_{h-\max\set{\xhat[\A]{v_h},1}}}\\
        &=\E\sqbr{\trew[\A]{v_h}}+R_{n-1}\label{eq:hi1}
    \end{align}

    Define an auction $\A_h'$ which follows $\A_h$ exactly, but its reward is with respect to the cost function is $c_{h}$ (i.e.\ same auction, different setting). Since it follows $\A_h$ we also have:
    \begin{align}\label{eq:hi3}
        &\E\sqbr{\A_h',\A_{h-1},\dots,\A_1}=\E\sqbr{\A_h,\A_{h-1},\dots,\A_1}>\E\sqbr{\A,\A_{h-1},\dots,\A_1}
    \end{align}
    and
    \begin{align}\label{eq:hi2}
        \E\sqbr{\A_h',\A_{h-1},\dots,\A_1}=\E\sqbr{\A_h'(v_h)+R_{h-\max\set{\xhat[\A_h']{v_h},1}}}=\E\sqbr{\A_h(v_h)+R_{h-\max\set{\xhat[\A_h]{v_h},1}}}.
    \end{align}

    Due to the optimality of $\A$, extending Eq.~\ref{eq:hi1} we have:
    \begin{align*}
        \E\sqbr{\A,\A_{h-1},\dots,\A_1}
        &\geq\E\sqbr{\trew[\A_h']{v_h}}+R_{n-1}\\
        &=\E\sqbr{\A_h'(v_h)-R_{n+1}+R_{h-\max\set{\xhat[\A_h']{v_h},1}}}+R_{n-1}\\
        &\underbrace{=}_{\ref{eq:hi2}}\E\sqbr{\A_h(v_h)+R_{h-\max\set{\xhat[\A_h]{v_h},1}}}\\
        &\underbrace{=}_{\ref{eq:hi2}}\E\sqbr{\A_h',\A_{h-1},\dots,\A_1},
    \end{align*}

    but this is a contradiction to Eq.~\ref{eq:hi3}, completing the proof.
\end{proof}

\section{Full Results and Proofs for Section~\ref{sec:tech-overview-rental-mechs}: Finding Optimal Rental Mechanisms}\label{app:rental-mechs}
\subsection{Virtual Value Ironing}
\label{sec:ironing}

We formally define the ironing procedure that is used in Section~\ref{sec:tech-overview-rental-mechs}.

    \begin{definition}[Ironed virtual valuations~\cite{myerson1981optimal,HartlineR08}]\label{def:ironing}
    Given a distribution function $F(\cdot)$ and any function $\theta:\V\rightarrow\reals$, the \textnormal{ironed virtual value function}, $\Bar{\theta}$, is constructed as follows:

    \begin{enumerate}
        \item For $q\in[0,1]$, define $h(q)=\theta\brackets{F^{-1}(q)}$.
        \item Define $H(q)=\int_0^q h(r)dr$.
        \item Define $\Psi$ as the convex hull of $H$ --- the largest convex function bounded above by $H$ for all $q\in[0,1]$.
        \item Define $\psi(q)$ as the derivative of $\Psi(q)$, where defined, and extend to all of $[0,1]$ by right-continuity.
        \item Finally, $\Bar{\theta}(v)=\psi(F(v))$.
    \end{enumerate}
    \end{definition}
\begin{lemma}[Generalized restatement of Lemma 2.8 from~\cite{HartlineR08}]\label{lemma:ironing-properties}
    Let $F$ be a distribution function, let $\theta:\V\rightarrow\reals$, and let $\x{v}$ be a monotone allocation rule. 
    Define $\Psi,H$ and $\Bar{\theta}$ as in Definition~\ref{def:ironing}.
    Then 
    \begin{equation}
        \E\sqbr{\theta(v)\x{v}}\leq \E\sqbr{\Bar{\theta}(v)\x{v}},
    \end{equation}
    with equality holding if and only if $\frac{d}{d v}\x{v}=0$ wherever $\Psi(F(v))<H(F(v))$.
\end{lemma}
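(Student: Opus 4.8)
The plan is to adapt the classical ironing argument of~\cite{myerson1981optimal,HartlineR08} to an arbitrary function $\theta$ by moving to \emph{quantile space}. First I would substitute $q = F(v)$, so that under $v\sim F$ the quantile $q$ is uniform on $[0,1]$ and $v = F^{-1}(q)$. Writing $\tilde x(q) \coloneq X(F^{-1}(q))$, this composition is non-decreasing on $[0,1]$ because $X$ is monotone and $F^{-1}$ is non-decreasing, so $\tilde x$ is of bounded variation and induces a non-negative Lebesgue--Stieltjes measure $d\tilde x$. With $h$, $H$, $\Psi$, $\psi$ as in Definition~\ref{def:ironing} (so that $H' = h$ and $\Psi' = \psi$ almost everywhere, $\Psi$ being convex and hence locally Lipschitz on $(0,1)$, and $H$ absolutely continuous), the two quantities to compare become
\begin{equation*}
    \E\sqbr{\theta(v)X(v)} = \int_0^1 h(q)\,\tilde x(q)\,dq, \qquad \E\sqbr{\Bar\theta(v)X(v)} = \int_0^1 \psi(q)\,\tilde x(q)\,dq,
\end{equation*}
so the whole statement reduces to understanding the sign of $\int_0^1\brackets{h(q)-\psi(q)}\tilde x(q)\,dq$.

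The second step is a single integration by parts. Since $h-\psi = H'-\Psi' = (H-\Psi)'$ almost everywhere, and since the convex hull $\Psi$ agrees with $H$ at the endpoints of $[0,1]$ (the largest convex minorant of a continuous function on a compact interval touches it at the endpoints, giving $\Psi(0)=H(0)=0$ and $\Psi(1)=H(1)$), integrating by parts against the Stieltjes measure $d\tilde x$ yields
\begin{align*}
    \E\sqbr{\theta(v)X(v)} - \E\sqbr{\Bar\theta(v)X(v)}
    &= \int_0^1 (H-\Psi)'(q)\,\tilde x(q)\,dq \\
    &= \Big[(H-\Psi)(q)\,\tilde x(q)\Big]_0^1 - \int_0^1 (H-\Psi)(q)\,d\tilde x(q) \\
    &= -\int_0^1 (H-\Psi)(q)\,d\tilde x(q).
\end{align*}
Because $\Psi \le H$ pointwise (by definition of the convex hull from below) and $d\tilde x \ge 0$ (by monotonicity of $X$), the right-hand side is nonpositive, which is exactly the claimed inequality $\E\sqbr{\theta(v)X(v)} \le \E\sqbr{\Bar\theta(v)X(v)}$.

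For the equality characterization, the displayed integral $\int_0^1 (H-\Psi)\,d\tilde x$ vanishes if and only if the non-negative measure $d\tilde x$ places no mass on the open set $\{q : \Psi(q) < H(q)\}$ — equivalently, $\tilde x$ is constant on every maximal subinterval of $(0,1)$ where the convex hull lies strictly below $H$ (on such intervals $\Psi$ is affine). Undoing the substitution $q = F(v)$ turns this into the stated condition that $\tfrac{d}{dv}X(v) = 0$ wherever $\Psi(F(v)) < H(F(v))$. The main obstacle is not the structural content — which is just this one integration by parts plus $\Psi \le H$ — but making it rigorous: $X$ (hence $\tilde x$) is only assumed monotone, not absolutely continuous; $\psi$ is the right-continuous extension of a derivative defined only almost everywhere; and $h = \theta\circ F^{-1}$ may be unbounded near the endpoints of $[0,1]$. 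I would handle these in the standard way, phrasing the computation via Riemann--Stieltjes (equivalently Lebesgue--Stieltjes) integration for a $BV$ integrator against an absolutely continuous integrand, and using a truncation-and-limit argument at the endpoints if $h$ fails to be globally integrable on the nose.
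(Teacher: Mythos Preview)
Your proposal is correct and follows exactly the standard quantile-space integration-by-parts argument from~\cite{HartlineR08}; the paper itself does not reprove the lemma but simply notes that the proof of~\cite{HartlineR08} applies verbatim since it uses no properties of $\theta$ specific to revenue. You have in effect written out that cited proof, so your approach and the paper's are the same.
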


The proof of~\cite{HartlineR08} can be applied as is: it does not rely on any properties of $\theta$ that are specific to their work.

\subsection{Proof of Lemma~\ref{lemma:irn-maximizer-maximizes-expected}}
\label{sec:proof-optimal-rental-mechs}
The following proof relies strongly on the ironing procedure definition.
\begin{proof}[Proof of Lemma~\ref{lemma:irn-maximizer-maximizes-expected}]
 Let $\S$ be a \swac setting, and $\theta:\V\rightarrow\reals$ be a function, and $\Bar{\theta}$ its ironing
    Let $\alloc$ be a monotone non-decreasing allocation rule that is pointwise maximizing of $\Bar{\theta}(v)\x{v}-c\brackets{\x{V}}$.
    
    Define $\Psi,H$ from the ironing procedure used to define $\Bar{\theta}(\cdot)$.
    At points $v$ where $\Psi(F(v))<H(F(v))$, $\Psi$ is locally linear as the convex hull of $H$, and hence $\Bar{\theta}(v)$ is locally constant. Thus from Lemma~\ref{lemma:ironing-properties} we are guaranteed that $\frac{d}{dv}\x{v}=0$ at all such points.
    It follows that
    \begin{equation}
         \E\sqbr{\theta(v)\x{v}}=\E\sqbr{\Bar{\theta}(v)\x{v}}.
    \end{equation}
    Therefore, also due to $\x{\cdot}$ being pointwise maximizing w.r.t. $\Bar{\theta}$, the following expression
    \begin{align}\label{eq:best-rew}
        \E\sqbr{\theta(v)\x{v}-c\brackets{\x{v}}}
    \end{align}
    is maximized among all monotone non-decreasing allocation rules.
\end{proof}

\end{document}